\newcommand{\changefont}{%
    \fontsize{9}{11}\selectfont
}
\definecolor{mylinkcolor}{rgb}{0,0,0.7} 
\tikzset{degil/.style={
            decoration={markings,
            mark= at position 0.5 with {
                  \node[transform shape] (tempnode) {$\setminus$};
                  }
              },
              postaction={decorate}
}
}
\newcommand\longrsquigarrow{
\begin{tikzpicture}
\draw [decorate, decoration={zigzag, segment length=+6pt, amplitude=+.95pt,post length=+2pt}, arrows={-Classical TikZ Rightarrow}]  (0,0.1) -- (0.6,0.1); \draw[draw=none] (0,0)--(0.6,0);
\end{tikzpicture}
}
\newcommand\longlsquigarrow{
\begin{tikzpicture}
\draw [decorate, decoration={zigzag, segment length=+6pt, amplitude=+.95pt,post length=+2pt}, arrows={-Classical TikZ Rightarrow},rotate around={180:(0.3,0.1)}]  (0,0.1) -- (0.6,0.1); \draw[draw=none] (0,0)--(0.6,0);
\end{tikzpicture}
}
\newcommand*{\da@rightarrow}{\mathchar"0\hexnumber@\symAMSa 4B }
\newcommand*{\da@leftarrow}{\mathchar"0\hexnumber@\symAMSa 4C }
\newcommand*{\xdashrightarrow}[2][]{%
  \mathrel{%
    \mathpalette{\da@xarrow{#1}{#2}{}\da@rightarrow{\,}{}}{}%
  }%
}
\newcommand{\xdashleftarrow}[2][]{%
  \mathrel{%
    \mathpalette{\da@xarrow{#1}{#2}\da@leftarrow{}{}{\,}}{}%
  }%
}
\newcommand*{\da@xarrow}[7]{%
  \sbox0{$\ifx#7\scriptstyle\scriptscriptstyle\else\scriptstyle\fi#5#1#6\m@th$}%
  \sbox2{$\ifx#7\scriptstyle\scriptscriptstyle\else\scriptstyle\fi#5#2#6\m@th$}%
  \sbox4{$#7\dabar@\m@th$}%
  \dimen@=\wd0 %
  \ifdim\wd2 >\dimen@
    \dimen@=\wd2 %
  \fi
  \count@=2 %
  \def\da@bars{\dabar@\dabar@}%
  \@whiledim\count@\wd4<\dimen@\do{%
    \advance\count@\@ne
    \expandafter\def\expandafter\da@bars\expandafter{%
      \da@bars
      \dabar@ 
    }%
  }%
  \mathrel{#3}%
  \mathrel{%
    \mathop{\da@bars}\limits
    \ifx\\#1\\%
    \else
      _{\copy0}%
    \fi
    \ifx\\#2\\%
    \else
      ^{\copy2}%
    \fi
  }%
  \mathrel{#4}%
}
\newcommand\mlnode[1]{\fbox{\begin{tabular}{@{}c@{}}#1\end{tabular}}}
\newcommand{\comment}[1]{}
\newcommand{\ket}[1]{| #1 \rangle}
\newcommand{\tr}{{\rm tr}}
\newcommand{\cA}{\mathcal{A}}
\newcommand{\cB}{\mathcal{B}}
\newcommand{\cC}{\mathcal{C}}
\newcommand{\cD}{\mathcal{D}}
\newcommand{\cF}{\mathcal{F}}
\newcommand{\cG}{\mathcal{G}}
\newcommand{\cH}{\mathscr{H}}
\newcommand{\cR}{\mathcal{R}}
\newcommand{\cS}{\mathcal{S}}
\newcommand{\cT}{\mathcal{T}}
\newcommand{\cX}{\mathcal{X}}
\newcommand{\cY}{\mathcal{Y}}
\newcommand{\cZ}{\mathcal{Z}}
\newcommand{\rC}{\mathrm{C}}
\newcommand{\rG}{\mathrm{GPT}}
\newcommand{\rQ}{\mathrm{Q}}
\newcommand{\indep}{\upmodels}
\newcommand{\nindep}{\not\upmodels}
\theoremstyle{plain}
\newtheorem{lemma}{Lemma}[section]
\newtheorem{corollary}{Corollary}[section]
\newtheorem{proposition}{Proposition}[section]
\theoremstyle{definition}
\newtheorem{definition}{Definition}[section]
\newtheorem{remark}{Remark}[section]
\newtheorem{example}{Example}[section]
\begin{document}

\title{A general framework for cyclic and fine-tuned causal models and their compatibility with space-time}

\author{V. Vilasini}
\email{vilasini@phys.ethz.ch}
\affiliation{Institute for Theoretical Physics, ETH Zurich, 8093 Z\"{u}rich, Switzerland}
\affiliation{Department of Mathematics, University of York, Heslington, York YO10 5DD, United Kingdom}
\author{Roger Colbeck}
\email{roger.colbeck@york.ac.uk}
\affiliation{Department of Mathematics, University of York, Heslington, York YO10 5DD, United Kingdom}

\date{\today}

\begin{abstract}
  Causal modelling is a tool for generating causal explanations of observed correlations and has led to a deeper understanding of correlations in quantum networks. Existing frameworks for quantum causality tend to focus on acyclic causal structures that are not fine-tuned i.e., where causal connections between variables necessarily create correlations between them. However, fine-tuned causal models (which permit causation without correlation) play a crucial role in cryptography, and cyclic causal models can be used to model physical processes involving feedback and may also be relevant in exotic solutions of general relativity. Here we develop a causal modelling framework capable of dealing with these general scenarios. The key feature of our framework is that it allows operational and relativistic notions of causality to be independently defined and for connections between them to be established in a theory-independent manner.  The framework first gives an operational way to study causation that allows for cyclic, fine-tuned and non-classical causal influences.  We then consider how a causal model can be embedded in a space-time structure (modelled as a partial order) and propose a \emph{compatibility} condition for ensuring that the embedded causal model does not allow signalling outside the space-time future.  We identify several distinct classes of causal loops that can arise in our framework, showing that compatibility with a space-time can rule out only some of them. We discuss conditions for preventing superluminal signalling within arbitrary (and possibly cyclic) causal structures and consider models of causation in post-quantum theories admitting so-called jamming correlations. 
Finally, this work introduces the concept of a ``higher-order affects relation'', which is useful for causal discovery in fined-tuned causal models.
\end{abstract}

\maketitle

\tableofcontents

\section{Introduction}

The process of identifying cause-effect relationships underlying our observations is central to science. The causal modelling paradigm~\cite{Pearl2009,Spirtes2001} provides mathematical tools for relating correlation and causation in scenarios described by classical variables, and have found applications in wide ranging disciplines including medical testing~\cite{Kleinberg2011, Raita2021}, economic predictions~\cite{Spirtes2005, Pearl2009} and machine learning~\cite{Maya2014, Kusumawardani2020, Liu2021}. A consequence of Bell's theorem~\cite{Bell} is that in certain scenarios, classical causal models fail to explain quantum correlations~\cite{Wood2015}. This has led to a significant progress in the development of quantum causal models~\cite{Tucci_1995,Leifer_2006,Laskey2007,Leifer_2008,Leifer2013,Henson2014,Wood2015,Pienaar2015,Ried_2015,Costa2016,Fritz_2015,Allen2017,Barrett2020A,Pienaar_2020} that have deepened our fundamental understanding of quantum causality and quantum correlations, as well as in practical information processing tasks such as quantum cryptography, communication, quantum computation.

Previous work on quantum causality has focused on acyclic causal structures and on causal models without fine-tuned parameters, where causation and signalling become equivalent notions. While it may be considered undesirable for a physical theory of nature to allude to fine-tuned causal explanations~\cite{Wood2015}, the security of cryptographic protocols such as the one-time pad rely on fine-tuning. Here, fine-tuning is required to ensure that the cipher text gives no information about the original message without the key, even though the cipher text was generated from the original message and thus causally depends on it. Cyclic causal models have been developed and widely studied in the classical causal modelling literature for describing physical scenarios with feedback~\cite{Forre2017, Bongers2021}, for instance, where variables such as demand and price causally influence each other. In the quantum literature, cyclic causation has been considered in the context of more exotic phenomena such as closed timelike curves or processes with indefinite causal order~\cite{Araujo2017, Barrett2020}, which may be useful in approaches to quantum gravity without a definite space-time structure. The causal modelling approach enables an operational formulation of causality that is independent of space-time structure~\cite{Pearl2009, Spirtes2001}. Whether a cyclic causal model describes a physical scenario with feedback or a closed timelike curve depends on how the causal model is combined with space-time information (see also \cite{VilasiniRenner2022}). Thus, from a purely operational standpoint, the most general class of causal models we would like to consider include those that are cyclic, fine-tuned and also allow for non-classical causal influences. To make a connection to physical experiments, it is also desirable to characterise how this general class of causal models can be embedded in a space-time structure, such as Minkowski space-time and to characterise when they prevent violations of relativistic causality principles such as no signalling outside the future in the space-time. 

In the case of acyclic causal models without fine-tuning, the condition for ensuring no superluminal signalling in a space-time is straightforward: whenever $A$ is a cause of $B$ in the causal model, we can interpret $B$ as being in the future of $A$ with respect to a space-time such as Minkowski space-time. This ensures that all causal influences and therefore all signals propagate from past to future in the space-time. Operationally, interventions allow us to verify causation and define a notion of signalling: if intervening on $A$ leads to different correlations on $B$ (compared to without the intervention), then we can say that $A$ signals to $B$ and use this to infer that $A$ is a cause of $B$. In the absence of fine-tuning, every causal relationship can be verified using interventions, and in such models, causation implies the ability to signal with an intervention. In the presence of fine-tuning, it is possible to have causation without signalling and in this case, demanding that there is no signalling outside the space-time future does not guarantee that all causal influences propagate from past to future in the space-time. The connection between superluminal signalling and causation has been previously studied by analysing correlations in Bell-type experiments in Minkowski space-time (see for instance \cite{Grunhaus1996, Horodecki2019}). However to find conditions for ensuring no signalling outside the space-time future in arbitrary scenarios, correlations alone do not suffice; to ascertain causation we must also consider interventions. Furthermore, allowing for cyclic causal influences while considering a partially ordered space-time such as Minkowski space-time allows for an investigation of the relationships between causal loops and superluminal signalling. A mathematical framework for causally modelling these general scenarios and establishing their connection to relativistic causality principles in a space-time is currently lacking.

In this work, we develop such a framework by defining causation and space-time structure as separate notions, and then characterising their compatibility.  We keep the causal part of the framework general by allowing for causation without signalling (i.e., fine-tuned causal influences), cyclic causation as well as quantum and post-quantum causes. We describe this through a causal modelling approach, but under minimal theory-independent assumptions, and while taking into account correlations as well as arbitrary interventions.  We then connect this to physics by considering the embedding of the observed variables involved in the causal model into a space-time structure, such as Minkowski space, and we characterise when such embeddings do not allow superluminal signalling. The framework proposed here has two main advantages.
On the one hand, keeping causation and space-time structure separate is a useful feature for considering more general formulations of physics without a fixed background space-time structure (e.g., in a theory of quantum gravity~\cite{Oreshkov2012, Zych2019}), while keeping a notion of processing and communicating physical information available. On the other hand, characterising the compatibility between operational causation and space-time structure can give insights into which of these scenarios is physically realisable in a space-time.

The framework introduced in this work allows a characterisation of causality in a class of post-quantum theories (producing so-called \emph{jamming} non-local correlations) previously proposed in the literature~\cite{Grunhaus1996, Horodecki2019}, clarifies the relationships between several concepts, and enables us to address a number of open questions. Even within causality conditions related to space-time, there can be several distinct notions. For example, physical principles such as ``no superluminal signalling'' and ``no causal loops/closed time-like curves'' are both associated with relativistic causality and implied by the mathematical framework of special relativity. However, these can be distinct concepts in a more general mathematical framework where the causal structure is not fully specified by the space-time structure, but only constrained by requirements such as no superluminal signalling once embedded in a space-time. Within our framework, we distinguish these concepts. In an associated Letter~\cite{VilasiniColbeckPRL}, we apply this framework to show the mathematical possibility of causal loops between Minkowski space-time events, the existence of which can be operationally detected without leading to superluminal signalling.\footnote{Here, by Minkowski space-time, we only mean the partial order corresponding to the light cone structure of Minkowski space-time.} 
Our framework also suggests further conditions that could be used to rule out certain types of causal loops.

When we refer to operationally detectable, we mean detectable using inferences from the observed correlations and those under intervention. Some properties of an underlying causal structure can be operationally found from the observed correlations. For example, a violation of Bell inequalities within the Bell causal structure certifies the non-classicality of the underlying common cause from the observed correlations.  To distinguish causation and correlation we need to consider interventions, which allow more general inferences about the causal structure~\cite{Pearl2009}. Recently, it has been experimentally demonstrated~\cite{Agresti2021} that the non-classicality of a causal structure can be operationally certified from causation measures based on interventions even when no such certification is possible using correlation measures alone. 

 Apart from these foundational implications, several features of our framework are useful from a more practical perspective. For instance, 
 security of relativistic cryptographic protocols~\cite{Kent_RBC, CK1} combines both relativistic notions of causality (such as the impossibility of signalling outside the future light cone) and information-theoretic concepts. Operational information about the causal structure (which encodes the structure of communication channels between agents), the embedding of the causal structure in a space-time structure, and the compatibility between the two are all relevant for cryptography.

To operationally model causation, we adopt a causal modelling approach similar to that of~\cite{Pearl2009, Spirtes2001}, in which causal structures are represented using directed graphs. These indicate how information flows through a network of physical systems (classical, quantum or possibly those of a post-quantum probabilistic theory), and the directed graph is in principle independent of any consideration of space-time. One can however consider embedding the systems represented in the causal structures within a space-time, and relativistic causality would then impose constraints on the embedding such that the causal model cannot be used to signal outside the space-time future, in which case we say that the causal model is \emph{compatible} with the space-time structure. For example, if an active intervention on a variable $A$ produces a change in probability distribution over another variable $B$, then one would say that $A$ affects $B$ (or $A$ signals to $B$), which implies that $A$ is a cause of $B$. Assigning space-time locations to the variables and requiring the effect $B$ to always be embedded in the future light cone of the cause $A$ makes this causal relationship compatible with the partial order of space-time. In some situations (such as for jamming \cite{Grunhaus1996}) we wish to allow a variable to jointly affect a set of variables without affecting individual variables in the set, and, more generally, we may consider more complicated affects/non-affects relations between arbitrary sets of variables. Such scenarios correspond to causal models where the correlations are \emph{fine-tuned} to hide certain causal influences from direct observation such that there is causation without correlation or signalling. In the presence of fine-tuning, characterising when a causal model can be compatibly embedded in a space-time structure is more complicated. In our work we provide a method to do so by developing a general framework and introducing causal modelling tools that have applications for analysing causality in a previously proposed class of post-quantum scenarios as well more practical problems related to causal discovery, as we explain below.

Previously, minimal conditions for preventing superluminal signalling have been considered in Bell-type scenarios. This led to the introduction of a general class of post-quantum correlations that can be defined in a tripartite Bell experiment (see Figure~\ref{fig: trins}) that were dubbed jamming non-local correlations~\cite{Grunhaus1996}. In later work, the constraints defining this class of correlations were claimed to be necessary and sufficient for ruling out superluminal signalling and causal loops~\cite{Horodecki2019}, under certain assumptions on the space-time configuration. Previous works analysing post-quantum theories admitting jamming correlations only consider the observed correlations produced in such Bell-type scenarios. However, to rigorously analyse causation and signalling possibilities in such theories, correlations alone do not suffice (since correlation does not imply causation), and interventions must also be taken into account. A defining feature of jamming correlations is that they allow the measurement setting of one party to jointly signal to the measurement outcomes of two other parties, without signalling to them individually (this can only happen with fine-tuning). In the space-time configuration considered in~\cite{Grunhaus1996, Horodecki2019}, this leads to superluminal causal influences without superluminal signalling. Since we allow fine-tuning, more generally, we can consider whether it is possible to have causal loops in a causal structure that do not lead to superluminal signalling when the systems in the causal structure are embedded in Minkowski space-time. Therefore for a clear understanding of the general validity of such claims for ruling out causal loops, a rigorous causal modelling framework is required. A general framework for modelling causality and its compatibility with space-time, as described in the above paragraphs will also enable us to consider conditions for preventing signalling outside the future lightcone and causal loops in arbitrary scenarios (not just those associated with Bell experiments). To our knowledge, such a mathematical framework is lacking in the previous literature.

A framework allowing for cyclic quantum causal models was proposed in~\cite{Barrett2020}. There the focus was on indefinite causal order processes and the authors adopt a fully quantum approach where all nodes are associated with quantum systems. 
To model post-quantum theories admitting jamming correlations~\cite{Grunhaus1996, Horodecki2019} and analyse the signalling possibilities therein, we distinguish between classical nodes corresponding to measurement settings and outcomes, and non-classical nodes (which may be quantum, or more generally post-quantum systems modelled by a generalised probabilistic theory).
This is similar to the approach of~\cite{Henson2014} but, in contrast to~\cite{Henson2014}, we allow for cyclic causal models, fine-tuning and also consider space-time embeddings.


Finally we note some implications for the problem of causal discovery (inferring causation from empirical data), which is ubiquitous in science. Causal discovery algorithms are often based on the assumption of ``no fine-tuning'' or faithfulness (see~\cite{Pearl2009, Spirtes2001}). Allowing fine-tuning significantly complicates causal discovery by allowing for causal influences that are not immediately reflected in certain types of empirical data. The framework, and results presented here make explicit several new aspects of fine-tuned causal models and elucidate relationships between several concepts relating to causal models that are equivalent in the absence of fine-tuning, but that become inequivalent in the presence of fine-tuning. This suggests new methods for exploring the problem of causal discovery in the presence of fine-tuning, a problem that is of interest to the scientific community beyond the foundations of quantum physics.

\bigskip\par
\textbf{Summary of contributions.} We first review the necessary preliminaries of the causal modelling approach in Section~\ref{sec: prelim} and discuss the jamming scenario along with other motivating examples in Section~\ref{sec:motiveg}. In the rest of the paper, we present several results that address the open questions outlined above, which are summarised below.
\begin{itemize}
    \item In Sections~\ref{sec: causmod} and~\ref{sec:space-time} we develop an operational framework for analysing cyclic and fine-tuned causal models in the presence of latent non-classical causes, and characterising their compatibility with a space-time structure. In particular, this provides a mathematical framework for causally modelling post-quantum theories admitting jamming non-local correlations~\cite{Grunhaus1996} (referred to as relativistic causal correlations in~\cite{Horodecki2019}). The framework consists of two parts---the first concerns causal models and the second characterises the embedding of these causal models in a space-time structure. 
    \item In the causality part of the framework (Section~\ref{sec: causmod}), we extend a number of results previously established in the classical causal modelling literature, typically used for acyclic and faithful causal models, to the more general scenarios considered here, such as Pearl's rules of do-calculus~\cite{Pearl2009}. We also introduce several causal modelling concepts, such as ``higher-order affects relations'' which only become relevant in fine-tuned causal models. We derive relationships between the many distinct properties of such causal models, highlighting the deviation from the standard case of faithful causal models. These technical results have applications for the problem of causal discovery in fine-tuned causal models, which is of independent interest. 
    
    \item  In the second part of the framework (Section~\ref{sec:space-time}), we use higher-order affects relations to define when a causal model can be said to be compatible with an embedding in a space-time structure, which is intended to capture that the model does not allow signalling outside the space-time future. We also consider alternative compatibility conditions (in Section~\ref{sec: necsuff}), and discussing the relationships between them and their physical intuition.
    \item In Section~\ref{sec: loops}, we define several distinct classes of causal loops and consider theories that are consistent with the principle that signalling outside the space-time future is not possible. We show that such theories are necessarily free of certain types of causal loops, and, in an associated letter~\cite{VilasiniColbeckPRL}, we apply our framework to construct a causal model for an operationally detectable causal loop that can be embedded in Minkowski space-time without leading to superluminal signalling. We discuss this example and illustrate in Appendix~\ref{appendix: MoreLoops} that such theories (which allow for causal loops without signalling outside the future of a partially ordered space-time) can involve further distinct classes of causal loops beyond those defined in the main text.
    \item The above results illustrate the counter-intuitive possibilities allowed by fine-tuned causal models---it is logically possible to have superluminal causal influences without superluminal signalling (as in non-local hidden variable theories~\cite{Bohm1952} or the jamming scenario of~\cite{Grunhaus1996, Horodecki2019}), as well as causal loops that do not lead to superluminal signalling. These results have consequences for the claim of~\cite{Horodecki2019} that certain conditions on correlations in a tripartite Bell scenario are necessary and sufficient for ruling out all causal loops. 
   This claim does not hold in our framework without further assumptions (see Section~\ref{ssec: missing_assump}).
\end{itemize}

 In upcoming work~\cite{Jamming2} we apply the results of the present paper to analyse the post-quantum jamming scenario of~\cite{Grunhaus1996, Horodecki2019} in detail, where we identify an explicit protocol that leads to superluminal signalling in this setting (contrary to previous claims), as well as new properties of post-quantum theories that admit such correlations.

 A reader who is more interested in the physical implications of the framework rather than causal modelling, may choose to skip the latter parts of Section~\ref{sec: causmod} on causal modelling, and directly move on to the space-time part of our framework in Section~\ref{sec:space-time}. In particular, while Sections~\ref{ssec: causmodA} and~\ref{ssec: interventions} are important for what follows, Examples~\ref{example: HOaffects1},~\ref{example: HOaffects2} and~\ref{example: HOaffects3} of Section~\ref{ssec: HOaffects} already give the main intuition behind the new concept of higher-order affects relations, and how it can be applied to define compatibility with a space-time in Section~\ref{sec:space-time}. The reader may therefore choose to skip the remaining technical details of Section~\ref{ssec: HOaffects}, as well as the subtleties of Section~\ref{ssec: relations} in their first reading.

\section{Preliminaries: Acyclic and faithful causal models}
\label{sec: prelim}
We first briefly review the literature on classical and non-classical causal models, where cause-effect relationships are typically taken to be acyclic and assumed not to be fine-tuned, before developing a model where these assumptions are relaxed.

\par
A causal structure can be represented as a directed
graph over several nodes, some of which are labelled observed
and some unobserved, typically this is taken to be a Directed Acyclic Graph (DAG). Each observed node corresponds to a classical
random variable\footnote{These may represent settings or outcomes of an experiment for example.}, while each unobserved node is
associated with a classical, quantum or post-quantum system. The causal structures we consider in this paper always feature observed random variables and we will denote the union $S_1\cup S_2$ of any two sets $S_1$ and $S_2$ of random variables by $S_1S_2$. A causal structure is called \emph{classical} (denoted
$\mathcal{G}^\rC$), \emph{quantum} (denoted $\mathcal{G}^\rQ$) or
\emph{GPT} (denoted $\mathcal{G}^\rG$) depending on the nature of the
unobserved nodes, where GPT stands for generalised probabilistic theory~\cite{Barrett07}. Edges of causal graphs will be denoted using $\longrsquigarrow$, and it will be useful to later classify these edges as solid $\longrightarrow$ or dashed $\xdashrightarrow{}$ based on certain operational conditions for detecting causation.  The following definition of \emph{cause} is implicit in the meaning of such a causal structure.

\begin{definition}[Cause]
\label{def: cause}
Given a causal structure represented by a directed graph $\cG$, possibly containing observed as well as unobserved nodes, we say that a node $N_i$ is a \emph{cause} of another node $N_j$ if there is a directed path $N_i\longrsquigarrow\ldots\longrsquigarrow N_j$ from $N_i$ to $N_j$ in $\cG$. More generally, we say that a set of nodes $S_1$ is a cause of a disjoint set of nodes $S_2$ if there exist nodes $N_i\in S_1$ and $N_j\in S_2$ such that $N_i$ is a cause of $N_j$.
\end{definition}

For an acyclic causal structure $\cG^C$ over the $n$ random variables $\{X_1,\ldots,X_n\}$ (i.e., having those variables as nodes), a distribution $P(X_1,\ldots,X_n)$ is said to be \emph{compatible with} $\mathcal{G}^\rC$ if it satisfies the \emph{causal Markov condition} i.e., the joint distribution decomposes as
\begin{equation}
\label{eq: markov}
    P(X_1,\ldots,X_n)=\prod\limits_{i=1}^{n} P(X_i|\text{par}(X_i)),
\end{equation}
where $\text{par}(X_i)$ denotes the set of all parent nodes of the node $X_i$ in the DAG $\mathcal{G}^\rC$. [We later discuss a notion of compatibility for more general (possibly cyclic) causal structures (Definition~\ref{definition: compatdist}), that is weaker but recovers the present definition in the classical acyclic case where all nodes are observed.] The Markov condition of Equation~(\ref{eq: markov}) is equivalent to the conditional
independence $X_i \indep \text{nd}(X_i)|\text{par}(X_i)$ of $X_i$ from its non-descendants, denoted
$\text{nd}(X_i)$ given its parents $\text{par}(X_i)$ in
$\mathcal{G}$ i.e., $\forall i\in \{1,\ldots,n\}$,
$P(X_i
  \text{nd}(X_i)|\text{par}(X_i))=P(X_i|\text{par}(X_i))P(\text{nd}(X_i)|\text{par}(X_i))$~\cite{Pearl2009}. 
In the case of classical causal structures with unobserved nodes, the set of compatible observed distributions for the causal structure are obtained by marginalisation of a total distribution (over all nodes) that satisfies Equation~\eqref{eq: markov}. 

In non-classical causal structures, this compatibility condition no longer applies since a node (e.g., a measurement outcome) and its parents (e.g., the quantum states that were measured to produce that outcome) in the causal structure may not coexist. Here, we can only assign a joint distribution over all the observed nodes, and this cannot in general be seen as a marginal of a joint distribution over all nodes, as in the classical case. Instead, the observed distribution in a non-classical causal structure is obtained using the states, transformations and measurements of the theory under consideration (which we will call the \emph{causal mechanisms}), in the order specified by the causal structure and in accordance with the probability rule specified by the theory. For example, in quantum theory, this would be the Born rule. Compatibility with non-classical causal structures can be formulated in terms of a generalised Markov condition~\cite{Henson2014} that requires the non-classical causal mechanisms (e.g., the quantum channels) to factorise in a manner analogous the classical Markov condition~\eqref{eq: markov}, but the exact form of this will not be relevant here. There are several frameworks for describing quantum and post-quantum causal structures which typically differ in how the nodes and edges are associated with the causal mechanisms of the theory. However, the details of these different frameworks do not change the operational predictions that can be made from the causal structure, such as the possible observed correlations realizable in the causal structure, and will not be needed in the rest of this paper. As an illustration, the following example describes the sets of compatible observed correlations in the classical and quantum version of the well-known bipartite Bell causal structure $\cG_B$ of Figure~\ref{fig: Bell1}. In the following, $\mathscr{P}_n$ denotes the set of all probability distributions over $n$ random variables and $\mathscr{S}(\mathscr{H})$ denotes the set of positive semi-definite and trace one operators on a Hilbert space $\mathscr{H}$.

\begin{example}[Sets of compatible correlations in the bipartite Bell causal structure $\cG_B$]
\label{example: Bellsets}
In the classical causal structure $\cG_B^C$, the set of compatible (observed) distributions is obtained by assuming a joint distribution $P(\Lambda XYAB)\in \mathscr{P}_5$ over all nodes, that satisfies the Markov condition~\eqref{eq: markov} and  marginalising over the unobserved node $\Lambda$,
\begin{equation}
    \mathscr{P}(\cG_B^C):=\{P(XYAB)\in\mathscr{P}_4\ |\ P(XYAB)=\sum_\Lambda P(\Lambda)P(A)P(B)P(X|A\Lambda)P(Y|B\Lambda)\}.
\end{equation}
If $\Lambda$ is a continuous random variable, the sum is replaced by an integral over $\Lambda$. This compatibility condition for the classical causal structure $\cG_B^C$ is identical to the \emph{local causality condition} used in the derivation of Bell inequalities (see~\cite{Brunner2014} for a comprehensive review). In the quantum causal structure $\cG_B^Q$, the unobserved node $\Lambda$ corresponds to a bipartite quantum state $\rho_{\Lambda}\in\mathscr{S}(\cH_{\Lambda})=\mathscr{S}(\cH_{\Lambda_X}\otimes\cH_{\Lambda_Y})$, and the observed nodes $X$ and $Y$ are associated with the POVMs, $\{E^X_A\}_X$ and $\{F^Y_B\}_Y$, that act on the subsystems $\cH_{\Lambda_X}$ and $\cH_{\Lambda_Y}$, depending on the inputs $A$ and $B$ respectively to generate the output distribution.
\begin{equation}
    \mathscr{P}(\cG_B^Q):=\{P(XYAB)\in\mathscr{P}_4\ |\ P(XYAB)=\tr\Big((E^X_A\otimes F^Y_B) \rho_{\Lambda}\Big)P(A)P(B)\}.
\end{equation}
\end{example}

In classical and non-classical causal structures alike, conditional independences play an important role. For instance, in the Bell causal structure, irrespective of the nature of $\Lambda$, we have $X\indep B|A$ and $Y\indep A|B$. Expressed in terms of probabilities these are the no-signalling constraints. The concept of \emph{d-separation} developed in \cite{Geiger1987, Pearl1988, Geiger1990, Verma1988} provides a method to read off implied conditional independence relations from the graph, both in classical and non-classical causal structures. It is defined as follows.

\begin{definition}[Blocked paths]
Let $\mathcal{G}$ be a DAG in which $X$ and $Y\neq X$ are nodes and $Z$ be a
set of nodes not containing $X$ or $Y$.  A path from $X$ to $Y$ is
said to be \emph{blocked} by $Z$ if it contains either $A\longrsquigarrow W\longrsquigarrow B$
with $W\in Z$, $A\longlsquigarrow W\longrsquigarrow B$
with $W\in Z$ or $A\longrsquigarrow W\longlsquigarrow B$ such that neither $W$ nor any descendant of $W$ belongs to $Z$, where $A$ and $B$ are arbitrary nodes in the path between $X$ and $Y$.
\end{definition}

\begin{definition}[d-separation]
\label{definition: d-sep}
Let $\mathcal{G}$ be a DAG in which $X$, $Y$ and $Z$ are disjoint
sets of nodes.  $X$ and $Y$ are \emph{d-separated} by $Z$ in
$\mathcal{G}$, denoted as $(X\perp^d Y|Z)_{\cG}$ (or simply $X\perp^d Y|Z$ if $\cG$ is obvious from the context) if every path from a variable in $X$ to a variable in
$Y$ is \emph{blocked} by $Z$, otherwise, $X$ is said to be \emph{d-connected} with $Y$ given $Z$.
\end{definition}

In classical acyclic causal structures (where the Markov condition of Equation~\eqref{eq: markov} holds), it has been shown that every d-separation relation $X \perp^d Y |Z$ between pairwise disjoint subsets of nodes implies that the conditional independence $X \indep Y |Z$ holds in the corresponding probability distribution~\cite{Geiger1990,Verma1988}. In non-classical acyclic causal structures, the same has been shown for d-separation relations between arbitrary disjoint sets of the observed nodes~\cite{Henson2014}. In our example of the Bell causal structure, we have the d-separation relations $X\perp^d B|A$ and $Y\perp^d A|B$, which imply the conditional independences $X\indep B|A$ and $Y\indep A|B$ characterising the no-signalling constraints.

Furthermore, in both cases, given a causal structure $\cG$ and a distribution $P$ compatible with it, the pair $(\cG,P)$ constitute a \emph{faithful causal model} if every conditional independence $X\indep Y|Z$ in $P$ corresponds to a d-separation relation $X\perp^d Y|Z$ in $\cG$. In the non-classical case, $P$ corresponds to the distribution over the observed nodes and cannot be seen as a marginal of a joint distribution over all nodes. Hence conditional independence in the sense of $P(XY|Z)=P(X|Z)P(Y|Z)$ can only be defined when $X$, $Y$ and $Z$ are pairwise disjoint subsets of the observed nodes. In the classical case, conditional independence in this form can also be defined for unobserved nodes and in a faithful, classical causal model, all such conditional independences imply a corresponding d-separation. Note that it is possible to define a notion of conditional independence between quantum nodes in terms of conditional quantum states (instead of conditional probability distributions)~\cite{Allen2017}, but in this paper, we will only consider conditional independence relations involving sets of classical variables, which could be the observed nodes of non-classical causal structures or any node of a classical causal structure. Then an \emph{unfaithful} or \emph{fine-tuned} causal model is one where there exists a conditional independence $X\indep Y|Z$ in the distribution $P$ even though $X$ and $Y$ are \emph{d-connected} in $\cG$. For example, Figure~\ref{fig: Bell2} provides an extension of the Bell causal structure, where there are additional causal influences from each party's input to the other party's output and it is known that any distribution realisable in the original causal structure is realisable in the classical version of this modified causal structure~\cite{Wood2015} (see Appendix~\ref{appendix: doMechanisms} for further details). Note however that the d-separation relation $Y\perp^d A|B$ no longer holds here, and hence any no-signalling distribution would be fine-tuned or unfaithful with respect to this causal structure but not with respect to the original one of Figure~\ref{fig: Bell1}. In other words, the first causal structure faithfully explains no-signalling correlations using non-classical causal mechanisms while the second provides an unfaithful explanation of such correlations using classical causal mechanisms. 

\begin{figure}[t!]
    \centering
\subfloat[]{\includegraphics[]{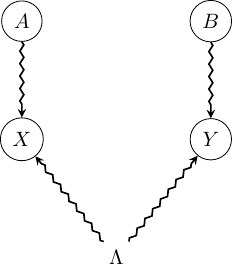}\label{fig: Bell1}}\qquad\qquad\qquad\qquad\subfloat[]{\includegraphics[]{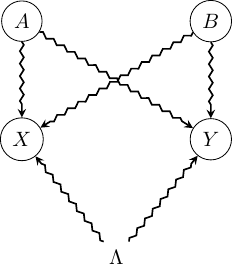}\label{fig: Bell2}}
    \caption{(a) The bipartite Bell causal structure: $\Lambda$ represents a bipartite state (classical, quantum or that of a generalised probabilistic theory) shared by two non-communicating parties Alice and Bob who measure their subsystems locally using classical measurement settings $A$ and $B$ to obtain classical outcomes $X$ and $Y$. (b) A variation of (a) in which the settings $A$ and $B$ are both causes of both outcomes.}
\label{fig: Bell}
\end{figure}

\section{Motivation for analysing fine-tuned and cyclic causal models}
\label{sec:motiveg}
One of the most common assumptions made in the analysis of causal models is that of \emph{faithfulness} or \emph{no fine-tuning}. Fine-tuning complicates causal inference because it involves independences that disappear with small amounts of noise, and fine-tuning is often avoided in the literature (also on the grounds that fine-tuned causal models constitute a set of measure-zero). Even in the Bell scenario explained above, a faithful explanation of the correlations using non-classical causal models is often preferred over the unfaithful explanation using classical causal models. However, there are a number of examples, as we will see below, that necessitate a fine-tuned explanation irrespective of whether the causal structure is classical and non-classical. These include certain everyday scenarios, cryptographic protocols as well as more exotic cases that arise in certain post-quantum theories that allow for superluminal influences without superluminal signalling, which we discuss in Sections~\ref{ssec:friedman} and~\ref{ssec:intro_jamming}.

Another common assumption in the causality literature is that the causal structure is acyclic. Allowing fine-tuned causal influences makes possible cyclic causal structures that are compatible with minimal notions of relativistic causality, such as the impossibility of signalling superluminally at the observed level. Cyclic causal models have also found applications in the classical literature for describing systems with feedback loops~\cite{Pearl2013, Forre2017}. Developing a framework for cyclic and fine-tuned causal models in non-classical theories therefore has both foundational and practical relevance,  enabling us to better understand the operational relationships between causality and signalling with respect to a space-time structure, and their implications for information processing. We now present some concrete examples that necessitate such causal models.

\subsection{Friedman's thermostat and the one-time pad}
\label{ssec:friedman}

Consider a house with an ideal thermostat. Such a thermostat would maintain a constant inside temperature $T_I$ throughout the year, despite variations in the outside temperature $T_O$, by adjusting its energy consumption $E$ accordingly. Going by the correlations alone, one might incorrectly conclude that the inside temperature $T_I$ is causally independent of everything else as it has no correlations with any other variables. However a closer look at the internal workings of the thermostat would reveal that the correct causal explanation is the one shown in Figure~\ref{fig: thermostat}, where $T_I$ is causally influenced by both $T_O$ and $E$, and the feedback loop between $T_I$ and $E$ ensures that the indoor temperature remains constant, by suitably adjusting $E$. The causal model in this case is fine-tuned since the independence of $T_I$ from $T_O$ and $E$ does not correspond to a d-separation relation in the causal structure (Figure~\ref{fig: thermostat}). This thermostat analogy which is attributed to Milton Friedman~\cite{Friedman2003}, can be extended to a number of other scenarios such as the effect of fiscal and monetary policies on economic growth~\cite{Rowe2009}, or physical systems where several forces exactly balance out. 

In cryptographic settings, examples that necessitate fine-tuning include the one-time pad or the ``traitorous lieutenant problem''~\cite{Brul2017}. Consider a general who wishes to relay an important secret message $M$ to an ally and has two lieutenants available as messengers, but one of them is a traitor who might leak the message to enemies. Consider for simplicity that $M$ is a single bit. The general could then adopt the following strategy: Depending on $M=0$ or $M=1$, generate two bits $M_1$ and $M_2$ such that $M_1=M_2$ or $M_1\neq M_2$ and with both uniformly distributed. Give $M_1$ to the first and $M_2$ to the second lieutenant to relay to the ally. Then the ally would receive $M_1$ and $M_2$ and can simply use modulo-2 addition $\oplus$ to obtain $M^*$ which is indeed the original message $M^*=M=M_1\oplus M_2$ (Figure~\ref{fig: traitor}). More importantly, the individual messages $M_1$ and $M_2$ contain no information about $M$ and hence neither lieutenant has any information about the secret message. A similar protocol underlies the one-time pad where a message $M$ is encrypted using a secret key $K$ (both binary for this example) to produce an encrypted message $M_E=M\oplus K$ which can be sent through a public channel as it will carry no information about the original message $M$ if the key $K$ is uniformly distributed and is kept private. Only a receiver of $M_E$ who knows the key $K$ can decrypt the message $M=M_E\oplus K$ (Figure~\ref{fig: otp}). Hence fine-tuning of causal influences i.e., causation in the absence of correlation, is crucial for the security. 

Further, cyclic causal models have been analysed in the classical literature~\cite{Pearl2013, Forre2017} for the purpose of describing complex systems involving feedback loops, analogous to the thermostat example. Note that the cyclic dependencies here do not correspond to closed time-like curves since the variables under question are considered over a period of time--- e.g., a demand at time $t_1$ influences the price at time $t_2>t_1$, which in turn influences the demand at time $t_3>t_2$. Within our framework we would use separate random variables for each of the times, which in some cases would remove the cyclicity. To characterise genuine closed time-like curves one must consider not only the pattern of causal influences, but also how the relevant variables are assigned space-time locations.
\begin{figure}[t!]
 \centering\subfloat[\label{fig: thermostat}]{\includegraphics[]{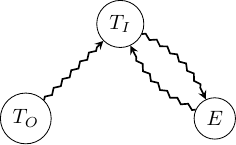}}\qquad
 \subfloat[\label{fig: traitor}]{\includegraphics[]{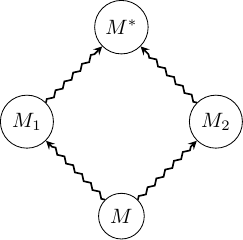}}\qquad \subfloat[\label{fig: otp}]{\includegraphics[]{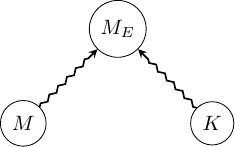}}
    \caption[Motivating examples for cyclic and fine-tuned causal models]{\textbf{Causal structures for the motivating examples described in the main text: } \textbf{(a)} Friedman's thermostat \textbf{(b)} Traitorous Lieutenant \textbf{(c)} One-time pad. Note that there may be additional causal influences. For example, in (b), we will later see that an additional common cause between $M_1$ and $M_2$ will be required to fully explain the correlations (cf.\ Figure~\ref{fig: jamming}).}
    \label{fig: motiv_eg}
\end{figure}

\subsection{Jamming non-local correlations}
\label{ssec:intro_jamming}
Another example that involves fine-tuning, even though it has not been motivated or discussed in this context, is that of jamming non-local correlations introduced in~\cite{Grunhaus1996}. The work~\cite{Grunhaus1996} outlines the possibility of post-quantum theories beyond the standard no-signalling probabilistic theories (such as box-world) that are still compatible with the impossibility of superluminal signalling. A better understanding of such theories would shed light on the principles of causality (beyond no superluminal signalling) that distinguish quantum and GPTs from these more general post-quantum theories. However, a mathematical framework for analysing causality in such theories is lacking, and the main purpose of this paper is to develop a general framework for modelling the relationships between causation and space-time structure, that can in particular be applied to jamming theories. In upcoming work~\cite{Jamming2}, we apply our framework to the jamming scenario in more detail identifying new aspects of theories that admit such scenarios. We proceed by reviewing the jamming scenario.

Consider three space-like separated parties, Alice, Bob and Charlie sharing a tripartite system $\Lambda$ which they measure using measurement settings $A$, $B$ and $C$, producing outcomes $X$, $Y$ and $Z$ respectively. Suppose that their space-time locations are such that Bob's future light cone entirely contains the joint future of Alice and Charlie, as shown in Figure~\ref{fig: trins}. The standard no-signalling conditions forbid the input of each party from being correlated with the outputs of any subset of the remaining parties, for instance, the joint distribution $P(XYZ|ABC)$ satisfies $P(XZ|ABC)=P(XZ|AC)$. In~\cite{Grunhaus1996} it is argued that a violation of this requirement does not lead to superluminal signalling in the space-time configuration of Figure~\ref{fig: trins}, as long as $P(X|ABC)=P(X|A)$ and $P(Z|ABC)=P(Z|C)$. This is because any influence that $B$ exerts jointly (but not individually) on $X$ and $Z$ can only be checked when $X$ and $Z$ are brought together to evaluate the correlations $P(XZ|ABC)$. This is only possible in their joint future, which is by construction contained in the future of $B$. Bob is said to \emph{jam} the correlations between Alice and Charlie non-locally.

In~\cite{Horodecki2019} the causal structure for such an experiment is represented by introducing a new random variable $C_{XZ}$ associated with the set $XZ$ that encodes the correlations between its elements. Then $B$ is seen as a cause of $C_{XZ}$ but not as a cause of either $X$ or $Z$. In general scenarios, this representation would require adding a new variable for every non-empty subset of the observed nodes, which can become intractable.\footnote{In general, this representation would include up to $2^n-1$ observed variables whenever the original set of observed variables has $n$ elements.} In fact, given the assumptions that $B$ is freely chosen and is hence a parentless node, and that for non-trivial jamming, it must be correlated with $XZ$, any causal structure where $B$ is not a cause of at least one of $X$ and $Z$ (the causal structure proposed in~\cite{Horodecki2019} being such an example) would not lead to a sensible causal model satisfying the d-separation property (Definition~\ref{definition: compatdist}), which is a basic property satisfied by classical and non-classical causal models alike~\cite{Pearl2009, Henson2014}. This is because such a causal structure would have a d-separation between $B$ and $XZ$ which would require these sets to be uncorrelated, and hence disallow any non-trivial jamming. Further, this representation does not always correspond to what is physically going on--- for instance, in the example of the traitorous lieutenant, this would introduce a new variable $C_{M_1M_2}$ that is observably influenced by the general's original message $M$, while $M$ would no longer be seen as a cause of $M_1$ or $M_2$. However, we know that we physically generated $M_1$ and $M_2$ using $M$\footnote{And possibly some additional information to explain the distribution over the individual variables. As we will see later in Figure~\ref{fig: jamming}, a common cause $\Lambda$ between $M_1$ and $M_2$ would also be required in such examples.}, hence it is indeed a cause of at least one of them.  Therefore, we aim to develop a new approach to causal modelling in a general class of fine-tuned and cyclic scenarios, using only the original variables/systems. The following proposition illustrates that the jamming scenario considered in~\cite{Grunhaus1996, Horodecki2019} necessarily corresponds to a fine-tuned causal model over the original variables. Here, jamming is considered in the context of multipartite Bell scenarios where the jamming variable is a freely chosen input of one of the parties. In the causal model approach adopted here, we will take free choice of a variable to correspond to the exogeneity of that variable in the causal structure.\footnote{This is a standard way of modelling free choices in a causal model, although note that it is not equivalent to other definitions of free choice~\cite{CR_ext,CR2013,Horodecki2019}.} 

\begin{figure}[t!]
\centering
	\subfloat[\label{fig: trins1}]{	\centering
\includegraphics[]{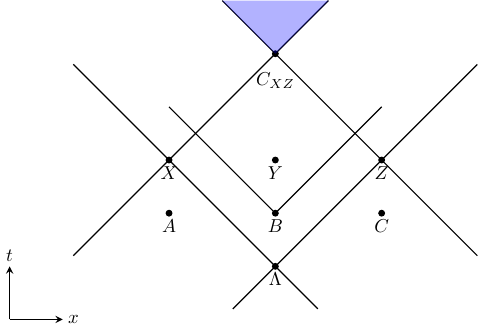}}\qquad\quad
\subfloat[\label{fig: trins2}]{	\centering
\includegraphics[]{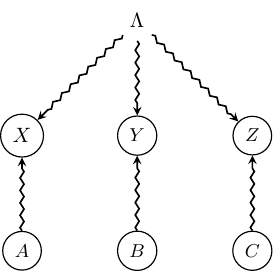}}

	\caption{\textbf{Jamming correlations in the tripartite Bell scenario:} Three parties Alice, Bob and Charlie share a tripartite system $\Lambda$, they measure their subsystem using the freely chosen measurement settings $A$, $B$ and $C$, producing the outcomes $X$, $Y$ and $Z$ respectively, without communicating. \textbf{(a)} Space-time configuration for the jamming scenario~\cite{Grunhaus1996, Horodecki2019}: the measurement of the three parties are pairwise space-like separated with the future of Bob's input $B$ containing the joint future of Alice's and Charlie's outputs $X$ and $Z$ (blue region). Here, it is argued that allowing $B$ to signal to $X$ and $Z$ jointly but not individually is consistent with the principle of ``no signalling outside the future lightcone'', since the joint signalling can only be verified in the blue region which is in the future of $B$. Such correlations form a larger set as compared to the standard tripartite no-signalling correlations, which forbid individual as well as joint signalling from the inputs of any set of parties to the outputs of a complementary set of parties~\cite{Salazar2020}. To model the joint signalling through jamming, a new variable $C_{XZ}$ was introduced in~\cite{Horodecki2019}, located at the earliest point in the joint future of $X$ and $Z$ and representing the correlations between $X$ and $Z$. \textbf{(b)} Causal structure for the usual tripartite Bell experiment.}
	\label{fig: trins}
\end{figure}

\begin{proposition}
\label{prop:finetune}
Consider a tripartite Bell experiment where three parties Alice, Bob and Charlie share a system $\Lambda$ which they measure using the setting choices $A$, $B$ and $C$, producing the measurement outcomes $X$, $Y$ and $Z$ respectively. Let $\cG$ be any causal structure with only $\{A,B,C,X,Y,Z\}$ as the observed nodes where $A$, $B$ and $C$ are exogenous. Then any conditional distribution $P(XYZ|ABC)$ corresponding to the jamming correlations of~\cite{Grunhaus1996, Horodecki2019} defines a fine-tuned causal model over $\cG$, irrespective of the nature (classical, quantum or GPT) of $\Lambda$. 
\end{proposition}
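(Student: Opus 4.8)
The plan is to exhibit an explicit conditional independence that holds in every jamming distribution $P$ but is not mirrored by a d-separation in $\cG$, which is exactly the definition of a fine-tuned (unfaithful) causal model. The raw material is two features of jamming correlations: \textbf{(i)} \emph{non-trivial joint signalling}, i.e.\ $P(XZ|ABC)$ genuinely depends on $B$, so $P(XZ|ABC)\neq P(XZ|AC)$; and \textbf{(ii)} \emph{no individual signalling}, i.e.\ $P(X|ABC)=P(X|A)$ and $P(Z|ABC)=P(Z|C)$. Throughout I would take the exogenous settings to carry a full-support product distribution, so that operational equalities like $P(X|ABC)=P(X|A)$ become genuine conditional independences, here $X\indep B\,|\,A$ and $Z\indep B\,|\,C$ (the marginalisation over the remaining exogenous setting being immediate).

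First I would show that compatibility of $P$ with $\cG$ forces $B$ to be a cause of at least one of $X$ and $Z$. Suppose not. Since $B$ is exogenous it has no incoming edges, so any path from $B$ to $X$ (or to $Z$) that is not a directed path must contain a collider; and a directed path would make $B$ a cause, contradicting the assumption. Because $A$ and $C$ are also exogenous, they can be neither colliders nor descendants of colliders, so conditioning on $\{A,C\}$ cannot unblock any such collider. Hence $B\perp^d XZ\,|\,AC$ in $\cG$, and the d-separation property (a basic soundness property of causal models in this framework, classical, non-classical, and in the cyclic case covered here) yields $XZ\indep B\,|\,AC$, i.e.\ $P(XZ|ABC)=P(XZ|AC)$, contradicting \textbf{(i)}. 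Therefore $B$ is a cause of $X$ or of $Z$.

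Take the case that $B$ is a cause of $X$ (the case of $Z$ is symmetric, with $C$ in place of $A$). Then there is a directed path $B\longrsquigarrow\cdots\longrsquigarrow X$, whose intermediate nodes are all chain nodes. A directed path is blocked by a set only if one of these chain nodes lies in it, and since $A$ is exogenous it cannot be an intermediate node of a directed path; hence the path is unblocked by $\{A\}$ and $B,X$ are d-connected given $A$. But \textbf{(ii)} supplies the conditional independence $X\indep B\,|\,A$ in $P$. Thus $(\cG,P)$ contains a conditional independence whose variables are d-connected under the same conditioning set, which is precisely fine-tuning. The argument never uses the internal structure of $\Lambda$, only the exogeneity of $A,B,C$, the d-separation property, and the two signalling features of jamming, so it holds whether $\Lambda$ is classical, quantum or GPT.

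The step I expect to require the most care is showing that $B$ must cause $X$ or $Z$: it hinges on the observation that conditioning on the exogenous settings $\{A,C\}$ neither opens collider paths (no collider can have an exogenous descendant, and exogenous nodes are not colliders) nor blocks the relevant directed path, and on invoking the d-separation property in the possibly cyclic setting rather than only the acyclic soundness results quoted in the preliminaries. The remaining bookkeeping, namely converting the operational no-signalling equalities into conditional independences under a product setting distribution and matching the conditioning sets, is routine.
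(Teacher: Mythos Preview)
Your proof is correct and follows essentially the same approach as the paper: use the joint dependence $B\not\indep XZ$ together with the exogeneity of $B$ to force a directed path from $B$ into $\{X,Z\}$, then observe that the resulting d-connection clashes with the marginal no-signalling independence, witnessing fine-tuning. The only substantive difference is bookkeeping: the paper works with the \emph{unconditional} independences $B\indep X$, $B\indep Z$, $B\not\indep XZ$ (which follow from the operational equalities after marginalising over the independent exogenous settings), whereas you carry the conditioning sets $\{A\}$, $\{C\}$, $\{A,C\}$ throughout and accordingly need the extra observation that exogenous conditioning variables can neither open colliders nor sit on a directed path. Both routes are sound; yours is slightly more explicit about the d-separation mechanics, while the paper's is shorter because the unconditional statements sidestep the need to argue about conditioning on $A$ and $C$.
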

\begin{proof}
Jamming allows Bob's input $B$ to be correlated jointly with $X$ and $Z$ but not individually with $X$ or $Z$. Hence jamming correlations in the tripartite Bell experiment of~\cite{Grunhaus1996, Horodecki2019} are characterised by the conditions $B\indep X$ and $B\indep Z$ while $B\not\indep XZ$. Since $B$ is exogenous (i.e., has no incoming arrows), the only way to explain the correlation between $B$ and $XZ$ is through an outgoing arrow or a directed path from $B$ to the set $XZ$ i.e., either an arrow from $B$ to $X$, or from $B$ to $Z$ or both.\footnote{If this were not the case, $B$ would be d-separated from $XZ$ and therefore cannot be correlated with it.} Since we require both independences $B\indep X$ and $B\indep Z$ to hold, at least one of these will not be a consequence of d-separation and hence the causal model must be fine-tuned in order to produce these correlations in the causal structure $\cG$.
\end{proof}

The simplest example of jamming is where $B=X\oplus Z$ and all variables are binary and uniformly distributed (the remaining variables are irrelevant here), and we will revisit this example several times in this paper. These are the same correlations as the traitorous lieutenant example.  However in the jamming case, the three variables involved are taken to be pairwise space-like separated and since $B$ is exogenous, this corresponds to a situation where $B$ superluminally influences the correlations between $X$ and $Z$. The jamming scenario involves superluminal causal influences that need not lead to observable superluminal signalling. Generalising from this idea, one can consider whether such influences can be used to create causal loops that do not lead to any signalling to the past, or even outside the space-time future. In the interest of generality and of understanding the relationships between the principles of ``no superluminal signalling'' and ``no causal loops'', one must consider fine-tuned causal influences along with cyclic causal influences, and characterise when these influences may or may not lead to signalling outside the future with respect to a space-time structure, even in the presence of latent non-classical causes. 

\section{The framework, Part 1: Causality}
\label{sec: causmod}  
This section is devoted to outlining our causal modelling framework. Section~\ref{ssec: causmodA} provides a minimal definition (Definition~\ref{definition: compatdist}) of a causal model, allowing cyclic, fine-tuned and non-classical causal influences, including when an observed distribution is compatible with a causal structure. In Section~\ref{ssec: interventions}, we describe the use of interventions within such causal models. This enables us to show that Pearl's rules of do-calculus \cite{Pearl2009} hold in the more general causal models defined here (Theorem~\ref{theorem:dorules}). Interventions give rise to \emph{affects relations} which capture the notion of signalling in a causal model (Definition~\ref{definition: affects}). Using these we classify the causal arrows in terms of whether or not they enable signalling.
For some of our results we find it useful to extend these affects relations to \emph{conditional and higher-order (HO) affects relations} (Section~\ref{ssec: HOaffects}), which capture the most general way of signalling in our framework, through joint interventions on multiple nodes.  Corollary~\ref{corollary:HOaffectsCause2} gives a main implication of conditional HO affects relations on the underlying causal structure. Section~\ref{ssec: relations} summarises the relations between the various concepts and illustrates them with several examples.

\subsection{Cyclic and fine-tuned causal models}
\label{ssec: causmodA}
Following the motivation set out in the previous sections, we wish to relax the assumptions of acyclicity and faithfulness and extend causal modelling methods to cyclic and fine-tuned causal structures with latent quantum and post-quantum causes. While quantum cyclic causal models have been previously studied~\cite{Barrett2020}, these have been analysed in the faithful case and are based on the split-node causal modelling approach of~\cite{Allen2017}. This approach is not equivalent to the standard causal modelling approach such as~\cite{Henson2014} in the cyclic case, for example the former forbids faithful 2 node cyclic causal structures~\cite{Barrett2020} but the latter does not, and the former admits a Markov factorisation (analogous to Equation~\eqref{eq: markov}) while the latter does not in general (as explained in the next paragraph). To the best of our knowledge, there is no prior framework for causally modelling cyclic and unfaithful causal structures in the presence of quantum and post-quantum latent nodes, the lack of a Markov factorisation posing a particular challenge. Here, we propose a framework for achieving this.\footnote{Note that there may be other, inequivalent ways to do the same, based on a different condition for compatibility of a distribution with a causal structure, for example.} We will define causal models in terms of minimal conditions that they must satisfy at the level of the observed ( and hence classical) nodes.

\paragraph{Observed distribution: }In classical acyclic causal models, the causal Markov condition~\eqref{eq: markov} is used for defining the compatibility of the observed distribution with the causal structure~\cite{Pearl2009}. In the non-classical case, an analogous generalised Markov condition of~\cite{Henson2014} constraining the non-classical causal mechanisms (states, transformations and measurements) provides a compatibility condition. However, in cyclic causal models, demanding such a factorisation will be too restrictive even in the classical case. For example, consider the simplest cyclic causal structure, the 2-cycle where $X\longrsquigarrow Y$ and $Y\longrsquigarrow X$, with $X$ and $Y$ observed and $X=Y$. Used na\"ively, the Markov condition would imply that $P(XY)=P(X|Y)P(Y|X)$. Since $X=Y$, the right hand side is a product of deterministic distributions, which forces $P(XY)$ to also be deterministic in order to be a valid distribution. In order to not restrict directed cycles to only consist of deterministic variables, we instead use a weaker compatibility condition in terms of d-separation between observed nodes. As we have previously noted, this is a concept that also applies to non-classical causal structures. The condition captures the intuition that certain graph separation properties in the causal structure must imply (conditional) independences in the correlations it gives rise to. Based on this, we define compatibility of the observed distribution with a cyclic causal structure as follows within our framework.

\begin{definition}[Compatibility of observed distribution with a causal structure] 
\label{definition: compatdist}
Let $\{X_1,\ldots,X_n\}$ be a set of random variables denoting the observed nodes of a directed graph $\mathcal{G}$ (which may also have unobserved nodes), and $P(X_1,\ldots,X_n)$ be a joint probability distribution over them. Then $P$ is said to be \emph{compatible with} $\mathcal{G}$ (or to \emph{satisfy the d-separation property}) if for all disjoint subsets $X$, $Y$ and $Z$ of $\{X_1,\ldots,X_n\}$,\footnote{Note that we only need to consider d-separation between observed sets of variables in this definition, however the paths being considered may involve unobserved nodes. For example, if the observed variables $X$ and $Y$ have an unobserved common cause $\Lambda$, then $X$ and $Y$ are not d-separated by the empty set since there is an unblocked path between $X$ and $Y$ through the unobserved common cause, and naturally we do not expect $X$ and $Y$ to be independent in this case.}
\begin{equation*}
    X\perp^d Y|Z \quad\Rightarrow\quad X\indep Y|Z \quad \text{ i.e., $P(XY|Z)=P(X|Z)P(Y|Z)$}.
\end{equation*}

\end{definition}

 In the previous literature, causal models are typically defined in terms of a causal structure and causal mechanisms (which are then used to derive the observed distribution). When doing so it known that Definition~\ref{definition: compatdist} is satisfied by classical as well as non-classical causal models in the acyclic case~\cite{Pearl2009, Henson2014}. The compatibility property holds in several classical cyclic causal models~\cite{Pearl2013, Forre2017}. For classical acyclic models, it is equivalent to the causal Markov condition~\eqref{eq: markov}~\cite{Pearl1995}. In Appendix~\ref{appendix: doMechanisms}, we provide an example of a quantum cyclic causal model (with causal mechanisms) where this holds. However there also exist cyclic causal models producing observed distributions that do not satisfy Definition~\ref{definition: compatdist}, we discuss this further in the Appendix as well. There, we also present further motivation for the compatibility condition of Definition~\ref{definition: compatdist} in terms of the properties of the underlying causal mechanisms (e.g., functional dependences in the classical case or completely positive maps in the quantum case) and outline possible methods for identifying when this condition might hold for non-classical cyclic causal models. Even in the classical case, several inequivalent definitions of compatibility are possible (which become equivalent in the acyclic case) and~\cite{Forre2017} presents a detailed analysis of these conditions and the relationships between them. Such an analysis for the non-classical case is beyond the scope of the present work. For the rest of this paper, we will only consider causal models that satisfy the compatibility condition~\ref{definition: compatdist}.
\par
We will work with the following minimal definition of a causal model in this paper which is in terms of the graph and observed distribution only. Further details about the causal mechanisms such as the functional relationships between classical variables, choice of quantum states or transformations, or generalised tests~\cite{Henson2014} can also be included in the full specification of the causal model. These constitute the \emph{causal mechanisms} of the model. Developing a complete and formal specification of these mechanisms and deriving the conditions for their compatibility with cyclic, fine-tuned and non-classical causal models is a tricky problem, we outline possible ideas for this in Appendix~\ref{appendix: doMechanisms} and leave the full problem for future work. The results of this paper hold without such a specification which if added would be a way to generalise them. Interestingly, we find that even with this minimal definition, we can derive several new results for a general class of causal models and also reproduce results that were originally derived for acyclic classical causal models.

\begin{definition}[Causal model]
\label{def: causalmodel}
A causal model over a set of observed random variables $\{X_1,\ldots,X_n\}$ consists of a directed graph $\mathcal{G}$ over them (possibly involving classical, quantum or GPT unobserved systems) and a joint distribution $P_{\cG}(X_1,\ldots,X_n)$ that is compatible with the graph $\cG$ according to Definition~\ref{definition: compatdist}. 
\end{definition}
Note that other definitions of causal model are used in the literature, in particular, sometimes the definition requires that $P_{X_i|\text{par}(X_i)}$ (or more generally, a possibly non-classical channel from $\text{par}(X_i)$ to $X_i$) is given for each node $X_i$, see e.g.~\cite{Pearl2009, Henson2014}.

 Definition~\ref{definition: compatdist} allows for fine-tuned distributions to be compatible with the causal structure since it only requires that d-separation implies conditional independence and not the converse.  Fine-tuned causal models may in general have an arbitrary number of additional conditional independences that are not implied by the d-separation relations in the corresponding causal graph. The following lemma shows that some additional conditional independences that are not directly implied by d-separation can be derived using d-separation and other independences (not implied by d-separation) that may be provided.

\begin{restatable}{lemma}{CI}
\label{lemma: CI} 
Let $S_1$, $S_2$ and $S_3$ be three disjoint sets of RVs such that $S_1\indep S_2|S_3$. If $S$ is a set of RVs that is d-separated from these sets in a directed graph $\mathcal{G}$ containing all the members of $S_1$, $S_2$, $S_3$ and $S$ as nodes i.e., $S\perp^d S_i$ $\forall i\in\{1,2,3\}$, then any distribution $P$ that is compatible with $\mathcal{G}$ also satisfies the following conditional independences,
$$S_1 S\indep S_2|S_3,\quad S_1\indep S_2 S|S_3 \text{ and} \quad S_1\indep S_2|S_3 S.$$
\end{restatable}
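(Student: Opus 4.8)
The plan is to reduce everything to two facts: a purely graphical one---that the individual marginal d-separations amalgamate into a single d-separation of $S$ from the whole union $S_1 S_2 S_3$---and then a purely probabilistic one, obtained by feeding this through the compatibility condition (Definition~\ref{definition: compatdist}) and combining it with the hypothesis $S_1 \indep S_2 | S_3$ via the standard semi-graphoid axioms.

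First I would establish that $S \perp^d S_1 S_2 S_3$. Since $S$ is disjoint from each $S_i$ (d-separation presupposes disjointness) and the $S_i$ are mutually disjoint, $S_1 S_2 S_3$ is a well-defined set disjoint from $S$. Any path from a node of $S$ to a node of $S_1 S_2 S_3$ terminates in one of $S_1$, $S_2$ or $S_3$, so it is one of the paths already assumed blocked by the empty set under the hypotheses $S \perp^d S_i$. As the conditioning set is empty throughout, no collider can be unblocked by enlarging it, so every such path remains blocked and $S \perp^d S_1 S_2 S_3$ holds. Applying compatibility (Definition~\ref{definition: compatdist}) with $X = S$, $Y = S_1 S_2 S_3$ and $Z = \emptyset$ then yields the probabilistic independence $S \indep S_1 S_2 S_3$.

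With this in hand, the three claimed relations follow from semi-graphoid manipulations of $S \indep S_1 S_2 S_3$ together with the given $S_1 \indep S_2 | S_3$. Weak union applied to $S \indep S_1 S_2 S_3$ gives $S \indep S_1 | S_2 S_3$, hence by symmetry $S_1 \indep S | S_2 S_3$; contracting this with $S_1 \indep S_2 | S_3$ produces $S_1 \indep S_2 S | S_3$ (the second claim). The symmetric argument interchanging the roles of $S_1$ and $S_2$ (the hypotheses are invariant under this swap) gives $S_1 S \indep S_2 | S_3$ (the first claim). Finally, weak union applied to $S_1 \indep S_2 S | S_3$ yields $S_1 \indep S_2 | S_3 S$ (the third claim).

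I expect the main obstacle to be the first step rather than the probabilistic bookkeeping: one must verify carefully that marginal d-separations from the individual sets really do amalgamate into d-separation from their union. This is where the hypothesis that the separations are unconditional (empty conditioning set) is essential---with a common nonempty conditioning set the same argument goes through, but if the conditioning sets differed, or if conditioning could activate a collider on a path reaching a different $S_i$, the amalgamation could fail. Once $S \indep S_1 S_2 S_3$ is secured, the remainder is a routine application of the decomposition, weak-union, contraction and symmetry axioms, all of which hold for any probability distribution without positivity assumptions.
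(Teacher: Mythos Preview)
Your proof is correct and follows essentially the same approach as the paper: both first amalgamate the three marginal d-separations into the single independence $S \indep S_1 S_2 S_3$ (the paper phrases this as ``$S$ is d-separated from every subset of the union'' and hence $P_{S|S'}=P_S$ for all $S'\subseteq S_1S_2S_3$), and then combine this with the hypothesis $S_1\indep S_2|S_3$ to obtain the three claimed relations. The only difference is stylistic: where the paper carries out the second step by explicit chain-rule computations (e.g.\ expanding $P_{S_2|SS_1S_3}$ and cancelling), you invoke the semi-graphoid axioms (weak union, contraction, symmetry) by name, which is a cleaner packaging of the same probabilistic manipulations.
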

A proof can be found in Appendix~\ref{appendix: proofs1}.
Note that this lemma is trivial in the case of faithful causal models. This is because, the independence $S_1\indep S_2|S_3$ implies the d-separation $S_1\perp^d S_2|S_3$ for a faithful causal model.  Then, combined with $S\perp^d S_i$, we get the d-separations $S_1 S\perp^d S_2|S_3$, $S_1\perp^d S_2 S|S_3$ and $S_1\perp^d S_2|S_3 S$, which in turn imply the corresponding independences. This property is not so straighforward for fine-tuned causal models but nevertheless holds. Specific examples of this property for fine-tuned causal models are discussed in Appendix~\ref{appendix: examples}.

\subsection{Interventions and affects relations}
\label{ssec: interventions}

So far, we have only discussed the possible correlations that can be compatible with a causal structure. However, it is not possible to infer an underlying causal structure from correlations alone: correlations are symmetric while causal relationships are directional. For example, if two variables $X$ and $Y$ are correlated, Reichenbach's principle~\cite{Reichenbach} asserts that either $X$ must be a cause of $Y$, $Y$ must be a cause of $X$, $X$ and $Y$ share a common cause or any combination thereof. These causal explanations cannot be distinguished on the basis of observed correlations alone. However, intuitively, we can argue that if ``doing'' something only to $X$ produces a change in the distribution over $Y$, then $X$ is a cause of $Y$. This intuition is formalised in terms of interventions and do-conditionals~\cite{Pearl2009}, and we will adopt the augmented graph approach~\cite{Pearl2009} for defining these. 
\bigskip
\paragraph{Pre-intervention, augmented and post-intervention causal strutures:} Consider a causal model associated with a causal structure $\cG$ over a set $S=\{X_1,\ldots,X_n\}$ of observed nodes. External intervention on a node $X\in S$ can be described using an augmented graph $\cG_{I_{X}}$ which is obtained from the original graph $\cG$ by adding a node $I_{X}$ and an edge $I_{X}\longrsquigarrow X$ (with everything else unchanged). The intervention variable $I_{X}$ can take values in the set $\{\mathrm{idle}, \{\mathrm{do}(x)\}_{x\in X}\}$, where $I_{X}=\mathrm{idle}$ corresponds to the case where no intervention is performed (i.e., the situation described by the original causal model) and $I_{X}=\mathrm{do}(x)$ forces $X$ to take the value $x$ by cutting off its dependence on all other parents. From this, we see that whenever $I_{X}\neq \mathrm{idle}$, $X$ no longer depends on its original parents $\text{par}_{\cG}(X)$. Therefore, conditioned on $I_{X}\neq \mathrm{idle}$, it is illustrative to consider a new graph which we denote by $\cG_{\mathrm{do}(X)}$ that represents the post-intervention causal structure after a non-trivial intervention has been performed. The causal graph $\cG_{\mathrm{do}(X)}$  is obtained by cutting off all incoming arrows to $X$ except the one from $I_{X}$ in the causal graph $\cG_{I_{X}}$, with everything else unchanged. An example of the graphs $\cG$, $\cG_{I_{X}}$ and $\cG_{\mathrm{do}(X)}$ is given in Figure~\ref{fig:augCS}. The above procedure also applies to interventions on subsets of the nodes, for example, if $X$ is a subset of the observed nodes that is being intervened on, an exogenous intervention variable $I_{X_i}$ will be introduced for each element $X_i$ of $X$, along with the corresponding edge $I_{X_i}\longrsquigarrow X_i$. Then, $I_X\longrsquigarrow X$ will be used as a short hand to denote that each element of $I_X=\{I_{X_i}\}_i$ has a direct causal arrow to the corresponding $X_i$. Note that requiring each $I_{X_i}$ to be exogenous ensures that the intervention to be performed on each node is chosen independently (in principle, one could consider correlated interventions as well but we do not do so here).

\bigskip
\paragraph{Defining the post intervention causal model: }The effect of an intervention on the node $X$ setting $X=x$, i.e., performing $\mathrm{do}(x)$ is to transform the original probability distribution $P_{\mathcal{G}}(X_1,\ldots,X_n)$ into a new probability distribution $P_{\mathcal{G}_{\mathrm{do}(X)}}(X_1,\ldots,X_n,I_{X})$. 
These distributions are compatible with the original (i.e., pre-intervention) and the post-intervention graphs, $\cG$ and $\cG_{\mathrm{do}(X)}$ respectively and the following defining rules tell us some of the relationships between these distributions. Here the distribution $P_{\mathcal{G}_{I_X}}(X_1,\ldots,X_n,I_{X})$ compatible with the augmented graph $\cG_{I_X}$ mediates the relationships between the pre and post intervention scenarios. Note that the set of intervention variables $I_X$ is additionally introduced in going from $\cG$ to $\cG_{I_X}$ or $\cG_{\mathrm{do}(X)}$. In the corresponding causal models, the distribution over $I_X$ can be arbitrary and all of the following definitions and results hold for any choice of $P_{I_X}$. Then, for any two disjoint subsets $X$ and $Y$ of the observed nodes, the following defining equations hold.
\begin{subequations}
\begin{align}
\label{eq: intervention1}
P_{\cG_{I_X}}(Y|I_X=\mathrm{idle})&=P_{\cG}(Y) \\
\label{eq: intervention2}
P_{\cG_{I_X}}(Y|I_X=\mathrm{do}(x))&=P_{\cG_{\mathrm{do}(X)}}(Y|I_X=\mathrm{do}(x))=P_{\cG_{\mathrm{do}(X)}}(Y|X=x) \quad \forall  x\\
\label{eq: intervention3}
P_{\cG_{I_X}}(Y|I_X=\mathrm{do}(x), X=x)&=P_{\cG_{I_X}}(Y|I_X=\mathrm{do}(x)) \quad \forall  x\\
\label{eq: intervention4}
P_{\cG_{I_X}}(I_X=\mathrm{do}(x), X=x')&=0 \quad \forall x, x' \text{ such that } x\neq x'
\end{align}
\end{subequations}

Intuitively, the first equation tells us that when all the intervention variables are ``idle'', this corresponds to the original causal model, as no intervention is performed. The remaining three equations capture the fact that when a non-trivial intervention is performed, each intervention variable $I_{X_i}\in I_X$ is perfectly correlated with the corresponding intervened variable $X_i\in X$. 
The conditional probability distribution $P_{\cG_{\mathrm{do}(X)}}(Y|X=x)$ of Equation~\eqref{eq: intervention2} is often denoted simply as $P(Y|\mathrm{do}(x))$ and commonly referred to as the \emph{do-conditional}. Note that $P(y|\mathrm{do}(x)):=P_{\cG_{\mathrm{do}(X)}}(y|x)\neq P(y|x):=P_{\cG}(y|x)$ in general.  At first sight, it might appear that these defining equations do not tell us how the pre and post intervention distributions $P_{\cG}$ and $P_{\cG_{\mathrm{do}(X)}}$ are related since $P_{\cG}$ is related to $P_{\cG_{I_X}}$ only when $I_X=$ idle (Equation~\eqref{eq: intervention1}) and $P_{\cG_{I_X}}$ is related to $P_{\cG_{\mathrm{do}(X)}}$ only when $I_X\neq$ idle. However, as we will see in subsequent sections, these defining rules along with compatibility condition of Definition~\ref{definition: compatdist} allow us to derive further useful rules that explicitly connect the pre and post intervention distributions. The intuition for this is that the augmented and post-intervention graphs are constructed from the pre-intervention graph and certain d-separations in the pre-intervention graph imply corresponding d-separations in the augmented and post-intervention graphs, and therefore certain independences in the associated distributions.

\begin{figure}[t!]
    \centering
		\subfloat[]{\includegraphics[]{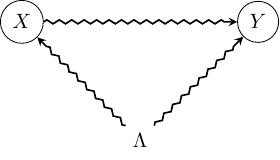}
	}\quad	\subfloat[]{\includegraphics[]{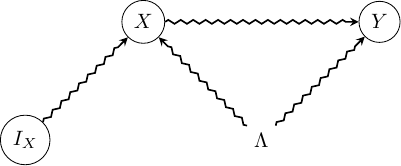}
	}\quad
	\subfloat[]{\includegraphics[]{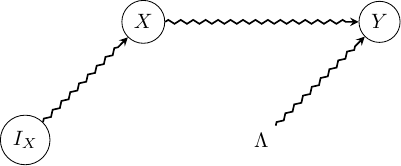}
	}
    \caption[Pre-intervention, augmented and post-intervention causal structures]{\textbf{Pre-intervention, augmented and post-intervention causal structures: } Taking the original, pre-intervention causal structure, $\cG$, to be that of (a), parts (b) and (c) of this figure illustrate the augmented causal structure, $\cG_{I_X}$, and post-intervention causal structure, $\cG_{\mathrm{do}(X)}$, for intervention on $X$. In $\cG_{I_X}$, the variable $I_X$ can take values in the set $\{\mathrm{idle},\{\mathrm{do}(x)\}_{x\in X}\}$ while in $\cG_{\mathrm{do}(X)}$, it can only take the values $\{\mathrm{do}(x)\}_{x\in X}$ corresponding to an active intervention. Conditioned on $I_X=\mathrm{idle}$, we effectively obtain the original causal model (a) which corresponds to no intervention being performed, as specified by Equation~\eqref{eq: intervention1}.}
    \label{fig:augCS}
\end{figure}

\bigskip
\paragraph{The physical picture: }At the level of the causal mechanisms (if these are also given), the causal mechanisms of $\cG_{\mathrm{do}(X)}$ can be obtained from those of $\cG$ simply by updating the causal mechanisms for each node $X_i$ in $X$ as $X_i=x_i$ iff $I_{X_i}=\mathrm{do}(x_i)$ (while leaving the causal mechanisms for all other nodes unchanged) i.e., $P_{\cG_{\mathrm{do}(X)}}(X)$ is fully determined by the original causal model, the causal mechanisms and $P_{\cG_{\mathrm{do}(X)}}(I_X)$ which can be chosen arbitrarily for the exogenous set $I_X$. Physically, the post-intervention distribution (or the do-conditional) corresponds to additional empirical data that are collected in an experiment, which can, in general, be different from the experiment generating the original, pre-intervention data. For example, when the original experiment involves passive observation of correlations between the smoking tendencies and presence of cancer in a group of individuals, an intervention model may involve forcing certain individuals to take up smoking and then studying their chances of developing cancer. In repeated trials, the proportion of individuals who are passively observed and those that are actively intervened upon may be chosen as desired. The latter type of experiments may not necessarily be ethical but are nevertheless a physical possibility. In certain cases, it may be possible to fully deduce the post-intervention statistics counterfactually from the pre-intervention data (passive observation) alone, and the latter experiment (active intervention) need not be actually performed, sparing us some ethical dilemmas. For example, in a causal structure where all nodes are observed, this is always possible~\cite{Pearl2009}. However, even in simple classical causal structures with unobserved nodes, the post-intervention distribution cannot be completely determined using the pre-intervention distribution alone~\cite{Pearl2009}.

\bigskip
\paragraph{Further relationships between the pre and post intervention causal models: }As explained above, determining the post-intervention distribution from the pre-intervention distribution alone is not possible in the general settings considered here. However, the compatibility condition of Definition~\ref{definition: compatdist} along with the defining rules of Equations~\eqref{eq: intervention1}-\eqref{eq: intervention4} allows us to derive further useful relationships between these distributions, in particular the three rules of Pearl's do-calculus~\cite{Pearl1995, Pearl2009}. These rules have been originally derived in faithful classical causal models satisfying the causal Markov property~\eqref{eq: markov} which does not hold in the general scenarios considered here. Here, we extend these rules to a large class of unfaithful and cyclic non-classical causal models, by noting that the derivation of these rules do not require the Markov property but only the weaker d-separation condition of Definition~\ref{definition: compatdist} along with the defining rules~\eqref{eq: intervention1}-\eqref{eq: intervention4}. This is captured in the following theorem and we present a proof in Appendix~\ref{appendix: proofs1} for completeness (this is similar to the original proof of~\cite{Pearl1995} but more explicit). In the following, $\cG_{\overline{X}}$ denotes the graph obtained by deleting all incoming edges to $X$ and $\cG_{\underline{X}}$ denotes the graph obtained by deleting all outgoing edges from $X$ in a graph $\cG$, where $X$ is some subset of the observed nodes. 

\begin{restatable}{theorem}{DoRules}
\label{theorem:dorules}
Given a causal model over a set $S$ of observed nodes, an associated causal graph $\cG$ and a distribution $P_S$ compatible with $\cG$ according to Definition~\ref{definition: compatdist}, the following 3 rules of do-calculus~\cite{Pearl2009} hold for interventions on this causal model.
\begin{itemize}
    \item \textbf{Rule 1: Ignoring observations} 
    \begin{equation}
    \label{eq: rule1}
        P_{\cG_{\mathrm{do}(X)}}(y|x,z,w)=P_{\cG_{\mathrm{do}(X)}}(y|x,w) \qquad \text{if } (Y\perp^d Z|XW)_{\cG_{\overline{X}}}
    \end{equation}
     \item \textbf{Rule 2: Action/observation exchange} 
    \begin{equation}
     \label{eq: rule2}
        P_{\cG_{\mathrm{do}(X Z)}}(y|x,z,w)=P_{\cG_{\mathrm{do}(X)}}(y|x,z,w) \qquad \text{if } (Y\perp^d Z|XW)_{\cG_{\overline{X}\underline{Z}}}
    \end{equation}
    \item \textbf{Rule 3: Ignoring actions/interventions} 
     \begin{equation}
      \label{eq: rule3}
        P_{\cG_{\mathrm{do}(X Z)}}(y|x,z,w)=P_{\cG_{\mathrm{do}(X)}}(y|x,w) \qquad \text{if } (Y\perp^d Z|XW)_{\cG_{\overline{XZ(W)}}},
    \end{equation}
\end{itemize}
where $X$, $Y$, $Z$ and $W$ are disjoint subsets of the observed nodes, $Z(W)$ denotes the set of nodes in $Z$ that are not ancestors of $W$, and the above hold for all values $x$, $y$, $z$ and $w$ of $X$, $Y$, $Z$ and $W$. 
\end{restatable}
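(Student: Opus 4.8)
The plan is to obtain all three rules from a single mechanism: once the intervention variables are made explicit via the augmented graphs, each rule becomes an equality of conditionals that reduces to one conditional independence, and every such independence is supplied by the d-separation property of Definition~\ref{definition: compatdist} applied to the relevant augmented or post-intervention graph. The only other inputs I would use are the defining equations~\eqref{eq: intervention1}--\eqref{eq: intervention4}, which let me move between $P_{\cG}$, $P_{\cG_{I_X}}$ and $P_{\cG_{\mathrm{do}(X)}}$ and, in particular, trade conditioning on $X=x$ for conditioning on $I_X=\mathrm{do}(x)$. Consequently the real content is a handful of purely graph-theoretic claims relating a d-separation in a mutilated graph ($\cG_{\overline X}$, $\cG_{\overline X\,\underline Z}$, or $\cG_{\overline{XZ(W)}}$) to a d-separation involving the adjoined intervention nodes; because these claims concern only the graph, they are insensitive to cyclicity and fine-tuning, which is exactly why the Markov factorisation never enters.

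I would begin with Rule~1, the cleanest case. The post-intervention graph $\cG_{\mathrm{do}(X)}$ is $\cG_{\overline X}$ with the parentless node $I_X$ and the single edge $I_X\longrsquigarrow X$ adjoined. Since $I_X$ has degree one it can lie on no $Y$--$Z$ path, and the extra parent of $X$ does not alter the collider status of $X$ on any existing path; hence $(Y\perp^d Z\mid XW)_{\cG_{\overline X}}$ holds iff $(Y\perp^d Z\mid XW)_{\cG_{\mathrm{do}(X)}}$. Applying Definition~\ref{definition: compatdist} to $P_{\cG_{\mathrm{do}(X)}}$ then yields $Y\indep Z\mid XW$, which is exactly Equation~\eqref{eq: rule1}.

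For Rules~2 and~3 I would adjoin the intervention node $I_Z$ with edge $I_Z\longrsquigarrow Z$ on top of $\cG_{\mathrm{do}(X)}$ and recast each rule as the assertion that $Y$ cannot detect the value of $I_Z$. By~\eqref{eq: intervention1}--\eqref{eq: intervention2}, observing $Z=z$ is the case $I_Z=\mathrm{idle}$ and intervening is the case $I_Z=\mathrm{do}(z)$, so Rule~2 is equivalent to $Y\indep I_Z\mid XZW$ and Rule~3 to $Y\indep I_Z\mid XW$ (for Rule~3 one also invokes~\eqref{eq: intervention3}--\eqref{eq: intervention4} to note that the conditioning on $z$ is redundant once $I_Z=\mathrm{do}(z)$). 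By Definition~\ref{definition: compatdist} it then suffices to prove the two graph equivalences $(Y\perp^d I_Z\mid XZW)_{\cG_{\overline X}}\Leftrightarrow(Y\perp^d Z\mid XW)_{\cG_{\overline X\,\underline Z}}$ and $(Y\perp^d I_Z\mid XW)_{\cG_{\overline X}}\Leftrightarrow(Y\perp^d Z\mid XW)_{\cG_{\overline{XZ(W)}}}$. Both follow by tracing paths out of $I_Z$: every such path starts $I_Z\to Z_i$, so the only question is which continuation at $Z_i$ survives the conditioning. When $Z$ is in the conditioning set (Rule~2) the chain continuation $Z_i\to\cdots$ is blocked while the collider continuation $Z_i\leftarrow\cdots$ is opened, which matches exactly the $Z_i$--$Y$ paths available in $\cG_{\underline Z}$, where $Z_i$ retains only incoming edges. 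When $Z$ is not conditioned on (Rule~3) the chain continuation stays active for every $Z_i$ while the collider continuation is opened only at those $Z_i$ having a descendant in $W$; deleting the incoming edges of the remaining $Z$-nodes, namely $Z(W)$, reproduces precisely this family of surviving paths.

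The main obstacle I anticipate is not any individual equality but the collider bookkeeping in these graph equivalences, and above all the $Z(W)$ subtlety of Rule~3: one must check that, with $W$ (but not $Z$) in the conditioning set, conditioning activates precisely the colliders sitting at ancestors-of-$W$ nodes of $Z$ and no others, so that passage to $\cG_{\overline{XZ(W)}}$ neither kills a genuinely active path nor manufactures a spurious one (here it is convenient that in $\cG_{\overline X}$ no node is an ancestor of $X$). A secondary, purely bookkeeping task is to feed each derived independence over the augmented graph back through~\eqref{eq: intervention1}--\eqref{eq: intervention4} to recover the stated equality of do-conditionals. Once the graph lemmas are established, each of the three rules follows in a line or two.
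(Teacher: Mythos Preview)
Your proposal is correct and follows essentially the same route as the paper: all three rules are obtained by adjoining the intervention nodes, transferring the hypothesised d-separation in the mutilated graph to a d-separation involving $I_Z$ (or, for Rule~1, simply to $\cG_{\mathrm{do}(X)}$), invoking Definition~\ref{definition: compatdist} to get a conditional independence, and then reading off the do-conditional equality via Equations~\eqref{eq: intervention1}--\eqref{eq: intervention4}. Your identification of the $Z(W)$ collider bookkeeping in Rule~3 as the main technical obstacle is exactly where the paper expends the most effort, via a case analysis that formalises the path-tracing argument you sketch; one minor notational slip is that the d-separations you write as holding in $\cG_{\overline X}$ should be in the $I_Z$-augmented graph $\cG_{\mathrm{do}(X),I_Z}$, which is clearly what you intend.
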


While the observed distribution in the post-intervention causal model may not be completely specified by the pre-intervention observed distribution alone, considering the underlying causal mechanisms e.g., the states, transformations and measurements involved in the original causal model should allow for the complete specification of the post-intervention distribution. To the best of our knowledge, this problem has not been studied in non-classical and cyclic causal models, we discuss this point in further detail in Appendix~\ref{appendix: doMechanisms}, providing examples of non-classical cyclic causal models where the post-intervention distribution can be calculated from the causal mechanisms. The full solution to this problem will not be relevant to the results of the main paper. Using these concepts, we now define the \emph{affects relation} that is central to the results of this paper.

\begin{definition}[Affects relation]
\label{definition: affects}
Consider a causal model associated with a causal graph $\cG$ over a set $S$ of observed nodes and an observed distribution $P$ and let $X$ and $Y$ be disjoint subsets of $S$. If there exists a value $x$ of $X$ such that
\begin{equation*}
   P_{\cG_{\mathrm{do}(X)}}(Y|X=x)\neq P_{\cG}(Y),
\end{equation*}
then we say that $X$ \emph{affects} $Y$.
\end{definition}
With this definition, we are ready to state two useful corollaries of Theorem~\ref{theorem:dorules}.

\begin{corollary}
\label{corollary:exogenous}
If $X$ is a subset of observed exogenous nodes of a causal graph $\cG$, then for any subset $Y$ of nodes disjoint to $X$ the do-conditional and the regular conditional with respect to $X$ coincide i.e.,
$$P_{\cG_{\mathrm{do}(X)}}(Y|X)=P_{\cG}(Y|X).$$
In other words, for any subset $X$ of the observed exogenous nodes, correlation between $X$ and a disjoint set of observed nodes $Y$ in $\cG$ guarantees that $X$ affects $Y$. 
\end{corollary}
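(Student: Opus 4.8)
The plan is to obtain this directly from Rule~2 (action/observation exchange) of Theorem~\ref{theorem:dorules}, which is exactly the rule that licenses replacing an intervention by a passive observation once a suitable d-separation holds. First I would instantiate Rule~2 with the theorem's primary intervention set and its conditioning set $W$ both taken to be empty, its exchanged set equal to our exogenous $X$, and its target set equal to $Y$. Under these choices $\mathrm{do}(XZ)$ collapses to $\mathrm{do}(X)$ while $\cG_{\mathrm{do}(\emptyset)}=P_{\cG}$, so the rule reads $P_{\cG_{\mathrm{do}(X)}}(y|x)=P_{\cG}(y|x)$ for all $x$, conditional on the side hypothesis $(Y\perp^d X)_{\cG_{\underline{X}}}$ being satisfied.

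The key step is then to verify this d-separation using exogeneity. Since $X$ consists of exogenous nodes, it has no incoming edges in $\cG$; the graph $\cG_{\underline{X}}$ is obtained by additionally deleting all outgoing edges from $X$. Hence in $\cG_{\underline{X}}$ every node of $X$ is completely isolated, so there is no path whatsoever from a node of $X$ to a node of $Y$, and $(Y\perp^d X)_{\cG_{\underline{X}}}$ holds trivially (with empty separating set). Applying Rule~2 therefore yields $P_{\cG_{\mathrm{do}(X)}}(Y|X)=P_{\cG}(Y|X)$, which is the first claim.

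For the ``in other words'' statement I would simply unwind Definition~\ref{definition: affects}. If $X$ and $Y$ are correlated, i.e.\ $X\nindep Y$ in $P_{\cG}$, then there is a value $x$ with $P_{\cG}(Y|X=x)\neq P_{\cG}(Y)$; combining this with the equality just established gives $P_{\cG_{\mathrm{do}(X)}}(Y|X=x)=P_{\cG}(Y|X=x)\neq P_{\cG}(Y)$, which is precisely the condition for $X$ to affect $Y$.

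The argument is short, so there is no genuine obstacle; the only point requiring care is the bookkeeping of matching our two sets $X$ and $Y$ to the four pairwise-disjoint sets appearing in Rule~2 (forcing two of them to be empty), and confirming that the relevant modified graph is $\cG_{\underline{X}}$ rather than $\cG_{\overline{X}}$ or $\cG$. This is exactly where the exogeneity hypothesis enters and cannot be weakened: without it, $X$ would retain incoming edges and the isolation of $X$ in $\cG_{\underline{X}}$—hence the trivial d-separation—would fail.
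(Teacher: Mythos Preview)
Your proof is correct and follows essentially the same approach as the paper: both argue that exogeneity of $X$ means $X$ has no incoming edges, so in $\cG_{\underline{X}}$ the nodes of $X$ are isolated and hence d-separated from $Y$, then invoke Rule~2 of Theorem~\ref{theorem:dorules} (with the other sets empty) to obtain $P_{\cG_{\mathrm{do}(X)}}(Y|X)=P_{\cG}(Y|X)$, and finally unpack Definition~\ref{definition: affects} for the affects claim. Your version is slightly more explicit about the instantiation of the four sets in Rule~2, but the argument is the same.
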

\begin{proof}
Since $X$ consists only of exogenous nodes, it can only be d-connected to other nodes through outgoing arrows. Then in the graph $\cG_{\underline{X}}$ (where all outgoing arrows from $X$ are cut off), $X$ becomes d-separated from all other nodes. This d-separation, $(Y\perp^d X)_{\cG_{\underline{X}}}$ implies, by Rule 2 of Theorem~\ref{theorem:dorules} that $P_{\cG_{\mathrm{do}(X)}}(Y|X=x)=P_{\cG}(Y|X=x) \quad \forall x$. Further if $X$ and $Y$ are correlated in $\cG$, i.e., $\exists x, y$ such that $P_{\cG}(y|x)\neq P_{\cG}(y)$, the equation previously established along with Definition~\ref{definition: affects} implies that $X$ affects $Y$.
\end{proof}

\begin{corollary}
\label{corollary:dsep-affects}
If $X$ and $Y$ are two disjoint subsets of the observed nodes such that $(X\perp^d Y)_{\cG_{\mathrm{do}(X)}}$, then $X$ \emph{does not affect} $Y$ and $P_{\cG_{\mathrm{do}(X)}}(Y)=P_{\cG}(Y)$.
\end{corollary}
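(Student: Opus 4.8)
The plan is to reduce the statement to a single application of Rule 3 of the do-calculus (Theorem~\ref{theorem:dorules}), specialised to the case where the rule's ``observed'' intervention set is empty. First I would note that, by Definition~\ref{definition: affects}, showing that $X$ does not affect $Y$ amounts to proving $P_{\cG_{\mathrm{do}(X)}}(Y|X=x)=P_{\cG}(Y)$ for every value $x$. Both conclusions of the corollary then follow at once: the non-affecting statement is immediate, and marginalising over $x$ gives $P_{\cG_{\mathrm{do}(X)}}(Y)=\sum_x P_{\cG_{\mathrm{do}(X)}}(Y|X=x)\,P_{\cG_{\mathrm{do}(X)}}(X=x)=P_{\cG}(Y)$, since the conditional no longer depends on $x$.

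To obtain the equality, I would instantiate Rule 3 with the roles $X\mapsto\emptyset$, $Z\mapsto X$ and $W\mapsto\emptyset$ (keeping $Y\mapsto Y$). With these choices $\cG_{\mathrm{do}(XZ)}$ becomes $\cG_{\mathrm{do}(X)}$, the right-hand graph $\cG_{\mathrm{do}(X)}$ of the rule becomes $\cG$, the conditioning variables $x,w$ disappear, and $Z(W)$ reduces to all of $X$ (no node is an ancestor of the empty set). Rule 3 then reads
\begin{equation*}
P_{\cG_{\mathrm{do}(X)}}(y\,|\,x)=P_{\cG}(y)\qquad\text{whenever }(Y\perp^d X)_{\cG_{\overline{X}}}.
\end{equation*}

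The remaining step is to match the hypothesis $(X\perp^d Y)_{\cG_{\mathrm{do}(X)}}$ to the condition $(Y\perp^d X)_{\cG_{\overline{X}}}$ required above. Using the symmetry of d-separation, it suffices to argue that $X$ and $Y$ are d-separated in $\cG_{\mathrm{do}(X)}$ if and only if they are in $\cG_{\overline{X}}$. The two graphs differ only in that $\cG_{\mathrm{do}(X)}$ additionally contains the exogenous node $I_X$ together with the single edge $I_X\longrsquigarrow X$; since $I_X$ is a source whose only child is $X$, any path leaving $X$ through $I_X$ is a dead end and cannot reach $Y$, so adding $I_X$ changes no d-separation relation between $X$ and $Y$ (with empty conditioning set). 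This identifies the two conditions and lets Rule 3 deliver the desired equality for all $x,y$.

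I expect the only real subtlety to be this last bookkeeping step---the correct specialisation of the four disjoint sets in Rule 3 and the verification that the edge-deleted condition $(Y\perp^d X)_{\cG_{\overline{X}}}$ is genuinely equivalent to the stated hypothesis in $\cG_{\mathrm{do}(X)}$. I would emphasise that a purely correlational argument---reading off $X\indep Y$ directly from compatibility (Definition~\ref{definition: compatdist}) applied to $P_{\cG_{\mathrm{do}(X)}}$---only yields $P_{\cG_{\mathrm{do}(X)}}(Y|X)=P_{\cG_{\mathrm{do}(X)}}(Y)$ and does not by itself relate the post-intervention marginal to the pre-intervention $P_{\cG}(Y)$; it is precisely the intervention-level content of Rule 3, which exploits the mutilated graph $\cG_{\overline{X}}$ rather than the augmented graph $\cG_{I_X}$, that closes this gap.
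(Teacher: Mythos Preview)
Your proposal is correct and follows essentially the same route as the paper: both reduce to Rule~3 of Theorem~\ref{theorem:dorules} with $X\mapsto\emptyset$, $Z\mapsto X$, $W\mapsto\emptyset$, after noting that the d-separation hypothesis in $\cG_{\mathrm{do}(X)}$ transfers to $\cG_{\overline{X}}$ because the two graphs differ only by the exogenous intervention nodes $I_{X_i}\longrsquigarrow X_i$. The only cosmetic difference is in how the second conclusion $P_{\cG_{\mathrm{do}(X)}}(Y)=P_{\cG}(Y)$ is extracted: you marginalise the Rule~3 identity $P_{\cG_{\mathrm{do}(X)}}(Y|X=x)=P_{\cG}(Y)$ over $x$, whereas the paper invokes the compatibility condition (Definition~\ref{definition: compatdist}) on $\cG_{\mathrm{do}(X)}$ to obtain $P_{\cG_{\mathrm{do}(X)}}(Y|X)=P_{\cG_{\mathrm{do}(X)}}(Y)$ and then combines this with the Rule~3 identity---both are equally valid and equally short.
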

\begin{proof}
The d-separation $(X\perp^d Y)_{\cG_{\mathrm{do}(X)}}$ implies the d-separation $(X\perp^d Y)_{\cG_{\overline{X}}}$ since $\cG_{\mathrm{do}(X)}$ and $\cG_{\overline{X}}$ only differ by the inclusion of the intervention nodes $I_{X_i}$ and the corresponding edges $I_{X_i}\longrightarrow X_i$ for each $X_i\in X$. Then by Rule 3 of Theorem~\ref{theorem:dorules} we have 
$$P_{\cG_{\mathrm{do}(X)}}(Y|X)=P_{\cG}(Y)$$
which by Definition~\ref{definition: affects} implies that $X$ does not affect $Y$. Further, the d-separation implies the conditional independence $(X\indep Y)_{\cG_{\mathrm{do}(X)}}$ i.e., $$P_{\cG_{\mathrm{do}(X)}}(Y|X)=P_{\cG_{\mathrm{do}(X)}}(Y)$$ which along with the result that $X$ does not affect $Y$ yields 
\begin{equation*}
    P_{\cG_{\mathrm{do}(X)}}(Y)=P_{\cG}(Y).\qedhere
\end{equation*}
\end{proof}

Note that $X$ \emph{affects} $Y$ implies that there must be a directed path from $X$ to $Y$ in $\mathcal{G}$ (which is equivalent to $X$ being a cause of $Y$, cf.\ Definition~\ref{def: cause}). This follows from the contrapositive statement of Corollary~\ref{corollary:dsep-affects}--- $X$ affects $Y$ implies that $X$ and $Y$ are not d-separated in $\cG_{\mathrm{do}(X)}$ and since this graph has no incoming arrows to $X$ (except those from the intervention nodes in $I_X$), the only way for $X$ and $Y$ to be d-connected in $\cG_{\mathrm{do}(X)}$ is through a directed path from $X$ to $Y$. 
However, the converse is not true. A directed path from $X$ to $Y$ in $\cG$ does not imply that $X$ affects $Y$ in the presence of fine-tuning (as illustrated in the examples of Appendix~\ref{appendix: examples}), even though it does imply d-connection between $X$ and $Y$ in $\cG_{\mathrm{do}(X)}$ by construction of this graph. This motivates the following classification of the causal arrows $\longrsquigarrow$ between observed nodes. The arrows  $\longrsquigarrow$ emanating from or pointing to an unobserved node cannot be operationally probed and hence need not be classified. 

\begin{definition}[Solid and dashed arrows]
\label{definition: solidasharrows}
Given a causal graph $\mathcal{G}$, if two observed nodes $X$ and $Y$ in $\mathcal{G}$ sharing a directed edge $X\longrsquigarrow Y$ are such that $X$ affects $Y$, then the causal arrow $\longrsquigarrow$ between those nodes is called a \emph{solid arrow}, denoted $X\longrightarrow Y$. Further, all arrows $\longrsquigarrow$ between observed nodes in $\mathcal{G}$ that are \emph{not} solid arrows are called \emph{dashed arrows}, denoted $\xdashrightarrow{}$. In other words, $X\xdashrightarrow{} Y$ for any two RVs $X$ and $Y$ in $\cG$ implies that the $X$ does not affect $Y$.
\end{definition}

\begin{remark}[Exogenous nodes]
Note that if $X$ is an exogenous node that is a direct cause of another node $Y$ in a causal graph $\cG$  i.e., $X\longrsquigarrow Y$, and $X$ and $Y$ are correlated in the corresponding causal model, then by Corollary~\ref{corollary:exogenous} and Definition~\ref{definition: solidasharrows} this would imply that the arrow from $X$ to $Y$ must be a solid one. Applying this to the graphs $\cG_{I_X}$ and $\cG_{\mathrm{do}(X)}$, where $I_X$ is exogenous and correlated with $X$ by construction (Equations~\eqref{eq: intervention1}-\eqref{eq: intervention4}), we can conclude that  the arrow from every intervention variable to the corresponding intervened variable must be a solid arrow, i.e., $I_X\longrightarrow X$.
\end{remark}

A  noteworthy implication that follows from the defining rules is encapsulated in the following lemma.
 \begin{lemma}
 \label{lemma: correl-affects}
 Given a causal graph $\cG$ and two disjoint subsets $X$ and $Y$ of observed nodes therein,
 $$(X\nindep Y)_{\cG_{\mathrm{do}(X)}}\Rightarrow X \text{ affects } Y.$$
 \end{lemma}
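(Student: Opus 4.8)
The plan is to prove the contrapositive: assuming that $X$ does not affect $Y$, I would show that $X$ and $Y$ are independent in the post-intervention model $\cG_{\mathrm{do}(X)}$. By Definition~\ref{definition: affects}, the hypothesis that $X$ does not affect $Y$ is exactly the statement that $P_{\cG_{\mathrm{do}(X)}}(Y|X=x)=P_{\cG}(Y)$ for every value $x$ of $X$. The crucial point to appreciate is that the affects relation compares the post-intervention conditional against the \emph{pre-intervention} marginal $P_{\cG}(Y)$, whereas independence in $\cG_{\mathrm{do}(X)}$ compares that same conditional against the \emph{post-intervention} marginal $P_{\cG_{\mathrm{do}(X)}}(Y)$. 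The entire content of the argument is thus to reconcile these two different reference distributions.

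The key step is therefore to show that, under the no-affects hypothesis, the two marginals coincide. Since $P_{\cG_{\mathrm{do}(X)}}(Y|X=x)$ equals the fixed distribution $P_{\cG}(Y)$ for every $x$, it is constant in $x$, and averaging over $X$ by the law of total probability in the post-intervention model gives
\begin{equation*}
P_{\cG_{\mathrm{do}(X)}}(Y)=\sum_{x} P_{\cG_{\mathrm{do}(X)}}(Y|X=x)\,P_{\cG_{\mathrm{do}(X)}}(X=x)=P_{\cG}(Y)\sum_{x} P_{\cG_{\mathrm{do}(X)}}(X=x)=P_{\cG}(Y),
\end{equation*}
so the post-intervention marginal collapses onto the pre-intervention one. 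Combining this with the hypothesis yields $P_{\cG_{\mathrm{do}(X)}}(Y|X=x)=P_{\cG}(Y)=P_{\cG_{\mathrm{do}(X)}}(Y)$ for all $x$, which is precisely the independence $(X\indep Y)_{\cG_{\mathrm{do}(X)}}$. Taking the contrapositive then delivers the claimed implication $(X\nindep Y)_{\cG_{\mathrm{do}(X)}}\Rightarrow X$ affects $Y$.

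I expect the only genuine subtlety — and hence the step to treat with care — to be the well-definedness of the conditionals $P_{\cG_{\mathrm{do}(X)}}(Y|X=x)$ for every value $x$ entering the sum. In $\cG_{\mathrm{do}(X)}$ the variable $X$ is fully determined by the exogenous intervention variable $I_X$ through the perfect correlation encoded in Equations~\eqref{eq: intervention2}--\eqref{eq: intervention4}, and since the distribution over $I_X$ may be chosen freely, I would fix it to have full support over $\{\mathrm{do}(x)\}_x$ so that every value $x$ occurs with positive probability and each conditional is meaningful; the statement and its proof are insensitive to this choice (and for continuous $X$ the sum is replaced by an integral). Notably, no appeal to do-calculus (Theorem~\ref{theorem:dorules}) or to d-separation is needed: the result follows directly from Definition~\ref{definition: affects} and elementary marginalisation, which is exactly what makes this lemma a clean bridge between the correlational notion of dependence under intervention and the interventional notion of affecting.
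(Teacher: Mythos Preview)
Your proof is correct and follows essentially the same approach as the paper's: both argue the contrapositive by observing that if $X$ does not affect $Y$ then $P_{\cG_{\mathrm{do}(X)}}(Y|X=x)$ is constant in $x$ (equal to $P_{\cG}(Y)$), which forces independence in $\cG_{\mathrm{do}(X)}$. The paper's version is marginally more economical in that it phrases non-independence as the existence of $x,x',y$ with $P_{\cG_{\mathrm{do}(X)}}(y|x)\neq P_{\cG_{\mathrm{do}(X)}}(y|x')$ and derives an immediate contradiction, thereby sidestepping your explicit computation of the post-intervention marginal $P_{\cG_{\mathrm{do}(X)}}(Y)$; but the underlying idea is identical.
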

 \begin{proof}
 Suppose that $X$ does not affect $Y$. By Definition~\ref{definition: affects}, this implies that $P_{\cG_{\mathrm{do}(X)}}(y|x)=P_{\cG}(y)$ $\forall x,y$. Further suppose also that $(X\nindep Y)_{\cG_{\mathrm{do}(X)}}$. This means that there exist two distinct values $x$ and $x'$ of $X$ and some value $y$ of $Y$ such that $P_{\cG_{\mathrm{do}(X)}}(y|x)\neq P_{\cG_{\mathrm{do}(X)}}(y|x')$, which contradicts $P_{\cG_{\mathrm{do}(X)}}(y|x)=P_{\cG}(y)$ $\forall x,y$. Therefore $(X\nindep Y)_{\cG_{\mathrm{do}(X)}}$ must imply $X \text{ affects } Y$.
 \end{proof}
 
The converse of the above lemma does not hold as illustrated by Example~\ref{example: eg3}.
 Further, we note that the affects relation is not transitive in fine-tuned causal models, as illustrated by the following example.
 \begin{example}
 \label{example:affects_transitive}
 Consider the causal structure of Figure~\ref{fig:affects_transitive} where all RVs are binary and related by $X=\Lambda$, $Y=W=X\oplus\Lambda$, $Z=Y\oplus W$ with $\Lambda$ uniformly distributed. Here, both $P_{\mathcal{G}}(Y)$ and $P_{\mathcal{G}}(Z)$ are deterministic distributions. In the graph $\mathcal{G}_{\mathrm{do}(X)}$ obtained by intervening on $X$, we have $Y=W=X\oplus\Lambda$, $Z=Y\oplus W$ and $\Lambda$ uniform. Here, since $X$ is not always equal to $\Lambda$, $P_{\mathcal{G}_{\mathrm{do}(X)}}(Y|X)$ is no longer deterministic and we have $X$ affects $Y$, but $P_{\mathcal{G}_{\mathrm{do}(X)}}(Z|X)$ is still the same deterministic distribution irrespective of the value of $X$ since $Y=W$ which implies that $X$ does not affect $Z$. However, in the graph $\mathcal{G}_{\mathrm{do}(Y)}$, we no longer have $Y=W$ and $P_{\mathcal{G}_{\mathrm{do}(Y)}}(Z|Y)$ is not deterministic, which gives $Y$ affects $Z$. Therefore affects relations are in general non-transitive in fine-tuned causal models.
 \end{example}

\begin{figure}
    \centering
\includegraphics[]{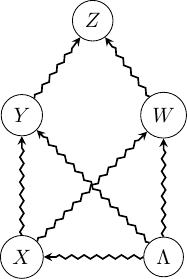}
    \caption{Causal structure of Example~\ref{example:affects_transitive}}
    \label{fig:affects_transitive}
\end{figure}

\subsection{Conditional and higher-order affects relations}
\label{ssec: HOaffects}

The affects relation defined in Definition~\ref{definition: affects} allows us to consider joint interventions on a subset of the observed nodes $S$. However certain affects relations where a subset $X\subset S$ that is not a single RV, affects another subset $Y$, may be ``trivial'' in the sense that they convey the same information as an affects relation $s_X$ affects $Y$, where $s_X$ is a proper subset of $X$, i.e., they can be ``reduced'' to the latter affects relation. On the other hand, in unfaithful causal models, certain affects relations of the same form can be ``non-trivial'' in the sense that the information that they convey is not the same as any affects relation from a proper subset of $X$ to $Y$. To capture this distinction, we introduce higher-order affects relations where we consider whether a set $X$ of RVs affects another disjoint set $Y$ conditioned on an active intervention performed on a third, mutually disjoint subset $Z$ of the RVs. Intuitively these relations are useful because additional interventional information can help us better detect fine-tuned causal influences. More generally, we can also condition on non-interventional information, which leads to the concept of conditional higher-order affects relations. As we will see later in the paper when we bring space-time into the picture, these higher-order affects relations have operational meaning in terms of signalling using joint interventions on space-time random variables, and the conditional higher-order affects relations capture the most general way that agents may signal to each other in our framework. Before we formalise these concepts, some examples would be illustrative. 

\begin{example}
\label{example: HOaffects1}
Consider a causal model where the only nodes are the observed binary variables $X$, $Y$ and $Z$, and the causal graph (Figure~\ref{fig: HOaffects1}) is simply $Z\longrightarrow Y$ and $X$ has no incoming or outgoing arrows. By Definition~\ref{definition: solidasharrows} of the solid arrow, $Z$ affects $Y$ and by Corollary~\ref{corollary:dsep-affects}, $X$ does not affect $Y$. We also have $XZ$ affects $Y$. This is because $P_{\cG_{\mathrm{do}(X Z)}}(Y|XZ)=P_{\cG}(Y|XZ)$ and $P_{\cG_{\mathrm{do}(Z)}}(Y|Z)=P_{\cG}(Y|Z)$ (by exogeneity of $X$ and $Z$), and using the d-separation condition~\ref{definition: compatdist}) we have $P_{\cG}(Y|XZ)=P_{\cG}(Y|Z)$. Then $Z$ affects $Y$ implies $P_{\cG}(Y|XZ)=P_{\cG}(Y|Z)\neq P_{\cG}(Y)$ i.e., $XZ$ affects $Y$. In this example, the node $X$ is entirely superficial as it neither causes nor is a cause of anything else and is therefore completely independent and the affects relation $XZ$ affects $Y$ follows ``trivially'' from $Z$ affects $Y$. 
\end{example}

\begin{example}
\label{example: HOaffects2}
Consider another causal model over the same nodes as the previous example, where the causal graph is a collider from $X$ and $Z$ to $Y$ i.e., $X\longrsquigarrow Y \longlsquigarrow Z$. Furthermore, suppose that $Z$ is uniformly distributed, $X$ is not uniformly distributed and $Y=X\oplus Z$ (where $\oplus$ denotes modulo-2 addition). One can then easily check that the same affects relations as the previous example hold i.e., $Z$ affects $Y$, $X$ does not affect $Y$ and $XZ$ affects $Y$, which allows us to classify the causal arrows as in Figure~\ref{fig: HOaffects2}. In this case, $Z$ gives partial information about $Y$ since $X$ is non-uniform, however $X$ and $Z$ taken together give full information about $Y$. This is in contrast to the previous example where $Z$ as well as $XZ$ gave the same information about $Y$. More explicitly, the distinguishing condition here is whether or not $P_{\cG_{\mathrm{do}(X Z)}}(Y|XZ)=P_{\cG_{\mathrm{do}(Z)}}(Y|Z)$; in the previous example this holds, while in the current one it does not. 
\end{example}

In general $X$, $Y$ and $Z$ from the above example may be pairwise disjoint subsets of the observed nodes, and we may condition not only on the set $Z$ (which has been intervened upon), but also on an additional disjoint set of nodes $W$, upon which an intervention has not been performed. We then have the following definition.

\begin{definition}[Conditional higher-order affects relation]
\label{definition:HOaffects}
Consider a causal model associated with a causal graph $\cG$ over a set $S$ of observed nodes and an observed distribution $P$. For four pairwise disjoint subsets $X$, $Y$, $Z$ and $W$ of $S$, we say that $X$ affects $Y$ given $\{\mathrm{do}(Z),W\}$ if there exists values $x$ of $X$, $z$ of $Z$ and $w$ of $W$ such that
\begin{equation}
\label{eq: HOaffects}
   P_{\cG_{\mathrm{do}(X Z)}}(Y|X=x,Z=z, W=w)\neq P_{\cG_{\mathrm{do}(Z)}}(Y|Z=z, W=w).
\end{equation}
An affects relation $X$ affects $Y$ given $\{\mathrm{do}(Z),W\}$ is a \emph{conditional affects relation} if $W\neq \emptyset$ and an \emph{unconditional affects relation} otherwise. When $Z\neq \emptyset$, it is a \emph{higher-order affects relation}, and a \emph{zeroth-order affects relation} otherwise. Definition~\ref{definition: affects} then refers to unconditional zeroth-order affects relations. In general, all of these will be simply called affects relations, unless they need to be explicitly distinguished.
\end{definition}
\begin{figure}[t]
    \centering
  \subfloat[\label{fig: HOaffects1}]{ \includegraphics[]{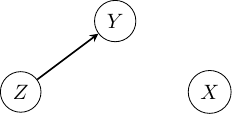}}\qquad\qquad\qquad  \subfloat[\label{fig: HOaffects2}]{ \includegraphics[]{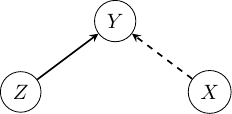}}\qquad\qquad\qquad\subfloat[\label{fig: HOaffects3}]{ \includegraphics[]{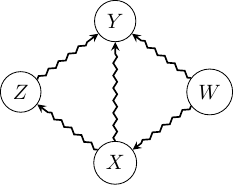}}
    \caption{Causal structures for Examples~\ref{example: HOaffects1}, \ref{example: HOaffects2} and \ref{example: HOaffects3} respectively.}
\end{figure}
The next lemma (proven in Appendix~\ref{appendix: proofs2}) establishes the implication of such affects relations for the underlying causal structure.
\begin{restatable}{lemma}{HOaffectsCause}
  \label{lemma:HOaffectsCause}
  For a causal model over a set $S$ of RVs where $X$, $Y$, $Z$ and $W$ are any pairwise disjoint subsets of $S$,
  \begin{enumerate}
      \item $X$ affects $Y$ given do$(Z)$
 $\Rightarrow$ $X$ is a cause of $Y$ (cf.\ Definition~\ref{def: cause}).
 \item $X$ affects $Y$ given $\{\mathrm{do}(Z),W\}$
 $\Rightarrow$ $X$ is a cause of $Y$ or $X$ is a cause of $W$.
 \end{enumerate}
\end{restatable}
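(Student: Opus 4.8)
The plan is to prove both parts simultaneously via the contrapositive, using Rule 3 of the do-calculus (Theorem~\ref{theorem:dorules}). First observe that Part 1 is the special case $W=\emptyset$ of Part 2: when $W=\emptyset$ the disjunct ``$X$ is a cause of $W$'' is vacuously false (since $\emptyset$ has no nodes), so Part 2 collapses to ``$X$ affects $Y$ given $\mathrm{do}(Z)$ $\Rightarrow$ $X$ is a cause of $Y$''. Hence it suffices to prove Part 2.

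To set up Rule 3, note that the affects relation of Equation~\eqref{eq: HOaffects} compares $P_{\cG_{\mathrm{do}(XZ)}}(Y|x,z,w)$ with $P_{\cG_{\mathrm{do}(Z)}}(Y|z,w)$; that is, it asks whether the intervention on $X$ can be dropped while the intervention on $Z$ is retained. This is exactly the configuration governed by Rule 3, but with the intervened-and-retained set and the intervened-and-dropped set played by $Z$ and $X$ respectively. Instantiated this way, Rule 3 gives the equality $P_{\cG_{\mathrm{do}(XZ)}}(y|x,z,w)=P_{\cG_{\mathrm{do}(Z)}}(y|z,w)$ for all values whenever $(Y\perp^d X\,|\,ZW)_{\cG_{\overline{ZX(W)}}}$, where $X(W)$ is the set of nodes of $X$ that are not ancestors of $W$. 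Thus it suffices to show the contrapositive: if $X$ is not a cause of $Y$ and $X$ is not a cause of $W$, then this d-separation holds.

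Under the assumption that $X$ is not a cause of $W$, no node of $X$ is an ancestor of $W$, so $X(W)=X$ and the relevant graph is $\cG_{\overline{ZX}}$, in which every node of $X$ and of $Z$ is a source. The key structural observation is that conditioning on $Z$ cannot unblock any collider in $\cG_{\overline{ZX}}$: a collider is active only if it or one of its descendants lies in the conditioning set, but a source node in $Z$ has no ancestors and so can be neither a collider nor a descendant of one. Hence any path from $X$ to $Y$ that is d-connected given $ZW$ can have its colliders unblocked only through $W$. Now take such a d-connected path. Since $X$ is a source, the path leaves $X$ by an outgoing edge, and it stays a directed path $X\longrightarrow\cdots$ up to its first collider (each non-collider entered along an incoming edge must be exited along an outgoing one). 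If the path has no collider at all, it is a directed path $X\longrightarrow\cdots\longrightarrow Y$ lying in the subgraph $\cG_{\overline{ZX}}$ and hence in $\cG$, making $X$ a cause of $Y$. If it has a first collider $c$, then $c$ is a descendant of $X$, and since $Z$ cannot unblock it, either $c\in W$ or $c$ has a descendant in $W$; composing the directed segment $X\longrightarrow\cdots\longrightarrow c$ with this yields a directed path from $X$ into $W$, making $X$ a cause of $W$. Both conclusions contradict our assumptions, so no d-connected path exists and $(Y\perp^d X\,|\,ZW)_{\cG_{\overline{ZX}}}$ holds, completing the contrapositive.

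The main obstacle is the graph-theoretic analysis of the last paragraph, and in particular pinning down exactly where the disjunction ``cause of $Y$ \emph{or} cause of $W$'' originates. The clean way to see it is the observation that, after deleting the incoming edges to $X$ and $Z$, the conditioning nodes in $Z$ become sources and therefore play no role in opening colliders, so the only mechanism that can render a non-directed path active is conditioning on $W$ or its descendants, which is precisely what forces a directed path from $X$ into $W$. A secondary subtlety, worth stating carefully, is the correct instantiation of Rule 3 with the intervention on $X$ (rather than on $Z$) being dropped, together with the check that the assumption ``$X$ is not a cause of $W$'' gives $X(W)=X$, so that the rule is applied in the graph $\cG_{\overline{ZX}}$.
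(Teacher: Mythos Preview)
Your proof is correct. Both arguments ultimately invoke Rule~3 of Theorem~\ref{theorem:dorules} after swapping the roles of the retained and dropped intervention sets, but the routes to Part~2 differ. The paper first proves Part~1 directly (essentially your argument with $W=\emptyset$: no directed path from $X$ to $Y$ gives $(X\perp^d Y)_{\cG_{\mathrm{do}(XZ)}}$, and since the exogenised $Z$ cannot open colliders, also $(X\perp^d Y|Z)_{\cG_{\mathrm{do}(XZ)}}$, whence Rule~3 applies). For Part~2 the paper does \emph{not} repeat the path analysis; instead it invokes Lemma~\ref{lemma: HOaff_complete}, Part~1, which says that $X$ affects $Y$ given $\{\mathrm{do}(Z),W\}$ implies $X$ affects $YW$ given $\mathrm{do}(Z)$, and then applies Part~1 of the present lemma with $Y$ replaced by $YW$. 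You instead handle Part~2 head-on by carrying the extra conditioning set $W$ through the d-separation analysis, using the two observations that (i) under the hypothesis ``$X$ not a cause of $W$'' one has $X(W)=X$, so Rule~3 is checked in $\cG_{\overline{ZX}}$, and (ii) in that graph $Z$-nodes are sources and hence cannot activate colliders, so any active path from $X$ must either be a directed path into $Y$ or have its first collider opened via $W$, forcing a directed path from $X$ into $W$. Your approach is self-contained and avoids the forward reference to Lemma~\ref{lemma: HOaff_complete}; the paper's approach is more modular and isolates the ``$X$ affects $YW$'' reduction as a separate reusable fact.
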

It is possible for $X$ not to be a cause of $Y$ and yet satisfy $X$ affects $Y$ given $\{\mathrm{do}(Z),W\}$. A simple example is a 3 node collider causal structure $X\longrightarrow W\longleftarrow Y$ with $W=X.Y$, it is easy to check that $X$ affects $Y$ given $W$ even though $X$ and $Y$ are d-separated. This captures the well known fact that conditioning on a collider can introduce correlations between independent, exogenous variables. Note however that $X$ is a cause of $W$ as implied by the above lemma.

  The following lemmas provide useful connections between conditional higher-order and conditional zeroth-order affects relations, their proofs can be found in Appendix~\ref{appendix: proofs2}. We will often abbreviate higher-order to HO in the following.
  \begin{restatable}{lemma}{HOaffectsA}
  \label{lemma:HOaffects1}
  For a causal model over a set $S$ of RVs where $X$, $Y$, $Z$ and $W$ are pairwise disjoint subsets of $S$,
  $$X \text{ affects } Y \text{ given } \{\mathrm{do}(Z), W\} \quad\Rightarrow\quad Z \text{ affects } Y \text{ given } W \quad \text{or}\quad XZ \text{ affects } Y \text{ given } W.$$
  \end{restatable}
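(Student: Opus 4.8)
The plan is to prove the contrapositive, since both the hypothesis and the two disjuncts in the conclusion are, by Definition~\ref{definition:HOaffects}, simply statements that two explicit conditional distributions differ (or agree) for all values of the conditioning variables. First I would unpack what the two conclusion relations mean once the additional intervention set in Definition~\ref{definition:HOaffects} is taken to be empty. Reading ``$Z$ affects $Y$ given $W$'' with $Z$ in the role of the intervened-and-tested set, an empty additional do-set (so that $\cG_{\mathrm{do}(\emptyset)}=\cG$), and $W$ as the conditioning set, it asserts the existence of values $z,w$ with $P_{\cG_{\mathrm{do}(Z)}}(Y|Z=z,W=w)\neq P_{\cG}(Y|W=w)$. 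Likewise ``$XZ$ affects $Y$ given $W$'' asserts the existence of $x,z,w$ with $P_{\cG_{\mathrm{do}(XZ)}}(Y|X=x,Z=z,W=w)\neq P_{\cG}(Y|W=w)$.

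Next I would assume both disjuncts fail. Negating the two relations above turns the existential inequalities into universal equalities: for all $z,w$ one has $P_{\cG_{\mathrm{do}(Z)}}(Y|Z=z,W=w)=P_{\cG}(Y|W=w)$, and for all $x,z,w$ one has $P_{\cG_{\mathrm{do}(XZ)}}(Y|X=x,Z=z,W=w)=P_{\cG}(Y|W=w)$. The key observation is that both right-hand sides are the same reference quantity $P_{\cG}(Y|W=w)$. Chaining the two equalities through this common reference yields, for every triple $x,z,w$, the identity $P_{\cG_{\mathrm{do}(XZ)}}(Y|X=x,Z=z,W=w)=P_{\cG_{\mathrm{do}(Z)}}(Y|Z=z,W=w)$, which is exactly the negation of ``$X$ affects $Y$ given $\{\mathrm{do}(Z),W\}$'' as written in Equation~\eqref{eq: HOaffects}. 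This establishes the contrapositive and hence the lemma.

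There is no real obstacle here: unlike the causal-structure implication of Lemma~\ref{lemma:HOaffectsCause}, this statement needs neither do-calculus nor the d-separation property, and reduces to transitivity of equality once the definitions are specialized. The only point requiring care is the bookkeeping of the empty do-set: one must note that $\cG_{\mathrm{do}(\emptyset)}=\cG$, so that the zeroth-order relations ``$Z$ affects $Y$ given $W$'' and ``$XZ$ affects $Y$ given $W$'' are compared against $P_{\cG}(Y|W=w)$ rather than against some intervened distribution. This is precisely what lets the two negated equalities share a common reference and be chained, and is the reason the same reasoning goes through verbatim for any conditioning set $W$ (including $W=\emptyset$, recovering the unconditional decomposition).
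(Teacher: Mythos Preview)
Your proposal is correct and essentially identical to the paper's proof: both argue the contrapositive by writing out the two negated disjuncts as universal equalities sharing the common reference $P_{\cG}(Y|W)$ and chaining them to obtain the negation of Equation~\eqref{eq: HOaffects}. The only cosmetic difference is that the paper phrases it as a contradiction (assuming all three conditions and deriving an inconsistency) rather than an explicit contrapositive.
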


  \begin{restatable}{lemma}{HOaffectsB}
  \label{lemma:HOaffects2}
 For a causal model over a set $S$ of RVs where $X$, $Y$, $Z$ and $W$ are pairwise disjoint subsets of $S$ and $X$ consists only of exogenous nodes, $$X \text{ affects } Y \text{ given }\{\mathrm{do}(Z),W\} \quad\Rightarrow\quad XZ \text{ affects } Y \text{ given } W.$$
 \end{restatable}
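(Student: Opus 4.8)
The plan is to unfold both sides using Definition~\ref{definition:HOaffects} and then exploit the exogeneity of $X$ to collapse the higher-order (i.e.\ $\mathrm{do}(Z)$-conditioned) affects relation on the left into a plain statement inside the intervened graph $\cG_{\mathrm{do}(Z)}$. Explicitly, the hypothesis ``$X$ affects $Y$ given $\{\mathrm{do}(Z),W\}$'' asserts that there are values $x,z,w$ with
$$P_{\cG_{\mathrm{do}(XZ)}}(Y|x,z,w)\neq P_{\cG_{\mathrm{do}(Z)}}(Y|z,w),$$
whereas the conclusion ``$XZ$ affects $Y$ given $W$'' (an unconditional, higher-order-free affects relation whose intervened set is $XZ$) asks for values $x',z',w'$ with
$$P_{\cG_{\mathrm{do}(XZ)}}(Y|x',z',w')\neq P_{\cG}(Y|w').$$

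The first key step is an action/observation exchange for the exogenous set $X$. I would apply Rule~2 of Theorem~\ref{theorem:dorules} with the roles of the ``always-intervened'' and ``switched'' sets played by $Z$ and $X$ respectively, which yields
$$P_{\cG_{\mathrm{do}(XZ)}}(Y|x,z,w)=P_{\cG_{\mathrm{do}(Z)}}(Y|x,z,w)\quad\text{for all }x,z,w, \qquad (\star)$$
provided the d-separation $(Y\perp^d X\,|\,ZW)_{\cG_{\overline{Z}\,\underline{X}}}$ holds. This d-separation is immediate: in $\cG_{\overline{Z}\,\underline{X}}$ all outgoing edges of $X$ are deleted, and $X$ has no incoming edges because it is exogenous, so $X$ is an isolated node and is trivially d-separated from $Y$ given any conditioning set. (Equivalently, $(\star)$ is the conditional analogue of Corollary~\ref{corollary:exogenous} applied inside the model $\cG_{\mathrm{do}(Z)}$, in which $X$ remains exogenous.)

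With $(\star)$ in hand, the hypothesis becomes the purely observational statement $P_{\cG_{\mathrm{do}(Z)}}(Y|x,z,w)\neq P_{\cG_{\mathrm{do}(Z)}}(Y|z,w)$ for some $x,z,w$, i.e.\ additionally conditioning on $X$ changes $Y$ inside $\cG_{\mathrm{do}(Z)}$. I would then prove the conclusion by contradiction: suppose $XZ$ does not affect $Y$ given $W$, which by $(\star)$ means $P_{\cG_{\mathrm{do}(Z)}}(Y|x,z,w)=P_{\cG}(Y|w)$ for all $x,z,w$. Since the right-hand side is independent of $x$, marginalising over $X$ against $P_{\cG_{\mathrm{do}(Z)}}(X|z,w)$ gives $P_{\cG_{\mathrm{do}(Z)}}(Y|z,w)=P_{\cG}(Y|w)$ as well, whence $P_{\cG_{\mathrm{do}(Z)}}(Y|x,z,w)=P_{\cG_{\mathrm{do}(Z)}}(Y|z,w)$ for all $x,z,w$, contradicting the rewritten hypothesis.

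I expect the main obstacles to be bookkeeping rather than conceptual: getting the subgraph decorations in Rule~2 exactly right (which set carries $\overline{\,\cdot\,}$ and which carries $\underline{\,\cdot\,}$) and confirming that exogeneity of $X$ survives $\mathrm{do}(Z)$ so the isolating d-separation genuinely holds. The only substantive content beyond $(\star)$ is the marginalisation step, where I must ensure the relevant conditionals are well-defined on the conditioning events; conceptually, this is exactly the place where the disjunction of Lemma~\ref{lemma:HOaffects1} is sharpened to the single alternative ``$XZ$ affects $Y$ given $W$'', the other alternative being excluded precisely because exogeneity forces the exchange $(\star)$.
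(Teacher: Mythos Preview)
Your proposal is correct and follows essentially the same approach as the paper. Both proofs hinge on the same two ingredients: the action/observation exchange $(\star)$ obtained from exogeneity of $X$ (the paper invokes this step as ``by the exogeneity of $X$'' without spelling out Rule~2 as explicitly as you do), and the marginalisation over $X$ showing that ``$XZ$ does not affect $Y$ given $W$'' forces ``$Z$ does not affect $Y$ given $W$''. The only organisational difference is that the paper first invokes Lemma~\ref{lemma:HOaffects1} to reduce to ruling out the case ``$Z$ affects but $XZ$ does not'', whereas you bypass that lemma and derive the contradiction directly from the rewritten hypothesis---but as you yourself note, the computational content is identical.
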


The converse of Lemma~\ref{lemma:HOaffects2} is not true, we can have $X$ does not affect $Y$ given $\{\mathrm{do}(Z),W\}$ even when $XZ$ affects $Y$ given $W$, as we have seen for $W=\emptyset$ in Example~\ref{example: HOaffects1} where $X$ was superficial to the causal model, and the affects relation $XZ$ affects $Y$ trivially followed from the affects relation $Z$ affects $Y$. Note also that the implication of the above lemma does not hold in general when $X$ is not exogenous. This is because in fine-tuned causal models (rather counter-intuitively), $Z$ affects $Y$ does not imply that any set of RVs containing $Z$ also affects $Y$, which was a step required in the above proof. The following example illustrates this.

\begin{example}
\label{example: HOaffects3}
Consider the causal structure of Figure~\ref{fig: HOaffects3}. Suppose that the exogenous $W$ is uniformly distributed and the variables are related as $Y=X\oplus Z\oplus W$, $Z=X$, $X=W$. This gives $Y=X=Z=W$ and hence $P_{\cG}(Y)=P_{\cG}(W)$ is uniform. In the graph $\cG_{\mathrm{do}(Z)}$, we have $Y=X\oplus Z\oplus W$, and $X=W$ which gives $Y=Z$ and hence $P_{\cG_{\mathrm{do}(Z)}}(Y|Z)$ is deterministic. This gives $Z$ affects $Y$. In the graph $\cG_{\mathrm{do}(X Z)}$, we only have the relation $Y=X\oplus Z\oplus W$ which implies that $P_{\cG_{\mathrm{do}(X Z)}}(Y|XZ)$ is uniform and hence that $XZ$ does not affect $Y$. Note that we also have $X$ affects $Y$ given do$(Z)$.
\end{example}

Definition~\ref{definition:HOaffects}  does not yet fully capture the notion of ``reducibility'' or ``triviality'' of certain affects relations. consider Example~\ref{example: HOaffects2} again and add a superficial observed node $V$ with no incoming or outgoing arrows. Then we have both the higher-order affects relations $X$ affects $Y$ given do$(Z)$ and $XV$ affects $Y$ given do$(Z)$. However, the addition of $V$ adds no information to the original affects relation since $P_{\cG_{\mathrm{do}(X Z V)}}(Y|XZV)=P_{\cG_{\mathrm{do}(X Z)}}(Y|XZ)$ (i.e., $V$ does not affect $Y$ given do$(X Z)$). In other words, the affects relation $XV$ affects $Y$ given do$(Z)$ is reducible to the affects relation $X$ affects $Y$ given do$(Z)$. Based on this idea, we propose the following criterion for distinguishing between reducible and irreducible affects relations.

\begin{definition}[Reducible and irreducible affects relations]
\label{definition: ReduceAffects}
For a causal model defined over a set $S$ of observed nodes, the affects relation $X$ affects $Y$ given $\{\mathrm{do}(Z),W\}$ between pairwise disjoint subsets $X$, $Y$, $Z$ and $W$ of $S$ is said to be \emph{reducible} if there exists a proper subset $s_X$ of $X$ such that $s_X$ does not affect $Y$ given $\{\mathrm{do}(Z \tilde{s}_X),W\}$, where $\tilde{s}_X:=X\backslash s_X$.
Conversely, if for all proper subsets $s_X$ of $X$, $s_X$ affects $Y$ given $\{\mathrm{do}(Z \tilde{s}_X),W\}$, the affects relation $X$ affects $Y$ given $\{\mathrm{do}(Z),W\}$ is said to be \emph{irreducible}.
\end{definition}
Then we have the following lemmas, which make clear why the above definition captures a notion of ``reduction'' of the affects relation. Proofs of these lemmas can be found in Appendix~\ref{appendix: proofs2}.
\begin{restatable}{lemma}{Reduce}
\label{lemma:reduce}
For every reducible affects relation $X$ affects $Y$ given $\{\mathrm{do}(Z),W\}$, there exists a proper subset $\tilde{s}_X$ of $X$ such that $\tilde{s}_X$ affects $Y$ given $\{\mathrm{do}(Z),W\}$. 
\end{restatable}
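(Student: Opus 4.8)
The statement is essentially a matter of unwinding Definitions~\ref{definition:HOaffects} and~\ref{definition: ReduceAffects} and chaining a single equality with a single inequality at one tuple of witness values, so I would present it as a short substitution argument rather than invoking the do-calculus machinery directly. By reducibility there is a proper subset $s_X\subsetneq X$ with $\tilde{s}_X:=X\backslash s_X$ such that $s_X$ does not affect $Y$ given $\{\mathrm{do}(Z\tilde{s}_X),W\}$. The candidate subset witnessing the conclusion is exactly this $\tilde{s}_X$, and I would first record that it is a genuine proper subset of $X$: since the $s_X$ provided by Definition~\ref{definition: ReduceAffects} is non-empty (were $s_X=\emptyset$, the reducibility condition would hold vacuously for every affects relation), we have $\emptyset\neq\tilde{s}_X\subsetneq X$.

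The core of the argument is the following chain. Because $X$ affects $Y$ given $\{\mathrm{do}(Z),W\}$, Definition~\ref{definition:HOaffects} supplies witness values, which I decompose as $x=(s_x,\tilde{s}_x)$ for $X$, together with $z$ for $Z$ and $w$ for $W$, such that
\begin{equation*}
P_{\cG_{\mathrm{do}(X Z)}}(Y|X=x,Z=z,W=w)\neq P_{\cG_{\mathrm{do}(Z)}}(Y|Z=z,W=w).
\end{equation*}
Next I would invoke the reducibility hypothesis. The statement ``$s_X$ does not affect $Y$ given $\{\mathrm{do}(Z\tilde{s}_X),W\}$'' is the negation of an existential, hence a universal equality holding at \emph{all} values; evaluating it at the particular witness above, and using that $\mathrm{do}(s_X Z\tilde{s}_X)=\mathrm{do}(X Z)$ since $s_X\cup\tilde{s}_X=X$, gives
\begin{equation*}
P_{\cG_{\mathrm{do}(X Z)}}(Y|X=x,Z=z,W=w)=P_{\cG_{\mathrm{do}(Z\tilde{s}_X)}}(Y|\tilde{s}_X=\tilde{s}_x,Z=z,W=w).
\end{equation*}
Substituting this equality into the displayed inequality yields $P_{\cG_{\mathrm{do}(Z\tilde{s}_X)}}(Y|\tilde{s}_X=\tilde{s}_x,Z=z,W=w)\neq P_{\cG_{\mathrm{do}(Z)}}(Y|Z=z,W=w)$, which is precisely the witness required by Definition~\ref{definition:HOaffects} for $\tilde{s}_X$ affects $Y$ given $\{\mathrm{do}(Z),W\}$ (noting $\mathrm{do}(Z\tilde{s}_X)=\mathrm{do}(\tilde{s}_X Z)$), establishing the claim.

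The argument has no analytic content, and the only places demanding care are bookkeeping ones. The main thing to get right is the identification of the post-intervention graphs: one must verify that intervening on $s_X$ on top of $\mathrm{do}(Z\tilde{s}_X)$ produces the same graph, and hence the same post-intervention distribution, as $\mathrm{do}(X Z)$, so that the equality supplied by reducibility can be matched against the inequality supplied by the affects hypothesis. The second subtlety is the quantifier bookkeeping --- the ``affects'' hypothesis is existential whereas the ``does not affect'' reducibility condition is universal, and it is exactly this asymmetry that lets me apply the latter at the single witness tuple produced by the former. I would also flag, as above, the convention that the reducing subset $s_X$ is non-empty, which is what guarantees that the subset $\tilde{s}_X$ exhibited in the conclusion is proper.
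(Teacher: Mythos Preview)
Your argument is correct and is essentially the same as the paper's: both take the $s_X$ witnessing reducibility, set $\tilde{s}_X:=X\setminus s_X$, and combine the universal equality from ``$s_X$ does not affect $Y$ given $\{\mathrm{do}(Z\tilde{s}_X),W\}$'' with the existential inequality from ``$X$ affects $Y$ given $\{\mathrm{do}(Z),W\}$'' to conclude. The only cosmetic difference is that the paper phrases this as a contradiction (assuming $\tilde{s}_X$ does not affect and deriving an inconsistency among three displayed equations), whereas you chain the equality into the inequality directly; your extra remark that $s_X$ must be taken non-empty for $\tilde{s}_X$ to be a genuine proper subset is a point the paper's proof leaves implicit.
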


\begin{restatable}{lemma}{ReduceB}
\label{lemma:reduce2}
For a causal model over a set $S$ of RVs of which $X_1$, $X_2$, $Y$, $Z$ and $W$ are pairwise disjoint subsets,
\begin{center}
   $X_1$ affects $Y$ given $\{\mathrm{do}(Z),W\}$ and $X_2$ does not affect $Y$ given $\{\mathrm{do}(Z X_1),W\}$
   \begin{center}
       $\Downarrow$ 
   \end{center} $X_1 X_2$ affects $Y$ given $\{\mathrm{do}(Z),W\}$.
\end{center}
\end{restatable}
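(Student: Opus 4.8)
The plan is to unfold both hypotheses using Definition~\ref{definition:HOaffects} and then chain the resulting relations at a single, carefully chosen assignment of values. First I would record that, since the order in which nodes are intervened upon is immaterial, $\cG_{\mathrm{do}(X_2 Z X_1)}=\cG_{\mathrm{do}(X_1 X_2 Z)}$ and $\cG_{\mathrm{do}(Z X_1)}=\cG_{\mathrm{do}(X_1 Z)}$. The first hypothesis, that $X_1$ affects $Y$ given $\{\mathrm{do}(Z),W\}$, is an \emph{existence} statement: it yields values $x_1,z,w$ with
\[
P_{\cG_{\mathrm{do}(X_1 Z)}}(Y|X_1=x_1,Z=z,W=w)\neq P_{\cG_{\mathrm{do}(Z)}}(Y|Z=z,W=w).
\]
The second hypothesis, that $X_2$ \emph{does not} affect $Y$ given $\{\mathrm{do}(Z X_1),W\}$, is the negation of such a statement and is therefore a \emph{universal} equality: for every assignment $x_1,x_2,z,w$ for which both sides are defined,
\[
P_{\cG_{\mathrm{do}(X_1 X_2 Z)}}(Y|X_1=x_1,X_2=x_2,Z=z,W=w)= P_{\cG_{\mathrm{do}(X_1 Z)}}(Y|X_1=x_1,Z=z,W=w).
\]

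The second step is the chaining itself. I would fix the witness values $x_1,z,w$ supplied by the first hypothesis and choose a value $x_2$ of $X_2$ for which the conditioning event $\{X_1=x_1,X_2=x_2,Z=z,W=w\}$ has positive probability in $\cG_{\mathrm{do}(X_1 X_2 Z)}$. Instantiating the universal equality of the second hypothesis at $(x_1,x_2,z,w)$ rewrites the conditional over $\cG_{\mathrm{do}(X_1 X_2 Z)}$ as the conditional over $\cG_{\mathrm{do}(X_1 Z)}$, and the first hypothesis then tells us that this latter conditional differs from $P_{\cG_{\mathrm{do}(Z)}}(Y|Z=z,W=w)$. Combining the equality and the inequality gives
\[
P_{\cG_{\mathrm{do}(X_1 X_2 Z)}}(Y|X_1=x_1,X_2=x_2,Z=z,W=w)\neq P_{\cG_{\mathrm{do}(Z)}}(Y|Z=z,W=w),
\]
which is exactly the witness required by Definition~\ref{definition:HOaffects} for the conclusion that $X_1 X_2$ affects $Y$ given $\{\mathrm{do}(Z),W\}$.

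The only delicate point --- and hence the main obstacle --- is justifying the choice of $x_2$: one must exhibit a value of $X_2$ at which the combined conditioning event remains of positive probability, so that the conditional entering the second hypothesis is genuinely defined there. I would argue this from the fact that the first hypothesis already guarantees $P_{\cG_{\mathrm{do}(X_1 Z)}}(X_1=x_1,Z=z,W=w)>0$, so some value $x_2$ of the (free) node $X_2$ is consistent with $(x_1,z,w)$ in $\cG_{\mathrm{do}(X_1 Z)}$; since the value taken by $W$ depends only on the \emph{value} assigned to $X_2$ and not on whether it arose from its mechanism or from an intervention, forcing $\mathrm{do}(X_2=x_2)$ preserves the reachability of $(x_1,z,w)$ in $\cG_{\mathrm{do}(X_1 X_2 Z)}$. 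A clean way to sidestep the explicit choice of $x_2$ altogether is to argue by contraposition: assuming $X_1 X_2$ does not affect $Y$ given $\{\mathrm{do}(Z),W\}$ supplies a second universal equality which, composed with that of the second hypothesis, forces $X_1$ not to affect $Y$ given $\{\mathrm{do}(Z),W\}$, contradicting the first hypothesis.
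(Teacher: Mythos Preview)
Your proposal is correct and follows essentially the same route as the paper's proof: unfold the two hypotheses via Definition~\ref{definition:HOaffects} and chain the resulting equality with the inequality to obtain the witness for $X_1X_2$ affects $Y$ given $\{\mathrm{do}(Z),W\}$. The paper presents this argument more tersely, without making the existential/universal quantifier distinction explicit and without discussing the positivity of the conditioning event needed to pick $x_2$; your contrapositive variant at the end is likewise sound and sidesteps that issue cleanly.
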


Definition~\ref{definition: ReduceAffects} classifies the relation $XZ$ affects $Y$ as reducible in Example~\ref{example: HOaffects1} (Fig.~\ref{fig: HOaffects1}), and irreducible in Example~\ref{example: HOaffects2} (Fig.~\ref{fig: HOaffects2}). Note that checking for the (ir)reducibility of an affects relation involves considering an affects relation of a greater order than the original one, where the order of $X$ affects $Y$ given $\{\mathrm{do}(Z),W\}$ is measured by the cardinality $|Z|$ of $Z$.

The following lemma (proven in Appendix~\ref{appendix: proofs2}) relates conditional affects relations to unconditional affects relations such that the irreducibility of the former implies the irreducibility of the latter. As we will later see, this will allow us to restrict to unconditional affects relations without loss of generality when considering their space-time embeddings (cf.\ Remark~\ref{remark: HOaff_complete}).

\begin{restatable}{lemma}{HOaffComplete}
\label{lemma: HOaff_complete}
 For a causal model over a set $S$ of RVs where $X$, $Y$, $Z$ and $W$ are pairwise disjoint subsets of $S$,
 \begin{enumerate}
     \item  $X$ affects $Y$ given $\{\mathrm{do}(Z),W\}$ $\Rightarrow$ $X$ affects $YW$ given do$(Z)$.
      \item  $X$ affects $Y$ given $\{\mathrm{do}(Z),W\}$ is irreducible $\Rightarrow$ $X$ affects $YW$ given do$(Z)$ is irreducible.
     \item $X$ affects $YW$ given do$(Z)$ $\Leftrightarrow$ $X$ affects $Y$ given $\{\mathrm{do}(Z),W\}$  \emph{or} $X$ affects $W$ given do$(Z)$.
 \end{enumerate}
\end{restatable}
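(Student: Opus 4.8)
The plan is to make the chain-rule factorisation of the joint distribution of $YW$, namely $P(YW\mid\cdot)=P(Y\mid W,\cdot)\,P(W\mid\cdot)$, the single engine driving all three parts, applied in parallel to the $\cG_{\mathrm{do}(XZ)}$ and the $\cG_{\mathrm{do}(Z)}$ distributions. I would prove part (1) directly, deduce part (2) from it by a formal application of Definition~\ref{definition: ReduceAffects}, and then prove the equivalence in part (3), whose backward direction reuses part (1). Throughout I restrict attention to values of the conditioning variables for which the conditional distributions in Definition~\ref{definition:HOaffects} are well defined, i.e.\ lie in the support of the relevant distribution.

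For part (1), suppose $X$ affects $Y$ given $\{\mathrm{do}(Z),W\}$, witnessed by values $x,z,w$, so that $P_{\cG_{\mathrm{do}(XZ)}}(Y\mid x,z,w)\neq P_{\cG_{\mathrm{do}(Z)}}(Y\mid z,w)$. I would argue by contradiction: fixing these $x,z$, assume $P_{\cG_{\mathrm{do}(XZ)}}(y,w'\mid x,z)=P_{\cG_{\mathrm{do}(Z)}}(y,w'\mid z)$ for all $y$ and all $w'$. Summing over $y$ gives equality of the $W$-marginals; since the witness value $w$ has positive probability (otherwise its conditional would be undefined), dividing the joint equality by this common marginal at $w$ forces $P_{\cG_{\mathrm{do}(XZ)}}(y\mid x,z,w)=P_{\cG_{\mathrm{do}(Z)}}(y\mid z,w)$ for all $y$, contradicting the witness. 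Hence the joints over $YW$ must differ at some value, which is precisely $X$ affects $YW$ given $\mathrm{do}(Z)$.

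Part (2) is then purely formal. By Definition~\ref{definition: ReduceAffects}, irreducibility of $X$ affects $Y$ given $\{\mathrm{do}(Z),W\}$ says that for every proper subset $s_X\subsetneq X$, with $\tilde{s}_X:=X\backslash s_X$, the relation $s_X$ affects $Y$ given $\{\mathrm{do}(Z\tilde{s}_X),W\}$ holds. Applying part (1) under the substitution $(X,Y,Z,W)\mapsto(s_X,Y,Z\tilde{s}_X,W)$ converts each of these into $s_X$ affects $YW$ given $\mathrm{do}(Z\tilde{s}_X)$, and ranging over all proper subsets $s_X$ this is exactly the irreducibility of $X$ affects $YW$ given $\mathrm{do}(Z)$ (the extra conditioning set being empty, the two reducibility criteria line up verbatim). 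For the ``$\Leftarrow$'' direction of part (3) I would split into two cases: if $X$ affects $W$ given $\mathrm{do}(Z)$, the $W$-marginals of the two joints differ at some value, so the joints differ and $X$ affects $YW$ given $\mathrm{do}(Z)$; if instead $X$ affects $Y$ given $\{\mathrm{do}(Z),W\}$, this is exactly part (1).

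The ``$\Rightarrow$'' direction of part (3) again uses the factorisation: assuming $X$ affects $YW$ given $\mathrm{do}(Z)$ with witness $x,z$ and a pair $(y,w)$ at which the joints differ, were it simultaneously true that the $W$-marginals agree for all $w'$ and the conditionals $P(Y\mid W{=}w')$ agree for all positive-probability $w'$, then the products $P(y,w'\mid\cdot)=P(y\mid w',\cdot)\,P(w'\mid\cdot)$ would coincide for every $(y,w')$ (with zero-probability $w'$ contributing $0$ on both sides), contradicting the witness; hence one of the two disjuncts holds. The main obstacle, and the only place requiring genuine care, is precisely this handling of zero-probability conditioning events in the division steps of parts (1) and (3): one must confirm that any witness $w$ for a conditional affects relation lies in the support of both the $\mathrm{do}(XZ)$ and $\mathrm{do}(Z)$ distributions. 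This is guaranteed by the well-definedness convention implicit in Definition~\ref{definition:HOaffects}, so it amounts to stating the convention cleanly rather than to a substantive difficulty; with it in place, every remaining step is a routine manipulation of marginals and conditionals.
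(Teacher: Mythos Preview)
Your proposal is correct and follows essentially the same approach as the paper: both arguments are driven by the chain-rule factorisation $P(YW\mid\cdot)=P(Y\mid W,\cdot)\,P(W\mid\cdot)$, with part~1 handled by marginalising over $Y$ and dividing, part~3 by the same factorisation in both directions, and part~2 reduced to part~1 applied subset-by-subset. The only cosmetic difference is that the paper argues parts~1 and~2 by contrapositive/contradiction on the ``does not affect'' equalities, whereas you work directly with witness values and, in part~2, apply part~1 affirmatively to each proper subset $s_X$ rather than assuming reducibility for a contradiction; both presentations are equivalent, and your version of part~2 is arguably cleaner.
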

The converse does not hold for the first two statements of this lemma, as illustrated by the following counter-examples. For part 1, consider again Example~\ref{example: HOaffects2} with the superficial observed node $V$ having no incoming or outgoing arrows. Here we have $Z$ affects $Y$ and $Z$ affects $VY$ and yet $Z$ does not affect $V$ given $Y$ ($Z$, $V$ and $Y$ play the role of $X$, $Y$ and $Z$ in the above lemma with $W=\emptyset$). For part 2, consider the causal structure $X_1\xdashrightarrow{} W\xdashleftarrow{} X_2\longrightarrow Y$ with all variables binary, $W=X_1\oplus X_2$, $Y=X_2$, $X_1$ and $X_2$ uniformly distributed. Taking $X=X_1 X_2$, it is easy to verify that we have $X$ affects $Y$ given $W$, $X$ affects $YW$ and it is irreducible, while $X$ affects $Y$ given $W$ is reducible to $X_2$ affects $Y$ given $W$.

Using this, we obtain a stronger version of Lemma~\ref{lemma:HOaffectsCause} as a corollary of Lemmas~\ref{lemma:HOaffectsCause} and~\ref{lemma: HOaff_complete} (see Appendix~\ref{appendix: proofs2} for a proof).
\begin{restatable}{corollary}{HOcauseB}
\label{corollary:HOaffectsCause2}
For a causal model over a set $S$ of RVs where $X$, $Y$, $Z$ and $W$ are any pairwise disjoint subsets of $S$,
  \begin{enumerate}
      \item $X$ affects $Y$ given do$(Z)$ is irreducible
 $\Rightarrow$ for each element $e_X\in X$ there exists an element $e_Y\in Y$ such that $e_X$ is a cause of $e_Y$.
 \item $X$ affects $Y$ given $\{\mathrm{do}(Z),W\}$ is irreducible
 $\Rightarrow$ for each element $e_X\in X$ there exists an element $e_{YW}\in Y W$ such that $e_X$ is a cause of $e_{YW}$.
 \end{enumerate}
\end{restatable}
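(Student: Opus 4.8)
The plan is to treat this genuinely as a corollary, deriving both parts from Lemma~\ref{lemma:HOaffectsCause} together with the definition of irreducibility (Definition~\ref{definition: ReduceAffects}) and Lemma~\ref{lemma: HOaff_complete}, without reopening any do-calculus machinery. The guiding idea is that irreducibility is precisely the tool that lets us peel off one element of $X$ at a time and still retain an affects relation to which the first part of Lemma~\ref{lemma:HOaffectsCause} applies.

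For part 1, I would fix an arbitrary element $e_X\in X$ and set $\tilde{s}_X:=X\backslash\{e_X\}$. The first step is to establish, in every case, that $\{e_X\}$ affects $Y$ given $\mathrm{do}(Z\tilde{s}_X)$. When $X=\{e_X\}$ this is nothing but the hypothesised relation $X$ affects $Y$ given $\mathrm{do}(Z)$, since then $\tilde{s}_X=\emptyset$. When $|X|>1$, the singleton $\{e_X\}$ is a \emph{proper} subset of $X$, so irreducibility of $X$ affects $Y$ given $\mathrm{do}(Z)$ (Definition~\ref{definition: ReduceAffects} with $W=\emptyset$, specialised to $s_X=\{e_X\}$) gives exactly $\{e_X\}$ affects $Y$ given $\mathrm{do}(Z\tilde{s}_X)$. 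In either case I would then invoke part 1 of Lemma~\ref{lemma:HOaffectsCause}, with the intervened set played by $Z\tilde{s}_X$, to conclude that $\{e_X\}$ is a cause of $Y$ in the sense of Definition~\ref{def: cause}; by that definition this is precisely the existence of some $e_Y\in Y$ with $e_X$ a cause of $e_Y$. As $e_X$ was arbitrary, the statement holds for every element of $X$.

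For part 2, I would reduce to part 1 via Lemma~\ref{lemma: HOaff_complete}. Its second statement converts irreducibility of $X$ affects $Y$ given $\{\mathrm{do}(Z),W\}$ into irreducibility of the unconditional relation $X$ affects $YW$ given $\mathrm{do}(Z)$. Applying part 1 of the present corollary to this relation, with $Y$ replaced by the union $YW$, immediately yields for each $e_X\in X$ an element $e_{YW}\in YW$ of which $e_X$ is a cause, which is the claim. Throughout, the pairwise disjointness of $X$, $Y$, $Z$, $W$ guarantees that the sets in the slots $Z\tilde{s}_X$, $YW$, etc.\ remain pairwise disjoint, so every invocation of the two lemmas meets its hypotheses.

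The argument is almost pure logical bookkeeping, so the only place a careless version could break is the boundary case $|X|=1$ in part 1: there $\{e_X\}$ is not a proper subset of $X$, so irreducibility cannot be invoked and the needed relation must instead be read off directly from the hypothesis. I would therefore make the case split $|X|=1$ versus $|X|>1$ explicit rather than trying to fold it into a single line. Beyond that, the main thing to watch is that moving elements between the $X$, $Z$ and $W$ slots keeps all referenced sets disjoint, so that each application of Lemma~\ref{lemma:HOaffectsCause} and Lemma~\ref{lemma: HOaff_complete} is legitimate.
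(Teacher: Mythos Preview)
Your proposal is correct and follows essentially the same route as the paper: use irreducibility to extract a singleton affects relation $\{e_X\}$ affects $Y$ given $\mathrm{do}(Z\tilde{s}_X)$, apply part~1 of Lemma~\ref{lemma:HOaffectsCause} to obtain causation, and for part~2 pass through Lemma~\ref{lemma: HOaff_complete} to reduce to part~1 with $Y$ replaced by $YW$. Your explicit treatment of the $|X|=1$ boundary case is a minor improvement in clarity over the paper's version, which silently folds that case into the general statement.
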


\begin{remark}
\label{remark: Pearl_HO}
Note that in the language of conditional HO affects relations, Pearl's 3rd rule of do-calculus (Theorem~\ref{theorem:dorules}) can be written in the equivalent form
\begin{center}
    $(Y\perp^d Z|XW)_{\cG_{\mathrm{do}(X Z(W))}}$ $\quad\Rightarrow\quad$ $X$ does not affect $Y$ given $\{\mathrm{do}(Z),W\}$,
\end{center}
where $Z(W)$ is the set of nodes in $Z$ that are not ancestors of $W$.
\end{remark}

\begin{remark}
\label{remark:HOaffects}
We have seen in Definition~\ref{definition: solidasharrows} that a dashed arrow from $X$ to $Y$ corresponds to causation in the absence of the corresponding zeroth-order affects relation $X$ affects $Y$. A natural question to ask is whether all dashed arrows in a causal model can be detected using higher-order affects relations. If we consider causal models with no latent nodes, then this is the case. Such a model is entirely classical and the causal mechanisms consist of functional equations i.e., for each node $Y$, a function $f_Y$ taking as input the parent variables par$(Y)$ and an independent, exogenous error variable $E_Y$ that completely determines $Y$ as $Y=f_Y(\text{par}(Y),E_Y)$. The meaning of saying that $X$ is a parent of $Y$ is that $f_Y$ has a nontrivial dependence on the input $X$, i.e., there exists a fixed value of all other inputs of $f_Y$ such that changing the value of $X$ produces a change in the function value. This is precisely captured by the higher-order affects relation $X$ affects $Y$ given do$(\text{par}(Y)\setminus X, E_Y)$. Therefore, given any unfaithful causal model where all nodes, including the error nodes are observed and can be intervened upon, full causal discovery is possible i.e., whether there exists a causal link $X\longrsquigarrow Y$ between any two nodes $X$ and $Y$ in the model, and whether this is a dashed or solid arrow can be determined by interventions in this case. While requiring all the nodes to be observable might be quite a strong assumption, we are not aware of a method for full causal discovery of arbitrary unfaithful causal models in previous literature even under this assumption. By introducing the new concept of higher-order affects relations, our framework suggests an advantage for the classical causal discovery problem for unfaithful causal models. The further exploration of the connections between our framework and the general causal discovery problem is left to future work.
\end{remark}

\subsection{Relationships between concepts}
\label{ssec: relations}
Due to the presence of fine-tuning and the introduction of the 2 types of causal arrows (solid and dashed), a number of concepts that are equivalent in faithful causal models are not equivalent for the causal models described in our framework. We summarise some of the relationships between the concepts arising in our causal modelling framework, before bringing space-time structure into the picture. This subsection can be skipped at the first reading.

The relationships are illustrated in Figure~\ref{fig: relations}. The reason for every implication is explained in the figure caption, and for every implication that fails, we provide a counter-example below. There are 14 implications in Figure~\ref{fig: relations} that do not hold. Some of these can be explained by the same counter-example or are immediately evident from the definitions. Therefore we first group these 14 cases based on the corresponding counter-example or argument needed for explaining them, in the end we will only need a few distinct counter-examples to cover all these cases. Note that if we restrict to faithful and/or acyclic causal models, not all of these non-implications would hold. For instance, in the case of faithful and acyclic causal models commonly considered in the literature, non-implications 1, 2, 3, 4, 5, 9 and 12 will become implications. This section does not cover all implication or non-implications found in this paper, since some of these also involve the newly introduced conditional HO affects relations. For this, we refer the reader to the previous sections. Here we consider relationships between certain basic notions such as correlation vs causation vs affects relations (unconditional zeroth-order ones), to illustrate how these differ in the fine-tuned case. 

\begin{enumerate}
 \item \textbf{Non-implication 1:} In unfaithful causal models, $X$ and $Y$ can be independent even when they are d-connected, as we have seen in the examples of Figure~\ref{fig: motiv_eg}. 
  \item \textbf{Non-implications 2, 11, 18:} These are covered by Example~\ref{example: eg3}.
 
    \item \textbf{Non-implications 3, 6, 8, 13:} These are covered by Example~\ref{example: jamming}.
     \item \textbf{Non-implications 4, 5:} $X$ is a cause of $Y$ does not imply that it is a direct cause of $Y$, it can be an indirect cause. Further $X$ can affect $Y$ even when it is an indirect cause, for example $X\longrightarrow Z\longrightarrow Y$.
    \item \textbf{Non-implication 7:} This is covered by Example~\ref{example: eg2}.
    \item \textbf{Non-implication 9:} It is evident that ``$X$ is a direct cause of $Y$'' does not imply $X\xdashrightarrow{} Y$, since it can also be a cause through a solid arrow.
    \item \textbf{Non-implications 10, 12:} These are just a consequence of the fact that correlation does not imply causation. Correlation between $X$ and $Y$ can arise when they share a common cause, without being a cause (direct or indirect) of each other.
     \item \textbf{Non-implications 14, 17:} In a simple common cause scenario, i.e., $Z\longrightarrow X$ and $Z\longrightarrow Y$ with $X=Y=Z$, $X$ does not affect $Y$ however $X$ is correlated with $Y$ and there is no dashed arrow from $X$ to $Y$.
    \item \textbf{Non-implication 15:} It is evident that independence of $X$ and $Y$ does not imply that there is a dashed arrow between them, they can also be d-separated.
     \item \textbf{Non-implication 16:} This is covered by Example~\ref{example: eg4}. 
\end{enumerate}
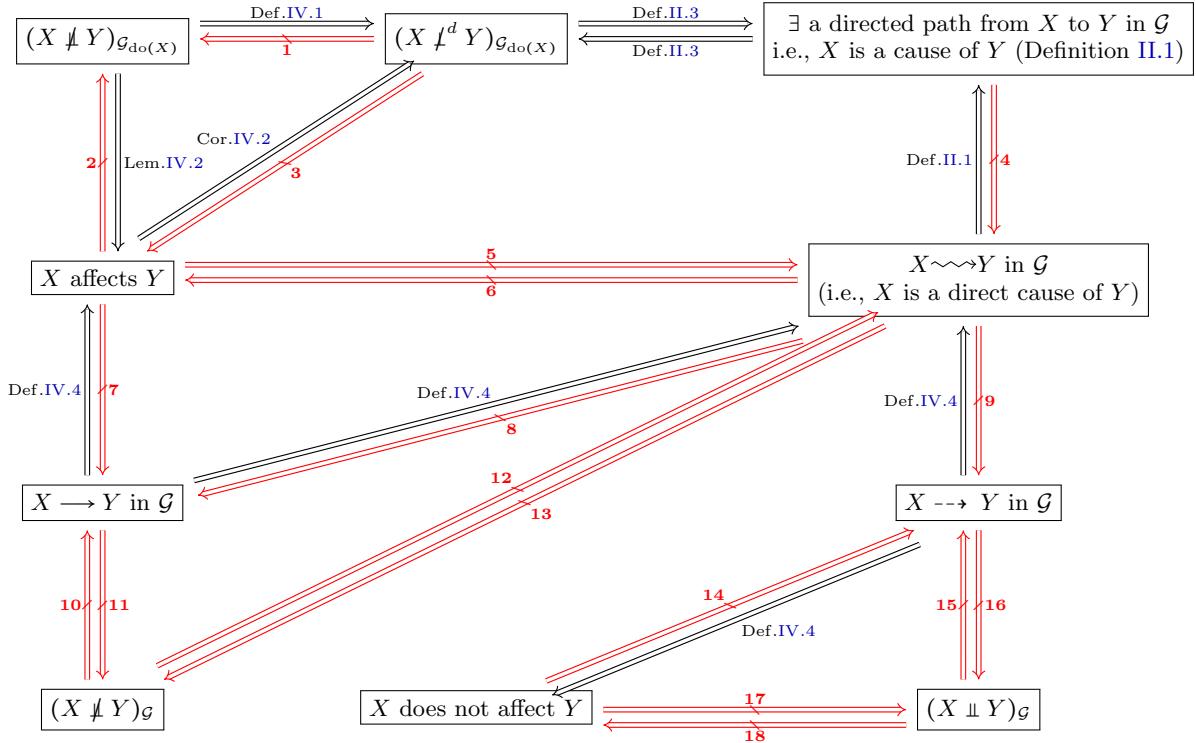
\begin{figure}[t!]
    \centering
    \begin{tikzcd}[arrows=Rightarrow, row sep=2cm, column sep=2cm]
\fbox{$(X\nindep Y)_{\mathcal{G}_{\mathrm{do}(X)}}$}\arrow[shift left=1.5ex]{d}{\mathrm{Lem.\ref{lemma: correl-affects}}}\arrow[shift left=1.5ex]{r}{\mathrm{Def.\ref{definition: compatdist}}}&\fbox{$(X\not\perp^d Y)_{\mathcal{G}_{\mathrm{do}(X)}}$}\arrow[red,degil]{dl}{\textcolor{red}{\textbf{3}}}\arrow[red,degil]{l}{\textcolor{red}{\textbf{1}}}\arrow[shift left=1.5ex]{r}{\mathrm{Def.\ref{definition: d-sep}}}&\mlnode{$\exists$ a directed path from $X$ to $Y$ in $\mathcal{G}$\\i.e., $X$ is a cause of $Y$ (Definition~\ref{def: cause})}\arrow{l}{\mathrm{Def.\ref{definition: d-sep}}}\arrow[red,degil,shift left=1.5ex]{d}{\textcolor{red}{\textbf{4}}}\\
   \fbox{$X$ affects $Y$} \arrow[shift left=1.5ex]{ur}{\mathrm{Cor.\ref{corollary:dsep-affects}}} \arrow[red,degil,shift left=1.5ex]{rr}{\textcolor{red}{\textbf{5}}}\arrow[red,degil]{d}{\textcolor{red}{\textbf{7}}}\arrow[red,degil]{u}{\textcolor{red}{\textbf{2}}} &    &\mlnode{$X\longrsquigarrow Y$ in $\cG$\\(i.e., $X$ is a direct cause of $Y$)}\arrow[red,degil]{ddll}{\textcolor{red}{\textbf{13}}}\arrow{u}{\mathrm{Def.\ref{def: cause}}} \arrow[ red,degil]{ll}{\textcolor{red}{\textbf{6}}} {}\arrow[red,degil]{d}{\textcolor{red}{\textbf{9}}}\arrow[red,degil, shift left=1.5ex]{dll}{\textcolor{red}{\textbf{8}}} \\
      \fbox{$X\longrightarrow Y$ in $\mathcal{G}$}\arrow{urr}{\mathrm{Def.\ref{definition: solidasharrows}}}  \arrow[shift left=1.5ex]{u}{\mathrm{Def.\ref{definition: solidasharrows}}}\arrow[red,degil]{d}{\textcolor{red}{\textbf{11}}} &  &  \fbox{$X\xdashrightarrow{} Y$ in $\mathcal{G}$}\arrow[shift left=1.5ex]{dl}{\mathrm{Def.\ref{definition: solidasharrows}}}\arrow[shift left=1.5ex]{u}{\mathrm{Def.\ref{definition: solidasharrows}}}\arrow[red,degil]{d}{\textcolor{red}{\textbf{16}}}\\
      \fbox{$(X\nindep Y)_{\mathcal{G}}$}\arrow[red,degil, shift left=1.5ex]{uurr}{\textcolor{red}{\textbf{12}}}\arrow[red,degil, shift left=1.5ex]{u}{\textcolor{red}{\textbf{10}}}&\fbox{$X$ does not affect $Y$}\arrow[red,degil]{ur}{\textcolor{red}{\textbf{14}}}\arrow[red,degil]{r}{\textcolor{red}{\textbf{17}}} &\fbox{$(X\indep Y)_{\mathcal{G}}$}\arrow[red,degil, shift left=1.5ex]{u}{\textcolor{red}{\textbf{15}}}\arrow[red,degil, shift left=1.5ex]{l}{\textcolor{red}{\textbf{18}}}
\end{tikzcd}
  \caption[Relationships between concepts associated with causal models]{\textbf{Relationships between concepts relating to causal models: }  The black arrows denote implications while red (crossed out) arrows denote non-implications. The numbers label the counter-examples corresponding to each non-implication, which are explained in the main text. The equivalence between ``$\exists$ a directed path from $X$ to $Y$ in $\cG$'' and $(X\not\perp^d Y)_{\cG_{\mathrm{do}(X)}}$ is explained in the paragraph following Corollary~\ref{corollary:dsep-affects}. $X\longrightarrow Y$ and $X\xdashrightarrow{} Y$ imply $X\longrsquigarrow Y$ since solid and dashed arrows are simply special instances of the more general, squiggly arrow by Definition~\ref{definition: solidasharrows}. This graph is complete in the sense that, given any ordered pair of statements $(\phi_1,\phi_2)$ from the 10 that form the vertices of this graph, one can deduce whether or not $\phi_1 \Rightarrow \phi_2$ as follows: if there exists a directed path from $\phi_1$ to $\phi_2$ that consists only of the implication arrows (black), then $\phi_1 \Rightarrow \phi_2$ and otherwise, $\phi_1 \not\Rightarrow \phi_2$.}
    \label{fig: relations}
\end{figure}

 \begin{example}
 \label{example: eg3}
Consider the causal structure of Figure~\ref{fig:eg3}. Let the three variables $S$, $E$ and $H$ be binary and correlated as $H=S\oplus E$ and $S=E$. These relations imply that $H=0$ deterministically. Now, when we intervene on $E$, we can choose its value independently of $S$ and whenever we choose $E\neq S$, we will see that $H=1$ occurs with non-zero probability. In other words, there exists a value $e$ of $E$ such that $P(H=1|\mathrm{do}(e))\neq P(H=1)=0$ i.e., $E$ affects $H$. As $E$ is a direct cause of $H$ in $\cG$, this further implies that the causal arrow from $E$ to $H$ is a solid one, even though $E$ and $H$ are independent in both the pre and post-intervention causal models i.e., $(E \indep H)_{\cG}$ and $(E \indep H)_{\cG_{\mathrm{do}(X)}}$ both hold, the former since $H$ is deterministic in the original causal model, irrespective of the value of $E$ and the latter since $H$ is uniform in the post-intervention model, again irrespective of the value of $E$. Therefore the existence of an affects relation between two sets of observed variables does not imply correlation between them either in the pre or the post intervention causal model. Further, $S$ does not affect $H$ since the exogeneity of $S$ implies that $P_{\cG{\mathrm{do}(S)}}(H|S)=P_{\cG}(H|S)$ (Corollary~\ref{corollary:exogenous}), and the independence of $S$ and $H$ in $\cG$ gives  $P_{\cG}(H|S)=P_{\cG}(H)$.
 \end{example}
\begin{figure}[t!]
    \centering
\includegraphics[]{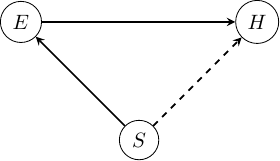}
    \caption[Operationally detectable causation without correlation]{\textbf{Affects relation does not imply correlation: } This is a causal structure for Example~\ref{example: eg3} which demonstrates a scenario where $E$ affects $H$ even though $P_{EH}=P_EP_H$, i.e., solid arrows can also be fine-tuned and the ability to detect causation through an active intervention does not imply that we will see correlation upon passive observation.}
    \label{fig:eg3}
\end{figure}

\begin{example}[Jamming]
\label{example: jamming}
Consider the causal structure of Figure~\ref{fig: jamming} where $B\xdashrightarrow{} A$, $B\xdashrightarrow{} C$ and the RVs $A$ and $C$ share an unobserved common cause $\Lambda$. By Definition~\ref{definition: solidasharrows} of the dashed arrows, we have $B$ does not affect $A$ and $B$ does not affect $C$. Suppose that $B$ affects the set $AC$. When $A$, $B$ and $C$ are binary, a probability distribution compatible with this situation is one where all 3 RVs are uniformly distributed and correlated as $B=A\oplus C$. Then, $A$ and $C$ individually carry no information about $B$ but $A$ and $C$ jointly determine the exact value of $B$. In this case, $B$ is a cause of $A$ and of $C$ but, due to fine-tuning, $B$ and $A$ are uncorrelated, as are $B$ and $C$, and there are no pairwise affects relations. This means that the causal influence of $B$ on $A$ (or $B$ on $C$) can only be detected when $A$, $B$ and $C$ are jointly accessed. The common cause is crucial to this example as explained in Figure~\ref{fig: jamming}, and the causal structure compatible with the distribution and affects relations of this example is not unique. An alternative causal structure that is compatible with correlations and affects relations of this example is where one of the dashed arrows $B\xdashrightarrow{} A$ or $B\xdashrightarrow{} C$ is dropped.
\end{example}
This example by itself makes no reference to space-time or the tripartite Bell scenario. However, if the variables $A$, $B$ and $C$ are embedded in a pairwise space-like separated way and taken to correspond to the output of Alice, input of Bob and output of Charlie respectively, this becomes a special case of the tripartite jamming scenario of~\cite{Grunhaus1996, Horodecki2019} (Figure~\ref{fig: trins}).\footnote{Barring the slight change of notation: In Figure~\ref{fig: trins}, $A$ and $C$ correspond to the inputs of Alice and Charlie while $X$ and $Z$ correspond to the outputs that are jammed by $B$. We do not make a distinction between inputs and outputs in general since we will also consider situations where the jamming variable is not exogenous for example.} In the rest of the paper, such examples, where an RV has dashed arrows to a set of RVs will be referred to as instances of ``jamming'' in accordance with the terminology of~\cite{Grunhaus1996}, irrespective of the space-time configuration. We will further discuss the relation of such causal models to space-time structure later in the paper.

\begin{figure}[t]
    \centering
  \subfloat[\label{fig: jamming}]{\includegraphics[]{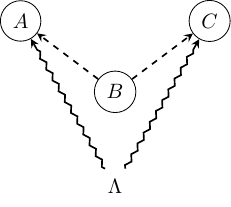}}\qquad\qquad\qquad\qquad\qquad\qquad\subfloat[\label{fig:example1}]{\includegraphics[]{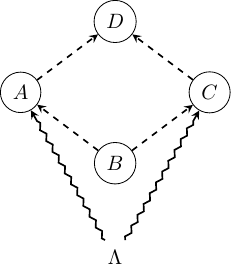}}
    \caption[Some fine-tuned causal structures]{\textbf{Some fine-tuned causal structures: } (a) The jamming causal structure of Example~\ref{example: jamming}. Note that the common cause $\Lambda$ is essential to this example,  because without $\Lambda$, $A$ and $C$ would be d-separated given $B$ which would imply the conditional independence $P_{AC|B}=P_{A|B}P_{C|B}$. The dashed arrows would imply the independence of $A$ and $B$ as well as $C$ and $B$ and hence the observed distribution would factorise as $P_{ABC}=P_AP_BP_C$. Then no pairs of disjoint subsets of $\{A,B,C\}$ would affect each other contrary to the original example. (b) Causal structure for Example~\ref{example: eg2} where $B$ affects $D$ even though there is no solid arrow path from $B$ to $D$. }
\end{figure}

\begin{example}
\label{example: eg2}
Consider a causal model over observed variables $\{A,B,C,D\}$ associated with the causal graph $\mathcal{G}$ given in Figure~\ref{fig:example1}. Here, there are no pairs of variables sharing an edge such that one of them affects the other. A correlation compatible with this graph is obtained by taking $B=A\oplus C=D$ where all variables are binary and uniformly distributed. Here, $B$ affects $D$ even though there are no solid arrow paths from $B$ to $D$. 
\end{example}

\begin{figure}[t!]
    \centering
 \subfloat[\label{fig: eg4a}]{  \includegraphics[]{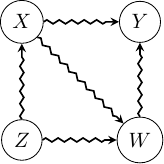}}\qquad\qquad \subfloat[\label{fig: eg4b}]{\includegraphics[]{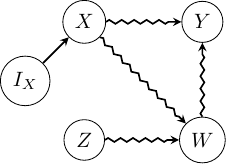}}\qquad\qquad\subfloat[\label{fig: eg4c}]{ \includegraphics[]{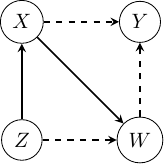}}
    \caption{\textbf{Dashed arrow (or non-affects relation) does not imply independence: } (a) The original causal structure $\cG$ of Example~\ref{example: eg4}, before the causal arrows are classified according to Definition~\ref{definition: solidasharrows}. (b) The corresponding causal structure $\cG_{\mathrm{do}(X)}$ when the node $X$ is intervened upon. (c) The causal structure $\cG$ after all the arrows have been classified as explained in the main text. The example shows that even though we have $X\xdashrightarrow{} Y$ in $\cG$, there exists a causal model compatible with this graph such that $X$ and $Y$ are correlated in $\cG$ in this causal model.}
    \label{fig: eg4}
\end{figure}

\begin{example}
\label{example: eg4}
Consider the causal structure of Figure~\ref{fig: eg4a} with the variables $X$, $Y$, $W$ and $Z$ taken to be binary. Suppose the causal mechanisms of the model are $X=Z$, $W=X\oplus Z$ and $Y=X\oplus W$ with the exogenous variable $Z$ being uniformly distributed. This reduces to $W=0$ (deterministically) and $Y=X=Z$. Since $Z$ is uniformly distributed, $P_{\cG}(Y)$ is also uniform and since $X$ and $Y$ are perfectly correlated in $\cG$, $P_{\cG}(Y|X)$ is deterministic. Now consider the graph $\cG_{\mathrm{do}(X)}$ shown in Figure~\ref{fig: eg4b}. The causal mechanism for $X$ here is fully specified by the distribution over $P_{I_X}$ which can be arbitrary. For the remaining variables we have $W=X\oplus Z$, $Y=X\oplus W$ and $Z$ is uniformly distributed, which gives $Y=Z$. The d-separation $(Z\perp^d X)_{\cG_{\mathrm{do}(X)}}$ implies the independence of $Z$ and $X$ in $\cG_{\mathrm{do}(X)}$ and hence the independence of $Y$ and $X$ in $\cG_{\mathrm{do}(X)}$ and since $Z$ is uniformly distributed here, so is $Y$ i.e., $P_{\cG_{\mathrm{do}(X)}}(Y|X)=P_{\cG_{\mathrm{do}(X)}}(Y)$ and both equal the uniform distribution. From before, we had noted that $P_\cG(Y)$ is also uniform, which gives $P_{\cG_{\mathrm{do}(X)}}(Y|X)=P_{\cG}(Y)$ or $X$ \emph{does not affect} $Y$. Therefore, by definition~\ref{definition: solidasharrows}, the causal arrow from $X$ to $Y$ must be a dashed one, even though we have seen that $(X\not\indep Y)_{\cG}$.
\end{example}
The remaining causal arrows of Figure~\ref{fig: eg4a} can also be classified as solid or dashed arrows as done for $X\longrsquigarrow Y$ in the above example.  For example, $X$ \emph{affects} $W$ can be established by noting that $P_{\cG_{\mathrm{do}(X)}}(W|X)$ is uniform (since $W=X\oplus Z$ with $X$ and $Z$ independent in $\cG_{\mathrm{do}(X)}$ and $Z$ is uniform) while $P_{\cG}(W)$ is deterministic. Therefore we have $X\longrightarrow W$ in $\cG$. Similarly, $W$ \emph{does not affect} $Y$, $Z$ \emph{affects} $X$ and $Z$ \emph{does not affect} $W$ can also be established and we obtain the graph of Figure~\ref{fig: eg4c} as the original causal structure $\cG$ once all the arrows of Figure~\ref{fig: eg4a} have been classified.

Further examples can be found in Appendix~\ref{appendix: examples} where we discuss how conditional independences and affects relations can be deduced from the causal model in our framework.

\section{The framework, Part 2: Space-time}
\label{sec:space-time}
We now turn to space-time structure and the relevant concepts needed for studying its relation to causality. Section~\ref{ssec: space-time} introduces our way of modelling space-time structure and the concept of \emph{ordered random variables} (definition~\ref{def: ORV}). In Section~\ref{ssec: embedding}, we define what it means to embed a causal model in a space-time structure (Definition~\ref{definition: embedding}). We then characterise in Section~\ref{ssec: compat}, what it means for a causal model to be compatible with an embedding in a space-time (Definition~\ref{definition: compatposet}), which formalises the requirement that signalling outside the space-time future is not possible using the affects relations of the embedded causal model. Finally, in Theorem~\ref{theorem:poset} of Section~\ref{sec: necsuff}, we provide necessary and sufficient conditions for compatibility.


\subsection{Space-time structure}
\label{ssec: space-time}
 We model space-time simply by a partially ordered set $\mathcal{T}$ without assuming any further structure/symmetries. A particular example of $\mathcal{T}$ is Minkowski space-time, where the partial order corresponds to the light-cone structure and the elements of $\mathcal{T}$ can be seen space-time coordinates in some frame of reference. Our results will only depend on the order relations of $\mathcal{T}$ and not on the representation of its particular elements. To make operational statements about $\mathcal{T}$, we must embed physical systems into it. In our case, we can only do so for the observed systems in the causal model which are random variables. We embed them in this space-time by assigning an element of $\mathcal{T}$ to each random variable which then specifies its space-time location (thereby producing an \emph{ordered random variable} or ORV), and assigning a subset of $\mathcal{T}$ to each ORV which specifies the locations in the space-time at which the ORV can be ``accessed''. Here, the order of an ORV corresponds to that of the space-time $\mathcal{T}$ (and not of the causal model) i.e., ORVs can be seen as abstract versions of space-time random variables.

\begin{definition}[Ordered random variable (ORV)]
\label{def: ORV}
Given a RV $X$, we can assign to it a location $O(X)\in\mathcal{T}$.  An ORV $\mathcal{X}$ is then the pair $\mathcal{X}:=(X,O(X))$. We can extend the definition of $O$ to ORVs, so that $O(\mathcal{X})$ is interpreted to mean $O(X)$.
\end{definition}

  We use $\prec$, $\succ$ and $\nprec\nsucc$ to denote the order relations for a given partially ordered set $\mathcal{T}$, where for $\alpha$, $\beta\in \mathcal{T}$, $\alpha \nprec\nsucc \beta$ corresponds to $\alpha$ and $\beta$ being unordered with respect to $\mathcal{T}$. This is different from $\alpha=\beta$ which corresponds to the two elements being equal. These relations carry forth in an obvious way to ORVs and we say for example that 2 ORVs $\mathcal{X}$ and $\mathcal{Y}$ are ordered as $\mathcal{X}\prec \mathcal{Y}$ iff $O(\mathcal{X})\prec O(\mathcal{Y})$. Note however that when we write $\mathcal{X}=\mathcal{Y}$ we mean $X=Y$ and $O(\mathcal{X})=O(\mathcal{Y})$.

 \begin{definition}
 \label{definition: incfuture}
 The \emph{inclusive future} of an ORV is the set 
 \begin{equation*}
    \overline{\mathcal{F}}(\mathcal{X}):=\{\alpha \in \mathcal{T}: \alpha\succeq O(\mathcal{X})\}.
\end{equation*}
 \end{definition}

Note that $\mathcal{X}\in \overline{\mathcal{F}}(\mathcal{X})$ but $\mathcal{X}\notin \mathcal{F}(\mathcal{X})$, hence the name ``inclusive'' future. Then, we say that an ORV $\mathcal{Y}$ lies in the inclusive future of an ORV $\mathcal{X}$ iff $O(\mathcal{Y})\in \overline{\mathcal{F}}(\mathcal{X})$. In a slight abuse of notation, we will simply write this as $\mathcal{Y}\in \overline{\mathcal{F}}(\mathcal{X})$, which is equivalent to $\mathcal{X}\preceq \mathcal{Y}$. Further, any probabilities written in terms of ORVs should be understood as being probability distributions over the corresponding random  variables. In the rest of the paper, whenever we use the term ``future'', this should be understood as inclusive future.

\begin{remark}
When considering causal loops or closed timelike curves (CTC)\footnote{By CTC we mean any situation in which a causal model whose causal structure has a loop is embedded in space-time (cf.\ Definition~\ref{definition: embedding}). This leads to causal influences in both directions between two points in the space-time.}, one typically imagines a cyclic space time whose light cone structure is not a partial order, but a pre-order. This is the case in general relativity where the space-time structure implies a causal structure and having a CTC is a property of the space-time. Here, we have separated causality from space-time such that causal loops are a property of the causal model (see Section~\ref{sec: loops}), and any causal loop embedded in a space-time (partial or pre-ordered) as described in the following section would form a CTC. We will consider how such cyclic causal models can be compatibly embedded in a space-time i.e., without leading to signalling outside the future, and the more interesting case is when we take a partially ordered space-time such as Minkowski space-time. Through this approach, we will see that it is possible to have a CTC in Minkowski space-time that does not lead to superluminal signalling, since it is possible for the signalling properties of a causal model to respect the partial order even while the causal relations are cyclic. The problem would in a sense be trivial if the space-time is also a preorder, since for any cyclic causal structure (which defines a pre-order relation), one can always find a corresponding pre-ordered space-time that compatibly embeds it.
\end{remark}

\subsection{Embedding of a causal model in a space-time structure}
\label{ssec: embedding}

We have discussed two types of order relations: the pre-order encoded by the arrows $\longrsquigarrow$ of the causal structure, and the partial order specified by the order relation $\prec$ of the partial order $\mathcal{T}$. These are two distinct concepts, and within our framework can be set independently of one another. We first formalise how a given causal model may be embedded in a space-time structure, and in the next section, we introduce a compatibility condition that connects the two that aims to capture when a causal structure can be embedded in the partial order $\mathcal{T}$. This compatibility condition is based on the idea of ensuring that it is impossible to signal outside the future as encoded by the partial order $\mathcal{T}$\footnote{It may be helpful to think of $\mathcal{T}$ as a Minkowski space-time, with the partial order specified by the light-cone structure.}.  Whether signalling is possible depends on where random variables can be accessed, and so we first introduce the concept of an accessible region, which is the subset of $\mathcal{T}$ at which it is possible to have a copy of a random variable.  Since we are dealing with classical random variables, it makes sense to imagine these being broadcast, i.e., sending a copy to all points in the accessible region.

\begin{definition}[Copy of a RV]
\label{definition:copy}
Consider a causal model over a set of observed variables $S$. A RV $X'\in S$ is a \emph{copy} of $X\in S$ if the only parent of $X'$ is $X$, and if $X'=X$.  It is often convenient to think of copying a random variable $X$ in the causal model, where the copy is not initially included in the model.  To do so, we augment the causal graph with a new node $X'$ whose only parent is the node $X$ and such that $X'=X$ (the graph has $X\longrightarrow X'$ added).  We usually do not draw the augmented causal model, but instead keep the copies implicit. We also extend the definition of a copy to ordered random variables so that $\mathcal{X}'$ is a copy of $\mathcal{X}$ whenever the corresponding RV $X'$ is a copy of the RV $X$.
\end{definition}

Note that each RV affects each of its copies. We can then define the accessible region of a RV to be the region of $\mathcal{T}$ in which it is possible to have a copy of the RV.  In essence, we can imagine each RV being copied throughout its accessible region.

\begin{definition}[Accessible region of a RV/ORV]
\label{definition:accreg}
Given a causal model over a set of observed variables $S$, and a partial order $\mathcal{T}$, for each random variable $X\in S$ we can define an \emph{accessible region} $\mathcal{R}_X\subseteq\mathcal{T}$ intended to represent the set of points in $\mathcal{T}$ at which it is possible to have a copy of $X$. The \emph{inaccessible region} of $X$ is then the complement $\tilde{\mathcal{R}}_X=\mathcal{T}\setminus\mathcal{R}_X$ and represents the set of points at which it is impossible to have a copy of $X$. We can naturally extend this definition to ORVs by taking the accessible region of an ORV $\mathcal{X}=(X,O(X))$ to be the accessible region of $X$.
\end{definition}

We also want a notion of accessible region for sets of RVs/ORVs. The accessible region of a set can be thought of as the locations at which there can be a copy of all of the random variables in the set. This motivates taking the intersection of the accessible regions of the individual elements, since if the accessible region of the set were any larger than this, it would contradict the definition of accessible region for at least one individual element of the set.

\begin{definition}[Accessible region of a set of RVs/ORVs]
\label{definition: subsets}
Given a set $S=\{S_i\}_i$ of RVs we define the accessible region of $S$ by $\mathcal{R}_{S}=\bigcap\limits_{S_i\in S}\mathcal{R}_{S_i}$.  For the empty set $\emptyset$, the accessible region is defined to be $\mathcal{R}_{\emptyset}:=\mathcal{T}$. 
\end{definition}

\begin{definition}[Embedding]
\label{definition: embedding}
Given a set of RVs $S$, an \emph{embedding of $S$} in a partially ordered set $\mathcal{T}$ produces a corresponding set of ORVs $\mathcal{S}$ by assigning a location $O(X)\in\mathcal{T}$, and an accessible region $\mathcal{R}_X$ to each RV $X$, such that the associated ORV is $\mathcal{X}=(X,O(X))$. An embedding of a set of RVs is called \emph{non-trivial} if no two RVs $X$ and $Y$ such that $X$ affects $Y$ are assigned the same location in $\mathcal{T}$. 
\end{definition}

The set of RVs $S$ we will wish to embed will typically be related by a causal model or a set of affects relations. We have seen that when analysing affects relations, it is useful to augment the original causal model with an additional set of RVs corresponding to the intervention nodes. In the following, whenever we refer to an embedding of a causal model or a set of affects relations in a partial order, this must be understood as an embedding of the original set of RVs $S$ associated with causal model/affects relations, the non-triviality of the embedding will also only concern the embedding of the original set of RVs $S$. For simplicitly, we will assume that every hypothetical intervention node $I_X$ that may be introduced to model interventions on an RV $X\in S$ is embedded at the same location as $X$ (even though $I_X$ affects $X$ by construction). Our results are not affected by this assumption, it is a mere simplification.

\subsection{Compatibility of a causal model with an embedding in space-time}
\label{ssec: compat}

Up to here there are no conditions on how the locations and accessible regions are set---in particular, these need not be related with the notion of future defined on $\mathcal{T}$. We now introduce a compatibility condition that connects these concepts together, which aims to capture the intuition that signalling outside the (inclusive) future should not be possible. As this intuition is non-trivial to formalise for general, unfaithful causal models, we will first motivate the important aspects of the definition with examples, before formally stating it. For this, we will first consider the case of faithful causal models, then unfaithful causal models with interventions only on single nodes and finally the general case of unfaithful causal models with joint interventions. For all the examples in the following paragraphs we will take $\mathcal{T}$ to be Minkowski space-time and embed RVs such that the accessible region of each RV coincides with its inclusive future.

\bigskip
\paragraph{Compatibility for faithful causal models: } For faithful causal models, if $X$ and $Y$ are 2 RVs, $X$ is a cause of $Y$ in a causal structure $\cG$ i.e., $X\longrsquigarrow\ldots\longrsquigarrow Y$ in $\cG$ is equivalent to $X$ affects $Y$. 
 Therefore, if we demand that whenever $X$ affects $Y$ for any two RVs $X$ and $Y$ in the model, $Y$ must be embedded in the future of $X$ in the space-time, this ensures that all causal influences propagate from past to future and consequently that there is no signalling outside the future for the given embedding of the model.\footnote{Note that such an embedding is always possible for acyclic causal models but impossible for causal models with certain types of causal loops (Lemma~\ref{lemma: TrivEmbedding}) and possible for causal models with certain other types of causal loops as we will show in Section~\ref{sec: loops}.}

\bigskip
\paragraph{Compatibility for unfaithful causal models with single node interventions: } The above condition for faithful models is insufficient to rule out such signalling in unfaithful models since affects and cause become inequivalent notions here, and we must also consider affects relations involving sets of RVs. For example, in the jamming causal structure (Example~\ref{example: jamming}), if we embed $A$ and $C$ outside the future of $B$, but such that there are points in the intersection of the futures of $A$ and $C$ that are also outside the future of $B$, then signalling is possible. We first consider affects relations of the form $X$ affects $S$ where $X$ is an RV and $S$ is a set of RVs. Operationally, this means that given access to a copy of all elements of $S$, one can learn information about the intervention performed on $X$. Then, in order to avoid signalling outside the future by means of the affects relation $X$ affects $S$, a necessary and sufficient condition on the embedding would be to take the accessible regions to coincide with the inclusive futures and $\mathcal{R}_S\subseteq\mathcal{R}_{\cX}$, which would ensure that the joint future of all elements in $S$ is contained in the future of $X$. Note that this does not imply that all causal influences (which may be hidden due to fine-tuning) must propagate from past to future, only that any observable signal propagates from past to future (cf. the jamming scenario of Figure~\ref{fig: trins}).

\bigskip
\paragraph{Compatibility for unfaithful causal models with multi-node interventions: } Consider a general affects relation of the form $S_1$ affects $S_2$ for two disjoint subsets $S_1$ and $S_2$ of RVs (possibly arising from an unfaithful causal model). If in analogy to the previous case, we demand that any compatible embedding must be such that $\mathcal{R}_{S_2}\subseteq\mathcal{R}_{S_1}$ with all accessible regions coinciding with the corresponding inclusive futures, this would be too restrictive in the present case. Take the simple Example~\ref{example: HOaffects1} where $Z\longrightarrow Y$ and $X$ is an isolated node with no in or out edges. Then clearly $XZ$ affects $Y$ but we would only require $Y$ to be in the future of $Z$ and not also in the future of $X$ (which trivially affects it given $Z$). On the other hand, in the causal structure of Example~\ref{example: HOaffects2}, $Y$ depends on both the exogenous nodes $X$ and $Z$ and we would expect that $Y$ must be embedded in the joint future of $X$ and $Z$ to avoid signalling outside the future. To establish that embedding $Y$ in the joint future of $X$ and $Z$ is necessary in the latter case and not the former and to avoid imposing too strong constraints on the embedding, we must also consider the higher-order affects relation $X$ affects $Y$ given do$(Z)$.

\bigskip

\paragraph{Operational meaning of a higher-order affects relation: }Operationally, the conditional HO affects relation $X$ affects $Y$ given $\{\mathrm{do}(Z),W\}$ means that an agent Alice who can intervene on $X$ can signal to an agent Bob having access to $Y$ if Bob also has access to information about interventions performed on some set $Z$ along with information about some other set $W$ (upon which an intervention was not performed). If the RVs in these sets are embedded in a space-time, in order for the affects relation $X$ affects $Y$ given $\{\mathrm{do}(Z),W\}$ to not lead to signalling outside the space-time future, we must embed the RVs such that the joint future of $Y$, $Z$ and $W$ (i.e., where they are jointly accessible by Bob) is contained in the future of $X$.

 Furthermore, a given HO affects relation, $X$ affects $Y$ given $\{\mathrm{do}(Z),W\}$ may itself contain some redundancies if $X$ is a set of RVs (as we have seen in Example~\ref{example: HOaffects2}), such that it can be reduced to the HO affects relation $\tilde{s}_X$ affects $Y$ given $\{\mathrm{do}(Z),W\}$ for some proper subset $\tilde{s}_X$ of $X$ (Lemma~\ref{lemma:reduce}). In such cases we only need to impose that the joint future (or joint accessible region) of $Y$ and $Z$ is contained in that of the smaller set $\tilde{s}_X$.

The following definition based on this intuition allows us to decide when a set of affects relations can be compatibility embedded in a space-time.

\begin{definition}[Compatibility of a set of affects relations with an embedding in a partial order ($\mathbf{compat}$)]
\label{definition: compatposet}
Let $\mathcal{S}$ be a set of ORVs formed by embedding a set of RVs $S$ in a partially ordered set $\mathcal{T}$ with embedding $\mathscr{E}$. Then a set of affects relations $\mathscr{A}$ is said to be \emph{compatible} with the embedding $\mathscr{E}$ if the following conditions hold:
\begin{itemize}
\item \textbf{compat1: } Let $\cS_1,\cS_2\subseteq\cS$ be disjoint non-empty subsets of ORVs, and $\cS_3,\cS_4$ be two more subsets (possibly empty) disjoint from each other and $\cS_1$ and $\cS_2$.  If $\cS_1$ affects $\cS_2$ given $\{\mathrm{do}(\cS_3),\cS_4\}$ is in $\mathscr{A}$ and is irreducible with respect to the affects relations in $\mathscr{A}$, then $\cR_{\cS_2\cS_3\cS_4}=\cR_{\cS_2}\bigcap \cR_{\cS_3}\bigcap \cR_{\cS_4}\subseteq \cR_{\cS_1} $ with respect to $\mathscr{E}$.

\item \textbf{compat2: } for all $\cX\in\cS$, $\mathcal{R}_{\cX}=\overline{\mathcal{F}}(\mathcal{X})$ with respect to $\mathscr{E}$.
\end{itemize}
\end{definition}

The definition is motivated by the desire to prevent signalling outside of the future. The condition \textbf{compat2} identifies the accessible region with the inclusive future, which is based on the ability to broadcast a RV to any location in its future. An alternative would be a weaker condition that requires the accessible region to be some subset of the future. The condition \textbf{compat1} is defined in terms of accessible regions, so could also be used with a weaker version of \textbf{compat2}. However, a weaker version would in effect place a constraint on broadcasting, and we do not use it here. We return to this in Section~\ref{sec: necsuff}.

This definition covers all the special cases previously discussed. For single variables, if $X$ affects $Y$ then $\cY$ should be in the future of $\cX$ (given \textbf{compat2} this is equivalent to taking the accessible region of $Y$ to be contained within that of $X$); this is \textbf{compat1} when $\cS_3$ is the empty set (in which case its accessible region is simply $\mathcal{T}$ by Definition~\ref{definition: subsets}) and $\cS_1=\cX$ and $\cS_2=\cY$ are single ORVs. When $\cS_2$ is a set of ORVs, this case ensures that the ORVs in $\cS_2$ are jointly accessible only in the future of the ORV $\cX$. This covers the particular case of jamming (Example~\ref{example: jamming}).

We now illustrate the definition by applying it to Examples~\ref{example: HOaffects1} to~\ref{example: HOaffects3}. In Example~\ref{example: HOaffects1}, $Z$ affects $Y$ implies that $Y$ must be in the future of $Z$ and $XZ$ affects $Y$ being a reducible affects relation does not add any further constraints, so we do not require $Y$ to be in the future of $X$. In Example~\ref{example: HOaffects2}, we again must embed $Y$ in the future of $Z$ but in this case, $XZ$ affects $Y$ is irreducible and therefore imposes the constraint that $Y$ must be in the joint future of $X$ and $Z$. In Example~\ref{example: HOaffects3}, in contrast to the previous example, we have $X$ affects $Y$ given do$(Z)$ even though $XZ$ does not affect $Y$. The former is irreducible as it involves single RVs, and implies that the joint future of $Y$ and $Z$ must be in the future of $X$, and since we also have $Z$ affects $Y$ which would require $Y$ to be in the future of $Z$, we can conclude that compatibility in this case forces $Y$ to be in the future of both $X$ and $Z$. Noting that $W$ also affects $Y$, this would require $Y$ to be in the future of $W$ as well.
\bigskip

\paragraph{Completeness of Definition~\ref{definition: compatposet}: } We now provide an argument to show that our definition of compatibility indeed fully captures the intuition of ``no signalling outside the future'' within our framework. Given \textbf{compat2} which we have motivated above, \textbf{compat1} is necessary to avoid agents from using the affects relation to signal outside the future, since a violation of compatibility would enable $\cS_2$, $\cS_3$ and $\cS_4$ to be accessed outside the future of $\cS_1$ and yet receive a signal from $\cS_1$ through the irreducible affects relation $\cS_1$ affects $\cS_2$ given $\{\mathrm{do}(\cS_3),\cS_4\}$.\footnote{Without \textbf{compat2}, \textbf{compat1} is necessary for ``no signalling outside the accessible region''. See Section~\ref{ssec: missing_assump} for further discussion.}

Further if a set of affects relations satisfy our definition with respect to some space-time embedding, this is sufficient to ensure that no agents who can access the associated ORVs can signal outside the future using those affects relations.\footnote{This applies given the setup assumptions of the framework, such as that interventions are performed independently on each node $X$ and correspond to an exogenous variable $I_X$ etc.} 
This is because the conditional HO affects relation $\cS_1$ affects $\cS_2$ given $\{\mathrm{do}(\cS_3),\cS_4\}$ (between arbitrary pairwise disjoint sets of ORVs) captures the most general way in which agents can signal to each other in our framework: an agent Alice may intervene on a set $\cS_1$ of observed nodes, and an agent Bob with access to another set of observed RVs $\cS_2$, can try to detect the effect of Alice's intervention and Bob may additionally have access to some combination of observational ($\cS_4$) and interventional data (do$(\cS_3)$) relating to other sets of the observed nodes. Therefore, demanding that $\overline{\mathcal{F}}(\cS_2)\bigcap\overline{\mathcal{F}}(\cS_3)\bigcap\overline{\mathcal{F}}(\cS_4)\subseteq \overline{\mathcal{F}}(\cS_1)$ holds for any space-time embedding of the RVs will be sufficient to ensure that this affects relations cannot be used to signal outside the space-time's future. However, this turns out to be too strong a sufficiency condition, and imposing this only for irreducible affects relations (as the definition does) is already sufficient. To see this,
suppose that $\cS_1$ affects $\cS_2$ given $\{\mathrm{do}(\cS_3),\cS_4\}$ is reducible. Then there exists a subset $s_1\subset \cS_1$ such that $s_1$ affects $\cS_2$ given $\{\mathrm{do}(\cS_3),\cS_4\}$ (cf.\ Lemma~\ref{lemma:reduce}). Without loss of generality, take this to be irreducible (if not, simply find a subset of $s_1$ that satisfies the same affects relation and repeat this argument), then requiring $\overline{\mathcal{F}}(\cS_2)\bigcap\overline{\mathcal{F}}(\cS_3)\bigcap\overline{\mathcal{F}}(\cS_4)\subseteq \overline{\mathcal{F}}(s_1)$ is sufficient to ensure that the reduced affects relation $s_1$ affects $\cS_2$ given $\{\mathrm{do}(\cS_3),\cS_4\}$ does not signal outside the future. By the reducibility of the original relation $\cS_1$ affects $\cS_2$ given $\{\mathrm{do}(\cS_3),\cS_4\}$, we have for $\tilde{s}_1:=\cS_1\backslash\{s_1\}$, $\tilde{s}_1$ does not affect $\cS_2$ given $\{\mathrm{do}(\cS_3 s_1),\cS_4\}$, which means the original affects relation does not require $\overline{\mathcal{F}}(\cS_2)\bigcap\overline{\mathcal{F}}(\cS_3)\bigcap\overline{\mathcal{F}}(\cS_4)\bigcap \overline{\mathcal{F}}(s_1)=\overline{\mathcal{F}}(\cS_2)\bigcap\overline{\mathcal{F}}(\cS_3)\bigcap\overline{\mathcal{F}}(\cS_4)$ to be in contained in the future of $\tilde{s}_1$, once we have imposed $\overline{\mathcal{F}}(\cS_2)\bigcap\overline{\mathcal{F}}(\cS_3)\bigcap\overline{\mathcal{F}}(\cS_4)\subseteq \overline{\mathcal{F}}(s_1)$ for the corresponding reduced relation.

While these arguments justify the completeness of our definition, they do not rule out the possibility of another definition that captures the same intuition. This would also depend on how ``no signalling outside the future'' is interpreted, and this can be done in several inequivalent ways (e.g., by taking the accessible regions to be a subset of the future in \textbf{compat2}), we have proposed one possible, natural way to formalise this. We discuss similar but distinct compatibility conditions in Section~\ref{sec: necsuff}.

\begin{remark}
\label{remark: HOaff_complete}
Given a set $\mathscr{A}$ of arbitrary conditional  
affects relations (including zeroth and HO relations), one can use the first part of Lemma~\ref{lemma: HOaff_complete} to convert this to a new set $\tilde{\mathscr{A}}$ containing only unconditional affects relations such that compatibility of $\mathscr{A}$ with an embedding $\mathscr{E}$ in a space-time $\mathcal{T}$ implies the compatibility of $\tilde{\mathscr{A}}$ with the same embedding. For this, form $\tilde{\mathscr{A}}$ from $\mathscr{A}$ by including every unconditional affects relation from $\mathscr{A}$, and for every conditional affects relation $S_1$ affects $S_2$ given $\{\mathrm{do}(S_3),S_4\}$ in $\mathscr{A}$, add the corresponding unconditional affects relation $S_1$ affects $\{S_2,S_4\}$ given do$(S_3)$ in $\tilde{\mathscr{A}}$, if the latter was not already included in $\mathscr{A}$ (note that the former implies the latter by part 1 of Lemma~\ref{lemma: HOaff_complete}). Now, every irreducible conditional relation $S_1$ affects $S_2$ given $\{\mathrm{do}(S_3),S_4\}$ in $\mathscr{A}$ imposes the condition $\overline{\mathcal{F}}(\cS_2)\bigcap\overline{\mathcal{F}}(\cS_3)\bigcap\overline{\mathcal{F}}(\cS_4)\subseteq\overline{\mathcal{F}}(\cS_1)$ on any compatible space-time embedding $\mathscr{E}$. By part 2 of Lemma~\ref{lemma: HOaff_complete}, irreducibility of $S_1$ affects $S_2$ given $\{\mathrm{do}(S_3),S_4\}$ in $\mathscr{A}$ implies irreducibility of $S_1$ affects $\{S_2,S_4\}$ given do$(S_3)$ in $\tilde{\mathscr{A}}$, and the latter imposes the same condition on the embedding, by Definition~\ref{definition: compatposet}. Every unconditional relation is present in both sets and hence imply the same conditions on the embedding. 

In summary, every affects relation of the form $S_1$ affects $S_2$ given $\{\mathrm{do}(S_3),S_4\}$ present in $\mathscr{A}$ can be replaced by $S_1$ affects $\{S_2,S_4\}$ given do$(S_3)$ for the purpose of applying Definition~\ref{definition: compatposet}.

\end{remark}

\begin{remark}
A complete set of affects relations for a causal model over a set $S$ of RVs is one where for any subsets $S_1,S_2,S_3,S_4$ of $S$ we know whether or not $S_1$ affects $S_2$ given $\{\mathrm{do}(S_3),S_4\}$. It is not always possible to deduce a complete set of affects relations from a causal model (as defined in Definition~\ref{def: causalmodel}), and in general a complete set may not be available. Use of a partial set of affects relations can be sufficient to deduce incompatibility with an embedding, and, given a causal model, a partial set can be deduced.  Note that we require causal models to define affects relations in the first place.
\end{remark}

\begin{definition}[Compatibility of a causal model with an embedding in a partial order]
We say that a causal model over a set of RVs $S$ is compatible with an embedding in a partial order if the set of affects relations $\mathscr{A}$ implied by the causal model are compatible with the embedding (cf.\ Definition~\ref{definition: compatposet}).
\end{definition}

\begin{remark}
  If $X\xdashrightarrow{}Y$, there is no affects relation between $X$ and $Y$ and our compatibility condition does not require that for the corresponding ORV $\cY$, $\cY\in\mathcal{R}_X$.  Although demanding this would be natural in light of common notions of causation, one of the motivations behind this line of research is to investigate what happens without this because the existence of such causal influences may not be operationally detectable. In other words, our compatibility condition does not imply that cause precedes effect with respect to the space-time order relation, but it does imply that signalling is not possible outside the future of the space-time structure. Interestingly, this does not rule out the possibility of causal models with causal loops than can be compatibly embedded in the space-time, as we show in an associated letter \cite{VilasiniColbeckPRL} (reviewing this argument in Section~\ref{sec: loops}).

\end{remark}

\subsection{Necessary and sufficient conditions for compatibility}
\label{sec: necsuff}

For compatibility of a set of affects relations with an embedding $\mathscr{E}$ in space-time, Definition~\ref{definition: compatposet} states that the conditions \textbf{compat1} and \textbf{compat2} must be satisfied. Consider now a similar condition which we call \textbf{compat1$'$($\cS, \mathscr{A}$)}, where we use the arguments in the brackets specify the set of ORVs and affects relations that the condition is applied to (since we will later apply it to a different set). With this convention, \textbf{compat1}$:=$\textbf{compat1($\cS, \mathscr{A}$)}.

\bigskip\noindent
\textbf{compat1$'$($\cS, \mathscr{A}$): }  Let $\cS_1,\cS_2\subseteq\cS$ be disjoint proper subsets of ORVs, and $\cS_3,\cS_4$ be two other subsets (possible empty) disjoint from themselves and $\cS_1$ and $\cS_2$. If $\cS_1$ affects $\cS_2$ given $\{\mathrm{do}(\cS_3),\cS_4\}$ is in $\mathscr{A}$ and is irreducible with respect to the affects relations in $\mathscr{A}$, then $\bigcap\limits_{s_{234}\in \mathcal{S}_2 \cS_3 \cS_4}\overline{\mathcal{F}}(s_{234}) \subseteq \bigcap\limits_{s_1\in \mathcal{S}_1}\overline{\mathcal{F}}(s_1)$ with respect to $\mathscr{E}$.

\bigskip
Note that \textbf{compat1$'$($\cS, \mathscr{A}$)} imposes no condition on the accessible regions but only on the space-time locations of the ORVs (which allow us to fully specify their inclusive futures), while \textbf{compat1} restricts the accessible regions. However, once \textbf{compat2} is imposed, \textbf{compat1} and \textbf{compat1$'$($\cS, \mathscr{A}$)} are essentially equivalent i.e., an equivalent definition of compatibility would be to use \textbf{compat1$'$($\cS, \mathscr{A}$)} and \textbf{compat2} instead of \textbf{compat1} and \textbf{compat2} in Definition~\ref{definition: compatposet}. We use \textbf{compat1} instead of \textbf{compat1$'$($\cS, \mathscr{A}$)} in the original definition to make it clear that this condition is related to the operational concept of ``accessibility'' of ORVs, which is captured by the accessible regions. In general, the accessible region of an ORV need not be fully specified by its space-time location or even be related to its future, but this is the case once \textbf{compat2} is assumed. The following theorem (proven in Appendix~\ref{appendix: proofs3}) and corollary establish certain useful connections between these concepts, and follow from Definition~\ref{definition: compatposet}.

\begin{restatable}{theorem}{CompatST}[Necessary and sufficient conditions for compatibility with an embedding in $\mathcal{T}$]
\label{theorem:poset}
    Let $\cS$ be set of ORVs embedded in a partial order $\mathcal{T}$ with respect to an embedding $\mathscr{E}$ and let $\mathscr{A}$ be a given set of affects relations on $\cS$. Further, consider forming an augmented set of ORVs $\mathcal{S}'$ by taking $\mathcal{S}$ and for each variable $\cX\in\mathcal{S}$, embedding a copy of $\cX$ at each point in its accessible region $\cR_{\cX}$ and form $\mathscr{A}'$ by adding to $\mathscr{A}$ that each variable affects each of its copies for all copies. Then the following statements hold.
    
    \begin{enumerate}
        \item If the set of affects relations $\mathscr{A}$ is compatible with the embedding $\mathscr{E}$ in $\mathcal{T}$, then $\mathbf{compat1}'(\cS', \mathscr{A}')$ holds i.e., $\mathbf{compat1}'(\cS', \mathscr{A}')$ is necessary for compatibility of $\mathscr{A}$ with the space-time embedding $\mathscr{E}$.

\item  $\mathbf{compat1'(\cS', \mathscr{A}')}$ implies that $\cR_{\cX}\subseteq \overline{\cF}(\cX)$ $\forall \cX\in \cS$, but not that the two sets $\cR_{\cX}$ and $\overline{\cF}(\cX)$ are necessarily equal $\forall \cX\in \cS$, i.e., $\mathbf{compat1'(\cS', \mathscr{A}')}$ is not sufficient for compatibility of $\mathscr{A}$ with the space-time embedding $\mathscr{E}$. 
    \end{enumerate}

\end{restatable}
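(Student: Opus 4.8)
The plan is to treat the two parts separately, exploiting the fact that the copies introduced in $\cS'$ let us re-express the accessibility-based condition \textbf{compat1} purely in terms of inclusive futures. For Part~1 (necessity) I would assume that $\mathscr{A}$ is compatible with $\mathscr{E}$, i.e.\ that both \textbf{compat1} and \textbf{compat2} hold on $(\cS,\mathscr{A})$. The first observation is that \textbf{compat2} forces $\mathcal{R}_{\cX}=\overline{\mathcal{F}}(\cX)$ for every original ORV, so that for any subset $V$ of original ORVs one has $\mathcal{R}_{V}=\bigcap_{s\in V}\mathcal{R}_s=\bigcap_{s\in V}\overline{\mathcal{F}}(s)$ by Definition~\ref{definition: subsets}. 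Substituting this identity into \textbf{compat1} shows that, restricted to affects relations among the original ORVs, \textbf{compat1} is literally the future-based inclusion demanded by \textbf{compat1}$'$.

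Next I would check the two kinds of relations appearing in $\mathscr{A}'$. The relations inherited from $\mathscr{A}$ involve only original ORVs, and since $\mathscr{A}'\setminus\mathscr{A}$ contains only the copy relations ``$\cX$ affects $\cX'$'' (which involve copies), any original relation that is irreducible with respect to $\mathscr{A}'$ is also irreducible with respect to $\mathscr{A}$; the previous paragraph then yields the required future inclusion for it. The copy relations themselves are singletons on both sides, hence irreducible, and each copy $\cX'$ sits at some point $\alpha\in\mathcal{R}_{\cX}=\overline{\mathcal{F}}(\cX)$, so $\alpha\succeq O(\cX)$ and $\overline{\mathcal{F}}(\cX')=\overline{\mathcal{F}}(\alpha)\subseteq\overline{\mathcal{F}}(\cX)$, which is exactly what \textbf{compat1}$'$ demands of these relations. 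Together these establish \textbf{compat1}$'(\cS',\mathscr{A}')$.

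For Part~2, the first statement is the pay-off of the copy construction. Fix $\cX\in\cS$ and an arbitrary point $\alpha\in\mathcal{R}_{\cX}$; by construction $\cS'$ contains a copy $\cX'$ of $\cX$ located at $\alpha$, and $\mathscr{A}'$ contains the irreducible singleton relation $\cX$ affects $\cX'$. Applying \textbf{compat1}$'(\cS',\mathscr{A}')$ to this relation gives $\overline{\mathcal{F}}(\alpha)=\overline{\mathcal{F}}(\cX')\subseteq\overline{\mathcal{F}}(\cX)$, and since $\alpha\in\overline{\mathcal{F}}(\alpha)$ this yields $\alpha\succeq O(\cX)$, i.e.\ $\alpha\in\overline{\mathcal{F}}(\cX)$. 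As $\alpha$ ranged over all of $\mathcal{R}_{\cX}$, we conclude $\mathcal{R}_{\cX}\subseteq\overline{\mathcal{F}}(\cX)$ for every $\cX\in\cS$.

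To finish Part~2 I would exhibit that the reverse inclusion can fail, so that \textbf{compat2}, and hence full compatibility, need not follow. A minimal counter-example takes a single RV $X$ with no non-trivial affects relations and an embedding in which $\mathcal{R}_X$ is chosen to be a strict subset of $\overline{\mathcal{F}}(X)$. Then every copy relation still satisfies the future inclusion (each copy lies at a point of $\mathcal{R}_X\subseteq\overline{\mathcal{F}}(X)$), so \textbf{compat1}$'(\cS',\mathscr{A}')$ holds, yet $\mathcal{R}_X\subsetneq\overline{\mathcal{F}}(X)$ violates \textbf{compat2} and so $\mathscr{A}$ is not compatible with the embedding. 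The main obstacle I anticipate is the bookkeeping around irreducibility: one must verify carefully that passing from $\mathscr{A}$ to $\mathscr{A}'$ neither creates nor destroys the irreducibility of the original relations (beyond the newly added copy relations), and that the singleton copy relations genuinely qualify as irreducible proper-subset relations to which \textbf{compat1}$'$ applies. Once this is pinned down, the rest is a direct translation between accessible regions and inclusive futures, powered by \textbf{compat2} in one direction and by the copies in the other.
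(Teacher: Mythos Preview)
Your proposal is correct and follows essentially the same approach as the paper's proof: use \textbf{compat2} to identify accessible regions with inclusive futures so that \textbf{compat1} becomes \textbf{compat1}$'$ on the original relations, then handle the added copy relations by noting each copy sits in $\cR_{\cX}=\overline{\cF}(\cX)$; for Part~2, apply \textbf{compat1}$'$ to the singleton copy relations to force every point of $\cR_{\cX}$ into $\overline{\cF}(\cX)$. Your treatment is in fact slightly more thorough than the paper's---you flag and address the irreducibility bookkeeping when passing from $\mathscr{A}$ to $\mathscr{A}'$, and you supply an explicit counter-example for the non-sufficiency claim, both of which the paper leaves implicit.
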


The augmented sets $\mathcal{S}'$ and $\mathscr{A}'$ in the above theorem capture the idea of broadcasting classical RVs to each point in their accessible region. Imposing $\mathbf{compat1'}$ for the an embedding $\mathscr{E}$ of these sets in space-time then ensures that this broadcasting (i.e., finding copies of the RVs) is possible only within the future, but not necessarily to all locations in the future. Note that being able to find copies of an ORV $\cX$ only within its future does not by itself imply that any ORV $\cY$ affected by $X$ must be contained in its future. We then have the following corollary of the theorem. 
\begin{corollary}
\label{corollary:poset}
Let $S$ be a set of RVs and $\mathscr{A}$ be a set of affects relations over them. Then there exists a non-trivial embedding $\mathscr{E}$ of $S$ in a partial order $\cT$ compatible with $\mathscr{A}$ if and only if there exists a non-trivial embedding $\mathscr{E}'$ of the same affects relations that satisfies $\mathbf{compat1}'(\cS', \mathscr{A}')$.
\end{corollary}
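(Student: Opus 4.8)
The plan is to derive the corollary directly from Theorem~\ref{theorem:poset}, with essentially all the content residing in the ``if'' direction. The one structural fact I would use repeatedly is that, once \textbf{compat2} is imposed, \textbf{compat1} of Definition~\ref{definition: compatposet} becomes equivalent to \textbf{compat1}$'(\cS,\mathscr{A})$: substituting $\cR_{\cX}=\overline{\cF}(\cX)$ together with the intersection rule $\cR_{S}=\bigcap_{S_i\in S}\cR_{S_i}$ (Definition~\ref{definition: subsets}) turns the accessible-region containment $\cR_{\cS_2}\cap\cR_{\cS_3}\cap\cR_{\cS_4}\subseteq\cR_{\cS_1}$ into the future containment $\bigcap_{s\in\cS_2\cS_3\cS_4}\overline{\cF}(s)\subseteq\bigcap_{s_1\in\cS_1}\overline{\cF}(s_1)$, which is exactly \textbf{compat1}$'$ for the same irreducible relation. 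Since affects relations and their (ir)reducibility are intrinsic to the causal model and independent of the embedding, this equivalence holds for every embedding satisfying \textbf{compat2}.

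For the ``only if'' direction I would simply invoke Theorem~\ref{theorem:poset}(1): given a non-trivial embedding $\mathscr{E}$ compatible with $\mathscr{A}$, the augmented sets $(\cS',\mathscr{A}')$ built from $\mathscr{E}$ satisfy \textbf{compat1}$'(\cS',\mathscr{A}')$. Taking $\mathscr{E}'=\mathscr{E}$, which is non-trivial by hypothesis, furnishes the required embedding, so nothing further is needed here.

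The substantive direction is ``if''. Starting from a non-trivial $\mathscr{E}'$ satisfying \textbf{compat1}$'(\cS',\mathscr{A}')$, I would construct $\mathscr{E}$ by keeping every space-time location $O(\cX)$ unchanged but enlarging each accessible region to the full inclusive future, $\cR_{\cX}:=\overline{\cF}(\cX)$. The reason such a fix-up is necessary is precisely Theorem~\ref{theorem:poset}(2): \textbf{compat1}$'$ guarantees only $\cR_{\cX}\subseteq\overline{\cF}(\cX)$, not equality, so $\mathscr{E}'$ itself may fail \textbf{compat2}. By construction $\mathscr{E}$ satisfies \textbf{compat2}, and since inclusive futures depend only on the unchanged locations, every future-containment condition holds under $\mathscr{E}$ exactly when it holds under $\mathscr{E}'$. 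To verify \textbf{compat1} for $\mathscr{E}$ it then suffices, by the equivalence of the first paragraph, to verify \textbf{compat1}$'(\cS,\mathscr{A})$; but the relations of $\mathscr{A}$ are exactly the relations of $\mathscr{A}'$ involving only the original ORVs, and \textbf{compat1}$'(\cS',\mathscr{A}')$ already imposes the required containment on the irreducible ones among them. Non-triviality transfers from $\mathscr{E}'$ to $\mathscr{E}$ because the two share the same locations, so no pair with $X$ affects $Y$ is co-located.

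The main point requiring care — and the natural place for the argument to go wrong — is the claim that a relation of $\mathscr{A}$ is irreducible with respect to $\mathscr{A}'$ if and only if it is irreducible with respect to $\mathscr{A}$, so that the constraints extracted from \textbf{compat1}$'(\cS',\mathscr{A}')$ coincide exactly with those of \textbf{compat1}$'(\cS,\mathscr{A})$. This should follow because reducibility of $\cS_1$ affects $\cS_2$ given $\{\mathrm{do}(\cS_3),\cS_4\}$ (Definition~\ref{definition: ReduceAffects}) is tested only against affects relations among proper subsets of $\cS_1$ together with the fixed sets $\cS_2,\cS_3,\cS_4$, all of which lie among the original ORVs, whereas the only relations added in passing from $\mathscr{A}$ to $\mathscr{A}'$ are the copy relations ``each variable affects its copies'', which involve the newly introduced copy nodes. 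Hence the augmentation can neither create nor destroy a reduction of an original relation, and the irreducible members of $\mathscr{A}$ and of $\mathscr{A}'$ agree on the original ORVs, closing the argument.
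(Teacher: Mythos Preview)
Your proposal is correct and follows essentially the same approach as the paper: invoke Theorem~\ref{theorem:poset}(1) for the ``only if'' direction, and for the ``if'' direction keep the locations of $\mathscr{E}'$ while resetting accessible regions to inclusive futures so that \textbf{compat2} holds and \textbf{compat1} reduces to \textbf{compat1}$'$. Your additional care regarding the invariance of irreducibility when passing from $\mathscr{A}$ to $\mathscr{A}'$ is a point the paper leaves implicit, but your justification of it is sound.
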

That the existence of a non-trivial embedding $\mathscr{E}$ that satisfies \textbf{compat} implies the existence of one that satisfies \textbf{compat1$'$($\cS', \mathscr{A}'$)} follows directly from the necessary part of Theorem~\ref{theorem:poset}. The other direction follows because any non-trivial embedding $\mathscr{E}'$ that satisfies \textbf{compat1$'$($\cS', \mathscr{A}'$)} can be turned into a non-trivial embedding $\mathscr{E}$ that satisfies \textbf{compat} simply by taking $\mathscr{E}'$ and setting the accessible regions of ORVs to satisfy \textbf{compat2}. The important point to note is that the two embeddings $\mathscr{E}$ and $\mathscr{E}'$ need not be the same.

\section{Causal loops and their space-time embeddings}
\label{sec: loops}
 We have characterised a general class of causal models, defined when a given causal model can be said to be compatible with a space-time embedding and also compared related yet distinct conditions on the space-time embeddings. It is interesting to consider whether there are certain structural properties of the causal model alone that guarantee the existence of a non-trivial and compatible space-time embedding for that causal model. Clearly, the acyclicity of the causal structure is such a property, while this is certainly sufficient, a natural question is whether is it also necessary to guarantee the existence of such a space-time embedding. This question motivates us to define a broad set of possible theories $\mathbb{T}$  that are consistent with the principle of ``no signalling outside the future''. The set $\mathbb{T}$ consists of theories with the property that for every causal models that can arise in the theory, there exists a non-trivial and compatible embedding in a space-time (cf.\ Definition~\ref{definition: compatposet}).

This class of theories is quite general, it certainly includes quantum and standard GPTs and any theory that can be characterised using acyclic causal structure. In an associated Letter~\cite{VilasiniColbeckPRL}, we apply the framework developed here to construct an explicit operationally detectable causal loop that can be embedded in (1+1)-dimensional Minkowski space-time without superluminal signalling, which demonstrates that the set $\mathbb{T}$ can also include theories admitting causal loops. In this section, we characterise several different classes of causal loops that can arise in our framework, and we show that some of these classes can be ruled out by requiring that the causal model has a compatible space-time embedding while the results of the associated Letter~\cite{VilasiniColbeckPRL} show that some other classes cannot be ruled out in this manner. We provide further examples to argue that fully characterising the set of theories $\mathbb{T}$ may be a difficult task. By full characterisation, we mean finding a necessary and sufficient set of conditions on the set of possible affects relations (and/or correlations) of the causal model that guarantees the existence of a non-trivial compatible space-time embedding. Let us now take a closer look at the types of causal loops that can arise in our framework.

\subsection{Different classes of causal loops }
We have seen that due to fine-tuning, causation does not imply the existence of affects relations. This motivated the classification of causal arrows (Definition~\ref{definition: solidasharrows}) as solid or dashed based on the existence of suitable affects relations. Similarly, we can distinguish between different types of causal loops in our framework depending on whether they can be operationally detected through their affects relations. A causal loop simply corresponds to a directed cycle in a causal structure $\mathcal{G}$ involving at least two observed nodes i.e., two observed nodes $X$ and $Y$ in $\cG$ such that there exist directed paths from $X$ to $Y$ and from $Y$ to $X$. Often however, we may not know the full causal structure but only a set of affects relations $\mathscr{A}$ over the observed nodes of an underlying causal structure $\cG$. The set $\mathscr{A}$ might allow us to infer some, but not necessarily all the causal relationships in $\cG$. We then have the following two broad categories of causal loops, the former (affects causal loops) are operationally detectable via the their affects relations and the latter (hidden causal loops) are not operationally detectable through their affects relations or correlations.

\begin{definition}[Affects causal loops (ACL)]
\label{def: ACL}
Any set of affects relations $\mathscr{A}$ that can only arise in a causal model associated with a cyclic causal structure $\mathcal{G}$ are said to form/contain an affects causal loop. In other words, affects causal loops certify the cyclicity of the underlying causal structure through the observed affects relations.
\end{definition}

\begin{definition}[Hidden causal loop (HCL)]
\label{def: HCL}
Given a causal model whose causal structure contains a directed cycle, and a complete set of affects relations, we say that this causal model contains a \emph{hidden causal loop} if the same set of affects relations and the same correlations are also realisable in an acyclic causal structure.
\end{definition}

A HCL is by definition a causal loop since it corresponds to a directed cycle in the causal structure. These act as causal loops at the level of the causal mechanisms but cannot be detected at the operational level of affects relations (or correlations).
It can be the case that causal structures contain directed cycles without being an ACL, meaning that the affects relations of the associated causal model can also be obtained in an acyclic causal structure. This does not necessarily imply that both the affects relations and correlations can be generated in an acyclic causal structure, so ACLs and HCLs need not be complements of one another. Below we provide an example of a HCL.

\begin{example}[An operationally undetectable causal loop]
\label{example: funcloop}
Consider the causal structure of Figure~\ref{fig:funcloop} over the binary RVs $X$, $Y$ and $\Lambda$, where $X$ and $Y$ are observed nodes which are causes of each other (forming a causal loop) and $\Lambda$ is an unobserved common cause of the two. Suppose that the RVs are related as follows: $\Lambda$ is uniformly distributed and $X=\Lambda\oplus Y$ and $Y=\Lambda\oplus X$. Note that the given causal structure already implies a complete set of affects relations i.e., for each pair of the observed nodes, we know whether or not one affects the other. In this case, this is implied by the dashed arrows and we have that $X$ and $Y$ do not affect each other. Since $\Lambda$ is uniform, $P_{\cG_{\mathrm{do}(X)}}(Y|X)$ and $P_{\cG_{\mathrm{do}(Y)}}(X|Y)$ are both uniform, and in order to have the required (non-)affects relations, it must be that case that $P_{\cG}(X)$ and $P_{\cG}(Y)$ are both uniform. Along with the given functional dependences, this implies that $X$ and $Y$ are uncorrelated with each other. In other words, there are no affects relations or correlations between the set of observed nodes of this causal structure even though there is a causal loop. A causal structure over $X$ and $Y$ with no edges at all would also explain these observations. Therefore the directed cycle between $X$ and $Y$ in Figure~\ref{fig:funcloop} corresponds to a hidden causal loop. It is also worth noting that knowing the value of the exogenous variable $\Lambda$ is not enough to determine the value of $X$ or $Y$ with the given functional relations. Nevertheless, in Appendix~\ref{appendix: doMechanisms} we propose a method for uniquely determining the observed distribution in such examples, when the causal mechanisms are given.
\end{example}

 \begin{figure}[t]
         \centering
\includegraphics[]{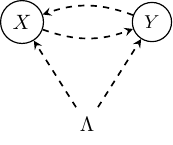}
	\caption{An operationally undetectable causal loop (Example~\ref{example: funcloop}).}
	\label{fig:funcloop}
     \end{figure}

We now focus on the more interesting class of causal loops, affects causal loops. Definition~\ref{def: ACL} only tells us that these are causal loops whose existence is operationally certified by the observable affects relations. It is natural to seek necessary and sufficient conditions on the set of affects relations such that they form an ACL. Here (and in Appendix~\ref{appendix: MoreLoops}), we propose several sufficient conditions which can be considered as definitions of different types of affects causal loops. We discuss 6 types here and 4 more in the appendix and provide examples to illustrate that none of these are necessary conditions i.e., there can be further types of ACLs not covered by these ten types. After defining the 6 types here, we will prove that these are indeed ACLs (Theorem~\ref{theorem: cycles}).

A first sufficient condition for the existence of an ACL is that there are two RVs $X$ and $Y$ that affect each other. Since affects implies cause, this tells us that $X$ and $Y$ must be causes of each other and hence that these affects relations are only realisable in a cyclic causal structure i.e., they lead to an ACL. A second condition is the presence of a chain of single RV affects relations from $X$ to $Y$ and from $Y$ to $X$. The latter can in general be a distinct condition from the former due to the non-transitivity of the affects relation (see Example~\ref{example:affects_transitive}), but can be shown to be an ACL. This gives us the following two types of causal loops. 
\begin{definition}[Affects causal loops, Type 1 (ACL1)]
\label{def: ACL1}
A set of affects relations $\mathscr{A}$ is said to contain a Type 1 affects causal loops if there exist two RVs $X$ and $Y$ such that $\{X$ affects $Y$, $Y$ affects $X\} \subseteq \mathscr{A}$.
\end{definition}

\begin{definition}[Affects causal loops, Type 2 (ACL2)]
\label{def: ACL2}
A set of affects relations $\mathscr{A}$ is said to contain a Type 2 affects causal loop if there exist RVs $X$, $Z_1$, $Z_2$, $\ldots$, $Z_k$ and $Y$ such that $X$ affects $Z_1$, $Z_1$ affects $Z_2$, $\ldots$, $Z_k$ affects $Y$ and $Y$ affects $X$ are all in $\mathscr{A}$.
\end{definition}

More generally, one can also consider affects relations involving sets of RVs. A first observation is that $S_1$ affects $S_2$ and $S_2$ affects $S_1$ for two sets of RVs does not imply the existence of a directed cycle in the causal structure.  For example, consider a causal structure $\cG$ with 4 nodes $A$, $B$, $C$ and $D$, all of which are observed such that the only edges in $\cG$ are the solid arrows $A\longrightarrow B$ and $C\longrightarrow D$, with $A$ affects $B$ and $C$ affects $D$. Then, if $S_1=AD$ and $S_2=BC$ we have $S_1$ affects $S_2$ and $S_2$ affects $S_1$ even though $\cG$ is clearly acyclic. However, if we take these to be irreducible affects relations, this will no longer be the case and we can certify the cyclicity of the causal structure from the affects relations, as we later show. This motivates more general set of sufficient conditions for the existence of affects causal loops. Two immediate possibilities are the following.

\begin{definition}[Affects causal loops, Type 3 (ACL3)]
\label{def: ACL3}
A set of affects relations $\mathscr{A}$ is said to contain a Type 3 affects causal loop if there exist two disjoint sets $S_1$ and $S_2$ of RVs such that $\{S_1$ affects $e_2$, $S_2$ affects $e_1\} \subseteq \mathscr{A}$ where $e_1\in S_1$, $e_2\in S_2$, and both affects relations are irreducible.
\end{definition}

\begin{definition}[Affects causal loops, Type 4 (ACL4)]
\label{def: ACL4}
A set of affects relations $\mathscr{A}$ is said to contain a Type 4 affects causal loop if there exist sets of RVs $S_1$, $S_2$, $\ldots$, $S_n$ where each pair $S_i$ and $S_{i+1\text{mod} n}$ is disjoint, such that $\{S_1$ affects $S_2$, $S_2$ affects $S_3$, $\ldots$, $S_{n-1}$ affects $S_n$, $S_n$ affects $S_1\} \subseteq \mathscr{A}$, and all these affects relations are irreducible.
\end{definition}

ACL1, ACL2, ACL3 and ACL4 imply cyclicity of the causal structure as shown in Theorem~\ref{theorem: cycles}. However, these are not the most general conditions on the affects relations with this property. There can be further conditions that are not equivalent to ACL1, ACL2, ACL3 or ACL4 which also imply cyclicity. The following is such a condition.

\begin{definition}[Affects causal loops, Type 5 (ACL5)]
\label{def: ACL5}
A set of affects relations $\mathscr{A}$ is said to contain a Type 5 affects causal loop if there exist sets of RVs $S_i\subseteq \hat{S}_i$ for $i=1,\ldots,n$ such that $\hat{S}_1$ affects $S_2$, $\hat{S}_2$ affects $S_3$, $\ldots$, $\hat{S}_{n-1}$ affects $S_n$ and $\hat{S}_n$ affects $S_1$ are all in $\mathscr{A}$, where all the affects relations are irreducible and every pair of sets connected by an affects relation is disjoint. Such a chain of affects relations is called a complete affects chain, in this case the affects chain is from the set $S_1$ to itself.
\end{definition}
Rather than considering a chain of irreducible affects relations from an RV or a set of RVs back into itself, one can consider multiple chains which taken together imply cyclicity and this would give yet another type of causal loop in our framework. For example we may have an irreducible affects relation $A$ affects $BC$. Along with another irreducible affects relation $BCD$ affects $A$, this would form a Type 5 affects causal loop. By Corollary~\ref{corollary:HOaffectsCause2}, these affects relations would tell us that $A$ is either a cause of $B$ or $C$ while $B$, $C$ and $D$ are all causes of $A$. Irrespective of whether $A$ is a cause of $B$ or of $C$, this implies the existence of a directed cycle in the causal structure. However, we could instead have started with the irreducible affects relations $A$ affects $BC$, $B$ affects $A$ and $C$ affects $A$. Since in general, $B$ affects $A$, and $C$ affects $A$ need not imply $BC$ affects $A$ (see Example~\ref{example: HOaffects3}), these affects relations may not constitute a Type 5 affects loop but they nevertheless imply cyclicity (using Corollary~\ref{corollary:HOaffectsCause2}). Note that $A$ affects $BC$ and $B$ affects $A$ alone (even if irreducible) do not necessarily imply cyclicity since the former tells us that $A$ is either a cause of $B$ or of $C$ and the latter that $B$ is a cause of $A$. That is, these affects relations can in principle be obtained in an acyclic causal model where $A$ is a cause of $C$ and $B$ is a cause of $A$. 
Generalising this idea, we have another type of affects loop, ACL6.

\begin{definition}[Affects causal loops, Type 6 (ACL6)]
\label{def: ACL6}
A set of affects relations $\mathscr{A}$ is said to contain a Type 6 affects causal loop if the following conditions are satisfied
\begin{enumerate}
    \item There exist disjoint sets of RVs $S_1$ and $S_2$ such that $S_1$ affects $S_2$ belongs to $\mathscr{A}$ and is irreducible.
    \item For each element $e_2\in S_2$, there exists a \emph{complete chain of irreducible affects relations} that connects it back to $S_1$, i.e., for each $e_2$, there exists sets of RVs $S_i\subseteq \hat{S}_i$ for $i=1,\ldots,n$ and $s_1\subseteq S_1$ such that $\{\hat{S}_2$ affects $S_3$, $\hat{S}_3$ affects $S_4$, $\ldots$, $\hat{S}_{n-1}$ affects $S_n$, $\hat{S}_n$ affects $s_1\} \subseteq \mathscr{A}$, where all the affects relations are irreducible and every pair of sets connected by an affects relation is disjoint.
\end{enumerate}
\end{definition}

There are further types of affects causal loops, all of which imply cyclicity of the causal structure. For example, we can also consider affects causal loops involving chains of conditional higher-order affects relations (Definition~\ref{definition:HOaffects}) and define analogues of ACL1-6 for this case. These can in general be distinct from ACL1-6 since it is possible to have a conditional HO affects relation $X$ affects $Y$ given $\{\mathrm{do}(Z),W\}$ without the unconditional zeroth-order affects relation $X$ affects $Y$. Even using unconditional zeroth-order affects relations alone, further distinct classes of affects causal loops are possible and four such classes (ACL7 to ACL10) are described in Appendix~\ref{appendix: MoreLoops}. The intuition behind them is as follows. The kind of chains of irreducible affects relations considered in the above definitions are such that for each subsequent pair of affects relations $\hat{S}_i$ affects $S_{i+1}$, the set $S_{i+1}$ is contained in $\hat{S}_{i+1}$. What if this were not the case, and we only had that $S_{i+1}\bigcap \hat{S}_{i+1}\neq \emptyset$? Let us call this an ``incomplete'' affects chain. The example before the last definition, with $A$ affects $BC$ and $B$ affects $A$ illustrates that this condition alone is not enough to guarantee cyclicity and to justify calling these affects relations a causal loop. One way is to add the affects relation $C$ affects $A$, which motivates the definition of ACL6 above. Another option is to add the irreducible affects relations $C$ affects $D$ and $D$ affects $BC$ and one can again show that the set of irreducible relations $\mathscr{A}=\{A$ affects $BC$, $B$ affects $A$, $C$ affects $D$, $D$ affects $BC\}$ is cyclic. One can however verify that this $\mathscr{A}$ does not correspond to any of the affects causal loops previously defined. There are two incomplete affects chains that complete each other, but no complete chain as required by the above types of ACL. In general, one might need to combine a given incomplete chain with several other complete or incomplete chains to guarantee cyclicity of the resulting set of affects relations, and the conditions therefore continue to get more complex. Even the additional classes of affects causal loops defined in Appendix~\ref{appendix: MoreLoops} do not exhaust all the possible types of affects causal loops that might be possible in our framework (we provide an example in the appendix to illustrate this).

The following theorem (proven in Appendix~\ref{appendix: proofs3}) shows that ACL1-6 are indeed affects causal loops in the sense of Definition~\ref{def: ACL}.
\begin{restatable}{theorem}{Cycles}
\label{theorem: cycles}
Any set of affects relations $\mathscr{A}$ containing an affects causal loop of Type 1, 2, 3, 4, 5 or 6 can only arise from a causal model over a cyclic causal structure i.e., these are indeed instances of affects causal loops according to Definition~\ref{def: ACL}.
\end{restatable}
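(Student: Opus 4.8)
The plan is to treat the six cases through one common mechanism: convert each affects relation into causal (directed-path) information using results already established, assemble these into a directed walk in $\cG$, and extract a directed cycle. Two tools suffice. First, for a single-RV affects relation, Lemma~\ref{lemma:HOaffectsCause} (equivalently the paragraph following Corollary~\ref{corollary:dsep-affects}) gives that $X$ affects $Y$ implies a directed path $X\longrsquigarrow\ldots\longrsquigarrow Y$ in $\cG$, i.e.\ $X$ is a cause of $Y$. Second, for an \emph{irreducible} affects relation between sets, Corollary~\ref{corollary:HOaffectsCause2} forces \emph{every} element of the cause-set $S_1$ to be a cause of some element of the effect-set $S_2$. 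The insistence on irreducibility is exactly what rules out the acyclic counterexample ($A\longrightarrow B$, $C\longrightarrow D$ with $S_1=AD$, $S_2=BC$) discussed in the text preceding Definition~\ref{def: ACL3}.

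First I would dispatch the direct cases. For ACL1, the first tool gives directed paths $X\longrsquigarrow\ldots\longrsquigarrow Y$ and $Y\longrsquigarrow\ldots\longrsquigarrow X$; concatenating them yields a directed cycle through the distinct observed nodes $X$ and $Y$. For ACL2 the same concatenation of the single-RV directed paths $X\longrsquigarrow Z_1\longrsquigarrow\ldots\longrsquigarrow Z_k\longrsquigarrow Y\longrsquigarrow X$ closes into a cycle. For ACL3 I apply the second tool to the two irreducible relations: $S_1$ affects $e_2$ forces every element of $S_1$, in particular $e_1$, to be a cause of $e_2$, and symmetrically $e_2$ is a cause of $e_1$, so the two directed paths give a cycle through $e_1$ and $e_2$.

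The substantive cases are ACL4, ACL5 and ACL6, where the second tool only guarantees that each element has \emph{some} successor, with no canonical choice, so a single chain of relations need not close up on itself. Here I would introduce a \emph{causal successor map}. Since ACL4 is the special case $S_i=\hat S_i$ of ACL5, I treat ACL5: using irreducibility of each $\hat S_i$ affects $S_{i+1}$ together with the nesting $S_i\subseteq\hat S_i$, the second tool lets me choose, for every $e\in S_i$, an element $\sigma(e)\in S_{i+1}$ (indices mod $n$) of which $e$ is a cause. This defines a map $\sigma$ from the finite set $\bigcup_i S_i\subseteq S$ into itself sending $S_i$ to $S_{i+1}$, with $\sigma(e)\neq e$ because consecutive sets are disjoint. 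Finiteness and the pigeonhole principle then force the iterates of $\sigma$ to revisit a node, producing a closed sequence $v\longrightarrow\sigma(v)\longrightarrow\cdots\longrightarrow v$; replacing each successor step by its underlying directed path gives a closed directed walk of positive length in $\cG$, which necessarily contains a directed cycle. ACL6 is handled by the same device with a two-phase successor: condition~1 (irreducible $S_1$ affects $S_2$) sends each $e_1\in S_1$ to a cause-target $e_2\in S_2$, and condition~2 supplies, for that $e_2$, a complete irreducible chain back to $S_1$; composing gives a map $S_1\to S_1$ to which the same pigeonhole argument applies.

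The main obstacle is precisely this passage from ``each element has a successor'' to ``a cycle exists'': the affects relations do not hand us a single closed chain, and irreducibility alone does not pin down \emph{which} successor is caused. The clean way around this is the successor-map/pigeonhole argument, which needs only the finiteness of the observed node set $S$ and the element-wise causation supplied by Corollary~\ref{corollary:HOaffectsCause2}; once that is in place, the reduction to a closed directed walk and thence to a directed cycle is routine. I would also note that in every case consecutive nodes in the walk are distinct (by disjointness of the relevant sets), so the resulting cycle involves at least two observed nodes, matching the definition of a causal loop in Definition~\ref{def: ACL}.
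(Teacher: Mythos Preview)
Your proposal is correct and follows essentially the same route as the paper: both arguments rely on Corollary~\ref{corollary:HOaffectsCause2} to turn each irreducible affects relation into element-wise causation, and then use a finiteness/pigeonhole argument to close a directed walk into a cycle. The only organisational difference is that the paper absorbs ACL1--4 into ACL5 at the outset (they are all special cases), whereas you treat ACL1--3 directly; and the paper iterates the successor within $S_1$ alone rather than defining $\sigma$ on $\bigcup_i S_i$ (which avoids any ambiguity when an element lies in several non-consecutive $S_i$), but these are cosmetic variations of the same idea.
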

     
\subsection{Possibility of compatibly embedding causal loops in space-time}  

In the previous section we discussed various properties of causal loops that follow from the causal model alone and without reference to space-time. Here we consider the space-time embeddings of such loops and whether affects causal loops can be compatibly and non-trivially embedded in a space-time structure. This turns out to indeed be possible for certain types of affects causal loops. This implies that for some causal loops their existence can be operationally certified (through observed affects relations, by virtue of being affects causal loops), and they can nevertheless be non-trivially embedded in space-time without leading to signalling outside the space-time future. While our framework can be applied to arbitrary partially ordered space-times, for the sake of illustration, we consider the case of (1+1)-dimensional Minkowski space-time in this section. Before we show the existence of embeddable causal loops in this case, we make the following observation.

\begin{restatable}{lemma}{NoAffectsLoops}
\label{lemma: NoAffectsLoops}
Let $S$ be a set of RVs and $\mathscr{A}$ be a set of affects relations over them.
\begin{enumerate}
    \item The absence of affects causal loops (Definition~\ref{def: ACL}) in $\mathscr{A}$ is a sufficient condition for the existence of a non-trivial embedding of $S$ in a space-time that $\mathscr{A}$ is compatible with. 
    \item If $\mathscr{A}$ is assumed to be a set of affects relations associated with a faithful causal model, then all causal loops are Type 1 affects causal loops and the existence of a non-trivial space-time embedding of $S$ that $\mathscr{A}$ is compatible with is both necessary and sufficient to rule out all causal loops and guarantee the acyclicity of the causal model that generates $\mathscr{A}$.
\end{enumerate}
\end{restatable}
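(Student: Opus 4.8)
The plan is to treat the two parts separately, using the realisability characterisation of affects causal loops for the first part, and faithfulness together with antisymmetry of the partial order for the second.

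For Part~1, I would start from the observation that, by Definition~\ref{def: ACL}, ``$\mathscr{A}$ contains no affects causal loop'' means precisely that $\mathscr{A}$ can be generated by some \emph{acyclic} causal structure $\cG$. First I would fix such a $\cG$ and choose a topological ordering of its observed nodes, i.e.\ an injective assignment $O$ of the RVs in $S$ into a total order (say $\mathbb{R}$, a particular partial order $\cT$) with $O(X)\prec O(Y)$ whenever $X$ is a cause of $Y$ in $\cG$. This fixes the embedding $\mathscr{E}$, and I would then set the accessible region of every ORV equal to its inclusive future, so that \textbf{compat2} holds by construction. Non-triviality is immediate: if $X$ affects $Y$ then $X$ is a cause of $Y$ (the contrapositive of Corollary~\ref{corollary:dsep-affects}), so $O(X)\prec O(Y)$ and the two locations differ. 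The substantive step is verifying \textbf{compat1}.

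For \textbf{compat1}, take any affects relation $\cS_1$ affects $\cS_2$ given $\{\mathrm{do}(\cS_3),\cS_4\}$ in $\mathscr{A}$ that is irreducible. By Corollary~\ref{corollary:HOaffectsCause2} (part~2), each element $e_1\in\cS_1$ is a cause of some element of $\cS_2\cup\cS_4$; hence in the topological order $O(e_1)\prec\max\{O(e):e\in\cS_2\cup\cS_4\}\preceq\max\{O(e):e\in\cS_2\cup\cS_3\cup\cS_4\}$. Since in a total order the joint inclusive future of a finite set is the up-set of its maximal location, applying this to the maximal element of $\cS_1$ yields $\overline{\cF}(\cS_2)\cap\overline{\cF}(\cS_3)\cap\overline{\cF}(\cS_4)\subseteq\overline{\cF}(\cS_1)$, which under \textbf{compat2} is exactly $\cR_{\cS_2\cS_3\cS_4}\subseteq\cR_{\cS_1}$. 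I expect the main obstacle here to be bookkeeping around irreducibility: Corollary~\ref{corollary:HOaffectsCause2} supplies the cause-implications only for relations irreducible in the causal model $\cG$, whereas \textbf{compat1} quantifies over relations irreducible with respect to $\mathscr{A}$. I would reconcile these by taking $\mathscr{A}$ to be the set of affects relations generated by $\cG$, so that the two notions coincide; for any relation that is nonetheless reducible in $\cG$, I would appeal to Lemma~\ref{lemma:reduce} to pass to an affecting irreducible sub-relation whose weaker future-containment already implies the required one, which is exactly the completeness argument given after Definition~\ref{definition: compatposet}.

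For Part~2 I would first establish that, under faithfulness, every causal loop is a Type~1 affects causal loop. In a faithful model $X$ is a cause of $Y$ iff $X$ affects $Y$; a causal loop is a directed cycle through two observed nodes $X$ and $Y$, so each is a cause of the other, and faithfulness upgrades this to $\{X\text{ affects }Y,\ Y\text{ affects }X\}\subseteq\mathscr{A}$, which is Definition~\ref{def: ACL1}. For the ``sufficient'' direction (a compatible non-trivial embedding rules out loops) I would argue by contradiction: if a loop exists then we have the single-RV relations $X$ affects $Y$ and $Y$ affects $X$, both automatically irreducible; applying \textbf{compat1} to each (with $\cS_3=\cS_4=\emptyset$, so $\cR_\emptyset=\cT$) gives $\overline{\cF}(Y)\subseteq\overline{\cF}(X)$ and $\overline{\cF}(X)\subseteq\overline{\cF}(Y)$, i.e.\ $O(X)\preceq O(Y)$ and $O(Y)\preceq O(X)$, and antisymmetry of the partial order forces $O(X)=O(Y)$, contradicting non-triviality. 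The ``necessary'' direction (acyclicity guarantees such an embedding) is just Part~1 applied to the faithful model, since an acyclic model generates affects relations containing no ACL. The point to handle with care is that the equivalence ``loop $\Leftrightarrow$ Type~1 ACL'' relies essentially on faithfulness and fails in the fine-tuned setting, where Theorem~\ref{theorem: cycles} exhibits genuinely distinct loop types that the single-RV argument above cannot detect.
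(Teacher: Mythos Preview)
Your approach is essentially the same as the paper's: for Part~1 you invoke Definition~\ref{def: ACL} to obtain an acyclic $\cG$ realising $\mathscr{A}$ and embed along a topological order, and for Part~2 you use faithfulness to identify causal loops with Type~1 ACLs and then invoke antisymmetry of $\preceq$. The paper's proof is considerably terser---it simply asserts that embedding a DAG so that causal arrows flow from past to future ``ensures no signalling outside the space-time's future'' without explicitly verifying \textbf{compat1}, and in Part~2 it does not spell out the contradiction argument (which appears separately as Lemma~\ref{lemma: TrivEmbedding})---so your version, which checks \textbf{compat1} via Corollary~\ref{corollary:HOaffectsCause2} and flags the irreducibility bookkeeping, is in fact more complete than the original.
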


The above lemma (proven in Appendix~\ref{appendix: proofs3}) shows that all the distinct classes of causal loops ACL2 to ACL6 (and ACL7 to ACL10 and other possible classes as described in Appendix~\ref{appendix: MoreLoops}) as well as the concept of hidden causal loops only arise in fine-tuned causal models. If fine-tuning is allowed, even the absence of affects causal loops does not rule out causal loops since we can have hidden causal loops which are operationally undetectable i.e., the absence of ACL does not imply acyclicity of the causal structure. Here, we first show that the absence of Type 1 and Type 2 affects causal loops is necessary for the existence of such a non-trivial and compatible space-time embedding. The results of the associated Letter~\cite{VilasiniColbeckPRL} show that this is no longer true for ACLs of higher types, in particular we construct an ACL of Type 4 there that does admit such a space-time embedding. This demonstrates that the absence of affects causal loops is not necessary for the existence of a non-trivial and compatible space-time embedding. 
We further show here that the absence of Type 1 and 2 loops is not sufficient for the existence of a non-trivial and compatible space-time embedding, 
since such an embedding is not guaranteed to exist for affects loops of other types i.e., for ACL3 and above there may or may not exist a non-trivial and compatible space-time embedding (this is discussed in Appendix~\ref{appendix: MoreLoops}).

Consider the affects causal loops of Types 1 and 2. Recall that a non-trivial space-time embedding is one where no two RVs such that one affects the other are assigned the exact same space-time location. A non-trivial space-time embedding is impossible for ACL1 and ACL2, since $\textbf{compat}$ applied to a set of affects relations containing an ACL2 implies that $\cX\preceq \cZ_1 \preceq \ldots \preceq \cZ_k \preceq Y \preceq \cX$ which can only be satisfied when $\cX\preceq \cY\preceq \cX$ i.e., $O(X)=O(Y)$, which corresponds to a trivial embedding. The latter step follows directly by applying $\textbf{compat}$ for ACL1. This is stated explicitly in the following Lemma. 

\begin{restatable}{lemma}{TrivEmbedding}
\label{lemma: TrivEmbedding}
Let $S$ be a set of RVs and $\mathscr{A}$ be a set of affects relations over them that contains affects causal loops of Types 1 or 2. The set $S$ cannot be non-trivially embedded in any space-time such that $\mathscr{A}$ is compatible with the embedding.
\end{restatable}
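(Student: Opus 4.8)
The plan is to first isolate a single building block and then apply it twice. The building block is: for any two \emph{single} random variables with $X$ affects $Y$ in $\mathscr{A}$, compatibility forces the space-time order $\cX\preceq\cY$. To see this, note that an affects relation whose left-hand set $\cS_1=\{\cX\}$ is a singleton is automatically irreducible (this is the convention used for single RVs, e.g. in the discussion of Example~\ref{example: HOaffects3}, since the reducibility test of Definition~\ref{definition: ReduceAffects} has no nontrivial proper subset to examine). Viewing $X$ affects $Y$ as the relation $X$ affects $Y$ given $\{\mathrm{do}(\emptyset),\emptyset\}$ (cf.\ Definition~\ref{definition:HOaffects}), I would invoke \textbf{compat1} of Definition~\ref{definition: compatposet} with $\cS_1=\{\cX\}$, $\cS_2=\{\cY\}$ and $\cS_3=\cS_4=\emptyset$. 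Using $\cR_{\emptyset}=\mathcal{T}$ (Definition~\ref{definition: subsets}), the conclusion $\cR_{\cS_2\cS_3\cS_4}\subseteq\cR_{\cS_1}$ collapses to $\cR_{\cY}\subseteq\cR_{\cX}$. Then \textbf{compat2} replaces accessible regions by inclusive futures, giving $\overline{\mathcal{F}}(\cY)\subseteq\overline{\mathcal{F}}(\cX)$, and since $O(Y)\succeq O(Y)$ we have $\cY\in\overline{\mathcal{F}}(\cY)$, hence $O(Y)\in\overline{\mathcal{F}}(\cX)$, i.e.\ $O(X)\preceq O(Y)$, which is exactly $\cX\preceq\cY$.

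With the building block established, I would handle the two loop types in turn. For a Type~1 loop (Definition~\ref{def: ACL1}) both $X$ affects $Y$ and $Y$ affects $X$ lie in $\mathscr{A}$, so the building block yields $\cX\preceq\cY$ and $\cY\preceq\cX$ simultaneously. Antisymmetry of the partial order $\mathcal{T}$ then forces $O(X)=O(Y)$. But $X$ affects $Y$, and a non-trivial embedding (Definition~\ref{definition: embedding}) forbids two RVs related by affects from sharing a location, so we reach a contradiction; hence no non-trivial compatible embedding can exist.

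For a Type~2 loop (Definition~\ref{def: ACL2}) I would apply the building block to every single-RV affects relation in the chain $X$ affects $Z_1$, $Z_1$ affects $Z_2,\ldots,Z_k$ affects $Y$, $Y$ affects $X$, obtaining $\cX\preceq\cZ_1\preceq\cdots\preceq\cZ_k\preceq\cY\preceq\cX$. Transitivity of $\preceq$ closes the chain into $\cX\preceq\cX$ through every intermediate node, and antisymmetry then forces all these locations to coincide; in particular $O(X)=O(Z_1)$. Since $X$ affects $Z_1$, non-triviality is again violated, completing the contradiction.

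The routine algebra here is negligible; the only step demanding genuine care is the building block itself. Specifically, I expect the main obstacle to be justifying that each single-variable affects relation counts as irreducible so that \textbf{compat1} genuinely applies, and making the translation between the accessible-region containment $\cR_{\cY}\subseteq\cR_{\cX}$ and the honest order relation $\cX\preceq\cY$ fully explicit (this is where \textbf{compat2} and the observation $\cY\in\overline{\mathcal{F}}(\cY)$ are essential). Once that bridge between accessibility and the partial order is in place, the antisymmetry argument for ACL1 and its transitive extension for ACL2 follow immediately.
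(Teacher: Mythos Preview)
Your proposal is correct and follows essentially the same approach as the paper: apply \textbf{compat} to each single-variable affects relation in the chain to obtain $\cX\preceq\cZ_1\preceq\cdots\preceq\cY\preceq\cX$, then use antisymmetry of the partial order to force coincident locations, contradicting non-triviality. The paper's proof (given in the paragraph immediately preceding the lemma statement) is terser and leaves the irreducibility of singleton affects relations and the passage from $\cR_{\cY}\subseteq\cR_{\cX}$ to $\cX\preceq\cY$ implicit, whereas you spell these out explicitly---but the logical structure is identical.
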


Now consider ACL3 formed by the irreducible affects relations $\mathscr{A}=\{AB$ affects $C$, $CD$ affects $A\}$. Applying \textbf{compat} to the first affects relation, we have that $C$ must be in the joint inclusive future of $A$ and $B$ i.e., $\cA\preceq \cC$ and $\cB \preceq C$. The condition \textbf{compat} for the second affects relation similarly implies that $\cC \preceq \cA$ and $\cD \preceq \cA$. Together these imply that $A$ and $C$ must be embedded at the same location while $B$ and $D$ cannot be in the future of this location. Since we neither have $A$ affects $C$ nor $C$ affects $A$ in $\mathscr{A}$, there is a non-trivial embedding. However, if we form $\mathscr{A}'$ by adding one or both of these affects relations to $\mathscr{A}$, there will no longer be any non-trivial and compatible embedding. In other words, affects causal loops of Type 3 can admit non-trivial and compatible space-time embeddings, but will always be \emph{degenerate}, i.e., they require two of the RVs to be embedded at the same location ($e_1$ and $e_2$ in Definition~\ref{def: ACL3}), as shown in the lemma below.

\begin{restatable}{lemma}{DegenEmbedding}
\label{lemma: DegenEmbedding}
Let $S$ be a set of RVs and $\mathscr{A}$ be a set of affects relations over them that contains affects causal loops of Type 3. The set $S$ cannot be embedded in any space-time such that the embedding is non-degenerate such that $\mathscr{A}$ is compatible with the embedding. However, there are non-trivial embeddings that $\mathscr{A}$ is compatible with.
\end{restatable}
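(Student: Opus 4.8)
The plan is to prove the two assertions separately: first that \textbf{compat} forces the two distinguished elements $e_1$ and $e_2$ of Definition~\ref{def: ACL3} to share a space-time location, so that no compatible embedding can be non-degenerate; and second to exhibit an explicit non-trivial embedding that $\mathscr{A}$ is compatible with, thereby showing that this forced coincidence is the \emph{only} obstruction.

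For the first part, I would apply \textbf{compat1} to each of the two irreducible relations making up the Type~3 loop. Applied to ``$S_1$ affects $e_2$'' (with the do-set and conditioning set empty, whose accessible regions are $\mathcal{T}$ by Definition~\ref{definition: subsets}), \textbf{compat1} gives $\cR_{e_2}\subseteq\cR_{S_1}=\bigcap_{s\in S_1}\cR_s$; since $e_1\in S_1$ this yields $\cR_{e_2}\subseteq\cR_{e_1}$. Invoking \textbf{compat2} to replace each accessible region by the corresponding inclusive future turns this into $\overline{\cF}(e_2)\subseteq\overline{\cF}(e_1)$, and since $e_2\in\overline{\cF}(e_2)$ (Definition~\ref{definition: incfuture}) we get $e_1\preceq e_2$. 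Running the identical argument on ``$S_2$ affects $e_1$'' using $e_2\in S_2$ gives $e_2\preceq e_1$. Antisymmetry of the partial order $\mathcal{T}$ then forces $O(e_1)=O(e_2)$, i.e.\ every compatible embedding is degenerate. This is exactly the computation sketched for the example $\mathscr{A}=\{AB\text{ affects }C,\ CD\text{ affects }A\}$ preceding the lemma, now carried out for general $S_1,S_2$.

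For the second part, I would construct an embedding explicitly. Fix any point $p\in\mathcal{T}$ (in $(1+1)$-Minkowski, any event will do) and set $O(e_1)=O(e_2)=p$. Place every remaining element of $S_1\cup S_2$ strictly in the past of $p$, and take all accessible regions to be inclusive futures as required by \textbf{compat2}. One then checks \textbf{compat1} directly: for ``$S_1$ affects $e_2$'' we need $\overline{\cF}(e_2)\subseteq\bigcap_{s\in S_1}\overline{\cF}(s)$, which holds because $O(e_1)=p=O(e_2)$ gives $\overline{\cF}(e_1)=\overline{\cF}(e_2)$, while every other $s\in S_1$ satisfies $O(s)\prec p$, so that $\overline{\cF}(e_2)=\overline{\cF}(p)\subseteq\overline{\cF}(s)$ by transitivity; ``$S_2$ affects $e_1$'' is handled symmetrically. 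Any remaining RVs of $S$ can then be placed consistently with the futures demanded by \textbf{compat1} for whatever further relations are present. Non-triviality (Definition~\ref{definition: embedding}) holds because the only coincident locations are $O(e_1)=O(e_2)$, and for a bare Type~3 loop neither ``$e_1$ affects $e_2$'' nor ``$e_2$ affects $e_1$'' lies in $\mathscr{A}$.

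The crux, and the step I expect to require the most care, is the non-triviality claim in the second part: it rests precisely on $e_1$ and $e_2$ not affecting one another, which is what distinguishes ACL3 from ACL1 and ACL2 (for which Lemma~\ref{lemma: TrivEmbedding} rules out even a non-trivial embedding). I would therefore frame Part~2 as an existence claim realised by the minimal Type~3 loop (the two irreducible relations alone, as in the worked example), and note---as the surrounding text already does---that enlarging $\mathscr{A}$ to contain ``$e_1$ affects $e_2$'' or ``$e_2$ affects $e_1$'' makes the forced coincidence $O(e_1)=O(e_2)$ violate non-triviality, so that no non-trivial compatible embedding then survives. The only residual bookkeeping is to confirm that embedding the auxiliary nodes and satisfying \textbf{compat1} for any additional irreducible relations in $\mathscr{A}$ does not re-identify $e_1$ with $e_2$ through some new forced chain; for the bare loop this is immediate.
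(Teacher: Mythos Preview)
Your proof is correct and follows essentially the same approach as the paper's: applying \textbf{compat} to both irreducible relations forces $O(e_1)=O(e_2)$ via antisymmetry, and non-triviality survives because $\mathscr{A}$ need not contain (or imply) an affects relation between $e_1$ and $e_2$. You are more explicit than the paper in Part~2, actually constructing an embedding rather than simply noting that the coincidence $O(e_1)=O(e_2)$ does not violate non-triviality; the paper's version is terser but the underlying argument is the same.
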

\begin{proof}
By Definition~\ref{def: ACL3}, ACL3 implies that for two sets $S_1$ and $S_2$ of RVs, we have the irreducible affects relations $S_1$ affects $e_2$ and $S_2$ affects $e_1$ for some elements $e_1\in S_1$ and $e_2\in S_2$. Applying $\textbf{compat}$ (Definition~\ref{definition: compatposet}), this implies that $e_1$ must be embedded in the inclusive future of all elements $e_2'\in S_2$ and $e_2$ must be embedded in the inclusive future of all elements $e_1'\in S_1$. This is only possible if $e_1$ and $e_2$ are embedded at the same location, making the embedding degenerate. However, it can be the case that $\mathscr{A}$ does not contain or imply the any affects relations between $e_1$ and $e_2$, therefore the embedding may still be non-trivial.
\end{proof}

Can we embed an affects causal loop compatibly in space-time such that all RVs have distinct locations? The associated Letter~\cite{VilasiniColbeckPRL} shows that such a non-degenerate embedding is indeed possible for certain types of affects causal loops, with an explicit example. The causal loop proposed in ~\cite{VilasiniColbeckPRL} corresponds to a Type 4 ACL in the language of the present paper, we reproduce this example here completeness.

\begin{example}[An operationally detectable causal loop with a non-trivial, compatible space-time embedding \texorpdfstring{\cite{VilasiniColbeckPRL}}{PRL}]
\label{example: main}
Suppose we have the irreducible affects relations $\mathscr{A}=\{B$ affects $AC$, $AC$ affects $B\}$ which form a Type 4 ACL. Then $\textbf{compat}$ would require that $\overline{\mathcal{F}}(\cB)=\overline{\mathcal{F}}(\cA)\bigcap \overline{\mathcal{F}}(\cC)$. This can be satisfied even when $A$, $B$ and $C$ are embedded at distinct space-time locations, as shown in Figure~\ref{fig: egloop}. This figure shows that this affects causal loop involving the RVs $A$, $B$ and $C$ can be embedded in a (1+1)-dimensional Minkowski space-time without leading to signalling outside the space-time future. This is possible even if we embed $A$ and $C$ arbitrarily far in the past, as long as the earliest location where their lightcones intersect coincides with the location of $B$. By Theorem~\ref{theorem: cycles}, observation of the affects relations $\{B$ affects $AC$, $AC$ affects $B\}$ operationally certifies the existence of a causal loop i.e., that there exist at least one pair of RVs among $A$, $B$ and $C$ that are causes of each other. This causal loop corresponds to a closed timelike curve (CTC) once the RVs are embedded in a space-time, since it would imply bidirectional causal influences between two distinct space-time locations. Even if this CTC involves causal influences between RVs that occur far apart in time (in some reference frame), they will not allow any agent to signal superluminally since the affects relations are compatible with the space-time. This is true even if the agent can access all the RVs or any subset thereof. This is because both of the affects relations in $\mathscr{A}$ can only be verified in the joint future of $A$ and $C$, and the earliest point that they can do so is the location of $B$.
\end{example}

\begin{figure}[t!]
 \centering
\includegraphics[]{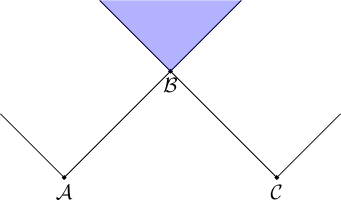}  
    \caption{\textbf{A non-trivial and compatible space-time embedding for an operationally detectable causal loop}  Example~\ref{example: main} describes a set of affects relations that forms an affects causal loop of Type 4. Such a causal loop is operationally detectable since the cyclicity of the underlying causal model can be certified operationally using the observed affects relations, as shown in Theorem~\ref{theorem: cycles}. 
    This figure illustrates a non-trivial and non-degenerate, yet compatible embedding of this causal loop in (1+1)-dimensional Minkowski space-time, where space and time are given along the horizontal and vertical axes respectively and black lines correspond to light cones. 
    Note that this embedding remains compatible even when the space-time RVs $\cA$ and $\cC$ are pushed to the far past of $\cB$ along the black line ($\cB$'s past light-like surface).}
     \label{fig: egloop}
\end{figure}
An explicit cyclic causal model in which the affects causal loop of the above example can arise is also provided in~\cite{VilasiniColbeckPRL}, we further discuss this in Appendix~\ref{appendix: examples} and Figure~\ref{fig: eqcyclemain} along with other examples that illustrate the concept of ``higher order affects relations'' introduced in this work. One can also use the framework developed here to construct several further examples of causal loops (of ACL4 or higher types) that can be compatibly embedded in  space-time. The example provided in our Letter~\cite{VilasiniColbeckPRL} suffices to illustrate the claim that such loops are even possible. We discuss further, the space-time embeddings of higher types of ACLs in Appendix~\ref{appendix: MoreLoops}.

\begin{remark}[Types of space-time embeddings]
\label{remark: EmbeddingTypes}
Apart from distinguishing between different types of causal loops (that arise due to fine-tuning of the underlying parameters of the causal model), one might also wish to distinguish between different types of space-time embeddings. Some useful distinctions that were made so far are between trivial and non-tivial embeddings and degenerate and non-degenerate embeddings. The former is useful because any set of affects relations can be compatibly embedded in a space-time through a trivial embedding where all RVs are embedded at the same location, and this does not tell us anything interesting. If we demand non-trivial embeddings, i.e., that two RVs connected by an affects relation only involving them are not embedded at the same location, then this rules out affects causal loops of Types~1 and~2, as shown in Lemma~\ref{lemma: TrivEmbedding} but not Type~3 loops. On the other hand, if we demand non-degenerate embeddings, i.e., that all RVs are embedded at distinct locations, we can rule out Type~3 affects causal loops as shown in Lemma~\ref{lemma: DegenEmbedding}, but not Type~4. Note that the compatible and non-degenerate space-time embedding of the Type~4 affects causal loop that we propose in the Letter~\cite{VilasiniColbeckPRL} (and discussed in Example~\ref{example: main}) is ``fine-tuned'' and is \emph{unstable} in the sense that small adjustments to the space-time embedding of the variables would break compatibility. In the case of Minkowski space-time, the requirement $\overline{\mathcal{F}}(\cB)=\overline{\mathcal{F}}(\cA)\bigcap \overline{\mathcal{F}}(\cC)$ that guarantees compatibility of the ACL4 in Example~\ref{example: main} confines the ORVs $\cA$ and $\cC$ to a surface that is one dimension smaller than the dimensions of the space-time, once the location of the ORV $\cB$ is fixed. This surface if simply the boundary of the past light cone of $\cB$. $\cA$ and $\cC$ can be placed anywhere on this surface, including arbitrarily far in the past of $\cB$ along its past light-like surface but cannot be placed out of this surface without violating compatibility. [Alternatively, once $\cA$ and $\cC$ are embedded, there is only one possible location for $\cB$.] Such examples of causal loops that do not lead to superluminal signalling involve a form of fine-tuning both at the level of the causal model and at the level of its space-time embedding.
\end{remark}

\begin{remark}[Open questions and challenges]
\label{remark: challenges}
As motivated in the above remark, one can consider further distinctions between space-time embeddings, such as whether they are unstable embeddings. We have seen that such embeddings arise in Example~\ref{example: main} and other examples of this section and Appendix~\ref{appendix: MoreLoops}. All the non-degenerate and compatible space-time embeddings of affects causal loops that we know so far (such as Example~\ref{example: main}) are such unstable embeddings. Therefore an interesting open question is whether demanding that an embedding is \emph{stable} would rule out some or all of the affects causal loops of Types~4 and higher.

It remains unclear what condition on the space-time embedding would rule out all possible types of affects causal loops. A main reason is that the general class of affects causal loops (i.e., operationally detectable causal loops) is not fully characterised, ACL1--ACL6 only provide various sufficient conditions that imply the existence of an affects causal loop but none of them, including the further classes ACL7--ACL10 discussed in Appendix~\ref{appendix: MoreLoops} are necessary. In all the classes other than ACL1 and ACL2, one can find causal loops that admit non-trivial and compatible space-time embeddings, but it is also possible to find ACLs of other types that have no non-trivial or compatible embeddings. Thus the question regarding necessary and sufficient conditions on affects relations that guarantee a non-trivial and compatible embedding (and similarly for other types of embeddings) also remains open.

While we have seen that there is a non-trivial and compatible embedding of the affects loop of Example~\ref{example: main} in (1+1)-dimensional Minkowski space-time, there is no such embedding of the same loop in (3+1)-dimensional Minkowski space-time~\cite{VilasiniColbeckPRL}. This is because the compatibility condition requires $B$ to be embedded at the earliest location in the joint future of $A$ and $C$ which is not possible in (3+1)-dimensional Minkowski space-time where a frame-dependent concept of earliest location in the joint future does not exist (in contrast to the (1+1)-dimensional case). This implies that the conditions for ruling out causal loops in a space-time can depend on the dimension of the space-time, and possibly other topological features. In particular, it remains a pertinent open question whether the existence of a non-trivial and compatible embedding in the space-time is sufficient to rule out all affects causal loops in (3+1)-dimensional Minkowski space-time. We leave these open questions as a challenge for future research in the field.
\end{remark}

The framework developed here, along with the results of the associated Letter~\cite{VilasiniColbeckPRL} illustrate the counter-intuitive possibilities offered by fine-tuning---if it is possible to have superluminal causal influences without superluminal signalling (as in non-local hidden variable theories~\cite{Bohm1952} or the jamming scenario~\cite{Grunhaus1996, Horodecki2019}), then we can also have causal loops that do not lead to superluminal signalling. The particularly interesting feature of such causal loops is that they can be operationally detected through their affects relations. These results have consequences for the claims of~\cite{Grunhaus1996, Horodecki2019} that certain constraints on correlations in Bell scenarios are necessary and sufficient for ruling out all types of causal loops. They suggest that neither directions of these claims can hold. This is discussed in the following section.

\section{Critical analysis of previous claims regarding relativistic causality}
\label{ssec: missing_assump}

Here we comment on two related works,~\cite{Grunhaus1996} where the concept of jamming non-local correlations were introduced and~\cite{Horodecki2019} where these were further analysed and generalised to so-called ``relativistic causal correlations''. The results and assumptions of~\cite{Grunhaus1996, Horodecki2019} are not stated in the same mathematical language as ours, and hence some translation is needed to use our framework. For this discussion, we consider a tripartite Bell experiment, i.e., we consider six random variables: the settings ($A$, $B$, $C$), and corresponding outcomes ($X$, $Y$, $Z$) that are embedded into Minkowski space-time satisfying the following constraints.

\begin{definition}[Embeddings of the form $\mathscr{E}^{\text{jam}}$]
Random variables $A$, $B$, $C$, $X$, $Y$ and $Z$ have an embedding of the form $\mathscr{E}^{\text{jam}}$ if the following conditions are satisfied
\begin{enumerate}
    \item $\{\cA,\cX\}\nprec\nsucc\{\cB,\cY\}$, $\{\cA,\cX\}\nprec\nsucc\{\cC,\cZ\}$ $\{\cB,\cY\}\nprec\nsucc\{\cC,\cZ\}$, $\cA\preceq \cX$, $\cB\preceq \cY$, $\cC\preceq \cZ$,\footnote{The conditions $\cA\prec\cX$, $\cB\prec\cY$, $\cC\prec\cZ$ are more natural than the last three relations, but, as in~\cite{Horodecki2019}, we allow for the possibility of instantaneous measurements here.}
    \item $\overline{\mathcal{F}}(\mathcal{X})\bigcap\overline{\mathcal{F}}(\mathcal{Z})\subseteq \overline{\mathcal{F}}(\mathcal{B})$
\end{enumerate}
\end{definition}
The first of these conditions corresponds to space-time constraints for a tripartite Bell scenario.  $A$, $B$ and $C$ can be thought of as settings with $X$, $Y$ and $Z$ corresponding outcomes. The first condition then represents the space-like separation of the three parts of the experiment, with each setting embedded in the space-time past of the corresponding outcome. The second condition is an additional restriction on the space-time location of the RVs related to the particular jamming scenario we wish to consider, demanding that the joint future of $X$ and $Z$ is in the future of $B$. These conditions define a family of embeddings; the locations of $A$, $B$, $C$, $X$, $Y$ and $Z$ in Figure~\ref{fig: trins} satisfy the conditions. Tripartite Bell experiments carried out in space-like separated configurations satisfying the first conditions of $\mathscr{E}^{\text{jam}}$ are normally associated with a set of no-signalling constraints on the possible correlations. However, given both conditions of $\mathscr{E}^{\text{jam}}$, the works~\cite{Grunhaus1996, Horodecki2019} consider a relaxed set of no-signalling conditions, as follows.

\begin{definition}[Relaxed tripartite no-signalling conditions~\cite{Grunhaus1996} (\textbf{NS3$'$})] The relaxed no-signalling conditions \textbf{NS3$'$} associated with an embedding of the form $\mathscr{E}^{\text{jam}}$ correspond to the following constraints on the observed distribution $P_{XYZ|ABC}$. 
\begin{align}
\label{eq: tripartiteNS}
    \begin{split}
         P_{XY|AB}(x,y|a,b)&:=\sum_zP_{XYZ|ABC}(x,y,z|a,b,c)=\sum_z P_{XYZ|ABC}(x,y,z|a,b,c')\quad \forall x,y,a,b,c,c'\\
          P_{YZ|BC}(y,z|b,c)&:=\sum_xP_{XYZ|ABC}(x,y,z|a,b,c)=\sum_x P_{XYZ|ABC}(x,y,z|a',b,c)\quad \forall y,z,a,a',b,c\\
         P_{X|A}(x|a)&:=\sum_{y,z} P_{XYZ|ABC}(x,y,z|a,b,c)=\sum_{y,z} P_{XYZ|ABC}(x,y,z|a,b',c')\quad \forall x,a,b,b',c,c'\\
         P_{Z|C}(z|c)&:=\sum_{x,y} P_{XYZ|ABC}(x,y,z|a,b,c)=\sum_{x,y} P_{XYZ|ABC}(x,y,z|a',b',c)\quad \forall z,a,a',b,b',c\\
    \end{split}
\end{align}
\end{definition}
Note that these conditions imply $P_{Y|ABC}(y|abc)$ is independent of $a$ and $c$, so that $P_{Y|B}$ is well defined. The idea behind these relaxed conditions is that they allow $P_{XZ|ABC}$ to depend on $B$ (which would normally be forbidden) on the grounds that the joint future of $X$ and $Z$ is contained in that of $B$ in the embedding $\mathscr{E}^{\text{jam}}$, and hence information about $B$ can remain in its future (as explained in Section~\ref{ssec:intro_jamming}).

Since the conditions \textbf{NS3$'$} involve only the observed correlations, they do not by themselves tell us about causation. Therefore, without making further assumptions about the underlying causal model, they cannot be necessary and sufficient conditions to rule out superluminal signalling or causal loops. For instance, a set of correlations violating \textbf{NS3$'$} could arise from a single unobserved common cause of all six variables without any direct causes, which would not lead to any superluminal signalling. When referring to such conditions on correlations as ``no-signalling'' conditions, we often implicitly assume some notion of ``free choice'' for the settings (see~\cite{CR_ext,CR2013,Horodecki2019} for definitions of free choice). In the causal modelling framework, free choice can be modelled by taking the settings $A$, $B$ and $C$ to be exogenous i.e., as having no prior causes. Given the exogeneity of $A$, $B$ and $C$, \textbf{NS3$'$}  capture the signalling possibilities through interventions on these variables (cf.\ Corollary~\ref{corollary:exogenous}), such as $C$ does not affect $XY$ given $AB$, etc. Thus, in the language of the present paper, the result of~\cite{Grunhaus1996} can be stated as saying that given an embedding of the form $\mathscr{E}^{\text{jam}}$, and with $A$, $B$ and $C$ exogenous, the conditions \textbf{NS3$'$} are sufficient to prevent superluminal signalling by interventions on $A$, $B$ and $C$.

A stronger claim is made by~\cite{Horodecki2019}, that the conditions \textbf{NS3$'$} are necessary and sufficient for ensuring no superluminal signalling and no causal loops with such an embedding and they termed the correlations satisfying \textbf{NS3$'$} ``relativistic causal correlations''. Within the framework introduced in this paper, if interventions are also allowed on $X$, $Y$ and $Z$, the sufficiency of \textbf{NS3$'$} for ruling out superluminal signalling in the space-time embedding $\mathscr{E}^{\text{jam}}$ does not hold. 
The reason is that there are causal models satisfying \textbf{NS3$'$} as well as having $X$ affects $Y$. The latter involves intervention on a non-exogenous node $X$, and implies that Alice can signal to Bob. With an embedding of the form $\mathscr{E}^{\text{jam}}$, this is superluminal. This is also captured by our definition of compatibility (Definition~\ref{definition: compatposet}), according to which the relation $X$ affects $Y$ is not compatible with an embedding of the form $\mathscr{E}^{\text{jam}}$. 

The claim of~\cite{Horodecki2019} is not just about the impossibility of superluminal signalling but also about ruling out causal loops. As we have seen, correlations satisfying (\textbf{NS3$'$}) that allow for jamming must be fine-tuned regardless of the causal structure. In fine-tuned causal models, several distinct classes of causal loops are possible, some of which are operationally undetectable (or hidden --- cf.\ Definition~\ref{def: HCL}) while others are operationally detectable but nevertheless do not lead to superluminal signalling as we show in an associated Letter~\cite{VilasiniColbeckPRL} (see also Example~\ref{example: main}). The former, by virtue of being operationally undetectable can never be ruled out only from the correlations (\textbf{NS3$'$} or otherwise) or affects relations. Operationally detectable causal loops require the consideration of non-trivial affects relations between sets of RVs, which are not detectable from the correlations alone. For instance \textbf{NS3$'$} cannot detect the existence of causal loops between outcome variables (e.g., $X$ is a cause of $Y$ and $Y$ is a cause of $X$) since when a common cause is included, this common cause can explain the correlations.
Therefore, the claim of~\cite{Horodecki2019} that \textbf{NS3$'$} rules out causal loops does not hold within our framework, even when restricting to the case where the settings are exogenous. More generally, in the absence of alternative frameworks for formalising these questions, it remains unclear how conditions on correlations such as \textbf{NS3$'$} could be necessary and sufficient for ruling out causal loops (as claimed in \cite{Horodecki2019}) without further assumptions. 

Furthermore, although we can rule out certain types of operationally detectable causal loops by demanding certain properties of the space-time embedding e.g., affects causal loops of Types~1--3 (cf.\ Lemmas~\ref{lemma: TrivEmbedding} and~\ref{lemma: DegenEmbedding}), the absence of operationally detectable causal loops is not necessary to ensure no superluminal signalling (cf.\ Example~\ref{example: main}). While \textbf{NS3$'$} is necessary to prevent superluminal signalling within the embedding $\mathscr{E}^{\text{jam}}$, it is not necessary to rule out affects causal loops in this embedding: it is possible to have an acyclic causal model over the settings $A$, $B$ and $C$, and outcomes $X$, $Y$ and $Z$ that violates \textbf{NS3$'$} and leads to superluminal signalling in the embedding $\mathscr{E}^{\text{jam}}$. The implication that superluminal signalling implies causal loops holds within the theory of special relativity. Here we do not want to assume it, which allows us to consider more general relations between these principles and our framework can hence be used also in theories with a preferred frame for instance. 
A more detailed analysis of previous works such as~\cite{Horodecki2019} and~\cite{Grunhaus1996} and the possibilities of superluminal signalling/causal loops in the jamming scenario are carried out in upcoming work~\cite{Jamming2}.

\section{Summary and conclusions}
\label{sec: conclusions}

We have developed a general mathematical framework for studying causality by clearly separating between operational and space-time related notions and characterising their compatibility. This has foundational relevance for understanding causality in quantum and more general theories, as well as practical applications for cryptography, information processing tasks in space-time and causal discovery. We have mainly focused on two notions of causality: the operational notion of causality defined through an extension of the causal modelling approach~\cite{Pearl2009, Henson2014} and relativistic causality which is associated with a space-time structure.

We formulated the operational notion of causality under minimal assumptions while allowing for causal influences to be fine-tuned, cyclic and mediated by latent non-classical systems. On the other hand, relativistic causality can be understood as the condition that ``causal influences can propagate only from past to future in the space-time'', where it has several implications such as ``it is impossible to signal outside the future'', ``it is possible to signal everywhere in the future and nowhere else'', ``in Minkowski space-time it is impossible to have causal loops'', and ``it is impossible to broadcast classical information outside the future''. Often one or more of these implications are taken in isolation to represent the condition of relativistic causality. Within the theory of special relativity these are related (e.g., the possibility of superluminal signalling leads to causal loops), but without assuming relativity they may not be and hence need to be independently formalised.

 Within our framework we have formalised several of the above concepts and shown that these are distinct conditions in general. Our compatibility condition (Definition~\ref{definition: compatposet}) ensures that a causal model does not lead to signalling outside the future when embedded in a space-time structure. An alternative compatibility condition discussed in Section~\ref{sec: necsuff} captures the idea of broadcasting classical variables within the space-time future.  Cyclic causal models involve causal loops and when embedded in space-time, as described in Section~\ref{ssec: embedding}, these allow for causal influences going in both directions between two distinct space-time points. Thus, the embedded cyclic causal structure can be understood as a closed time-like curve (CTC). Applying this framework, we have shown in an associated Letter~\cite{VilasiniColbeckPRL} that it is mathematically possible to have such CTCs in Minkowski space-time, and that their existence can be operationally detected without leading to superluminal signalling. This establishes that no superluminal signalling and no causal loops/closed timelike curves are in general different conditions. In the present paper, we have gone beyond this particular example and identified several different classes of operationally detectable causal loops (or affects causal loops) in our framework and characterised properties of their space-time embeddings. Should one such operational detection be made (which we do not expect) it would certify the physical existence of retro-causation. Such constructions are possible because our framework does not require all causal influences to respect the partial order of the space-time but only that signalling possibilities are constrained by the space-time. 

In particular, this work also serves as the first causal modelling framework for a class of post-quantum theories (``jamming theories'') previously proposed in the literature~\cite{Grunhaus1996, Horodecki2019} which are known to be more general than standard GPTs. Previously, these theories have been analysed focusing on the correlations they generate, but a causal modelling framework enables us to systematically study the effect of active interventions on arbitrary physical systems in such theories which provides more information about the underlying causal structure than correlations alone. Using this, we analysed previous claims regarding the compatibility of such theories with principles such as no superluminal signalling and no causal loops, which suggests that these claims cannot hold without further assumptions. In future work~\cite{Jamming2} we apply the framework developed here to such post-quantum jamming scenarios to characterise the signalling possibilities and new properties of theories admitting such correlations.

To allow us to deal with fine-tuned causal structures, we introduced higher-order affects relations.  Our results show these to be a useful tool for inferring causation in the presence of fine-tuning that also has operational meaning in terms of signalling through joint interventions on multiple systems. When a particular phenomenon has two possible causal explanations, one of which is fine-tuned, the fine-tuned explanation is often considered undesirable because it usually more complicated and involves features that cannot be operationally verified. Fine-tuning complicates causal reasoning and the majority of the literature on causal models typically assumes the absence of fine-tuning. Explanations of quantum correlations in terms of classical causal models are typically rejected as such explanations involve fine-tuning~\cite{Wood2015}, and instead faithful explanations in terms of quantum causal models are often preferred. On the other hand, fine-tuning occurs in many cryptographic scenarios, as well as jamming correlations (cf.\ Proposition~\ref{prop:finetune}, Example~\ref{example: jamming}). Using a causal modelling approach allows for a clear distinction to be made between undesirable and potentially useful forms of fine-tuning. The former correspond to causal influences that can never be operationally detected while the latter can be operationally detected by considering more general, joint interventions on sets of random variables.
In this work, we have shown that several causal modelling concepts that are equivalent in the absence of fine-tuning, become distinct concepts in the presence of fine-tuning. We presented several technical results relating these various concepts in general fine-tuned and cyclic causal models with latent non-classical causes, which can have useful applications for the causal discovery problem in the presence of fine-tuning. 


In our framework, we have modelled space-time as a discrete partially ordered set on which we embedded a separate operational causal model, and considered compatibility between the two notions. There are two different ways in which this space-time can be interpreted. One is to regard it as a fundamental background on which physics given by the causal model is embedded, such that every observed node in the model can be associated with a location in the space-time. Then our results tell us that the absence of signalling outside the future when the causal model is embedded in the space-time, does not allow us to identify the space-time order relation with the causal order, and that these are distinct concepts. A second interpretation is to understand the space-time order as an emergent property of the physics given by the causal model. For instance, our compatibility condition could be interpreted as a way to infer which space-time orders could occur alongside the operational predictions of the model, if we consider the direction of signalling in the model to constrain the order relations of the space-time. Our works (the present paper and~\cite{VilasiniColbeckPRL}) show that even in cyclic causal models, it can be possible to single out a preferred direction (namely the direction of signalling) from the operational predictions of the model, while at the same time certifying that the underlying causal model is cyclic. 

The present work focuses on the signalling possibilities allowed by the causal model, rather than the strength of signalling or correlations, even though the framework developed here can in principle model both. In~\cite{Kempf2021}, the strength of correlations was considered as a way to capture properties such as distance that are associated with an underlying space-time, with the hope that space-time can be seen as emergent.
In approaches to quantum gravity, such as causal set theory~\cite{Bombelli1987, Surya2019, Dukovski2013}, an active line of research is to derive geometric properties of a continuum space-time from order-theoretic properties of discrete graphs that capture the causal relations of the space-time. The present work, along with a related follow-up work~\cite{VilasiniRenner2022} suggest a possible direction of inquiry for connecting the research on non-classical causal models with such approaches to quantum gravity, and we leave these interesting directions for future work.

To summarise, our results highlight the importance of separating a) operational and space-time related notions of causality b) correlation, causation and signalling (by considering interventions) and c) distinct notions of causality within the operational/space-time categories mentioned in a). 

\section{Open questions}
\label{sec: discussion}
The work presented here provides a platform for analysing a number of problems in quantum foundations and causality in a new light. We discussed specific interesting and open questions related to the characterisation of causal loops within our framework in Remark~\ref{remark: challenges}. Here we  place our work within a broader context and discuss the associated open questions.
\bigskip

\paragraph{Other notions of causality:} 
While this work elucidates the relationships between a number of different notions of causality, there are many more that may be considered. For instance, another operational notion of causality is that of process terminality~\cite{Coecke2017} which says that discarding all the outputs of a causal process is equivalent to discarding the process. Further, approaches such as the process matrix framework~\cite{Oreshkov2012} aim to formulate causality more generally in the absence of a fixed background space-time (which we have assumed here). Other setup assumptions in these approaches mean that, for instance, post-quantum jamming scenarios cannot be modelled.\footnote{The process framework assumes a tensor product structure between the local operations of various parties, and once communication between parties is forbidden, the framework can only produce correlations compatible with standard no-signalling theories and not the relaxed no-signalling conditions of~\cite{Horodecki2019} that permit jamming.} Here several conditions such as causal orderedness, causal separability, satisfaction of causal inequalities have been proposed, which serve as causality criteria under different assumptions. 
The precise relationships between all these notions of causality, their operational meaning  and implications for the physics of information processing remains open. In a related work involving one of the authors~\cite{VilasiniRenner2022}, the present approach of disentangling operational and space-time notions of causation and characterising their compatibility, is applied to operational scenarios described by the process matrix framework~\cite{Oreshkov2012}. There, further connections between indefinite and cyclic causation are established in quantum scenarios and a more general class of space-time embeddings is considered that allows for spacetime embeddings of quantum systems where the systems are not nonlocalised to a single space-time location but may possibly be delocalised over a space-time region.
These results (along with previous works such as~\cite{Barrett2020}) relating indefinite causation to definite cyclic causation indicate that cyclic and non-classical causal models can have applications also to scenarios where a background space-time structure is not assumed.  
Although we do not consider it here, our framework can also be used to analyse frame-dependent notions of causality associated with Minkowski space-time (e.g., whether compatibility and other properties of an embedding can depend on the choice of classical reference frame).\footnote{For example, one can consider a different partially ordered set to represent space-time structure from the perspective of different frames, such that classical frame transformations such as Lorentz transformations could be viewed as invertible maps between these partially ordered sets.}
Another intriguing prospect for future research would be to consider compatibility between operational causal models and space-time related information in more exotic regimes where a global space-time structure may not exist but agents infer space-time information using their local (possibly quantum) reference frames~\cite{Zych2019, Giacomini2019, Castro_Ruiz_2020}.

\bigskip
\paragraph{Affects relations and d-separation: } We use affects relations (Definition~\ref{definition: affects}), based on the notion of interventions, to distinguish between correlation and causation. In acyclic causal structures~\cite{Pearl2009, Henson2014} and in classical cyclic causal structures~\cite{Forre2017}, existing frameworks prescribe how the post-intervention distribution can be calculated from the observed distribution and/or the underlying causal mechanisms. In non-classical cyclic causal structures, such a characterisation is not available. In Section~\ref{sec: causmod}, we used the d-separation condition (Definition~\ref{definition: compatdist}) on the observed distribution to obtain a partial characterisation which suffices for the current purpose, but this does not fully specify the post-intervention distribution. In Appendix~\ref{appendix: doMechanisms}, we outline a possible method for obtaining the post-intervention distribution given the underlying causal mechanisms.  [Although this method may not always recover the d-separation condition~\ref{definition: compatdist}, this does not impact the results of this paper.] Generalising our framework to also include non-classical cyclic causal models that do not obey the d-separation condition, by using the causal mechanisms as primitives would allow our results regarding space-time compatibility and affects loops to be applied to this larger class of models. This would provide a general framework for causally modelling fine-tuned and cyclic non-classical causal models such that any post-intervention scenario can be completely specified by the original causal model.\footnote{In the current framework, the causal model is defined in terms of the observed distribution and therefore not all affects relations can be deduced from the model's specification. When characterised instead in terms of the causal mechanisms, the affects relations should become deducible.} Another observation made in Appendix~\ref{appendix: doMechanisms} is that the presence of causal loops could allow us to distinguish between faithful, non-classical explanations and unfaithful classical explanations (e.g., using non-local hidden variables) of quantum correlations, which cannot be operationally distinguished otherwise. This suggests that it might be possible to operationally distinguish hidden variable interpretations of quantum theory such as Bohmian mechanics from inherently ``quantum'' interpretations, in the presence of causal loops. Formalising this observation would be another interesting line of investigation.
\bigskip
\paragraph{Causal loops and paradoxes: } In this work we have considered space-time to be modelled by a partial order.  The theory of general relativity allows for the possibility of more exotic space-time structures. These possibilities led to investigations of closed time-like curves and there are mathematical models of CTCs that are logically consistent and do not lead to time travel paradoxes~\cite{Deutsch1991,Bennett2005, Svetlichny2011, Lloyd2011a, Lloyd2011}. Two inequivalent models have been developed to make sense of information flow in the presence of CTCs, Deutsch's CTCs (DCTCs)~\cite{Deutsch1991} and post-selected CTCs (PCTCs)~\cite{Bennett2005, Svetlichny2011, Lloyd2011a, Lloyd2011}. DCTCs and PCTCs are known to have different amounts of computational power~\cite{Aaronson2008, Lloyd2011} and to provide different resolutions to the grandfather and unsolved theorem paradoxes~\cite{Lloyd2011}. In our framework, grandfather-type paradoxes are forbidden by the assumption that a valid joint probability distribution observed variables exists, which implies that the underlying causal mechanisms (e.g., functional equations in the classical case) must be mutually consistent.\footnote{An example of a paradoxical scenario is a 2-cycle between binary variables $X$ and $Y$ where the influence $X\longrightarrow Y$ defines the functional dependence $Y=X$ and $Y\longrightarrow X$ gives the dependence $X=Y\oplus1$. These equations are mutually inconsistent and there is no joint distribution $P_{XY}$ compatible with these dependences.} 

The unproved theorem paradox on the other hand is not ruled out in the current framework, and can depend on how the framework is further instantiated with causal mechanisms. For example, in classical cyclic causal models, an assumption regarding the unique solvability of the underlying functional dependences is often considered. In particular, this could be seen as the requirement that any information involved in a loop (such as the unproved theorem) must be fully and uniquely determined by the mechanisms of the causal model thereby eliminating the paradox of a proof that ``came from nowhere''. An analogous condition on the causal model for ruling out the unproved theorem paradox in the quantum case is far from clear, since the causal mechanisms in this case are not deterministic functional equations. These questions can be explored within a full formalisation of our framework in terms of causal mechanisms, along the lines discussed in the previous paragraph (and Appendix~\ref{appendix: doMechanisms}). It is interesting to consider whether there are connections between the CTCs that can be embedded in Minkowski space-time without superluminal signalling (such as Example~\ref{example: main}) and DCTCs or PCTCs, or which physical principles rule out such CTCs.

\bigskip
\paragraph{Causality in time-symmetric formulations of quantum theory: } Unitary quantum mechanics is time symmetric while operational quantum theory has the possibility of irreversible measurements.  There are several proposals for modelling quantum and more general experiments in a time symmetric manner while still making operational predictions and retrodictions about measurements~\cite{Aharonov1964,Oeckl2019,Di_Biagio2021, hardy2021,chiribella2021}. Predictions and retrodictions indicate the direction of inference, not necessarily of causation and the role of causality (in terms of a causal modelling paradigm) is not fully understood in these frameworks. A notable approach for making operational statements in a time-symmetric setting is the two-time state formalism~\cite{Aharonov1964,Aharonov1991,Silva2014}, which describes measurements on pre- and post-selected quantum states, where the former can be thought of as evolving forward-in-time and the latter, backward-in-time. It is interesting to consider how this time-symmetric approach can be modelled in a causal framework.

\bigskip
\paragraph{Causal discovery in the presence of fine-tuning: } Causal discovery is the problem of inferring a fully or partially unknown causal structure from observed correlations, possibly combined with additional information about interventions. Fine-tuning makes this task more challenging and causal discovery algorithms typically assume that the underlying causal model is not fine-tuned~\cite{Pearl2009,Spirtes2001}, even in cases where the underlying model is classical and has no unobserved nodes.  Relaxations of this assumption have been considered where certain forms of fine-tuning (but not all) have been allowed~\cite{Zhalama2017}. Intuitively, use of higher-order affects relations appears useful for causal discovery in the presence of fine-tuning, and the examples of Section~\ref{ssec: HOaffects} show the usefulness of HO affects to distinguish between causal structures with the same correlations. We believe this deserves future exploration.

\bigskip

\paragraph{Indefinite space-time locations: } In the present work, we have embedded causal models in a space-time structure by assigning a single space-time location to each observed system. More generally, we can have, both in theory and practice, systems whose space-time locations have some classical or quantum uncertainty or protocols involving quantum systems that are delocalised over space and in time~\cite{Chiribella2013, Procopio2015, Rubino2017, Portmann2017, VilasiniRenner2022}.  
It would therefore be of interest to generalise our methods to allow for such superpositions. In a related work~\cite{VilasiniRenner2022}, a method to do this for finite dimensional quantum systems in a discrete space-time (i.e., a partially ordered set as considered here) is proposed, which has applications for physically characterising so-called indefinite causal order processes~\cite{Oreshkov2012}. 
\begin{acknowledgements}
A preliminary version of this work and upcoming work~\cite{Jamming2} is available in VV's PhD thesis~\cite{VVThesis} (Chapter 6). VV thanks Maarten Grothus for useful feedback on the framework. We also thank Mirjam Weilenmann and Lorenzo Maccone for insightful discussions. VV acknowledges support from the PhD scholarship of the Department of Mathematics, University of York and the ETH Postdoctoral Fellowship from ETH Z\"{u}rich. 
\end{acknowledgements}

\appendix

\section{Identifying conditional independences and affects relations: Examples} 
\label{appendix: examples}

Here we provide examples that better illustrate some of the definitions and rules of the framework. In particular, how one can deduce the conditional independences and affects relations in a given causal model. For this the following lemmas will be useful, these can be regarded as generalisations of Corollary~\ref{corollary:dsep-affects} and Lemma~\ref{lemma: correl-affects} from the unconditional zeroth-order case to the case of general conditional higher-order affects relations, and are proven in Appendix~\ref{appendix: proofs4}.
\begin{restatable}{lemma}{HOappendixA}
\label{lemma:HOappendix1}
For any pairwise disjoint subsets $X$, $Y$, $Z$ and $W$ of the observed nodes $S$ of a causal model, we have
\begin{enumerate}
    \item $(XZW\perp^d Y)_{\cG_{\mathrm{do}(X Z)}}$ $\Rightarrow$ $XZ$  does not affect $Y$ given $W$.
    \item $(XZW\perp^d Y)_{\cG_{\mathrm{do}(X Z)}}$ $\Rightarrow$ $X$ does not affect $Y$ given $\{\mathrm{do}(Z),W\}$.
\end{enumerate}
\end{restatable}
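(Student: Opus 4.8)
The two statements are both negative affects relations, so I would begin by unpacking them through Definition~\ref{definition:HOaffects}. Part~1 asserts that $P_{\cG_{\mathrm{do}(XZ)}}(Y\mid X{=}x,Z{=}z,W{=}w)=P_{\cG}(Y\mid W{=}w)$ for all values, and Part~2 asserts $P_{\cG_{\mathrm{do}(XZ)}}(Y\mid X{=}x,Z{=}z,W{=}w)=P_{\cG_{\mathrm{do}(Z)}}(Y\mid Z{=}z,W{=}w)$ for all values. Both are equalities between do-conditionals of exactly the kind produced by Rule~3 of do-calculus (Theorem~\ref{theorem:dorules}), and both specialise to Corollary~\ref{corollary:dsep-affects} when $Z=W=\emptyset$. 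The plan is therefore to follow the template of that corollary: exhibit the relevant d-separation in a suitably mutilated graph and feed it into Rule~3.

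First I would discharge the intervention nodes. The hypothesis concerns $\cG_{\mathrm{do}(XZ)}$, which differs from the edge-deleted graph $\cG_{\overline{XZ}}$ only by the sources $I_{X_i},I_{Z_j}$ and their single outgoing edges into $XZ$. Since such a source can only be an endpoint of a path and never an interior vertex, it cannot lie on any path between the observed sets $XZW$ and $Y$; hence $(XZW\perp^d Y)_{\cG_{\mathrm{do}(XZ)}}$ is equivalent to $(XZW\perp^d Y)_{\cG_{\overline{XZ}}}$, and, taking subsets of the left-hand set, also yields $(XZ\perp^d Y)_{\cG_{\overline{XZ}}}$. Combined with Corollary~\ref{corollary:dsep-affects} this already gives $P_{\cG_{\mathrm{do}(XZ)}}(Y)=P_{\cG}(Y)$, while Definition~\ref{definition: compatdist} applied to the empty conditioning set yields the joint independence $(XZW\indep Y)_{\cG_{\mathrm{do}(XZ)}}$, so that $P_{\cG_{\mathrm{do}(XZ)}}(Y\mid X{=}x,Z{=}z,W{=}w)=P_{\cG_{\mathrm{do}(XZ)}}(Y)$ irrespective of the conditioning values. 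This settles the left-hand side of both target equalities; what remains is to identify the right-hand side.

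For Part~1 I would invoke Rule~3 to switch off the intervention on all of $XZ$ (taking the ``kept'' set empty and the ``ignored'' set $XZ$), which requires $(Y\perp^d XZ\mid W)_{\cG_{\overline{(XZ)(W)}}}$; for Part~2 I would switch off only the intervention on $X$ while keeping $Z$, which requires $(Y\perp^d X\mid ZW)_{\cG_{\overline{Z\,X(W)}}}$. Here $(XZ)(W)$ and $X(W)$ denote the members of the respective sets that are not ancestors of $W$, as in Theorem~\ref{theorem:dorules}. The core of the argument is to derive these conditional separations from $(XZW\perp^d Y)_{\cG_{\overline{XZ}}}$. An alternative, and probably cleaner, route to Part~2 is to combine Part~1 with Lemma~\ref{lemma:HOaffects1}: since $X$ affects $Y$ given $\{\mathrm{do}(Z),W\}$ implies $Z$ affects $Y$ given $W$ or $XZ$ affects $Y$ given $W$, it suffices to rule out both, i.e.\ to establish the two Part~1-type non-affects statements ``$XZ$ does not affect $Y$ given $W$'' (this is Part~1 itself) and ``$Z$ does not affect $Y$ given $W$'' (the same reasoning applied with intervened set $Z$, once one checks the hypothesis descends to a separation of $Y$ from $ZW$ in $\cG_{\overline{Z}}$).

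The main obstacle I foresee is the d-separation bookkeeping across these graph surgeries. The delicacy is that $W$ sits on the left of the \emph{given} unconditional separation but must move into the \emph{conditioning} set of the separations demanded by Rule~3; conditioning on $W$ can in principle reactivate collider paths, and the edge deletions in Rule~3 are themselves ancestor-dependent, since only the non-ancestors $(XZ)(W)$, resp.\ $X(W)$, lose their incoming edges. I expect the decisive observation to be that, because $XZ$ has no incoming edges in $\cG_{\overline{XZ}}$, every path from $XZ$ to $Y$ must leave $XZ$ along an outgoing edge, so that the joint separation $(XZW\perp^d Y)$ cannot be undone merely by conditioning on $W$ or by reinstating incoming edges at ancestors of $W$. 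Making this precise — tracking exactly which paths survive each deletion and verifying that conditioning on $W$ opens no new active route into $Y$ — is the step that will require the most care, and is where the full strength of having $W$ on the left of the hypothesised separation is used.
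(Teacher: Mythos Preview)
Your plan is sound and your ``cleaner route'' for Part~2 is exactly what the paper does: it shows that the hypothesis $(XZW\perp^d Y)_{\cG_{\mathrm{do}(XZ)}}$ descends to $(ZW\perp^d Y)_{\cG_{\mathrm{do}(Z)}}$, applies Part~1 (with $Z$ in place of $XZ$) to obtain ``$Z$ does not affect $Y$ given $W$'', and then combines this with Part~1 itself via the contrapositive of Lemma~\ref{lemma:HOaffects1}. Your identification of the required auxiliary step---lifting the separation from $\cG_{\mathrm{do}(XZ)}$ to $\cG_{\mathrm{do}(Z)}$---is precisely the case analysis the paper carries out.

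For Part~1 your route diverges slightly. You propose to feed the conditional separation $(Y\perp^d XZ\mid W)_{\cG_{\overline{(XZ)(W)}}}$ into Rule~3, and you correctly flag that conditioning on $W$ and the ancestor-dependent edge deletions make this the delicate step. The paper sidesteps that complication: having already obtained $P_{\cG_{\mathrm{do}(XZ)}}(Y\mid X,Z,W)=P_{\cG}(Y)$ exactly as you do, it closes the gap by showing the \emph{unconditional} separation $(Y\perp^d W)_{\cG}$, whence $P_{\cG}(Y\mid W)=P_{\cG}(Y)$. The argument is by contradiction: if $Y$ and $W$ were d-connected in $\cG$ but (as the hypothesis gives) d-separated in $\cG_{\mathrm{do}(XZ)}$, then any connecting path must use an edge incoming to $XZ$, which forces a d-connection between $XZ$ and $Y$ in $\cG_{\mathrm{do}(XZ)}$, violating the hypothesis. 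This buys you a strictly easier d-separation target---no conditioning on $W$, no $(XZ)(W)$ bookkeeping---so you may prefer it to the Rule~3 route you outlined.
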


\begin{restatable}{lemma}{HOappendixB}
\label{lemma:HOappendix2}
For any pairwise disjoint subsets $X$, $Y$, $Z$ and $W$ of the observed nodes $S$ of a causal model, we have
\begin{enumerate}
    \item $(XZW\nindep Y)_{\cG_{\mathrm{do}(X Z)}}$ $\Rightarrow$ $XZ$ affects $Y$ given $W$.
    \item $(XZW\nindep Y)_{\cG_{\mathrm{do}(X Z)}}$ \emph{and} $(ZW\perp^d Y)_{\cG_{\mathrm{do}(Z)}}$ $\Rightarrow$ $X$ \emph{affects} $Y$ given $\{\mathrm{do}(Z),W\}$.
\end{enumerate}
\end{restatable}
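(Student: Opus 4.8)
The plan is to prove both parts by contraposition, extending the argument that establishes the zeroth-order case in Lemma~\ref{lemma: correl-affects}. The unifying strategy is: negate the affects relation appearing in the conclusion, use Definition~\ref{definition:HOaffects} to rewrite this negation as an \emph{equality} between a post-intervention conditional of $Y$ and a suitable baseline distribution, and then show that this equality forces the distribution of $Y$ in $\cG_{\mathrm{do}(XZ)}$ to be insensitive to the intervened variables, which contradicts the assumed dependence $(XZW\nindep Y)_{\cG_{\mathrm{do}(XZ)}}$. Throughout, the defining intervention equations~\eqref{eq: intervention1}--\eqref{eq: intervention4} are what license moving conditionals between the pre-intervention graph $\cG$ and the post-intervention graphs $\cG_{\mathrm{do}(Z)}$ and $\cG_{\mathrm{do}(XZ)}$.

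For part~1 I would assume, for contradiction, that $XZ$ does not affect $Y$ given $W$. By Definition~\ref{definition:HOaffects} applied with an empty $\mathrm{do}$-set (so that the comparison graph is $\cG$ itself), this reads $P_{\cG_{\mathrm{do}(XZ)}}(Y|XZ,W)=P_{\cG}(Y|W)$ for all values. Since the right-hand side does not depend on the value imposed on the intervened set $XZ$, the distribution of $Y$ in $\cG_{\mathrm{do}(XZ)}$ is unchanged by the intervention on $XZ$ once $W$ is held fixed; reading this off as the relevant independence of $Y$ from the intervened variables in $\cG_{\mathrm{do}(XZ)}$ negates the dependence carried by the hypothesis, completing the contrapositive. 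This mirrors exactly the structure of the proof of Lemma~\ref{lemma: correl-affects}, with the observational set $W$ playing a passive role.

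For part~2, which is the substantive statement, I would assume $X$ does not affect $Y$ given $\{\mathrm{do}(Z),W\}$. By Definition~\ref{definition:HOaffects} this gives $P_{\cG_{\mathrm{do}(XZ)}}(Y|X,Z,W)=P_{\cG_{\mathrm{do}(Z)}}(Y|Z,W)$ for all values. The crucial extra step is to collapse this baseline into a genuine marginal using the second hypothesis: $(ZW\perp^d Y)_{\cG_{\mathrm{do}(Z)}}$ together with the d-separation property of Definition~\ref{definition: compatdist}, applied to the post-intervention model on $\cG_{\mathrm{do}(Z)}$, yields $(ZW\indep Y)_{\cG_{\mathrm{do}(Z)}}$ and hence $P_{\cG_{\mathrm{do}(Z)}}(Y|Z,W)=P_{\cG_{\mathrm{do}(Z)}}(Y)$. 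Chaining the two equalities shows that $P_{\cG_{\mathrm{do}(XZ)}}(Y|X,Z,W)$ equals the constant $P_{\cG_{\mathrm{do}(Z)}}(Y)$, independent of $X$, $Z$ and $W$. Marginalising over $XZW$ then forces this constant to coincide with $P_{\cG_{\mathrm{do}(XZ)}}(Y)$, so that $(XZW\indep Y)_{\cG_{\mathrm{do}(XZ)}}$, contradicting the first hypothesis and thereby proving the claim.

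The main obstacle I anticipate is the bookkeeping across the three graphs $\cG$, $\cG_{\mathrm{do}(Z)}$ and $\cG_{\mathrm{do}(XZ)}$: one must be careful to apply the d-separation property to the correct causal model, and in particular to convert $(ZW\perp^d Y)_{\cG_{\mathrm{do}(Z)}}$ into a probabilistic independence in $P_{\cG_{\mathrm{do}(Z)}}$ rather than in $P_{\cG}$ (this collapse-to-marginal step is precisely the contrapositive content of Lemma~\ref{lemma:HOappendix1}, but the direct route via Definition~\ref{definition: compatdist} inside the single graph $\cG_{\mathrm{do}(Z)}$ is cleaner). The only genuinely non-routine inference is recognising that a conditional distribution which is constant across all values of its conditioning set must equal the corresponding marginal, which is what turns the chained equalities into the joint independence needed to contradict $(XZW\nindep Y)_{\cG_{\mathrm{do}(XZ)}}$.
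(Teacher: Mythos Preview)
Your contrapositive strategy for Part~1 matches the paper's, but both share a genuine gap. From ``$XZ$ does not affect $Y$ given $W$'' you correctly extract $P_{\cG_{\mathrm{do}(XZ)}}(Y\mid X,Z,W)=P_{\cG}(Y\mid W)$ for all values, and observe that the right-hand side is constant in the intervened variables $X,Z$. But it can still vary with $W$: what you have actually derived is the \emph{conditional} independence $(Y\indep XZ\mid W)_{\cG_{\mathrm{do}(XZ)}}$, not the unconditional $(Y\indep XZW)_{\cG_{\mathrm{do}(XZ)}}$ that the hypothesis negates. Concretely, take $\cG$ with the single edge $W\longrsquigarrow Y$ and $X,Z$ isolated; then $(XZW\nindep Y)_{\cG_{\mathrm{do}(XZ)}}$ holds (via $W\nindep Y$), yet $P_{\cG_{\mathrm{do}(XZ)}}(Y\mid X,Z,W)=P_{\cG}(Y\mid W)$ identically, so $XZ$ does not affect $Y$ given $W$. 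The paper's one-line ``clearly it is not possible to satisfy both'' skips exactly this point, so your argument is no worse than the original, but the step as written does not close.

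Your Part~2 argument, by contrast, is correct and takes a cleaner route than the paper. The paper first invokes Part~1 of this lemma (to obtain $XZ$ affects $Y$ given $W$) together with Part~1 of Lemma~\ref{lemma:HOappendix1} (to obtain $Z$ does not affect $Y$ given $W$), and then combines these two zeroth-order statements. You instead run the contrapositive directly: negating the higher-order affects relation gives $P_{\cG_{\mathrm{do}(XZ)}}(Y\mid X,Z,W)=P_{\cG_{\mathrm{do}(Z)}}(Y\mid Z,W)$; the d-separation hypothesis collapses the right side to the constant $P_{\cG_{\mathrm{do}(Z)}}(Y)$; and since a conditional that is constant across all conditioning values must equal its own marginal, you get $(XZW\indep Y)_{\cG_{\mathrm{do}(XZ)}}$, contradicting the first hypothesis. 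This is self-contained, avoids the detour through Lemma~\ref{lemma:HOappendix1}, and in particular does not inherit the Part~1 issue that the paper's proof of Part~2 is exposed to.
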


We now summarise how one may use these results to deduce some of the conditional independences and affects relations from a given causal model.
\begin{itemize}
    \item \emph{Conditional independences:} Given a causal graph $\cG$ with the set of observed nodes $S$, some of the conditional independences satisfied by the joint distribution $P_S$ can be identified using Definition~\ref{definition: compatdist} i.e., by listing all the conditional independences implied by d-separation relations in $\cG$. Further independences may be found if there are dashed arrows emanating from exogenous nodes, since $X\xdashrightarrow{} Y$ implies $X$ does not affect $Y$ (by Definition~\ref{definition: solidasharrows}) which implies $X\indep Y$ if $X$ is exogenous (cf.~ Corollary~\ref{corollary:exogenous}). Lemma~\ref{lemma: CI} can also be used to list further independences not directly implied by d-separation in $\cG$. There may still be more conditional independences in $P_S$ that cannot be listed using the methods mentioned above. Since we allow for fine-tuning and latent systems, there could in principle be arbitrarily many independences in $P$, but those mentioned above are sufficient for compatibility with the causal model. 
    
    \item \emph{Affects relations:} The existence of an affects relation $X$ affects $Y$ given $\{\mathrm{do}(Z),W\}$ can be deduced by applying Lemma~\ref{lemma:HOappendix2} (for the zeroth-order case, $X$ affects $Y$ is deduced by applying Lemma~\ref{lemma: correl-affects}). The non-affects relation $X$ does not affect $Y$ given $\{\mathrm{do}(Z),W\}$ can be deduced by applying Lemma~\ref{lemma:HOappendix1} and also the contrapositives of Lemmas~\ref{lemma:HOaffectsCause},~\ref{lemma:HOaffects1} and~\ref{lemma:HOaffects2}. For example, $(X\perp^dYW)_{\cG_{\mathrm{do}(X)}}$ implies that $X$ does not affect $Y$ given $\{\mathrm{do}(Z),W\}$ by the second statement of Lemma~\ref{lemma:HOaffectsCause}. In the zeroth-order case, the non-affects relation $X$ does not affect $Y$ is deduced by applying Corollary~\ref{corollary:dsep-affects}. The direction of the lemmas is important to note here, for instance the converse of Lemma~\ref{lemma: correl-affects} cannot be used to deduce that $X$ does not affect $Y$ since this implication does not hold, unless $X$ is exogenous (cf.\ non-implication 2 of Figure~\ref{fig: relations} and Corollary~\ref{corollary:exogenous}). Therefore one can check for non-independences and d-separations in the post-intervention causal model to identify affects and non-affects relations respectively. One may be able derive further results of this sort or exploit structural aspects of particular causal models to derive additional independences and affects relations. Again, due to fine-tuning and latent systems, in general, this identification may not be exhaustive even after this is done. However, in the case of causal models with no latent nodes (which are by definition, classical), it would indeed be exhaustive, as explained in Remark~\ref{remark:HOaffects}.

\end{itemize}
In case some or all of the causal mechanisms are also given in addition to the observed distributions, it may be possible to identify further independences and affects relations in the model. We now apply these rules to specific examples where it is possible to deduce all the conditional independences and affects relations involved, these are tabulated in Figure~\ref{fig:table_examples} for 3 out of 4 of the examples considered here, all of which correspond to fine-tuned causal models. The fourth example corresponds to a faithful but cyclic causal model, and therefore the d-separation condition~\ref{definition: compatdist} and zeroth-order affects relations (given by the causal arrows themselves) completely characterise the scenario.

\begin{figure}[t]
\centering\includegraphics[scale=0.85]{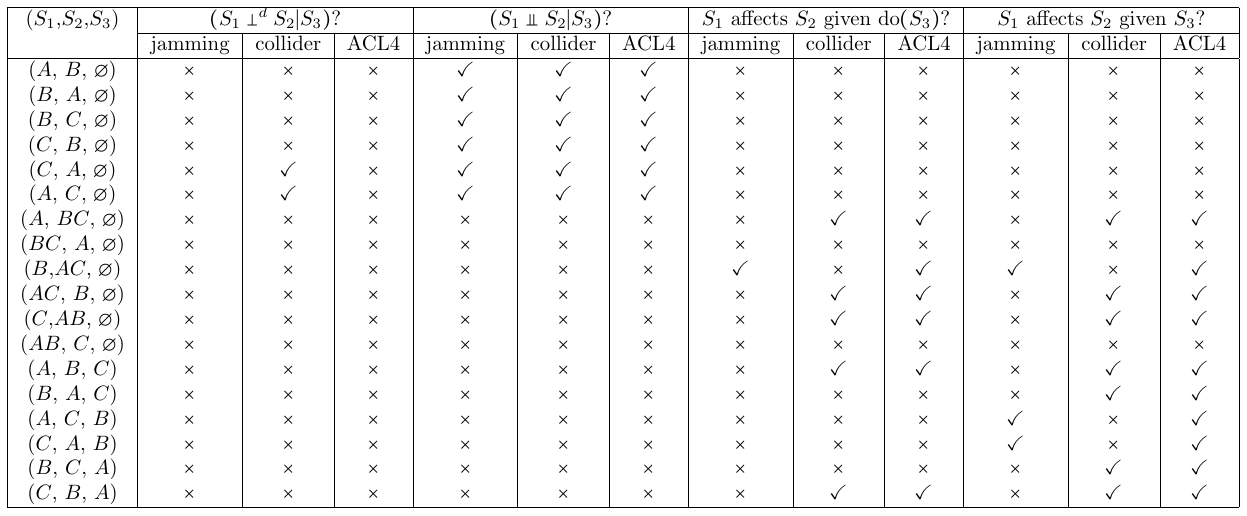}
    \caption{Table of all possible d-separations, conditional independences and affects relations for jamming, fine-tuned collider and Type 4 affects causal loop examples (Sections~\ref{sssec: egjamming},~\ref{sssec: egcollider},~\ref{sssec: egcycle}), all of which involve the three observed RVs $A$, $B$ and $C$. All affects relations, when they do exist, are irreducible.
    }
    \label{fig:table_examples}
\end{figure}

\subsection{Jamming (Figure~\ref{fig: jamming})}
\label{sssec: egjamming}
In the jamming causal structure $\mathcal{G}^{\mathrm{jam}}$ of Figure~\ref{fig: jamming} and Example~\ref{example: jamming}, Definition~\ref{definition: compatdist} does not impose any conditional independences on the observed distribution $P_{ABC}$ since $\Lambda$ is unobserved.\footnote {If $\Lambda$ in Figure~\ref{fig: jamming} were observed, $A$ and $C$ would be d-separated given $\{B,\Lambda\}$ and we would have the conditional independence $P_{AC|B\Lambda}=P_{A|B\Lambda}P_{C|B\Lambda}$.} However, from Definition~\ref{definition: solidasharrows} of dashed arrows we know that $B$ affects neither $A$ nor $C$ individually and we are given that $B$ affects $AC$. Using the exogeneity of $B$ (cf.~ Corollary~\ref{corollary:exogenous}), this implies the independences $A\indep B$ and $C \indep B$  and the non-independence $B\not \indep AC$ in $\cG^{\mathrm{jam}}$. Now, consider an intervention on $A$. The post-intervention causal structure $\mathcal{G}^{\mathrm{jam}}_{\mathrm{do}(A)}$ only has the edges $B\xdashrightarrow{} C$ and $\Lambda \longrsquigarrow C$ (along with $I_A\longrightarrow A$ of course). The d-separation $(A\perp^d C)_{\mathcal{G}^{\mathrm{jam}}_{\mathrm{do}(A)}}$ implies the independence $(A\indep C)_{\mathcal{G}^{\mathrm{jam}}_{\mathrm{do}(A)}}$ (Definition~\ref{definition: compatdist}) and also that $A$ does not affect $C$ (Corollary~\ref{corollary:dsep-affects}). Similarly, we can derive the relations $C$ does not affect $A$, $A$ does not affect $BC$,  $C$ does not affect $AB$ and $AC$ does not affect $B$. Combined with the lack of any zeroth-order affects relations between any two of the RVs, this implies that there are no affects relations of order $1$ or higher (by the contrapositive of Lemma~\ref{lemma:HOaffects1}). Further, using Lemma~\ref{lemma: CI} and the exogeneity of $B$, we can derive $AB$ does not affect $C$ as follows. In the causal structure $\cG^{\mathrm{jam}}_{\mathrm{do}(AB)}$, $A$ is d-separated from $B$ and $C$, while $B$ and $C$ are independent of each other due to the exogeneity of $B$ and the dashed arrow connecting them. Using the lemma, this gives $(AB\indep C)_{\cG^{\mathrm{jam}}_{\mathrm{do}(AB)}}$ which can be explicitly written as $P_{\cG^{\mathrm{jam}}_{\mathrm{do}(AB)}}(C|A,B)=P_{\cG^{\mathrm{jam}}_{\mathrm{do}(AB)}}(C)$. The right hand side can be simplified in the following two steps. Firstly as $P_{\cG^{\mathrm{jam}}_{\mathrm{do}(AB)}}(C)=P_{\cG^{\mathrm{jam}}_{\mathrm{do}(A)}}(C)$ noting that $\cG^{\mathrm{jam}}_{\mathrm{do}(AB)}$ and $\cG^{\mathrm{jam}}_{\mathrm{do}(A)}$ are effectively the same graph due to the exogeneity of $B$. Then the d-separation $(A\perp^d C)_{\cG^{\mathrm{jam}}_{\mathrm{do}(A)}}$ implies the independence $P_{\cG^{\mathrm{jam}}_{\mathrm{do}(A)}}(C|A)=P_{\cG^{\mathrm{jam}}_{\mathrm{do}(A)}}(C)$, which along with $A$ does not affect $C$ (as noted earlier) gives $P_{\cG^{\mathrm{jam}}_{\mathrm{do}(A)}}(C)=P_\cG^{\mathrm{jam}}(C)$. Putting this together, we get $P_{\cG^{\mathrm{jam}}_{\mathrm{do}(AB)}}(C|A,B)=P_\cG^{\mathrm{jam}}(C)$ i.e., $\{A.B\}$ does not affect $C$. Similarly, one can obtain $BC$ does not affect $A$. All these are summarised in Figure~\ref{fig:table_examples}.

\subsection{Fine-tuned collider (Figure~\ref{fig: collider})}
\label{sssec: egcollider}
In the causal structure of Figure~\ref{fig: collider}, the independence $A\indep C$ follows from the d-separation condition (Definition~\ref{definition: compatdist}), while $A\indep B$ and $C\indep B$ follow from the dashed arrow structure. These are the same independences as the case in the previous example of jamming with unobserved $\Lambda$ (where $A\indep C$ was an additional independence in the jamming example but follows from d-separation in this case). Thus the distribution $P_{ABC}$ from Example~\ref{example: jamming} is compatible with both the jamming (Figure~\ref{fig: jamming}) as well the fine-tuned collider (Figure~\ref{fig: collider}) causal structures.\footnote{Note that this is essentially the one-time pad example from earlier.} However interventions on the two causal structures yield different results. We have $AC$ affects $B$ for the fine-tuned collider (since $AC$ consists of exogenous nodes and is correlated with $B$) but not for the jamming case. We also have $A$ affects $BC$ and $C$ affects $AB$ for the fine-tuned collider even though $A$ and $C$ do not individually affect $B$ due to the dashed arrow structure. This follows from the exogeneity of $A$ and $C$ and the joint correlations $A=B\oplus C$. Further, $AB$ does not affect $C$ since these sets become d-separated upon intervention on $AB$ and by a similar reasoning, $BC$ does not affect $A$ and $B$ does not affect $AC$ (in contrast with the jamming case where $B$ affects $AC$). As for higher-order affects relations, $A$ affects $B$ given do$(C)$ and $C$ affects $B$ given do$(A)$ are the only ones, and these follow by applying Lemma~
\ref{lemma:HOappendix2} to this example. Again, these conclusions are summarised in Figure~\ref{fig:table_examples}. 
\begin{figure}[t]
 \centering\subfloat[\label{fig: collider}]{\includegraphics[]{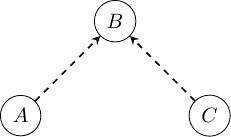}}\qquad\qquad
 \subfloat[\label{fig: eqcyclemain}]{\includegraphics[]{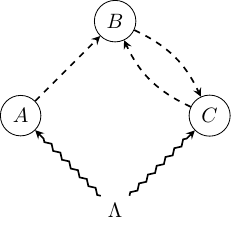}}\qquad\qquad
  \subfloat[\label{fig: causalloop}]{\includegraphics[]{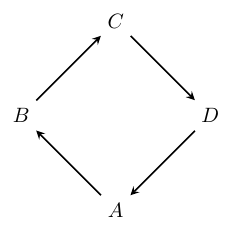}}
    \caption{\textbf{Some fine-tuned and/or cyclic causal structures: } \textbf{(a)} A fine-tuned collider \textbf{(b)} A Type 4 affects causal loop \textbf{(c)} A Type 1 affects causal loop}
    \label{fig: examples}
\end{figure}

\subsection{A Type 4 affects causal loop (\texorpdfstring{\cite{VilasiniColbeckPRL}}{PRL},  Figure~\ref{fig: eqcyclemain})}
\label{sssec: egcycle}
Example~\ref{example: main} outlined a cyclic causal model proposed in~\cite{VilasiniColbeckPRL} for demonstrating the mathematical possibility of compatibly embedding affects causal loops in Minkowski space-time. Here, we reproduce further details of this model and discuss its properties it in the context of the more general framework developed in the present paper.
Consider the cyclic causal structure $\cG^{\mathrm{ACL}4}$ of Figure~\ref{fig: eqcyclemain} along with the following classical causal mechanisms where all 4 variables are taken to be binary: $A=\Lambda$, $B=A\oplus C$, $C=B\oplus \Lambda$, where the exogenous variable $\Lambda$ is uniformly distributed. One can check that the distribution $P_{ABC}$ obtained through these mechanisms would be the same as that of the jamming as well as the fine-tuned collider examples above, but the affects relations differ from those of these examples and instead correspond to those of Example~\ref{example: main} which is an affects causal loop of Type 4. To obtain these affects relations, first note that in the causal model of $\cG^{\mathrm{ACL}4}_{\mathrm{do}(A)}$, $\Lambda$ is no longer a parent of $A$, but using the remaining causal mechanisms $B=A\oplus C$ and $C=B\oplus \Lambda$ (which remain the same), we can still obtain $A=\Lambda$. Therefore the intervention on $A$ does not change the observed distribution and $A$ and $B$ continue to be independent in $\cG^{\mathrm{ACL}4}$ as well as $\cG^{\mathrm{ACL}4}_{\mathrm{do}(A)}$, and in both graphs the marginal distributions over $A$, $B$ and $C$ are uniform, which gives $A$ does not affect $B$. On the other hand, $B$ does not affect $A$ can be established simply from the d-separation $(B\perp^d A)_{\cG^{\mathrm{ACL}4}_{\mathrm{do}(B)}}$. In the causal model of $\cG^{\mathrm{ACL}4}_{\mathrm{do}(C)}$, neither $B$ nor $\Lambda$ are parents of $C$ but the remaining mechanisms $A=\Lambda$ and $B=A\oplus C$ give $C=B\oplus \Lambda$. Again, the observed distribution here is the same as the pre-intervention distribution, which gives $C$ does not affect $B$. By a similar argument, $B$ does not affect $C$ can also be established. Further, we have both $B$ affects $AC$ (as in the jamming case) \emph{and} $AC$ affects $B$ (as in the fine-tuned collider) since $P_{\cG^{\mathrm{ACL}4}_{\mathrm{do}}(A C)}(B|A,C)$ and $P_{\cG^{\mathrm{ACL}4}_{\mathrm{do}}(B)}(A,C|B)$ are deterministic while $P_{\cG^{\mathrm{ACL}4}}(B)$ and $P_{\cG^{\mathrm{ACL}4}}(A,C)$ are uniform. We also have $A$ affects $BC$ and $C$ affects $AB$ as in the fine-tuned collider, which can be verified using the causal mechanisms given.\footnote{Note that in the absence of the causal mechanisms, many of the affects/non-affects relations may not be identifiable. For example, to deduce that $AB$ does not affect $C$ in the jamming case, we used Lemma~\ref{lemma: CI} along with the fact that $B$ was exogenous in $\cG^{\mathrm{jam}}$. However, the same argument cannot be applied here since $B$ is not exogenous in $\cG^{\mathrm{ACL}4}$.} As in the jamming case, we also get $AB$ does not affect $C$ and $BC$ does not affect $A$. The higher-order affects relations here are identical to the previous example, and obtained in a similar manner and these are given in Figure~\ref{fig:table_examples}. Furthermore, even though this corresponds to an affects causal loop the existence of which can be operationally certified and this causal model admits a non-trivial and compatible embedding in Minkowski space-time as explained i.e., it does not lead to superluminal signalling~\cite{VilasiniColbeckPRL} (see also Example~\ref{example: main}).

\subsection{A Type 1 affects causal loop (Figure~\ref{fig: causalloop})}
\label{sssec: egloops}
Consider a faithful causal model associated with the cyclic causal structure of Figure~\ref{fig: causalloop}. Here all 4 nodes are observed and hence classical. The faithfulness implies that the only conditional independences are those implied by d-separation. Since $B\perp^d D|AC$ and $A\perp^d C|BD$ are the only d-separations, $B\indep D|AC$ and $A\indep C|BD$ are the only conditional independences. Further, the faithfulness also implies that every causal arrow is associated with an affects relation, which is also reflected in the fact that all arrows are solid arrows, and we have $A$ affects $B$, $B$ affects $C$, $C$ affects $D$ and $D$ affects $A$, which forms an affects causal loop of Type 1 (there is an affects relation is both direction between every pair of RVs). One can easily check that the only irreducible affects relations are the zeroth-order affects relations between single RVs, as one would expect for faithful causal models. To further illustrate the kind of causal loops allowed in this framework, consider the pairwise correlations $A=B$, $B=C$, $C=D$ and $D\neq A$. Since this system of equations has no solutions, there exists no joint distribution $P_{ABCD}$ from which the pairwise marginals producing these correlations can be obtained. Such examples correspond to grandfather type paradoxes and cannot be modelled in frameworks that demand the existence of a valid joint probability distribution over all variables involved in a causal loop. On the other hand, examples of solid arrow directed cycles where the functional dependences of the loop variables admit solutions such as $A=B=C=D$ (with any probability) or the examples considered in~\cite{Pearl2013} for other cyclic causal structures can be modelled in our framework. Additionally, there can also be Type 1 and Type 2 affects causal loops that do not involve any solid arrows, for example through a concatenation of structures such as that of Figure~\ref{fig:example1}. We discuss causal loops in more detail in Appendix~\ref{appendix: doMechanisms}, also in the case of quantum causal structures.

\section{Further classes of affects causal loops and their space-time embeddings}
\label{appendix: MoreLoops}

As motivated in the main text (see the paragraph after Definition~\ref{def: ACL6}), we can consider further classes of affects causal loops that are distinct from ACL1, $\ldots$, ACL6. The intuition is that the chain of irreducible affects relations involved in these previous definitions is such that for any two adjacent affects relations in the chain the second set of the first is contained in the first set of the second. Relaxing this containment condition can lead to a violation of Theorem~\ref{theorem: cycles}, as also explained in the main text. So we can consider relaxing this condition and only requiring a non-trivial intersection between the sets (which would make the chain ``incomplete''), as long as we include additional conditions on the affects relations that will again guarantee cyclicity of the causal structure. Here we propose four more classes of affects causal loops ACL7, ACL8, ACL9 and ACL10 based on this idea, illustrate them with examples which also show that there can in general be more classes of affects causal loops even beyond these.

\begin{definition}[Affects causal loops, Type 7 (ACL7)]
\label{def: ACL7}
A set of affects relations $\mathscr{A}$ is said to contain a Type 7 affects causal loop if the following conditions are satisfied
\begin{enumerate}
    \item There exist disjoint sets of RVs $S_1$ and $S_2$ such that $S_1$ affects $S_2$ belongs to $\mathscr{A}$ and is irreducible.
    \item There exists a chain of irreducible affects relations (possibly incomplete) $\mathscr{C}_{s_2}$ from some subset $s_2\subseteq S_2$ to $S_1$ i.e., there exists sets of RVs $s_2\subseteq S'_2$,  $S_3$, $S'_3$, $\ldots$ $S_n$, $S'_n$, $s_1\subseteq S_1$ such that $\{S'_2$ affects $S_3$, $S'_3$ affects $S_4$, $\ldots$ ,$S'_{n-1}$ affects $S_n$, $S'_n$ affects $s_1\} \subseteq \mathscr{A}$, where all the affects relations are irreducible, every pair of sets connected by an affects relation is disjoint, $S_i\bigcap S'_i\neq \emptyset$ for all $i\in \{3,\ldots,n\}$ and $S_2\bigcap S_2'=s_2$. Each pair $(S_i,S_i')$ such that $S_i\not\subseteq S'_i$ for $i\in \{2,\ldots,n\}$ is called an incomplete node of the affects chain $\mathscr{C}_{s_2}$, a complete affects chain has no incomplete nodes.
    \item For each affects chain $\mathscr{C}_{s_2}$ that connects the subset $s_2$ of $S_2$ back to $S_1$ as above, and each incomplete node $(S_i,S_i')$ in $\mathscr{C}_{s_2}$  (for $i\in \{2,\ldots,n\}$), there exists a complete affects chain $\mathscr{D}^{\mathscr{C}}_{s_2}$ in $\mathscr{A}$ from $S_i\backslash (S_i\bigcap S_i')$ to $S_i$.
\end{enumerate}
\end{definition}

\begin{definition}[Affects causal loops, Type 8 (ACL8)]
\label{def: ACL8}
A set of affects relations $\mathscr{A}$ is said to contain a Type 8 affects causal loop if the following conditions are satisfied
\begin{enumerate}
    \item There exist disjoint sets of RVs $S_1$ and $S_2$ such that $S_1$ affects $S_2$ belongs to $\mathscr{A}$ and is irreducible.
    \item For each element $e_2\in S_2$, there exists a chain of irreducible affects relations (possibly incomplete) $\mathscr{C}_{e_2}$ that connects it back to $S_1$ i.e., for each $e_2$, there exists sets of RVs $e_2\in S'_2$,  $S_3$, $S'_3$, $\ldots$ $S_n$ $S'_n$, $s_1\subseteq S_1$ such that $\{S'_2$ affects $S_3$, $S'_3$ affects $S_4$, $\ldots$ ,$S'_{n-1}$ affects $S_n$, $S'_n$ affects$s_1\} \subseteq \mathscr{A}$, where all the affects relations are irreducible, every pair of sets connected by an affects relation is disjoint, $S_i\bigcap S'_i\neq \emptyset$ for all $i\in \{3,\ldots,n\}$ and $S_2\bigcap S_2'=e_2$. 
    \item For each element $e_2\in S_2$, each affects chain $\mathscr{C}_{e_2}$ that connects it back to $S_1$ as above, and each incomplete node $(S_i,S_i')$ in $\mathscr{C}_{e_2}$ (for $i\in \{2,\ldots,n\}$), there exists a complete affects chain $\mathscr{D}^{\mathscr{C}}_{e_2}$ in $\mathscr{A}$ from $S_i\backslash (S_i\bigcap S_i')$ to $S_i$.
\end{enumerate}
\end{definition}

The following theorem (proven in Appendix~\ref{appendix: proofs5}) generalises Theorem~\ref{theorem: cycles} to ACL7 and ACL8, and justifies categorising them as affects causal loops.

\begin{restatable}{theorem}{CyclesAppendix}
\label{lemma: CyclesAppendix}
Any set of affects relations $\mathscr{A}$ containing an affects causal loops of Type 7 or Type 8 can only arise from a causal model over a cyclic causal structure.
\end{restatable}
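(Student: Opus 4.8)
The plan is to reduce the statement to the single fact, supplied by Corollary~\ref{corollary:HOaffectsCause2}, that an irreducible unconditional zeroth-order affects relation certifies element-wise causal links, and then to trace a directed causal path that returns to its starting point. The first step is bookkeeping: every irreducible affects relation appearing in the hypotheses of Definitions~\ref{def: ACL7} and~\ref{def: ACL8} --- namely $S_1$ affects $S_2$, every link of the return chain $\mathscr{C}$, and every link of each repair chain $\mathscr{D}$ --- is of the form $S$ affects $T$ with $Z=W=\emptyset$, so Corollary~\ref{corollary:HOaffectsCause2} guarantees that each $e\in S$ is a cause of some $e'\in T$ in the sense of Definition~\ref{def: cause}. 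Since ``is a cause of'' means ``there is a directed path in $\mathcal{G}$'', this relation is transitive, and a directed path from a node to itself (i.e.\ a directed cycle) exists in $\mathcal{G}$ precisely when some node is a cause of itself; the whole argument is therefore aimed at producing one such node.

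The second step is a reusable lemma about complete affects chains, which I expect to be available from the proof of Theorem~\ref{theorem: cycles} for ACL5 and ACL6: a complete affects chain (Definition~\ref{def: ACL5}) from a set $A$ to a set $B$, all of whose links are irreducible, implies that each element of $A$ is a cause of some element of $B$. This follows by composing the element-wise cause relations of the first step along the chain, using the completeness condition $S_i\subseteq\hat{S}_i$ to ensure that the element reached at stage $i$ lies in the source set of the next link, and then invoking transitivity of ``cause''.

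The crux is the third step, handling the incomplete chains. Starting from an arbitrary $e_1\in S_1$, the relation $S_1$ affects $S_2$ yields an element $e_2\in S_2$ caused by $e_1$. I then trace the return chain ($\mathscr{C}_{e_2}$ for ACL8, $\mathscr{C}_{s_2}$ for ACL7) toward $S_1$: at each node $(S_i,S_i')$ the current element either lies in $S_i\cap S_i'\subseteq S_i'$, so the next link applies directly, or it lies in the incomplete part $S_i\setminus(S_i\cap S_i')$, in which case the repair complete chain $\mathscr{D}$ guaranteed by condition~3 routes it, via the second step, back to some element of $S_i$. The main obstacle is termination: a repair may return an element that again lies in the incomplete part. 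Here I would use finiteness of the vertex set as a pigeonhole argument --- iterating the repair produces a sequence of elements of $S_i$, each a cause of the next, so either some element recurs, giving immediately a node that is a cause of itself (a directed cycle, and we are done in that branch), or the sequence eventually enters $S_i\cap S_i'$ and the main chain continues. Because the configuration is finite, this interleaving of chain-progress and repair-detours terminates, yielding $e_1$ as a cause of some $e_1'\in S_1$.

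The final step closes the loop. The preceding construction defines a partial self-map $e_1\mapsto e_1'$ of the finite set $S_1$ with $e_1$ a cause of $e_1'$; iterating it must revisit some element, and transitivity of ``cause'' then makes that element a cause of itself, so $\mathcal{G}$ contains a directed cycle and is cyclic, as claimed. The same template covers both types, the only difference being whether the return chain is supplied for a single subset $s_2$ (ACL7, in which case the very first node $(S_2,S_2')$ may itself require repair) or for each element of $S_2$ individually (ACL8, where the element $e_2$ caused by $e_1$ comes equipped with its own chain $\mathscr{C}_{e_2}$). The argument parallels the treatment of ACL5 and ACL6 in Theorem~\ref{theorem: cycles}, the genuinely new ingredient being the repair condition that tames the incomplete nodes.
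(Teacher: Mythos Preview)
Your proposal is correct and follows essentially the same route as the paper's proof: both apply Corollary~\ref{corollary:HOaffectsCause2} to each irreducible link, trace the (possibly incomplete) return chain element-wise, use the complete repair chain $\mathscr{D}$ at incomplete nodes, and close the loop via a pigeonhole argument on the finite set $S_1$. The paper phrases the ``repair lands back in the incomplete part'' case by saying that $\mathscr{D}$ then behaves like an ACL5 and invokes Theorem~\ref{theorem: cycles}, whereas you spell out the iteration-until-recurrence explicitly; your version is slightly more careful here, but the underlying argument is the same, and both treatments of the ACL7/ACL8 distinction (the first node $(S_2,S_2')$ possibly needing repair versus a per-element chain) coincide.
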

More generally, for a given affects chain $\mathscr{C}_{s_2}$ in Definition~\ref{def: ACL7}, and an incomplete node $(S_i,S_i')$ in $\mathscr{C}_{s_2}$, instead of a single complete affects chain $\mathscr{D}^{\mathscr{C}}_{s_2}$ we could consider a set of incomplete affects chains that serve the same purpose and for which an analogous theorem holds. For example, for each incomplete node $(S_i,S_i')$ of $\mathscr{C}_{s_2}$, there can exist an incomplete affects chain $\mathscr{D}^{\mathscr{C}}_{s_2}$ in $\mathscr{A}$ from $S_i\backslash (S_i\bigcap S_i')$ to $S_i$, such that for each incomplete node $(R_j,R_j')$ of $\mathscr{D}^{\mathscr{C}}_{s_2}$, there exists another complete affects chain in $\mathscr{A}$ from $R_j\backslash (R_j\bigcap R_j')$ to $R_j$. This could go on recursively for arbitrarily many chains depending on the number of RVs appearing in $\mathscr{A}$. This recursive definition defines yet another class ACL9, and an analogous recursive version of ACL8 would define another class ACL10. Theorem~\ref{lemma: CyclesAppendix} for ACL9 and ACL10 follows through similar arguments, so we note this point without proof. We illustrate these new classes with some examples, along with an example to show that these (ACL1-ACL10) do not cover all possible affects causal loops.

\begin{example}[A Type 7 affects causal loop]
\label{example: ACL7}
Consider the set of irreducible affects relations $\mathscr{A}=\{X$ affects $Y$, $Y$ affects $AB$, $A$ affects $X$, $C$ affects $AB$, $B$ affects $C\}$. One can check that  $\mathscr{A}$ does not contain affects causal loops of Types 1 to 6, since no affects relation in  $\mathscr{A}$ is such that every element of the second set has a complete affects chain leading it back to the first set. It however contains at least one Type 7 affects loop. For the affects relation $Y$ affects $AB$, we have the incomplete chain $\mathscr{C}_A=\{A$ affects $X$, $X$ affects $Y\}$ that connects $A$ to $Y$ with the incomplete node $(S_2=AB,S_2'=A)$. Then $S_2\backslash (S_2\bigcap S_2')=\{B\}$ and we have the complete affects chain $\mathscr{D}^{\mathscr{C}}_A=\{B$ affects $C$, $C$ affects $AB\}$ that connects $S_2\backslash (S_2\bigcap S_2')$ to $S_2$ as required.
\end{example}

\begin{example}[A Type 9 affects causal loop]
\label{example: ACL9}
Consider the set of irreducible affects relations $\mathscr{A}=\{X$ affects $Y$, $Y$ affects $AB$, $A$ affects $X$, $C$ affects $AB$, $B$ affects $CD$, $D$ affects $E$, $E$ affects $CD\}$. This set is similar to the previous example, but does not contain a Type 7 causal loop (or ACLs of any lower types). It does contain a Type 9. For the affects relation $Y$ affects $AB$, there is an incomplete chain $\mathscr{C}_A=\{A$ affects $X$, $X$ affects $Y\}$ that connects $A$ to $Y$ as before. However, we have no complete chains from $S_2\backslash (S_2\bigcap S_2')=\{B\}$ to $S_2=AB$ as before, only the incomplete chain $\mathscr{D}^{\mathscr{C}}_A=\{B$ affects $CD$, $C$ affects $AB\}$. The incomplete node $(R_j,R_j')$ of $\mathscr{D}^{\mathscr{C}}_A$ is $(R_j=CD, R'_j=C)$ and we have a complete affects chain $\{D$ affects $E$, $E$ affects $CD\}$ from $R_j\backslash (R_j\bigcap R_j')=\{D\}$ to $R_j$.
\end{example}

\begin{example}[An affects causal loop not covered by Types 1 to 10]
\label{example: ACL11}
Consider the set of irreducible affects relations $\mathscr{A}=\{X$ affects$Y$, $Y$ affects $AB$, $A$ affects $X$, $C$ affects $AB$, $B$ affects $CD$, $BD$ affects $AC\}$. With some effort, one can see that $\mathscr{A}$ does not contain affects causal loops of Types 1--10. It nevertheless implies cyclicity, as follows. Applying Corollary~\ref{corollary:HOaffectsCause2}, we have that $X$ is a cause of $Y$, $Y$ is either a cause of $A$ or of $B$ and $A$ is a cause of $X$. If $Y$ is a cause of $A$, we already have a directed cycle, so consider the case where $Y$ is a cause of $B$. Using the remaining affects relations, we have $C$ is a cause of either $A$ or $B$, $B$ is a cause of either $C$ or $D$. Irrespective of whether $C$ is a cause of either $A$ or $B$, if $B$ is a cause of $C$, we would have a directed cycle, so we must take $B$ to be a cause of $D$ to avoid this. The last affects relation implies that $B$ is either a cause of $A$ or of $C$. Irrespective of the choice here and the choice of whether $C$ is a cause of $A$ or of $B$, we can verify that there will always be a directed cycle. Hence this set of affects relations is an affects causal loop that is not of a previously defined Type.
\end{example}

Consider now, the space-time embedding for the affects relations of Example~\ref{example: ACL7} in Minkowski space-time. Imposing $\textbf{compat}$ (Definition~\ref{definition: compatposet}) on the affects relations $\mathscr{A}=\{X$ affects $Y$, $Y$ affects $AB$, $A$ affects $X$, $C$ affects $AB$, $B$ affects $C\}$ implies that $\cY$ must contain the joint inclusive future of $\cA$ and $\cB$ but $\cA$ is in the past of $\cX$ which is in the past of $\cY$. The only way this can be satisfied is if $\cB$ is in the future of $\cA$ such that the joint inclusive future of $\cA$ and $\cB$ coincides with the inclusive future of $\cB$. The last two affects relations then imply that $\cB$ and $\cC$ must be embedded at the same location and since we have $B$ affects $C$, this embedding is trivial. This implies that there is no non-trivial and compatible embedding of these affects relations in Minkowski space-time. In other words, the absence of affects causal loops of Types 1-6 does not guarantee the existence of a non-trivial and compatible space-time embedding. The presence of Type 3 and above ACLs  does not rule out the existence of such an embedding as we have seen in Example~\ref{example: main}, in contrast to the case of Type 1 and 2 ACLs (Lemma~\ref{lemma: TrivEmbedding}). This suggests that for each individual Type $i$ of affects causal loops other than Types 1 and 2, the existence of a non-trivial and compatible space-time embedding is neither necessary nor sufficient for there to be no affects causal loops of that Type. By Lemma~\ref{lemma: DegenEmbedding}, for Type 3, the existence of a non-degenerate and compatible space-time embedding is sufficient but not necessary to rule out ACL3.

\section{Do-conditionals from causal mechanisms in quantum cyclic causal models}
\label{appendix: doMechanisms}

In Section~\ref{sec: causmod} we outlined how interventions and do-conditionals (i.e., the post intervention distribution) are defined in our framework, and Theorem~\ref{theorem:dorules} provides some conditions under which the post and pre intervention distributions can be related. Ideally though, one would expect that it should be possible to fully specify the post-intervention distribution if we are given all the underlying causal mechanisms involved in the causal structure. For example, in the classical case, the structural equations of the causal model~\cite{Pearl2009} provide these causal mechanisms. Here for each node $X$ in the causal structure, the dependence of $X$ on its parents par$(X)$ corresponds to a stochastic map, which can be written in terms of a deterministic function $X=f_X(\text{par}(X),E_X)$ by including an additional exogenous random variable $E_X$ for each node $X$. This is called a \emph{structural equation}. If the structural equations for all the nodes and the distributions of the parentless nodes are known, then the complete post-intervention distribution can be calculated. This has been shown to be the case for classical cyclic causal models in~\cite{Forre2017}. An intervention $\mathrm{do}(x)$ on $X$, corresponds to updating the structural equation for $X$ to $X=x$ while keeping the remaining structural equations the same. Another important result for the classical case derived in~\cite{Forre2017} is that the d-separation property or the \emph{global directed Markov condition} of Definition~\ref{definition: compatdist} is recovered whenever all the random variables are discrete and the structural equations of the causal model satisfy a property known as \emph{ancestrally unique solvability} (auSEP). Roughly, this property demands that the structural equation for each node must admit a unique solution given the values of the node's ancestors. We need not define this concept formally for our purposes here.

Ideally we would like to extend these ideas to quantum and post-quantum cyclic causal structures, where the causal mechanisms involve measurements and transformations on non-classical systems, which cannot be expressed using deterministic structural equations. In the non-classical case, it is unclear what conditions allow for the d-separation condition to be recovered. Even to make this question precise in the non-classical case, one would need to specify the analog of structural equations for such causal models which is an open problem. Here we present a possible method for defining general cyclic causal models from given (possibly non-classical) causal mechanisms and for calculating observed and interventional distributions in the model, without assuming d-separation. We explain the method using the following example before sketching how it might generalize to a larger class of causal models.

\begin{example}[A quantum cyclic causal model]
\label{example: qloop}
Consider the cyclic variation of the bipartite Bell causal structure illustrated in Figure~\ref{fig: Qcycle1}. Let the common cause $\Lambda$ correspond to the Bell state $\ket{\psi_{\Lambda}}=\frac{1}{\sqrt{2}}(\ket{00}+\ket{11})$. Suppose that $A$ and $B$ are the settings of local measurements on the two subsystems such that when these variables take the value $0$, it denotes a measurement in the $\{\ket{0},\ket{1}\}$ basis on the associated subsystem, and the value $1$ denotes a measurement in the $\{\ket{+},\ket{-}\}$ basis. $X$ and $Y$ are the binary outcomes of these measurements where $\ket{0}$ or $\ket{+}$ correspond to outcome 0 and $\ket{1}$ or $\ket{-}$ correspond to 1. The additional constraints coming from the causal loop are that $B=X$ and $A=Y$. This specifies all the causal mechanisms, how do we calculate the observed distribution $P_{XYAB}$?
\end{example}

\paragraph{A method based on post-selection: } One method is to first calculate the observed correlations for the specified state and measurements in the original Bell scenario (Figure~\ref{fig: Bell2}), and then post-select on the observations that obey the loop conditions $B=X$ and $A=Y$. More formally, this corresponds to transforming the original cyclic causal structure of Figure~\ref{fig: Qcycle1} to the acyclic causal structure of Figure~\ref{fig: Qcycle2} by cutting off the edges $A\longrsquigarrow X$ and $B\longrsquigarrow Y$ and replacing them with the edges $A^*\longrsquigarrow X$ and $B^*\longrsquigarrow Y$ by introducing two exogenous nodes $A^*$ and $B^*$. Then the inputs $A^*$ and $B^*$ and outputs $X$ and $Y$ along with the shared system $\Lambda$ define a Bell scenario, while the variables $A=Y$ and $B=X$ can simply be seen as local post-processings of the outcomes. We can then calculate the observed probabilities for this acyclic causal structure using the Born rule, and post-select on $A^*=A$ and $B^*=B$, which effectively achieves the post-selection $A=Y$ and $B=X$ in the original Bell scenario (Figure~\ref{fig: Bell2}). This distribution needs to be renormalised to obtain the observed distribution $P_{XYAB}$. This is calculated in Figure~\ref{fig: Qcycle} and can be used to find all the affects relations. An intervention on $A$ would cut off the arrow from $Y$ to $A$. This means $A$ does not affect $X$ since $A$ is effectively exogenous in the post-intervention causal structure and will be uncorrelated with $X$ since $\Lambda$ is the maximally entangled state. Similarly $B$ does not affect $Y$. However, $AB$ affects $XY$ since a joint intervention on $A$ and $B$ takes us back to the original Bell scenario in which these sets are correlated, and correlation in the post-intervention causal structure implies an affects relation (cf.\ Lemma~\ref{lemma: correl-affects}) and it can be checked that this affects relation is irreducible. We also have $X$ affects $B$ and $Y$ affects $A$ due to the loop conditions $A=Y$ and $B=X$. In addition, $XY$ affects $AB$, which is also irreducible. With a bit more effort, one can also check that we have $A$ affects $Y$ and $B$ affects $X$. Therefore we have two Type 1 affects causal loops (Definition~\ref{def: ACL1}) $\{A$ affects $Y,$ $Y$ affects $A\}$, and similarly for $B$ and $X$. We also have a Type 4 affects causal loop (Definition~\ref{def: ACL4}) formed by the irreducible relations $\{AB$ affects $XY$, $XY$ affects $AB\}$. In this example, $\cG_{\mathrm{do}(A,B)}$ corresponds to a quantum causal structure (the Bell scenario) while $\cG_{\mathrm{do}(X,Y)}$ is a simple classical causal structure. Then the (observed) arrows $\longrsquigarrow$ of $\cG$ can be classified into dashed and solid arrows as: $A\xdashrightarrow{} X$, $B\xdashrightarrow{} Y$, $X\longrightarrow B$ and $Y\longrightarrow A$. 
The post-intervention distribution is fully specified here because, all interventions (except that on $\Lambda$ alone) are associated with acyclic post-intervention graphs and for interventions on the exogenous $\Lambda$, the post and pre-intervention distributions coincide. \medskip

 \paragraph{Applying the method to fine-tuned explanations of non-classical correlations: } It is known that certain non-classical correlations arising in the bipartite Bell causal structure cannot be obtained in the same causal structure if the common cause $\Lambda$ was classical. However, these correlations can be easily generated in the classical, fine-tuned causal structure of Figure~\ref{fig: Qcycle3}, which differs from the original causal structure by the inclusion of fine-tuned causal influences from each party's input to the other party's output. We now explain how this is achieved and then apply the post-selection method explained above to create a causal loop in Figure~\ref{fig: Qcycle3} by adding $X\longrightarrow B$ and $Y\longrightarrow A$. This will demonstrate that, even though the same non-classical correlations and affects relations can be obtained in the original Bell causal structure and its fine-tuned classical counterpart~\ref{fig: Qcycle3}, the two causal structures behave differently in the presence of causal loops.

 First consider the PR box, which is one of the maximally non-classical correlations of the Bell causal structure. It is defined by the condition $X\oplus Y=A.B$ where all the variables are binary. This is easily generated in the classical causal structure of Figure~\ref{fig: Qcycle3} by the structural equations $\Lambda=E$, $Y=E$ and $X=E\oplus A.B$ (where $E$ is binary and uniformly distributed). Other non-classical correlations can be obtained by adding some ``noise'' to this PR box example. Let $\Lambda=(E,F)$ correspond to two variables $E$ and $F$ both binary, and the former distributed uniformly. Then the structural equations $Y=E$ and $X=E\oplus F \oplus A.B$ for different distributions over the exogenous variable $F$ correspond to the PR box mixed with different levels of noise.\footnote{Note that the model can be symmetrised by including an additional, uniformly distributed binary variable $G$ in the description of $\Lambda=(E,F,G)$ and using the structural equations $ X=E\oplus (G\oplus 1)(A.B\oplus F)$ and $Y=E\oplus G(A.B\oplus F)$.}
  \begin{align}
 \label{eq: Bellfinetune}
     \begin{split}
         X&=A.B\oplus E\oplus F,\\
          Y&=E.
     \end{split}
 \end{align}
Therefore, the causal mechanisms that allow us to produce non-classical correlations $P_{XYAB}$ in the acyclic causal structure~\ref{fig: Qcycle3} are the functional dependences~\eqref{eq: Bellfinetune} along with a specification of the distributions over the exogenous variables $E$ and $F$ that constitute $\Lambda$. $E$ is uniform while $F$ can vary depending on the correlation to be generated. We now construct the causal loop by including the additional arrows $X\longrsquigarrow B$ and $Y\longrsquigarrow A$ and by effectively post-selecting on the loop condition $A=Y$ and $B=X$. These, along with the causal mechanisms~\eqref{eq: Bellfinetune} of the acyclic case define the mechanisms for the cyclic causal structure.  
We will now see that these causal mechanisms are incompatible with each other. We have $Y=E$, $X=E\oplus F\oplus A.B$, $A=Y$ and $B=X$, which gives $X=E.X\oplus E\oplus F$ and $Y=E$. Therefore for $(E,F)=(0,0)$, we have $(X,Y)=(0,0)$ and for $(E,F)=(0,1)$ we have $(X,Y)=(1,0)$. However for $(E,F)=(1,0)$ we get $X=X\oplus 1$ which does not have a solution. For $(E,F)=(1,1)$ we get $X=X$ which is not a unique solution. Therefore if we demand unique solvability, we must require $E=0$ deterministically which contradicts the initial assumption that $E$ is uniform. Even if we do not require uniqueness, we can not have $(E,F)=(1,0)$ and forbidding this would make $E$ and $F$ correlated and non-uniform. 

Therefore, in the classical, fine-tuned explanation of the Bell correlations, adding the loop is not consistent with the causal mechanisms that generate the non-classical correlations in the absence of the loop--- in particular, they are in conflict with the preparation of the exogenous variable $\Lambda$. If we have a consistent loop, then intervention on $A$ and $B$ will no longer recover the original non-classical correlations. This is in contrast to the faithful case analysed in Figure~\ref{fig: Qcycle} (and explained previously in the text), when $\mathrm{do}(A,B)$ gives back the non-classical correlations of the Bell scenario. This suggests that certain (non-local) hidden variable explanations for quantum correlations (in a Bell experiment) can in principle be distinguished from the explanation provided by standard quantum mechanics in the presence of causal loops. We have only shown this for a particular set of functions or causal mechanisms for generating the former and it would be interesting to consider if this generalises, in particular to causal mechanisms provided by Bohmian mechanisms~\cite{Bohm1952}, a non-local hidden variable theory.

\afterpage{\FloatBarrier}
\begin{figure}[t!]
    \centering
\subfloat[]{\includegraphics[]{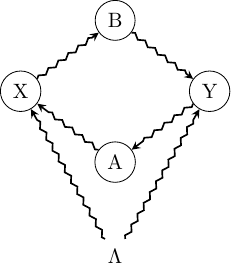}\label{fig: Qcycle1}}\qquad\subfloat[]{\includegraphics[]{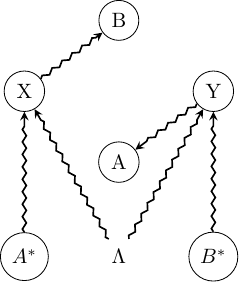}\label{fig: Qcycle2}}\qquad\subfloat[]{\includegraphics[]{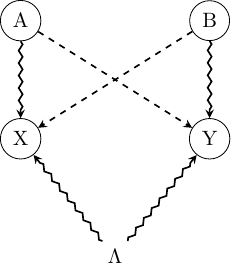}\label{fig: Qcycle3}}\qquad\qquad\subfloat[]{\includegraphics[]{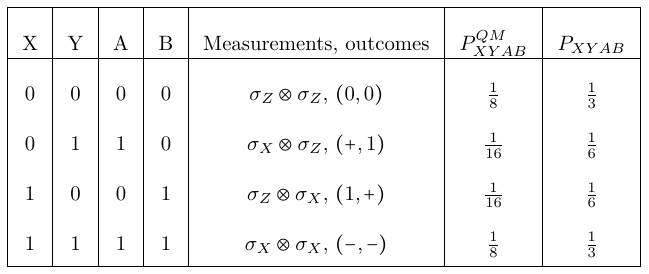}\label{fig: Qcycle4}}
    \caption[A cyclic quantum causal model]{\textbf{A cyclic quantum causal model: } \textbf{(a)} A cyclic variation of the bipartite Bell causal structure (Figure~\ref{fig: Bell2}). \textbf{(b)} A method to calculate the observed distribution of (a) when $\Lambda$ is non-classical involves this intermediate causal structure. This is obtained from (a) by copying the nodes $A$ and $B$ and removing the directed cycle as shown. This gives an acyclic causal structure for which the distribution $P_{XYABA^*B^*}$ can be calculated using known methods. Then, post-selecting on $A=A^*$ and $B=B^*$ gives the distribution $P_{XYAB}$ for the original cyclic causal structure of (a). \textbf{(c)} A classical causal, fine-tuned structure that can generate, all non-classical correlations of the bipartite Bell causal structure. Creating a causal loop in this case by adding the arrows $X\longrsquigarrow B$ and $Y\longrsquigarrow A$ does not lead to the same predictions as (a), which corresponds to adding these arrows to the original Bell causal structure. This method explained in the main text. \textbf{(d)} The table provides the observed distribution for Example~\ref{example: qloop} calculated using the proposed method. The only values of $A$, $B$, $X$ and $Y$ that are compatible with the loop conditions $A=Y$ and $B=X$ are those listed here, and the fifth column lists the measurements and outcomes that these values correspond to, according to Example~\ref{example: qloop}. $P_{XYAB}^{QM}$ denotes the probabilities of the measurements and outcomes listed in the fifth column calculated using the Born rule. These values are sub-normalised, and upon renormalisation, the observed distribution $P_{XYAB}$ for the cyclic causal structure (a) is obtained. Note that the d-separation condition~\ref{definition: compatdist} is satisfied in this case.}
\label{fig: Qcycle}
\end{figure}

\paragraph{Generalising to other causal structures: } The idea behind the post-selection method employed for Example~\ref{example: qloop} above can in principle be generalised to other non-classical, cyclic causal structures where every directed cycle includes at least one edge $W\longrsquigarrow Z$ connecting classical nodes $W$ and $Z$. The intuition is that cutting off such an edge in every directed cycle and replacing it with an edge $W^*\longrsquigarrow Z$, by introducing an additional, exogenous variable $W^*$ would result in a directed acyclic graph (DAG). One can then apply the generalised causal model framework of~\cite{Henson2014} to obtain the observed distribution in this DAG and then post-select on $W=W^*$ for all the edges that were cut off. Then a way to recover the d-separation condition (using the result of~\cite{Forre2017}) would be to check whether these exists a classical causal model for the same cyclic causal structure that produces identical observed correlations and satisfies the auSEP property. Note that this classical causal model need not yield the same post-intervention distributions. In the example of Figure~\ref{fig: Qcycle1}, an intervention on $A$ and $B$ gives the Bell scenario, which as we know produces non-classical correlations that cannot be obtained in the corresponding classical causal model~\cite{Wood2015}. Finally, it would be interesting to compare this method with the framework of post-selected closed time-like curves~\cite{Lloyd2011}. 

\begin{remark}
We note that assuming the d-separation condition of Definition~\ref{definition: compatdist} as a primitive property of the framework rules out certain cyclic causal structures from being described in our current framework. In the classical case, these are precisely those cyclic causal models that do not satisfy auSEP or those involving continuous random variables (due to the result of~\cite{Forre2017}). An example of such a causal model is given in~\cite{Neal2000}, and~\cite{Forre2017} proposes a generalisation of d-separation called $\sigma-$separation through which they derive a \emph{generalised global directed Markov} condition that applies to classical causal models involving continuous variables and/or do not satisfy auSEP. This reduces to d-separation in the acyclic case. Therefore, one option would be to replace d-separation with $\sigma$-separation in Definition~\ref{definition: compatdist} to generalise our framework for cyclic causal models. Doing so would not affect the results of the main paper, but would only enlarge the class of causal models to which they can be applied.
\end{remark}

\section{Proofs of all results}
\label{appendix: ProofsJamming}

\subsection{Proofs of Lemma~\ref{lemma: CI} and Theorem~\ref{theorem:dorules}}
\label{appendix: proofs1}
\CI*

\begin{proof}
The conditional independence $S_1\indep S_2|S_3$ stands for $P_{S_1S_2|S_3}=P_{S_1|S_3}P_{S_2|S_3}$, which implies
\begin{equation}
\label{eq: lemma11}
    P_{S_1|S_2S_3}=P_{S_1|S_3}.
\end{equation}
The three d-separation relations $S\perp^d S_i$ for $i\in \{1,2,3\}$ imply that $S$ is d-separated from every subset of the union $S_1 S_2 S_3$. This implies the following independences by Definition~\ref{definition: compatdist} of compatibility of the distribution $P$ with the causal model represented by $\mathcal{G}$,
\begin{equation}
    \label{eq: lemma12}
    P_{S|S'}=P_S\quad \forall S'\subseteq S_1 S_2  S_3.
\end{equation}
Now consider the conditional distribution $P_{S_2|SS_1S_3}$. We have,
\begin{align}
\label{eq: lemma13}
    \begin{split}
       P_{S_2|SS_1S_3}&=\frac{P_{S_2SS_1S_3}}{P_{SS_1S_3}} \\
       &=\frac{P_{S_3}P_{S_2|S_3}P_{S_1|S_2S_3}P_{S|S_1S_2S_3}}{P_{SS_1S_3}}\\
       &=\frac{P_{S_3}P_{S_2|S_3}P_{S_1|S_3}P_{S}}{P_{S}P_{S_1S_3}}\\
       &=P_{S_2|S_3},
    \end{split}
\end{align}
where we have used Equations~\eqref{eq: lemma11} and~\eqref{eq: lemma12} in the third line, noting that $P_{S|S_1S_3}=P_S\Rightarrow P_{SS_1S_3}=P_{S}P_{S_1S_3}$. Equation~\eqref{eq: lemma13} is equivalent to $P_{SS_1S_2|S_3}=P_{SS_1|S_3}P_{S_2|S_3}$ which denotes the conditional independence $S S_1\indep S_2|S_3$. The conditional independence $S_1\indep S S_2|S_3$ can be derived analogously due to the symmetry between $S_1$ and $S_2$. 

Finally, we have
\begin{align}
        P_{S_2|SS_3}=\frac{P_{S_2SS_3}}{P_{SS_3}}=\frac{P_{S_3}P_{S_2|S_3}P_{S|S_2S_3}}{P_{S}P_{S_3}}=P_{S_2|S_3},
\end{align}
and similarly $P_{S_1|SS_3}=P_{S_1|S_3}$. Together with Equation~\eqref{eq: lemma13}, this implies $P_{S_2|SS_1S_3}=P_{S_2|SS_3}$. This is equivalent to $P_{S_1S_2|SS_3}=P_{S_1|SS_3}P_{S_2|SS_3}$ which denotes the final conditional independence $S_1\indep S_2|S S_3$.
\end{proof}

\DoRules*

\begin{proof}
\textbf{Rule 1:} We first note that the graph $\cG_{\mathrm{do}(X)}$ differs from $\cG_{\overline{X}}$ only by the inclusion of the additional nodes $I_{X_i}$ and corresponding edge $I_{X_i}\longrsquigarrow X_i$ for each $X_i\in X$. Therefore, the d-separation relation $(Y\perp^d Z|X W)_{\cG_{\overline{X}}}$  for the latter implies the same relation $(Y\perp^d Z|XW)_{\cG_{\mathrm{do}(X)}}$ for the former. 
Using the compatibility condition of Definition~\ref{definition: compatdist} for the graph $\cG_{\mathrm{do}(X)}$, this implies the conditional independence of $Y$ and $Z$ given $XW$ for the distribution $P_{\cG_{\mathrm{do}(X)}}$ i.e.,   $P_{\cG_{\mathrm{do}(X)}}(y,z|x,w)=P_{\cG_{\mathrm{do}(X)}}(y|x,w)P_{\cG_{\mathrm{do}(X)}}(z|x,w)$. This conditional independence is equivalently expressed by the required Equation~\eqref{eq: rule1}.

\bigskip
\textbf{Rule 2:} $\cG_{\overline{X},\underline{Z}}$ is the graph where all incoming arrows to $X$ and outgoing arrows from $Z$ are removed in $\cG$. Hence, the d-separation condition $(Y\perp^d Z|XW)_{\cG_{\overline{X},\underline{Z}}}$ implies that the only paths between $Y$ and $Z$ in the graph $\cG_{\overline{X}}$ that are not blocked by $X$ and $W$ are paths involving an outgoing arrow from $Z$. These are precisely the paths that get removed in going from $\cG_{\overline{X}}$ to $\cG_{\overline{X},\underline{Z}}$, resulting in the required d-separation there. 
The same statement holds for the graph $\cG_{\mathrm{do}(X)}$ (by the argument used in the proof of Rule 1), and also for the graph $\cG_{\mathrm{do}(X), I_Z}$ which corresponds to adding the nodes $I_{Z_i}$ and edges $I_{Z_i}\longrsquigarrow Z_i$ to $\cG_{\mathrm{do}(X)}$ for each $Z_i\in Z$. The latter holds true since the addition of the $I_{Z_i}$ nodes and $I_{Z_i}\longrsquigarrow Z_i$ edges cannot create any additional paths between $Z$ and $Y$ that are left unblocked by $X$ and $W$. This implies that the only paths between $Y$ and the set $I_Z:=\{I_{Z_i}\}_i$ not blocked by $X$ and $W$ in $\cG_{\mathrm{do}(X), I_Z}$ are paths from $I_Z$, going through $Z$ and involving an outgoing arrow from $Z$ i.e., paths involving the subgraph $I_Z\longrsquigarrow Z \longrsquigarrow\ldots$. All these paths would get blocked when conditioning additionally on $Z$. This gives $(Y\perp^d I_Z|XWZ)_{\cG_{\mathrm{do}(X), I_Z}}$, which through the compatibility condition (Definition~\ref{definition: compatdist}) implies the conditional independence $(Y\indep I_Z|XWZ)_{\cG_{\mathrm{do}(X), I_Z}}$, equivalently expressed as
\begin{equation}
    \label{eq:rule2proof}
    P_{\cG_{\mathrm{do}(X), I_Z}}(y|x,w,z,I_Z=\mathrm{idle})=P_{\cG_{\mathrm{do}(X), I_Z}}(y|x,w,z,I_Z=\mathrm{do}(z)) \quad \forall y,x,w,z.
\end{equation}

Using Equations~\ref{eq: intervention1} and~\ref{eq: intervention2}, we have $P_{\cG_{\mathrm{do}(X), I_Z}}(y|x,w,z,I_Z=\mathrm{idle})=P_{\cG_{\mathrm{do}(X)}}(y|x,w,z)$ and $P_{\cG_{\mathrm{do}(X), I_Z}}(y|x,w,z,I_Z=\mathrm{do}(z))=P_{\cG_{\mathrm{do}(X Z)}}(y|x,w,z)$ respectively $\forall y,x,w,z$. Along with Equation~\eqref{eq:rule2proof}, this gives the required Equation~\ref{eq: rule2}. In other words, once $X$, $W$ and $Z$ are given, $Y$ does not depend on whether the given value $z$ of $Z$ was obtained through an intervention ($I_Z=\mathrm{do}(z)$) or passive observation (i.e., where $I_{Z_i}=\mathrm{idle}$ for all $i$, which is the causal model where no interventions are made on elements of $Z$).

\bigskip
\textbf{Rule 3:} Consider the graph $\cG_{\mathrm{do}(X)I_Z}$ which is the post-intervention graph with respect to the nodes $X$ augmented with $I_{Z_i}\longrsquigarrow Z_i$ for all $Z_i \in Z$. In this graph, suppose we had the d-separation relation $(Y\perp^d I_Z|XW)_{\cG_{\mathrm{do}(X)I_Z}}$. By Definition~\ref{definition: compatdist}, this would result in the conditional independence $(Y\indep I_Z|XW)_{\cG_{\mathrm{do}(X)I_Z}}$ which can be expressed as
$$P_{\cG_{\mathrm{do}(X)I_Z}}(Y|W,X,I_Z=\mathrm{idle})=P_{\cG_{\mathrm{do}(X)I_Z}}(Y|W,X,I_Z=\mathrm{do}(z))\quad \forall z$$
Using the defining rules~\eqref{eq: intervention1} and \eqref{eq: intervention2} then gives $P_{\cG_{\mathrm{do}(X)I_Z}}(Y|W,X,I_Z=\mathrm{idle})=P_{\cG_{\mathrm{do}(X)}}(Y|W,X)$ and $P_{\cG_{\mathrm{do}(X)I_Z}}(Y|W,X,I_Z=\mathrm{do}(z))=P_{\cG_{\mathrm{do}(X Z)}}(Y|W,X, Z=z)$ $\forall z$, and consequently $P_{\cG_{\mathrm{do}(X Z)}}(Y|W,X,Z)=P_{\cG_{\mathrm{do}(X)}}(Y|W,X)$ which is the required Equation~\eqref{eq: rule3}. Therefore, showing that the d-separation condition $(Y\perp^d Z|XW)_{\cG_{\overline{X},\overline{Z(W)}}}$ implies the d-separation relation $(Y\perp^d I_Z|XW)_{\cG_{\mathrm{do}(X)I_Z}}$ would complete the proof. This is shown by contradiction. Suppose that $(Y\perp^d Z|XW)_{\cG_{\overline{X},\overline{Z(W)}}}$ and $(Y\not\perp^d I_Z|XW)_{\cG_{\mathrm{do}(X)I_Z}}$. Then there must exist a path from a member $I_{Z_i}$ of $I_Z$ to a member $Y_j$ of $Y$ in $\cG_{\mathrm{do}(X)I_Z}$
that is unblocked by $X$ and $W$. There are two possibilities for such a path: either it contains the subgraph $I_{Z_i}\longrsquigarrow Z_i \longrsquigarrow \ldots Y_j$ or the subgraph $I_{Z_i}\longrsquigarrow Z_i \longlsquigarrow \ldots Y_j$. Denoting these possibilities as cases 1 and 2 respectively, let $\mathscr{P}$ be the shortest such path. We will show that a contraction arises in each case.

 \emph{Case 1:} Consider the first case where $\mathscr{P}$ contains the subgraph $I_{Z_i}\longrsquigarrow Z_i \longrsquigarrow \ldots Y_j$. Note $(Y\not\perp^d I_Z|XW)_{\cG_{\mathrm{do}(X)I_Z}}$ (which we have assumed) implies $(Y\not\perp^d Z_i|XW)_{\cG_{\mathrm{do}(X)}}$. Along with the assumption that $(Y\perp^d Z_i|XW)_{\cG_{\overline{X},\overline{Z(W)}}}$, this implies that there exists a path from $Z_i$ to $Y$ in $\cG_{\mathrm{do}(X)}$ unblocked by $X$ and $W$ that passes through some member $Z_k$ of $Z(W)$ which would blocked when the incoming arrows to $Z_k$ are removed. This leads to the following subcases where the path from $Z_i$ to $Y_j$ in $\cG_{\mathrm{do}(X)}$ contains the following subgraphs:
    \begin{itemize}
        \item \emph{Case 1a:} $Z_i\longrsquigarrow\ldots\longrsquigarrow Z_k\longlsquigarrow\ldots Y_j$ or
        \item \emph{Case 1b:}$Z_i\longrsquigarrow\ldots\longlsquigarrow Z_k\longlsquigarrow\ldots Y_j$ or
         \item \emph{Case 1c:} $Z_i\longrsquigarrow\ldots\longrsquigarrow Z_k\longrsquigarrow\ldots Y_j$
    \end{itemize}

None of these can occur for the following reasons. In \emph{Case 1a}, some descendant of $Z_k$ must be in $W$ for the path to be unblocked in $\cG_{\mathrm{do}(X)}$ but by definition, $Z(W)$ (which contains $Z_k$) is the set of all nodes in $Z$ that do not have descendants in $W$. In \emph{Case 1b}, the path between $Z_i$ and $Z_k$ must contain a collider. For this path to be unblocked by $X$ and $W$ in $\cG_{\mathrm{do}(X)}$ the collider node must have a descendant in $W$ but the other requirement that this path must be blocked in $\cG_{\overline{X}, \overline{Z(W)}}$ implies that the same collider node must be a member of $Z(W)$ which by definition does not have any descendants in $W$, yielding a contradiction. In \emph{Case 1c}, there is either a directed path from $Z_k\in Z$ to $Y_j\in Y$ in $\cG_{\mathrm{do}(X)}$ or a collider in the path between $Z_k$ and $Y_j$. The latter is ruled out by the same argument used in Case 1b. If there is a directed path from $Z_k$ to $Y_j$, then there is a directed path from $I_{Z_k}$ to $Y_j$ in $\cG_{\mathrm{do}(X)I_Z}$ i.e., there is a path from a member of $Z$ to $Y$ that is unblocked by $X$ and $W$ in $\cG_{\mathrm{do}(X)I_Z}$ and that is shorter than the shortest path $\mathscr{P}$, which is not possible.

Finally, consider \emph{Case 2} where the path $\mathscr{P}$ contains the subgraph $I_{Z_i}\longrsquigarrow Z_i \longlsquigarrow \ldots Y_j$. The initial assumption that $(Y\not\perp^d I_Z|XW)_{\cG_{\mathrm{do}(X)I_Z}}$ implies that the collider node $Z_i$ must have descendants in the conditioning set $W$ i.e., $Z_i\not\in Z(W)$. However, in this case we will violate the assumption that $(Y\perp^d Z|XW)_{\cG_{\overline{X},\overline{Z(W)}}}$. On the other hand, to satisfy this d-separation, we would require $Z_i\in Z(W)$ but this would violate $(Y\not\perp^d I_Z|XW)_{\cG_{\mathrm{do}(X)I_Z}}$. 
Hence we have shown that $(Y\perp^d Z|XW)_{\cG_{\overline{X},\overline{Z(W)}}}$ and $(Y\not\perp^d I_Z|XW)_{\cG_{\mathrm{do}(X)I_Z}}$ can never be simultaneously satisfied and hence that $(Y\perp^d Z|XW)_{\cG_{\overline{X},\overline{Z(W)}}}$ implies $(Y\not\perp^d I_Z|XW)_{\cG_{\mathrm{do}(X)I_Z}}$ which in turn implies the required Equation~\eqref{eq: rule3}.

\end{proof}

\subsection{Proofs of Lemmas~\ref{lemma:HOaffectsCause},~\ref{lemma:HOaffects1},~\ref{lemma:HOaffects2},~\ref{lemma:reduce},~\ref{lemma:reduce2},~\ref{lemma: HOaff_complete} and Corollary~\ref{corollary:HOaffectsCause2}}
\label{appendix: proofs2}
\HOaffectsCause*
\begin{proof}
\begin{enumerate}
    \item We prove this by contradiction. The relation $X$ is not a cause of $Y$ is equivalent to the absence of any directed paths from $X$ to $Y$ in $\cG$ i.e., $(X\perp^d Y)_{\cG_{\mathrm{do}(X)}}$ and consequently $(X\perp^d Y)_{\cG_{\mathrm{do}(X Z)}}$, for any subset $Z$ of observed nodes, pairwise disjoint to $X$ and $Y$. 
    Since $Z$ is effectively exogenous in $\cG_{\mathrm{do}(X Z)}$, $(X\perp^d Y)_{\cG_{\mathrm{do}(X Z)}}$ implies $(X\perp^d Y|Z)_{\cG_{\mathrm{do}(X Z)}}$. Applying Rule 3 of Theorem~\ref{theorem:dorules} (noting the relation between $\cG_{\bar{Z}\bar{X}}$ and $\cG_{\mathrm{do}(X Z)}$) to the latter implies that $P_{\cG_{\mathrm{do}(X Z)}}(Y|X,Z)=P_{\cG_{\mathrm{do}(Z)}}(Y|Z)$ which is equivalent to $X$ does not affect $Y$ given do$(Z)$.

    \item This follows from the first part of Lemma~\ref{lemma: HOaff_complete} (proven later in this appendix) and the first part of this lemma. By Lemma~\ref{lemma: HOaff_complete}, $X$ affects $Y$ given $\{\mathrm{do}(Z),W\}$ implies $X$ affects $YW$ given do$(Z)$, which in turn implies that $X$ must either be a cause of $Y$ or of $W$, by the first part, proven above. 
\end{enumerate}
\end{proof}

\HOaffectsA*
 \begin{proof}
  To establish the lemma, we show that it is not possible to have $Z$ does not affect $Y$ given $W$ \emph{and} $XZ$ does not affect $Y$ given $W$ whenever $X$ affects $Y$ given $\{\mathrm{do}(Z),W\}$. Writing these three conditions out we have
  \begin{subequations}
\begin{equation}
\label{eq:HOaffects1}
    P_{\mathcal{G}_{\mathrm{do}(Z)}}(Y|Z,W)= P_{\mathcal{G}}(Y|W),
\end{equation}
\begin{equation}
\label{eq:HOaffects2}
    P_{\mathcal{G}_{\mathrm{do}(X Z)}}(Y|X,Z,W)= P_{\mathcal{G}}(Y|W),
\end{equation}
\begin{equation}
\label{eq:HOaffects3}
         P_{\mathcal{G}_{\mathrm{do}(X Z)}}(Y|X,Z,W)\neq P_{\mathcal{G}_{\mathrm{do}(Z)}}(Y|Z,W).
\end{equation}
\end{subequations}
Equations~\eqref{eq:HOaffects1} and~\eqref{eq:HOaffects2} imply $P_{\mathcal{G}_{\mathrm{do}(X Z)}}(Y|X,Z,W)=P_{\mathcal{G}_{\mathrm{do}(Z)}}(Y|Z,W)$ in contradiction with Equation~\eqref{eq:HOaffects3}.
\end{proof}

\HOaffectsB*
\begin{proof}
By Lemma~\ref{lemma:HOaffects1}, if
$X$ affects $Y$ given $\{\mathrm{do}(Z),W\}$ then there are only three possibilities 1) $Z$ affects $Y$ given $W$ and $XZ$ does not affect $Y$ given $W$, 2) $Z$ does not affect $Y$ given $W$ and $XZ$ affects $Y$ given $W$, and 3) $Z$ affects $Y$ given $W$ and $XZ$ affects $Y$ given $W$ i.e., the only case where the required conclusion does not follow is 1). Then the proof will be complete if we show that whenever $X$ consists only of exogenous nodes the undesired case does not arise. We show this by establishing that for exogenous $X$, $Z$ affects $Y$ given $W$ implies $XZ$ affects $Y$ given $W$. Suppose by contradiction that $XZ$ does not affect $Y$ given $W$ i.e., $P_{\cG_{\mathrm{do}(X Z)}}(Y|X,Z,W)= P_{\cG}(Y|W)$. By the exogeneity of $X$, this becomes $P_{\cG_{\mathrm{do}(Z)}}(Y|X,Z,W)=P_{\cG}(Y|W)$ or equivalently, $P_{\cG_{\mathrm{do}(Z)}}(Y,X,Z,W)= P_{\cG}(Y|W)P_{\cG_{\mathrm{do}(Z)}}(X,Z,W)$. Summing over values of $X$ and rearranging gives $P_{\cG_{\mathrm{do}(Z)}}(Y|Z,W)= P_{\cG}(Y|W)$ which is equivalent to $Z$ does not affect $Y$ given $W$. Therefore $Z$ affects $Y$ given $W$ implies $XZ$ affects $Y$ given $W$ whenever $X$ is exogenous.
\end{proof}

\Reduce*
\begin{proof}
By definition, if $X$ affects $Y$ given $\{\mathrm{do}(Z),W\}$ is reducible, then there exists a proper subset $s_X$ of $X$ such that $s_X$ does not affect $Y$ given $\{\mathrm{do}(Z \tilde{s}_X),W\}$. We now show that for every such $s_X$, its complement $\tilde{s_X}:=X\backslash s_X$ is such that $\tilde{s}_X$ affects $Y$ given $\{\mathrm{do}(Z),W\}$. We show this by contradiction. Assume that $\tilde{s}_X$ does not affect $Y$ given $\{\mathrm{do}(Z),W\}$ while $X$ affects $Y$ given $\{\mathrm{do}(Z),W\}$ and $s_X$ does not affect $Y$ given $\{\mathrm{do}(Z \tilde{s}_X),W\}$. Explicitly, these correspond to the following conditions, noting that $s_X \tilde{s}_X=X$:
\begin{subequations}
\begin{equation}
\label{eq:reduce1}
    P_{\mathcal{G}_{\mathrm{do}(\tilde{s}_X Z)}}(Y|\tilde{s}_X,Z,W)= P_{\mathcal{G}_{\mathrm{do}(Z)}}(Y|Z,W),
\end{equation}
\begin{equation}
\label{eq:reduce2}
    P_{\mathcal{G}_{\mathrm{do}(X Z)}}(Y|X,Z,W)\neq P_{\mathcal{G}_{\mathrm{do}(Z)}}(Y|Z,W),
\end{equation}
\begin{equation}
\label{eq:reduce3}
     P_{\mathcal{G}_{\mathrm{do}(X Z)}}(Y|X,Z,W) =  P_{\mathcal{G}_{\mathrm{do}(\tilde{s}_X Z)}}(Y|\tilde{s}_X,Z,W).
\end{equation}
\end{subequations}
Equations~\eqref{eq:reduce1} and~\eqref{eq:reduce3} imply that $P_{\mathcal{G}_{\mathrm{do}(X Z)}}(Y|X,Z,W)= P_{\mathcal{G}_{\mathrm{do}(Z)}}(Y|Z,W)$, which contradicts Equation~\eqref{eq:reduce2}.
\end{proof}

\ReduceB*
\begin{proof}
The proof is similar to that of Lemma~\ref{lemma:reduce}. $X_1$ affects $Y$ given $\{\mathrm{do}(Z),W\}$ and $X_2$ does not affect $Y$ given $\{\mathrm{do}(Z X_1),W\}$ are equivalent to
\begin{subequations}
\begin{equation}
     P_{\mathcal{G}_{\mathrm{do}(X_1 Z)}}(Y|X_1,Z,W)\neq P_{\mathcal{G}_{\mathrm{do}(Z)}}(Y|Z,W),
\end{equation}
\begin{equation}
     P_{\mathcal{G}_{\mathrm{do}(X_1 X_2 Z)}}(Y|X_1,X_2,Z,W)= P_{\mathcal{G}_{\mathrm{do}(X_1 Z)}}(Y|X_1,Z,W).
\end{equation}
\end{subequations}
These yield $ P_{\mathcal{G}_{\mathrm{do}(X_1 X_2 Z)}}(Y|X_1,X_2,Z,W)\neq P_{\mathcal{G}_{\mathrm{do}(Z)}}(Y|Z,W)$ which is equivalent to $X_1 X_2$ affects $Y$ given $\{\mathrm{do}(Z),W\}$.
\end{proof}

\HOaffComplete*
\begin{proof}
\begin{enumerate}
    \item We prove this through the contrapositive. Suppose that $X$ does not affect $YW$ given do$(Z)$ i.e.,
    \begin{equation}
        P_{\cG_{\mathrm{do}(X Z)}}(Y,W|X,Z)= P_{\cG_{\mathrm{do}(Z)}}(Y,W|Z)
    \end{equation}
    Summing over values of $Y$ on both sides, we have $P_{\cG_{\mathrm{do}(X Z)}}(W|X,Z)= P_{\cG_{\mathrm{do}(Z)}}(W|Z)$ i.e., $X$ does not affect $W$ given do$(Z)$. Hence
    \begin{equation}
    \label{eq: HOaffComplete1}
    \begin{split}
         P_{\cG_{\mathrm{do}(X Z)}}(Y,W|X,Z)/P_{\cG_{\mathrm{do}(X Z)}}(W|X,Z)&= P_{\cG_{\mathrm{do}(Z)}}(Y,W|Z)/P_{\cG_{\mathrm{do}(Z)}}(W|Z)\\
         \Rightarrow  P_{\cG_{\mathrm{do}(X Z)}}(Y|X,Z,W)&= P_{\cG_{\mathrm{do}(Z)}}(Y|Z,W),
    \end{split}
    \end{equation}
        which is equivalent to $X$ does not affect $Y$ given $\{\mathrm{do}(Z),W\}$.
    \item Suppose that $X$ affects $Y$ given $\{\mathrm{do}(Z),W\}$ is irreducible i.e.,
    \begin{equation}
        P_{\cG_{\mathrm{do}(X Z)}}(Y|X,Z,W)\neq P_{\cG_{\mathrm{do}(\tilde{s}_X Z)}}(Y|\tilde{s}_X,Z,W),\qquad \forall s_X\subset X
    \end{equation}
    where $s_X\tilde{s}_X:=X$. If $X$ affects $YW$ given $\mathrm{do}(Z)$ is reducible, then there exists a partition of $X=s_X\tilde{s}_X$ such that 
    \begin{equation}
        P_{\cG_{\mathrm{do}(X Z)}}(Y,W|X,Z)= P_{\cG_{\mathrm{do}(\tilde{s}_X Z)}}(Y,W|\tilde{s}_X,Z).
    \end{equation}
    As in the proof of part 1., this implies that $P_{\cG_{\mathrm{do}(X Z)}}(Y|X,Z,W)= P_{\cG_{\mathrm{do}(\tilde{s}_X Z)}}(Y|\tilde{s}_X,Z,W)$, which contradicts the first equation. Therefore $X$ affects $Y$ given $\{\mathrm{do}(Z),W\}$ is irreducible implies $X$ affects $YW$ given $\mathrm{do}(Z)$ is irreducible.
    \item For the forward direction, it is again convenient to use the contrapositive, i.e., to show that $X$ does not affect $Y$ given $\{\mathrm{do}(Z),W\}$ and $X$ does not affect $W$ given $\mathrm{do}(Z)$ imply $X$ does not affect $YW$ given $\mathrm{do}(Z)$.  The first two statements are
    \begin{align*}
        P_{\cG_{\mathrm{do}(X Z)}}(Y|X,Z,W)&= P_{\cG_{\mathrm{do}(Z)}}(Y|Z,W)\quad\text{and}\\
        P_{\cG_{\mathrm{do}(X Z)}}(W|X,Z)&= P_{\cG_{\mathrm{do}(Z)}}(W|Z).
    \end{align*}
Multiplying these gives
\begin{align*}
    P_{\cG_{\mathrm{do}(X Z)}}(Y|X,Z,W)P_{\cG_{\mathrm{do}(X Z)}}(W|X,Z)=P_{\cG_{\mathrm{do}(Z)}}(Y|Z,W)P_{\cG_{\mathrm{do}(Z)}}(W|Z)
\end{align*}
which rearranges to
\begin{align*}
    P_{\cG_{\mathrm{do}(X Z)}}(Y,W|X,Z)=P_{\cG_{\mathrm{do}(Z)}}(Y,W|Z),
\end{align*}
which is $X$ does not affect $YW$ given $\mathrm{do}(Z)$.

For the reverse direction, we note that we have shown $X$ affects $W$ given do$(Z)$ implies $X$ affects $YW$ given do$(Z)$ in the proof of part 1 of this lemma. From the main statement of part 1. we also have $X$ affects $Y$ given $\{\mathrm{do}(Z),W\}$ implies $X$ affects $YW$ given do$(Z)$. Therefore we have $X$ affects $Y$ given $\{\mathrm{do}(Z),W\}$ or $X$ affects $W$ given do$(Z)$ implies $X$ affects $YW$ given do$(Z)$. \qedhere
\end{enumerate}
\end{proof}

\HOcauseB*
\begin{proof}
\begin{enumerate}
    \item Given that $X$ affects $Y$ given do$(Z)$ is irreducible, we know that for every $s_X\subset X$, 
    $s_X$ affects $Y$ given do$(Z \tilde{s}_X)$, where $s_X \tilde{s}_X:=X$. In particular, this means that for every element $e_X\in X$,  $e_X$ affects $Y$ given do$(Z \tilde{e}_X)$. Then by using Lemma~\ref{lemma:HOaffectsCause}, we know that $e_X$ is a cause of $Y$, which by Definition~\ref{def: cause} means that there exists a directed path from $e_X$ to at least one element $e_Y\in Y$ which in turn means that $e_X$ is a cause of $e_Y$.
    \item  By parts 1 and 2 of Lemma~\ref{lemma: HOaff_complete}, $X$ affects $Y$ given $\{\mathrm{do}(Z),W\}$ implies $X$ affects $YW$ given do$(Z)$ and the irreducibility of the former implies the irreducibility of the latter which in turn implies (by the first part of the current lemma) that for every $e_X\in X$, there exists $e_{YW}\in YW$ such that $e_X$ is a cause of $e_{YW}$.
\end{enumerate}
\end{proof}

\subsection{Proofs of Theorems~\ref{theorem:poset},~\ref{theorem: cycles} and Lemma~\ref{lemma: NoAffectsLoops}}
\label{appendix: proofs3}
\CompatST*
\begin{proof}
\emph{1. }  If \textbf{compat}$(\cS,\mathscr{A})$ holds then $\cR_{\cX}=\overline{\cF}(\cX)$ for all $\cX\in\cS$. Hence by Definition~\ref{definition: subsets} of accessible regions for sets of ORVs, we have \textbf{compat1$'$($\cS, \mathscr{A}$)}. The remaining affects relations in $\mathscr{A}'$ are all of the form $\cX$ affects $\cX'$ where $\cX'$ is a copy of $\cX$, and so, since the location of $\cX'$ is in $\cR_{\cX}=\overline{\cF}(\cX)$, \textbf{compat1$'$($\cS', \mathscr{A}'$)} also holds.

\bigskip
\noindent
\emph{2. } \textbf{compat1$'$($\cS', \mathscr{A}'$)} when applied to the affects relations of the form $\cX$ affects $\cX'$ when $\cX'$ is a copy of $\cX$ tells us that $\overline{\cF}(\cX')\subseteq\overline{\cF}(\cX)$ for every copy $\cX'$ of $\cX$, while Definition~\ref{definition:accreg} tells us that $\overline{\cF}(\cX')\subseteq\cR_{\cX}$ for every copy $\cX'$ of $\cX$.  If $\cR_{\cX}\nsubseteq\overline{\cF}(\cX)$ then it would be possible for a copy of $X$ to be accessible outside its future, and hence that $\overline{\cF}(\cX')\nsubseteq\overline{\cF}(\cX)$, contradicting \textbf{compat1$'$($\cS', \mathscr{A}'$)}. Therefore $\cR_{\cX}\subseteq\overline{\cF}(\cX)$ must hold.

\end{proof}

\Cycles*
\begin{proof}

Noting that all affects causal loops of Types 1, 2, 3 and 4 are also affects causal loops of Type 5, proving the theorem for ACL5 and ACL6 would imply the required result for ACL1, $\ldots$, ACL6.
\begin{enumerate}
    \item \emph{Proof for ACL5} Applying Corollary~\ref{corollary:HOaffectsCause2} to all affects relations in $S_i\subseteq \hat{S}_i$, $i=1,\ldots,n$ such that $\{\hat{S}_1$ affects $S_2$, $\hat{S}_2$ affects $S_3$, $\ldots$,$\hat{S}_{n-1}$ affects $S_n$, $\hat{S}_n$ affects $S_1\} \subseteq \mathscr{A}$, we know that each element of $\hat{S}_i$ must be a cause of some element of $S_{i+1\!\!\mod n}$. Following the chain, this implies that each element $e^1\in S_1\subseteq \hat{S}_1$ is a cause of some element $e^2\in S_1$. If $e^2=e^1$ we are done.  If not, we can continue the chain from $e^2$ until we return to an element $e^3\in S_1$. If $e^3=e^1$ or $e^3=e^2$ we are done; otherwise we continue. Since $S_1$ is finite, we must eventually return to an element of $S_1$ we already considered, establishing a causal loop.

       \item \emph{Proof for ACL6} Applying Corollary~\ref{corollary:HOaffectsCause2} to the first condition of ACL6 (Definition~\ref{def: ACL6}) we have that for every RV $e^1\in s_1$, there exists an RV $e_2\in S_2$ such that $e^1$ is a cause of $e_2$. Applying the Corollary~\ref{corollary:HOaffectsCause2} to the second condition, we have that $e_2\in S_2\subseteq \hat{S}_2$ must be a cause of some element $e^2\in s_1$. Either $e^1=e^2$ and we are done or we continue the chain as in the proof for ACL5.
\end{enumerate}
\end{proof}

\NoAffectsLoops*
\begin{proof}
\begin{enumerate}
    \item By definition, any set of affects relations that does not contain an affects causal loop is such that the cyclicity of the underlying causal structure is not guaranteed by the affects relations. In other words, it is possible to have the same set of affects relations in a causal model with an acyclic causal structure $\mathcal{G}$. Every causal model over an acyclic causal structure admits a non-trivial space-time embedding since an acyclic causal structure is a directed acyclic graph (DAG) and every DAG implies a partial order. This embedding would be such that the causal arrows $\longrsquigarrow$ of $\cG$ flow from past to future in the embedded space-time, which ensures no signalling outside the space-time's future.
    \item  In faithful causal models, any RV $X$ is a cause of an RV $Y$ if and only if $X$ affects $Y$. [This follows because if $X$ is a cause of $Y$ then $X \nperp Y$ in $\cG_{\mathrm{do}(X)}$. Since faithful, $X\not\indep Y$ in $\cG_{\mathrm{do}(X)}$ and then Lemma~\ref{lemma: correl-affects} gives $X$ affects $Y$. The converse is Lemma~\ref{lemma:HOaffectsCause} (which does not rely on faithfulness).]
    The existence of a causal loop between $X$ and $Y$ corresponds to $X$ being a cause of $Y$ and $Y$ being a cause of $X$ which is equivalent to $X$ affects $Y$ and $Y$ affects $X$. The latter is the definition of a Type~1 affects causal loop (Definition~\ref{def: ACL1}). Hence, under the faithfulness assumption, the absence of a Type~1 ACL is equivalent to the acyclicity of the underlying causal structure. As argued in part 1 above, any acyclic causal structure can be non-trivially and compatibly embedded in any space-time structure.
\end{enumerate}
\end{proof}

\subsection{Proofs of Lemmas~\ref{lemma:HOappendix1} and~\ref{lemma:HOappendix2}}
\label{appendix: proofs4}

\HOappendixA*
\begin{proof}
\begin{enumerate}
    \item We use Definition~\ref{definition: compatdist} on the d-separation relation $(XZW\perp^d Y)_{\cG_{\mathrm{do}(X Z)}}$ to obtain the conditional independence
\begin{equation}
    P_{\cG_{\mathrm{do}(X Z)}}(Y|X,Z,W)= P_{\cG_{\mathrm{do}(X Z)}}(Y).
\end{equation}
Then noting that $(XZW\perp^d Y)_{\cG_{\mathrm{do}(X Z)}}$ implies $(XZ\perp^d Y)_{\cG_{\mathrm{do}(X Z)}}$, we can apply Corollary~\ref{corollary:dsep-affects} to the latter d-separation relation to obtain $P_{\cG_{\mathrm{do}(X Z)}}(Y)=P_{\cG}(Y)$. Combined with the above equation, this gives 
\begin{equation}
    P_{\cG_{\mathrm{do}(X Z)}}(Y|X,Z,W)= P_{\cG}(Y).
\end{equation}
Now, we show that $P_{\cG}(Y)=P_{\cG}(Y|W)$ must hold in this case, which would (using the above equation) imply that $XZ$ does not affect $Y$ given $W$. Suppose by contradiction that $P_{\cG}(Y)\neq P_{\cG}(Y|W)$, which would imply that $(Y\nperp^d W)_{\cG}$. The assumed d-separation $(XZW\perp^d Y)_{\cG_{\mathrm{do}(X Z)}}$ implies that $(W\perp^d Y)_{\cG_{\mathrm{do}(X Z)}}$. The only way that we could have d-connection between $Y$ and $W$ in $\cG$ but not in $\cG_{\mathrm{do}(X Z)}$ is through the existence of a directed path between $XZ$ and $Y$ in $\cG$ which gives $(XZ\perp^d Y)_{\cG_{\mathrm{do}(X Z)}}$, contradicting our assumption $(XZW\perp^d Y)_{\cG_{\mathrm{do}(X Z)}}$. This establishes the first part. 
\item We show that $(XZW\perp^d Y)_{\cG_{\mathrm{do}(X Z)}}$ implies $(ZW\perp^d Y)_{\cG_{\mathrm{do}(Z)}}$, which in turn implies that $Z$ does not affect $Y$ given $W$. Then along with the first part of the lemma, this gives us $(XZW\perp^d Y)_{\cG_{\mathrm{do}(X Z)}}$ $\Rightarrow$ $XZ$ does not affect $Y$ given $W$ and  $Z$ does not affect $Y$ given $W$. Then using Lemma~\ref{lemma:HOaffects1}, this implies that $X$ \emph{does not affect} $Y$ given $\{\mathrm{do}(Z),W\}$, which is the required conclusion.

Suppose that $(XZW\perp^d Y)_{\cG_{\mathrm{do}(X Z)}}$ but $(ZW\nperp^d Y)_{\cG_{\mathrm{do}(Z)}}$. There are two ways that this is possible
\begin{enumerate}
    \item[(i)] $(Z\nperp^d Y)_{\cG_{\mathrm{do}(Z)}}$ : By assumption, we have $(XZW\perp^d Y)_{\cG_{\mathrm{do}(X Z)}}$, which implies $(Z\perp^d Y)_{\cG_{\mathrm{do}(X Z)}}$. The only way we can then have $(Z\nperp^d Y)_{\cG_{\mathrm{do}(Z)}}$ is through the existence of a directed path from $X$ to $Y$ in $\cG_{\mathrm{do}(Z)}$. This gives $(X\perp^d Y)_{\cG_{\mathrm{do}(X Z)}}$, which contradicts our assumption.
    \item[(ii)] $(W\nperp^d Y)_{\cG_{\mathrm{do}(Z)}}$ : The assumption $(XZW\perp^d Y)_{\cG_{\mathrm{do}(X Z)}}$ implies $(W\perp^d Y)_{\cG_{\mathrm{do}(X Z)}}$. If the d-connection $(W\nperp^d Y)_{\cG_{\mathrm{do}(Z)}}$ is due to a directed path from $W$ to $Y$ in $\cG_{\mathrm{do}(Z)}$, this path must go through $X$ in order to ensure that $(W\perp^d Y)_{\cG_{\mathrm{do}(X Z)}}$. However, this would violate the original assumption $(XZW\perp^d Y)_{\cG_{\mathrm{do}(X Z)}}$ as it would lead to a directed path from $X$ to $Y$ in $\cG_{\mathrm{do}(X Z)}$. On the other hand, if the d-connection $(W\nperp^d Y)_{\cG_{\mathrm{do}(Z)}}$ is due to a common cause, it is not possible to have the d-connection $(W\perp^d Y)_{\cG_{\mathrm{do}(X Z)}}$, which also contradicts the assumed d-separation.
\end{enumerate}
The above establishes that $(XZW\perp^d Y)_{\cG_{\mathrm{do}(X Z)}}$ implies $(ZW\perp^d Y)_{\cG_{\mathrm{do}(Z)}}$, and $(ZW\perp^d Y)_{\cG_{\mathrm{do}(Z)}}$ implies $Z$ does not affect $Y$ given $W$ follows from the first part of the proof (with $Z$ playing the role of $XZ$). 
\end{enumerate}

\end{proof}

\HOappendixB*

\begin{proof}
\begin{enumerate}
    \item 
    The given dependence $(XZW\nindep Y)_{\cG_{\mathrm{do}(X Z)}}$ is equivalent to 
\begin{equation}
    \exists x,x',y,z,z',w,w'\ \ \text{s.t.}\ \  P_{\cG_{\mathrm{do}(X Z)}}(Y=y|X=x,Z=z,W=w)\neq P_{\cG_{\mathrm{do}(X Z)}}(Y=y|X=x',Z=z',W=w')
\end{equation}
     Suppose that $XZ$ does not affect $Y$ given $W$ i.e.,
\begin{equation}
    P_{\cG_{\mathrm{do}(X Z)}}(Y=y|X=x,Z=z,W=w)=P_{\cG}(Y=y|W=w)\qquad\forall x,y,z,w
\end{equation}
It is not possible to satisfy both of these equations and $(XZW\nindep Y)_{\cG_{\mathrm{do}(X Z)}}$ must imply $XZ$ affects $Y$ given $W$. 
    \item Firstly, the d-separation $(ZW\perp^d Y)_{\cG_{\mathrm{do}(Z)}}$ implies that $Z$ does not affect $Y$ given $W$, which follows from part 1. of Lemma~\ref{lemma:HOappendix1}. From part 1. above, we have $(XZW\nindep Y)_{\cG_{\mathrm{do}(X Z)}}$ implies $XZ$ affects $Y$ given $W$. We now show that $Z$ does not affect $Y$ given $W$ and $XZ$ affects $Y$ given $W$ implies that $X$ affects $Y$ given $\{\mathrm{do}(Z),W\}$, which would complete the proof. Writing out the first two conditions, we have
\begin{equation}
    P_{\cG_{\mathrm{do}(Z)}}(Y|Z,W)=P_{\cG}(Y|W),
\end{equation}
\begin{equation}
     P_{\cG_{\mathrm{do}(X Z)}}(Y|X,Z,W)\neq P_{\cG}(Y|W).
\end{equation}
Together, these imply that $P_{\cG_{\mathrm{do}(X Z)}}(Y|X,Z,W)\neq P_{\cG_{\mathrm{do}(Z)}}(Y|Z,W)$ i.e., $X$ affects $Y$ given $\{\mathrm{do}(Z),W\}$.
\end{enumerate}
\end{proof}

\subsection{Proof of Theorem~\ref{lemma: CyclesAppendix}}
\label{appendix: proofs5}
\CyclesAppendix*
\begin{proof}

The proofs for ACL7 and ACL8 are similar. We describe the proof for ACL8 here, and at the end explain how it also applies to ACL7.
Applying Corollary~\ref{corollary:HOaffectsCause2} to the affects relations $\{S'_2$ affects $S_3$, $S'_3$ affects $S_4$, $\ldots$ ,$S'_{n-1}$ affects $S_n$, $S'_n$ affects $s_1\} \subseteq \mathscr{A}$ in the second condition of ACL8 (Definition~\ref{def: ACL8}) we have that for each element $e'_2\in S_2$ there exists an element $e_3\in S_3$ of which it is a cause, for each element $e'_3\in S_3$ there exist an element $e_4\in S_4$ of which it is a cause, $\ldots$ , for each element $e'_n\in S_n$ there exist an element $e_1\in s_1\subseteq S_1$ of which it is a cause. This does not immediately imply that there is a directed path from $S'_2$ to $s_1$, since for example the element $e'_3\in S_3$ of which $e'_2\in S_2$ is a cause might not belong to the next set $S'_3$ in the chain, i.e., we could have $e_3\in S_3\backslash (S_3\bigcap S'_3)$ if $(S_3,S_3')$ forms an incomplete node of $\mathscr{C}_{e_2}$. In this case, the third condition of Definition~\ref{def: ACL8} tells us that there is another complete affects chain $\mathscr{D}^{\mathscr{C}}_{e_2}$ that connects $S_3\backslash (S_3\bigcap S'_3)$ to $S_3$. Since this is a complete affects chain, we can apply the same argument as in the proof of Theorem~\ref{theorem: cycles} to conclude that for each element in $S_3\backslash (S_3\bigcap S'_3)$, there exists an element $e^*_3\in S_3$ of which it is a cause. We consider two cases depending on whether we have $e^*_3\in S_3\backslash (S_3\bigcap S'_3)$ or $e^*_3\in S_3\bigcap S'_3$. We will show that in the former case, the affects relations in the secondary chain $\mathscr{D}^{\mathscr{C}}_{e_2}$ already guarantees cyclicity while the latter case, these (the set formed by such secondary chains, one for every incomplete node) guarantee cyclicity when taken together with those in the primary chain $\mathscr{C}_{e_2}$. 

In the first case, $\mathscr{D}^{\mathscr{C}}_{e_2}$ corresponds to a Type 5 affects causal loop since it involves a complete chain of irreducible affects relations from a set $S_3\backslash (S_3\bigcap S'_3)$ on to itself. The cyclicity claim for this case then follows from Theorem~\ref{theorem: cycles}. Therefore, we now consider the case where for each incomplete node $(S_i,S_i')$ of $\mathscr{C}_{e_2}$, the corresponding element $e^*_i\in S_i$ belongs to the intersection of the sets $S_i\bigcap S'_i$. Then, applying Corollary~\ref{corollary:HOaffectsCause2} repeatedly to each pair of affects relations in $\{S'_2$ affects $S_3$, $S'_3$ affects $S_4$, $\ldots$ ,$S'_{n-1}$ affects $S_n$, $S'_n$ affects $s_1\} \subseteq \mathscr{A}$, we can conclude that for every element $e'_2\in S_2'$, there exists an element $e_1\in s_1\subseteq S_1$ such that $e_2'$ is a cause of $e_1$. By Definition~\ref{def: ACL8} (second condition), we considered such a set $\mathscr{A}$ of affects relations for every element $e_2\in S_2$, defining $S_2'$ such that it contains $e_2$. Since the above argument holds for all sets of affects relations $\mathscr{A}$ defined as above and for all elements of $S'_2$, this implies that for every element $e_2\in S_2$, there exists a corresponding element $e_1\in S_1$ of which it is a cause. Applying  Corollary~\ref{corollary:HOaffectsCause2} to the first condition of Definition~\ref{def: ACL8} i.e., the irreducible affects relation $S_1$ affects $S_2$, we have that for every element $e_1\in S_1$, there exists a corresponding element $e_2\in S_2$ of which it is a cause. This was also the case for ACL1-6 as shown in Theorem~\ref{theorem: cycles}, so the statement of the present theorem then follows from the proof of Theorem~\ref{theorem: cycles}.

For ACL7, the first condition says that there is an irreducible affects relation $S_1$ affects $S_2$ in $\mathscr{A}$ and the second condition of Definition~\ref{def: ACL7} guarantees the existence of an affects chain from $s_2\subseteq S_2$ to $S_1$. The subtlety here is to note that if $s_2\subset S_2$, then $(S_2,S'_2)$ will be an incomplete node of $\mathscr{C}_{s_2}$. By the above proof for ACL8, we have concluded that the affects relations $\{S'_2$ affects $S_3$, $S'_3$ affects $S_4$, $\ldots$ ,$S'_{n-1}$ affects $S_n$, $S'_n$ affects $s_1\} \subseteq \mathscr{A}$ along with the third condition of ACL8 (which is similar for ACL7) either imply cyclicity of the causal structure or that for every element $e'_2\in S_2'$, there exists an element $e_1\in s_1\subseteq S_1$ such that $e_2'$ is a cause of $e_1$. If the node $(S_2,S'_2)$ is also incomplete as noted above, one can extend the same arguments using the third condition to conclude that either the causal structure is cyclic or for every element $e_2\in S_2$, there exists an element $e_1\in s_1\subseteq S_1$ such that $e_2$ is a cause of $e_1$. The same condition was obtained at the end of the previous paragraph, in the proof for ACL8, and shown to imply cyclicity. Therefore this establishes the theorem also for ACL7.
\end{proof}

\newpage

\end{document}